\documentclass[11pt]{article}

\usepackage[letterpaper, margin=1in]{geometry}
\usepackage{amsfonts,amssymb,amsmath,amsthm}
\usepackage[utf8]{inputenc}
\usepackage[T1]{fontenc}
\usepackage{newtxtext}
\usepackage[hidelinks]{hyperref}
\usepackage[shortlabels]{enumitem}
\usepackage{color}
\usepackage{tikz}
\usetikzlibrary{arrows,matrix,positioning}
\usepackage{tikz-cd}
\usepackage{graphicx,  fullpage, url}
\usepackage{verbatim}
\usepackage{mathrsfs}
\usepackage{mathtools}
\usepackage{array}
\usepackage{thm-restate}
\usepackage{mdframed}
\usepackage{bbm}
\usepackage{boxedminipage}
\usepackage{float}
\usepackage{authblk}

\theoremstyle{plain}
\newtheorem{theorem}{Theorem}[section]
\newtheorem{lemma}[theorem]{Lemma}
\newtheorem{definition}[theorem]{Definition}
\newtheorem{corollary}[theorem]{Corollary}
\newtheorem{claim}[theorem]{Claim}

\newtheorem{remark}[theorem]{Remark}
\newtheorem{fact}[theorem]{Fact}

\newcommand{\N}{\mathbb{N}}
\newcommand{\Z}{\mathbb{Z}}
\newcommand{\F}{\mathbb{F}}
\newcommand{\E}{\mathbb{E}}

\newcommand{\Var}{\mathrm{Var}}

\newcommand{\calL}{\mathcal{L}}

\newcommand{\bi}{\mathbf{i}}
\newcommand{\bj}{\mathbf{j}}
\newcommand{\ba}{\mathbf{a}}
\newcommand{\bb}{\mathbf{b}}
\newcommand{\bc}{\mathbf{c}}
\newcommand{\bu}{\mathbf{u}}

\newcommand{\bX}{\mathbf{X}}
\newcommand{\bZ}{\mathbf{Z}}

\newcommand{\dist}{\delta}
\newcommand{\MDS}{\mathrm{MDS}}

\newcommand{\spn}{\ensuremath{\operatorname{span}}}
\newcommand{\img}{\ensuremath{\operatorname{Image}}}

\renewcommand{\char}{\textsf{char}}
\newcommand{\wt}{\mathsf{wt}}
\newcommand{\mult}{\mathsf{mult}}
\newcommand{\GZP}{\text{\normalfont GZP}}
\newcommand{\LCC}{\mathrm{LCC}}
\newcommand{\LLDC}{\mathrm{LLDC}}
\newcommand{\ALLDC}{\mathrm{ALLDC}}

\newcommand{\poly}{\mathrm{poly}}
\newcommand{\polylog}{\mathsf{polylog}}

\newcommand{\RS}{\mathrm{RS}}
\newcommand{\RM}{\mathrm{RM}}

\newcommand{\MULT}{\mathsf{MULT}}

\newcommand{\PRUNE}{\mathsf{Prune}}

\title{Advances in List Decoding of Polynomial Codes}

\author[1]{Mrinal Kumar}
\author[2]{Noga Ron-Zewi}
\affil[1]{School of Technology and Computer Science, Tata Institute of Fundamental Research, \newline Email: \texttt{mrinal.kumar@tifr.res.in}}
\affil[2]{Department of Computer Science, University of Haifa. \newline Email:  \texttt{noga@cs.haifa.ac.il}}

\date{}

\begin{document}

\maketitle

\begin{abstract}
 Error-correcting codes are a method for representing data, so that one can recover the original information even if some parts of it were corrupted. The basic idea, which dates back to the revolutionary work of Shannon and Hamming about a century ago, is to encode the data into a redundant form, so that the original information can be decoded  from the redundant encoding even in the presence of some noise or corruption. 
One prominent family of error-correcting codes are Reed-Solomon Codes which encode the data using evaluations of low-degree polynomials. Nearly six decades after they were introduced, Reed-Solomon Codes, as well as some related families of polynomial-based codes, continue to be widely studied, both from a theoretical perspective and from the point of view of applications. 

Besides their obvious use in communication, error-correcting codes such as Reed-Solomon Codes are also useful for various applications in theoretical computer science. These applications often require the ability to cope with many errors,
much more than what is possible information-theoretically. List-decodable codes are a special class of error-correcting codes that enable correction from more errors than is traditionally possible by allowing a small list of candidate decodings. These codes have turned out to be extremely useful in various applications across theoretical computer science and coding theory. 

In recent years, there have been significant advances in list decoding of Reed-Solomon Codes and related families of polynomial-based codes. This includes efficient list decoding of such codes up to the information-theoretic capacity, with optimal list-size, and using fast nearly-linear time, and even sublinear-time, algorithms. In this book, we survey these developments.

\end{abstract}

\newpage
\setcounter{tocdepth}{2}
\thispagestyle{empty}
{\small \tableofcontents}

\newpage

\section{Introduction}\label{sec:intro}

The problems of communicating and storing data, robustly and efficiently, in the presence of errors or noise is a fundamental problem of interest at the intersection of mathematics and engineering. The theory of error correcting codes (or coding theory) was developed in the landmark works of Shannon \cite{shannon48} and Hamming \cite{hamming50} in the 1940s/50s as an attempt towards formalizing and understanding these problems. 
The central notion in this theory is that of an \emph{error correcting code} (or simply a \emph{code}).
    An error correcting code
    $C$ of \emph{block length} $n$ over a finite \emph{alphabet} $\Sigma$ is a subset of $\Sigma^n$. The words in $C$ are referred to as \emph{codewords}, and typically, we also associate a set $M$ of strings (called \emph{messages}) with the code $C$ such that there is a bijection $\Phi$ (called an \emph{encoding map}) from  $M$ to $C$. 
    
    The \emph{rate} of a code $C$ is defined as $R(C) := \frac{\log (|C|)} {n\log (|\Sigma|)
    }$, and it measures the amount of the \emph{redundancy} in the code. Specifically, to use a code to communicate in the presence of noise, the idea is that if we want to communicate a message $m \in M$ over a noisy channel, we instead \emph{encode} it as $\Phi(m)$ and send $\Phi(m)$ over the channel. The rate of the code then measures the number of possible messages one can encode using the code, out of all possible length $n$ strings over $\Sigma$. 
   The \emph{minimum (Hamming) distance} of a code $C$ is defined as $\Delta(C) := \min_{c \neq c' \in C} \Delta(c, c')$, where $\Delta(c,c')$, referred to as the \emph{Hamming distance} between $c$ and $c'$, is the number of entries on which $c$ and $c'$ differ.
The minimum distance of a code is a measure of how \emph{far} in Hamming distance any two distinct codewords in $C$ are from each other, and consequently, how noise tolerant the code might be. Specifically, the channel might corrupt some of the entries of $\Phi(m)$ during transmission and the receiver might receive a string $w \in \Sigma^n$. However, if $w$ and $\Phi(m)$ differ on less than $\frac {\Delta(C)} 2$ entries, then $\Phi(m)$ (and hence $m$) can be recovered uniquely by the decoder by searching for the unique codeword in $C$ that is \emph{closest} to the \emph{received word} $w$. 

The above discussion also highlights the innate tension between the rate and distance of codes - a larger rate appears to force a lower distance (and hence, lower error tolerance). Understanding the precise tradeoff between rate and distance of codes continuous to be a fundamental open problem at the heart of coding theory. In particular, the classical \emph{Singleton Bound} from the 60's states that for a linear code $C \subseteq \Sigma^n$ of size $|C|=\Sigma^k$ it must hold that $\Delta(C)\leq n-k+1$. This in particular implies that
$\delta(C) \leq 1- R(C)$ for any code $C$, where $\delta(C):=\frac {\Delta(C)} {n}$ is the \emph{relative distance} of $C$. Codes satisfying the Singleton bound with equality are called \emph{maximum distance separable} (MDS) codes, and by now we know of various families of codes satisfying this property. 

The aforementioned framework of using codes for communication in the presence of noise clearly also highlights \emph{computational questions} of great interest in coding theory, namely that of explicitly constructing families of codes with non-trivial rate and distance and designing efficient {encoding} and {decoding} algorithms for these codes. Moreover, such computational aspects are also of great interest to computer science,  
because of the numerous surprising connections and applications that error-correcting codes have had to areas such as complexity theory, cryptography, pseudorandomness, and algorithm design, which often require explicit and efficiently decodable codes. 

\subsection{List decoding}

If a code $C$ has distance $\Delta(C)$, then clearly, for any word $w \in \Sigma^n$, the number of codewords at Hamming distance less than $\frac{\Delta(C)} 2$ from $w$ is at most one. 
Consequently, $\frac{\Delta(C)} 2$ is called the \emph{unique decoding radius} of the code.
The notion of \emph{list decoding}, introduced in the works of Elias \cite{Elias57} and Wozencraft \cite{wozencraft58} relaxes this error bound to beyond $\frac {\Delta(C)} 2$. At this distance, the codeword closest to $w$ need not be unique, but we can still hope that the \emph{list} of such codewords is small. 
More formally, 
    for $\alpha > 0$ and a positive integer $L$, a code $C \subseteq \Sigma^n$ is said to be $(\alpha, L)$-\emph{list decodable} if for every $w \in \Sigma^n$, there are at most $L$ different codewords $c \in C$ such that the Hamming distance $\Delta(w, c)$ is at most $\alpha n$. 

One of the ways to see that this notion of list decoding offers something more than the notion of unique decoding is via a theorem of Johnson, which (informally) shows that \emph{every} code $C$ is $(\alpha, L)$-list decodable with \emph{constant} list size $L$ as long as the parameter $\alpha$ is less than a function $J(\Delta)$ of the distance $\Delta$ of the code. $J(\Delta)$ is referred to as the \emph{Johnson Bound} (or the Johnson Radius) of the code, and it is always as large as $\frac \Delta 2$, but can be much larger. 
Furthermore, an extension of the aforementioned Singleton Bound to the setting of list decoding shows that any $(\alpha, L)$-list decodable code must satisfy that $\alpha \leq \frac{L} {L+1}(1-R)$. Note that the $L=1$ case corresponds to the classical Singleton Bound which implies that $\alpha \leq \frac{\delta} {2} \leq \frac{1-R} {2}$, but for a growing $L$, the bound comes close to twice as large. We say that a code achieves \emph{list-decoding capacity} if it approaches this bound, i.e., if it is $(1-R-\epsilon, O_\epsilon(1))$-list decodable for any $\epsilon >0$ (We will say that the code achieves list-decoding capacity even if the list is bounded as a function of the block length). Once more, by now we know of several families of capacity-achieving list-decodable codes, which by allowing a small (constant-size) list, can tolerate almost twice as many errors than what is possible in the unique decoding setting!

In the communication setting, list decoding may be useful when the receiver has some \emph{side information} about the transmitted message which enables the receiver to eliminate some of the codewords in the list, and list decodable codes are also a useful building block in the construction of uniquely decodable codes. Furthermore, list decodable codes serve as a \emph{bridge} between the \emph{adversarial error model} described above (pioneered by Hamming \cite{hamming50}) in which an adversary may corrupt an arbitrary subset of the codeword entries, and the \emph{random error model} (pioneered by Shannon \cite{shannon48}) which assumes that each codeword entry is corrupted independently with some fixed probability. It is well known that one can tolerate more (roughly twice as many) errors in the random error model than in the adversarial model, and thus using list-decodable codes, one can handle more errors than in the adversarial unique decoding setting in the presence of both random errors (in which case the list will typically contain a single codeword) and adversarial errors (by allowing a small list). List decodable codes also turned to be highly useful for applications in computer science, like hardness results in complexity theory and explicit construction of various pseudorandom objects like expanders, extractors and pseudorandom generators. 

The applications described above highlight two main fundamental questions about list-decodable codes. The first is a \emph{combinatorial question}, where the goal is to bound the list size of various codes for a given decoding radius. The second is a \emph{computational question} about designing explicit list-decodable codes with efficient list decoding algorithms which given a received word,  output the list of close-by codewords.
A substantial fraction of such results were obtained for families of algebraic error correcting codes based on \emph{low-degree polynomials} (also motivated by the applications described above). 
Our goal in this survey is to discuss some of these developments, with the focus being on polynomial-based families of codes that we discuss next.

\subsection{Polynomials in coding theory}

Algebraic techniques, and in particular, the properties of low degree polynomials over finite fields have played an outsized role in the developments in coding theory since the early years of this research area. Indeed, the families of error correcting codes based on the evaluation of low degree polynomials (univariate or multivariate) are among the most well studied families of codes.

The flag bearer of such a code family is undoubtedly the family of \emph{Reed-Solomon Codes}. The codewords of this code correspond to the evaluations of all univariate polynomials of degree less than $k$ with coefficients in a finite field $\F_q$ on a specified set $E \subseteq \F_q$ of $n$ field elements. The distance of these codes stems from the basic algebraic fact that two distinct univariates of degree less than $k$ cannot agree on more than $k$ inputs. It follows that Reed-Solomon Codes are MDS codes. 

The mathematical properties of low degree polynomials at the core of Reed-Solomon Codes have found vast generalizations and have led to numerous other fundamental and well studied code families. Some classical examples are \emph{Reed-Muller Codes} whose codewords correspond to evaluations of low degree \emph{multivariate} polynomials over a product set, 
BCH Codes where the codewords correspond to low-degree univariate polynomials whose evaluations lie in a small subfield, and \emph{Algebraic Geometry} (AG) \emph{codes} where the codewords correspond to evaluations of certain functions on rational points of some algebraic curve. 

Besides their elegance and their fundamental mathematical properties, polynomial-based codes as above in many cases achieve the best possible tradeoff between their rate and error-correction radius, in various communication scenarios, and their algebraic structure also lends itself to 
 fast encoding and decoding algorithms. Furthermore, polynomial-based codes such as Reed-Solomon Codes are also widely used in practice due to their good concrete parameters.
The algebraic properties of these codes are also often useful for computer science applications. For example, one such useful property is the \emph{multiplication property}, which guarantees that the point-wise multiplication of a pair of codewords (for example, evaluations of a pair of degree $k$ polynomials) is a codeword in another related code (corresponding to evaluations of degree $2k$ polynomials).

In this survey, we will focus on the list-decoding properties of polynomial-based codes, which we discuss next.

\subsection{List decoding with polynomial codes}

The celebrated work of Sudan \cite{Sudan97} and Guruswami-Sudan \cite{GS-list-dec} from the 1990s gave efficient (polynomial-time) algorithms for list decoding Reed-Solomon Codes beyond the unique decoding radius, \emph{up to the Johnson Bound}. Note that up to this bound the list size is guaranteed to be constant, so the question is only \emph{computational}, namely, designing efficient list decoding algorithms which given a received word,  output the list of close-by codewords.
This result led to significant interest in understanding the list decodability of Reed-Solomon Codes \emph{beyond the Johnson Bound}. Beyond this radius, the question is both \emph{combinatorial} and \emph{computational}, namely, showing upper bounds on the list size, and if such upper bounds exist, then also searching for efficient list decoding algorithms. 
While this continues to be an extremely active area of research with many open research directions, the interest in this problem has been an important driving force behind many other significant developments in this area. 

This includes the discovery of variations of Reed-Solomon Codes that were shown to achieve \emph{list decoding capacity}, with a list size that matches the \emph{generalized singleton Bound}, together with efficient list decoding algorithms up to capacity for these codes. 
Some prominent examples of such codes are \emph{Folded Reed-Solomon Codes}, where one bundles \emph{correlated evaluations}  of a low-degree polynomial into a single codeword entry, 
and \emph{multiplicity codes}, where the encoding also includes evaluations of \emph{derivatives} of a low-degree polynomial.

 Recent work in this area has also led to significant advances in the understanding of the \emph{combinatorial} list decodability of Reed-Solomon Codes beyond the Johnson Bound.  Specifically, while it was shown that certain Reed-Solomon Codes are \emph{not} list-decodable well beyond the Johnson Bound with a list size that is polynomial in the block length (let alone, a constant list size), it was also shown that Reed-Solomon Codes over \emph{random} evaluation points achieve \emph{list-decoding capacity}, with  a list size that matches the \emph{generalized singleton Bound}, with high probability. Finding explicit evaluation points satisfying this property, together with an efficient list-decoding algorithm, constitutes a major open problem in this area by the time of writing of this survey. 

A distadvantage of all the aforementioned polynomial-based codes is that their alpahbet is very large (polynomial in the block length). The alphabet size can be brought down to a \emph{constant} by resorting to appropriate variants of AG codes. 
With regard to efficiency, by now we know of fast \emph{near-linear time} implementation of the list-decoding algorithms mentioned above. Moreover, using \emph{Reed-Muller Codes} and \emph{multivariate multiplicity codes} one can also obtain \emph{local list-decoding} algorithms that are able to list-decode individual codeword entries in \emph{sublinear time}.

\subsection{Organization and scope of this survey}\label{subsec:intro_scope}

The survey is organized as follows: 
\begin{itemize}
\item In Section \ref{sec:prelim}, we begin by presenting formal definitions and  basic facts about list-decodable codes and low-degree polynomials, and introducing some families of polynomial-based codes that will be the focus of this survey.

\item In Section \ref{sec:johnson}, we present efficient (polynomial-time) algorithms  for list decoding Reed-Solomon Codes up to the Johnson Bound, and we also present combinatorial lower bounds on the list size of Reed-Solomon Codes beyond the Johnson bound. 

\item In Section \ref{sec:capacity}, we present efficient (polynomial-time)  algorithms for list decoding multiplicity codes up to capacity, and we also show that a similar algorithm can be used to list decode Reed-Solomon codes over subfield evaluation points  in slightly non-trivial sub-exponential time. These algorithms in particular show that the list size of multiplicity codes at capacity is at most polynomial in the block length, while the list-size of Reed-Solomon codes over subfield evaluation points is at most sub-exponential in the block length.  

\item In  Section \ref{sec:list}, we present combinatorial upper bounds on the list size of polynomial-based codes at capacity.  Specifically, we show that the list size of Reed-Solomon codes over random evaluation points and multiplicity codes  is in fact a constant, independent of the block length (and even matches the generalized Singleton bound),  and that the list-size of  Reed-Solomon codes over subfield evaluation points  can also be reduced to a constant (and the running time of the list decoding algorithm to a polynomial) by passing to an appropriate subcode. 

\item In Section \ref{sec:linear}, we present faster near-linear time implementations for some of the list-decoding algorithms presented in the previous sections. 
\item In Section 
\ref{sec:local}, we present local list decoding algorithms for multivariate polynomial-based codes that are able to decode individual codeword entries in sublinear time. 

\end{itemize}
Some of the material presented in this survey is also covered by other surveys. 
Specifically, the unique decoding algorithm for Reed-Solomon Codes presented in Section \ref{subsec:rs_unique}
is also covered in the online textbook \cite[Section 12.1]{GRS_survey} about error-correcting codes, and the list-decoding algorithms for Reed-Solomon Codes presented in Sections \ref{subsec:rs_list} and \ref{subsec:rs_johnson}
are also covered in 
 \cite[Section 12.2]{GRS_survey}, in Guruswami's survey on algorithmic list decoding
\cite[Section 4]{Gur-survey} and in Sudan's recent survey \cite[Sections 4 and 5]{Sudan-survey-2025} on the applications of algebra in algorithmic coding theory.
The algorithm presented in Section \ref{subsubsec:mult_no_mult} for list decoding of multiplicity codes beyond unique decoding radius follows the presentation in \cite[Section 17]{GRS_survey} (also see \cite[Section 6]{Sudan-survey-2025}) for the related family of Folded Reed-Solomon Codes.  
The local correction algorithm for Reed-Muller Codes presented in Section \ref{subsec:rm_unique} is also covered in Yekhanin's survey on locally decodable codes \cite[Section 2]{Yekhanin_survey}.

This survey focuses solely on list-decoding properties of polynomial-based codes. Parallel to the time of writing of this survey, it was discovered that certain families of \emph{expander-based codes} have list-decoding properties competitive to those of polynomial-based codes, in particular they can be list-decoded up to capacity with optimal list-size over a constant-size alphabet, and in nearly-linear time \cite{st25, JMST25}.

Finally, we do not discuss in this survey any of the applications of list-decodable codes. 
Some applications in coding theory are surveyed in \cite[Section 11]{Guruswami-Thesis}, 
and  applications in theoretical computer science are surveyed in  \cite{Sudan_list_dec_survey}, \cite[Section 4]{Trevisan_survey}, \cite[Section 12]{Guruswami-Thesis}, 
\cite{Guruswami_list_dec_app_survey}, \cite[Section 5]{Vadhan_survey}, and \cite[Sections 20 and 24]{GRS_survey}.
List recovery, a certain generalization of the notion  
of list-decoding that is useful for applications in coding theory and theoretical computer science, is surveyed in \cite{RV_survey}.

\section{Notation and definitions}\label{sec:prelim}

In this section, we first present some basic notations, and then we provide formal definitions and state some basic facts about list-decodable codes and low-degree polynomials in Sections \ref{subsec:prelim_list} and \ref{subsec:prelim_poly}, respectively. Finally, in Section \ref{subsec:prelim_poly_codes} we introduce some families of polynomial-based codes that will be the focus of this survey.

\paragraph{Notations.} For a pair of strings $u,v \in\Sigma^{n}$, the \textsf{(Hamming) distance} between $u$ and $v$
is the number of entries on which $u$ and $v$ defer, and is denoted
 $\Delta(u,v):= \left|\left\{ i\in\left[n\right]: u_i\ne v_i \right\} \right|$, and the
\textsf{relative (Hamming) 
distance} between $u$ and $v$ is the \emph{fraction} of entries on which $u$ and $v$
differ, and is denoted by $\dist(u,v):=\frac {\Delta(u,v)} {n}$.  We say that a string $u\in \Sigma^n$ is \textsf{$\alpha$-close} (\textsf{$\alpha$-far}, respectively) to a string $v \in \Sigma^n$ if $\dist(v,u)\leq \alpha$ ($\dist(v,u) > \alpha$, respectively).
For $w \in \Sigma^n$ and $\alpha \in (0,1)$, we define the \textsf{Hamming ball of radius $\alpha$ centered at $w$}
 as the set $B(w,\alpha):=\{u \in \Sigma^n \mid \dist(u,w) \leq \alpha \}$. 
 
 For any prime power $q$,  we denote by $\F_{q}$ the finite field of $q$ elements. 
For a  field $\F$ and $u \in \F^n$, the \textsf{(Hamming) weight}
 $|u|$ of $u$ is the number of non-zero entries of $u$, that is, $|u|:= \left|\left\{ i\in\left[n\right]:u_i \ne 0\right\} \right|$. 
Throughout, we let $\N:=\{0,1,2, \ldots \}$ denote the set of non-negative integers. 
For a vector $\bi =(i_1, \ldots , i_m) \in \N^m$, the \textsf{weight} $\wt(\bi)$ of $\bi$, equals $\wt(\bi)=\sum_{j=1}^m i_j$, and for $\bi=(i_1, \ldots, i_m), \bj=(j_1, \ldots, j_m)\in \N^m$, we let ${\bi \choose \bj}:= \prod_{k=1}^m {i_k \choose j_k}$.

\subsection{List-decodable codes}\label{subsec:prelim_list}

Before formally defining list-decodable codes, we start with some basic notation and definitions about error-correcting codes. 

An \textsf{(Error-correcting) code} is a subset $C\subseteq\Sigma^{n}$.  We call $\Sigma$ and $n$ the \textsf{alphabet} and the
\textsf{block length} of the code, respectively, and the elements of $C$ are called \textsf{codewords}. 
 The \textsf{rate}
of a code $C \subseteq \Sigma^n$ is the ratio $R:=\frac{\log(|C|)}{n \cdot \log(|\Sigma|)}$, the \textsf{(Hamming) distance} of $C$ is $\Delta(C):= \min_{c \neq c' \in C} \Delta(c,c')$, and its \textsf{relative distance} is $\dist(C):= \frac{\Delta(C) } {n}$.

We say that a code $C\subseteq\Sigma^{n}$ is \textsf{$\F$-linear} if $\Sigma = \F^s$ 
for some  field $\F$ and a positive integer $s$, and $C$ is an $\F$-linear subspace of $\Sigma^n$.
If $\Sigma = \F$, and $C$ is an $\F$-linear code,  
 then we simply say that $C$ is \textsf{linear}. For an $\F$-linear code $C\subseteq\Sigma^{n}$, the rate of $C$ equals $\frac{\dim_{\F}(C)}{n\cdot\dim_{\F}(\Sigma)}$, and the distance of $C$ equals $\min_{0 \neq c \in C} |c|$. 
A \textsf{generator matrix} for a linear code $C$ of dimension $k$ is a (full-rank) matrix $G \in \F^{n \times k}$ so that $\img(G) = C$, and a \textsf{parity-check matrix} for $C$ is a (full-rank) matrix $H \in \F^{ (n-k) \times n}$ so that $\ker(H)= C$. The \textsf{dual code} of $C$ is the code $C^\perp \subseteq \F^n$ containing all strings $c' \in \F^{n}$ satisfying that $\sum_{i=1}^n c'_i \cdot c_i = 0$ for all $c \in C$. 
It follows by definition that $(C^{\perp})^{\perp} = C$, and that $H$ is a parity-check matrix for $C$ if and only if $H^T$ is a generator matrix for $C^{\perp}$. 

For a code $C \subseteq \Sigma^n$ of relative
distance $\delta$, and a given $\alpha<\frac \delta 2$,  the \textsf{problem of (unique) decoding from an $\alpha$-fraction
of errors} is the task of finding, given a string $w \in \Sigma^n$, the unique $c\in C$ (if any) which
satisfies $\dist(c,w)\leq\alpha$.

List decoding is a paradigm that allows one to correct more than a $\frac \delta 2$ fraction of errors by returning a small list of close-by codewords. 
More formally, for $\alpha \in (0,1)$ and a positive integer $L$ we say that a code $C \subseteq \Sigma^n$ is \textsf{$(\alpha,L)$-list decodable} if for any $w \in \Sigma^n$ there are at most $L$ different codewords $c \in C$ which satisfy that $\dist(c,w)\leq \alpha$. The \textsf{problem of list-decoding from an $\alpha$-fraction
of errors} is the task of finding, given a string $w \in \Sigma^n$, the list of codewords $c \in C$ which
satisfy that $\dist(c,w)\leq\alpha$. 

\paragraph{The Johnson Bound.} The Johnson Bound states that \emph{any} code of relative distance $\delta$ is list-decodable from a fraction of 
$ \alpha < 1 - \sqrt{1-\delta}$ errors with a list size which depends on the proximity of $\alpha$ to $1 - \sqrt{1-\delta}$. Note that $ 1- \sqrt{1-\delta} \in (\frac \delta 2, \delta)$ for any $\delta \in (0,1)$, so this bound improves on the unique decoding bound. 

\begin{theorem}[Johnson Bound]\label{thm:johnson}
Let $C \subseteq \Sigma^{n}$ be a code of relative distance $\delta$.
Then for any $\alpha<1-\sqrt{1-\delta}$, $C$ is $(\alpha,L)$-list decodable  with list size 
$L = \frac{\delta - \alpha}{(1-\alpha)^{2}-(1-\delta)}$.
 \end{theorem}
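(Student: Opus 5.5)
The plan is the standard averaging argument: embed $\Sigma^n$ into real space and bound pairwise inner products. Fix a received word $w\in\Sigma^n$, and let $c_1,\ldots,c_L\in C$ be distinct codewords with $\dist(c_i,w)\le\alpha$. The goal is an upper bound on $L$.

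First, to each string $u\in\Sigma^n$ associate the vector $\chi(u)\in\{0,1\}^{n|\Sigma|}$ whose $(j,\sigma)$ coordinate is $1$ iff $u_j=\sigma$. Then $\langle\chi(u),\chi(v)\rangle = n-\Delta(u,v)$, the number of agreements. Set $v_i:=\chi(c_i)$ and $x:=\chi(w)$. Then $\langle v_i,x\rangle\ge(1-\alpha)n$, $\langle v_i,v_j\rangle\le(1-\delta)n$ for $i\ne j$, and $\|v_i\|^2=\|x\|^2=n$.

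Next, I would consider the shifted vectors $u_i:=v_i-\beta x$ for a parameter $\beta\in[0,1]$ to be chosen. Using the three bounds above, for $i\ne j$,
\[
\langle u_i,u_j\rangle \le (1-\delta)n-2\beta(1-\alpha)n+\beta^2 n,
\]
and
\[
\|u_i\|^2 \le n-2\beta(1-\alpha)n+\beta^2 n .
\]
Expanding $0\le\|\sum_i u_i\|^2 = \sum_i\|u_i\|^2+\sum_{i\ne j}\langle u_i,u_j\rangle$ gives
\[
0\le L\,\big(1-2\beta(1-\alpha)+\beta^2\big)+L(L-1)\,\big(1-\delta-2\beta(1-\alpha)+\beta^2\big).
\]
If the pairwise term $-\gamma:=1-\delta-2\beta(1-\alpha)+\beta^2$ is negative, this rearranges to $L\le 1+\frac{1-2\beta(1-\alpha)+\beta^2}{\gamma}$.

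The remaining step is choosing $\beta$ to make $\gamma$ positive and recover the exact stated bound. Taking $\beta=1-\alpha$ minimizes the pairwise term and gives $\gamma=(1-\alpha)^2-(1-\delta)$, which is positive exactly when $\alpha<1-\sqrt{1-\delta}$. With this choice the diagonal term becomes $1-(1-\alpha)^2$, so
\[
L\le 1+\frac{1-(1-\alpha)^2}{(1-\alpha)^2-(1-\delta)}=\frac{\delta}{(1-\alpha)^2-(1-\delta)}.
\]
This is slightly weaker than the stated $\frac{\delta-\alpha}{(1-\alpha)^2-(1-\delta)}$.

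The main obstacle is closing this gap of $\alpha$ in the numerator. I would first try to sharpen the diagonal estimate, since $\|u_i\|^2$ was bounded using only $\langle v_i,x\rangle\ge(1-\alpha)n$. Writing $a_i=\langle v_i,x\rangle/n$ exactly, the inequality becomes an exact expression in the $a_i$, and for $\beta\le a_i$ the diagonal contribution $1-2\beta a_i+\beta^2$ can be traded against the off-diagonal slack. Failing that, I would fall back on the classical Johnson argument: count agreements column by column, where column $j$ has $m_{j,\sigma}$ codewords taking the value $\sigma$. Apply Cauchy–Schwarz to $\sum_{j}\sum_{\sigma\ne w_j} m_{j,\sigma}^2$, treating the symbols agreeing with $w_j$ separately from the others. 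This refined count is the standard route to the $(\delta-\alpha)$ numerator, and keeping the agreement and disagreement contributions separate is where I expect the extra $-\alpha$ to come from.
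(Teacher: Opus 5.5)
Your embedding argument is sound, but it proves only $L\le\frac{\delta}{(1-\alpha)^2-(1-\delta)}$, and your first fallback cannot improve it. For every $\beta$ your inequality reduces to $L\bigl(2\beta(1-\alpha)-\beta^2-(1-\delta)\bigr)\le\delta$. This is tight for real vectors that satisfy only the inner-product constraints you use. After normalizing by $\sqrt n$, take a unit vector $x$ and set $v_i=(1-\alpha)x+\sqrt{1-(1-\alpha)^2}\,e_i$, where the $e_i$ are unit vectors orthogonal to $x$ forming a regular simplex, $\langle e_i,e_j\rangle=-\frac{1}{L-1}$. Then $\|v_i\|=1$ and $\langle v_i,x\rangle=1-\alpha$, and $\langle v_i,v_j\rangle\le 1-\delta$ holds exactly when $L\le\frac{\delta}{(1-\alpha)^2-(1-\delta)}$. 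So no choice of $\beta$, and no bookkeeping with the exact $a_i$, can produce the numerator $\delta-\alpha$. You must use some property of the embedded vectors beyond their Gram matrix.

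That property is integrality of the column counts. Write $t_j:=m_{j,w_j}$. You have $\|\sum_i v_i\|^2=\sum_j\sum_\sigma m_{j,\sigma}^2$, and since the $m_{j,\sigma}$ are non-negative integers, $\sum_{\sigma\ne w_j}m_{j,\sigma}^2\ge\sum_{\sigma\ne w_j}m_{j,\sigma}=L-t_j$ (equivalently, $\binom{m_{j,\sigma}}{2}\ge 0$). Combined with $\|\sum_i v_i\|^2\le Ln+L(L-1)(1-\delta)n$, this gives $\sum_j(t_j^2-t_j)\le L(L-1)(1-\delta)n$. Convexity together with $\sum_j t_j\ge L(1-\alpha)n$ then yields $L(1-\alpha)\bigl(L(1-\alpha)-1\bigr)\le L(L-1)(1-\delta)$, which rearranges to exactly $L\bigl((1-\alpha)^2-(1-\delta)\bigr)\le\delta-\alpha$.

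This is the paper's proof in other language. The paper lower-bounds $\Pr_{c\ne c',\,j}[c_j=c'_j]$ by $\Pr[c_j=c'_j=w_j]=\E_j\binom{t_j}{2}/\binom{L}{2}$, applies convexity, and compares with $1-\delta$. The $-\alpha$ arises because the linear term per position is $t_j$ rather than $L$.

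Your second fallback is in the right neighbourhood but targets the wrong estimate. Cauchy--Schwarz gives $\sum_{\sigma\ne w_j}m_{j,\sigma}^2\ge(L-t_j)^2/(|\Sigma|-1)$, which is no stronger than the integrality bound $L-t_j$ once $|\Sigma|\ge L+1$. In the large-alphabet limit it returns you to the numerator $\delta$. The pairs of codewords that agree with each other but not with $w$ should simply be discarded, not estimated.
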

 
\begin{proof}

Let $w \in \Sigma^n$ be a string, and let $\mathcal{L}  =\{ c\in C \mid \dist(c,w) \leq \alpha\}$. Our goal will be to show that 
$ L:=\mid \mathcal{L} \mid \leq \frac {\delta -\alpha} {(1-\alpha)^2 -  (1-\delta)}$. The proof follows by providing upper and lower bounds on the  average agreement between a pair of distinct codewords in the list, given by  
$\Pr_{c \neq c' \in \mathcal{L}, i \in [n]} \left[c_i = c'_i \right],$
and comparing the two bounds. 

For the upper bound, note that since $C$ has relative distance $\delta$, we have that $\dist(c,c') \geq \delta$ for any distinct $c, c' \in \mathcal{L}$.  Consequently, we have that
\begin{equation}\label{eq:johnson1}
\Pr_{ c \neq c' \in \mathcal{L}, i \in [n]} \left[c_i = c'_i\right] \leq 1-\delta.
\end{equation}

 Next we show a lower bound on the above expression. For $i \in [n]$, let $t_i$ be the number of codewords in $\mathcal{L}$ which agree with $w$ on the $i$-th entry, that is,
$ t_i : = | \{  c \in \mathcal{L} \mid    c_i = w_i   \} |.$ Note that since $\dist(c, w)\leq \alpha$ for any $c \in \mathcal{L}$, we have that $\E_{i \in [n]} [t_i] \geq L (1- \alpha)$. Consequently, we have that:
\begin{eqnarray}\label{eq:johnson2}
 \Pr_{c \neq c' \in \mathcal{L}, i \in [n]} \left[c_i = c'_i \right] & \geq & \Pr_{c \neq c' \in \mathcal{L}, i \in [n]} \left[c_i= c'_i = w_i\right] \\
& \geq  & \frac{\E_{i \in [n]} {t_i\choose 2} } { {L \choose 2}} 
\geq  \frac{ {\E_{i \in [n]} t_i \choose 2} } { {L \choose 2}} 
 \geq  \frac{ {L (1-\alpha) \choose 2}} { { {L \choose 2}} }, \nonumber
\end{eqnarray}
where the one before last inequality follows  by convexity.

The combination of the bounds given in (\ref{eq:johnson1}) and (\ref{eq:johnson2}) gives the following inequality:
$$ {L (1-\alpha) \choose 2}   \leq  {L \choose 2} \cdot (1- \delta),$$
which rearranging gives the desired bound of $L \leq \frac {\delta -\alpha} {(1-\alpha)^2 -  (1-\delta)}$.

	\end{proof}

\paragraph{The generalized Singleton Bound.}

The classical \textsf{Singleton Bound} implies that any code of rate $R$ and relative distance $\delta$ must satisfy that $\delta \leq 1-R$.
The \textsf{generalized Singleton Bound} extends this fact by showing that any $(\alpha,L)$-list decodable code of rate $R$  must satisfy that $\alpha \lesssim \frac{L} {L+1} \cdot (1- R)$. Note that the classical Singleton Bound for unique decoding  corresponds to the special case of $L=1$, in which case $\alpha \leq \frac \delta 2 \leq \frac {1-R} 2 $. 

\begin{theorem}[Generalized Singleton Bound]\label{thm:gen_singleton}
Let $C \subseteq \Sigma^{n}$ be a code of rate $R$ that is $(\alpha,L)$-list decodable. Then
$$
\alpha \leq 
 \frac{L} {L+1} \cdot \left(1- R + o_L(1) \right),
 $$
 where for a fixed $L$, the term $o_L(1)$ tends to zero as the block length $n$ tends to infinity.
\end{theorem}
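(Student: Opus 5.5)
The plan is the standard projection-plus-pigeonhole argument: find a large set of codewords that all agree on a common prefix, then build a single received word close to $L+1$ of them.

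Let $k := \lfloor R n \rfloor$, so that $|C| = |\Sigma|^{Rn} \geq |\Sigma|^{k}$. Fix a prefix length $t := k - \lceil \log_{|\Sigma|}(L+1) \rceil$, or slightly smaller, say $t = k - 1$ if $|\Sigma| \geq L+1$. Project $C$ onto its first $t$ coordinates. Since $|C| \geq |\Sigma|^t \cdot (L+1)$, pigeonhole gives $L+1$ distinct codewords $c^{(1)}, \ldots, c^{(L+1)}$ that agree on the first $t$ coordinates.

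Next, construct the received word $w$. On the first $t$ coordinates, set $w$ equal to the common prefix. Split the remaining $n - t$ coordinates into $L+1$ consecutive blocks $B_1, \ldots, B_{L+1}$ of sizes as equal as possible, each at most $\lceil (n-t)/(L+1) \rceil$. On block $B_j$, set $w$ equal to $c^{(j)}$. Then each $c^{(j)}$ agrees with $w$ on the prefix and on $B_j$, so it disagrees with $w$ in at most $(n-t) - |B_j|$ coordinates. That is roughly
\[
\frac{L}{L+1}(n-t) + 1 \;\leq\; \frac{L}{L+1}\bigl(n - Rn + O(\log_{|\Sigma|}(L+1)) + 2\bigr).
\]
So all $L+1$ codewords lie in the ball of relative radius
\[
\alpha' := \frac{L}{L+1}\Bigl(1-R + \tfrac{O(\log(L+1)) + 2}{n}\Bigr).
\]
This contradicts $(\alpha', L)$-list decodability. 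Hence whenever $C$ is $(\alpha, L)$-list decodable, we must have $\alpha < \alpha'$, which is the claimed bound with $o_L(1) = O(\log(L+1)/n)$.

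The only step needing care is the bookkeeping: the roundings of $k$ and $t$ and the uneven block sizes. Each contributes only $O(1)$ or $O(\log_{|\Sigma|}(L+1))$ coordinates, which is $o_L(1)$ after dividing by $n$ for fixed $L$. There is no real obstacle beyond this. One subtlety is when $|\Sigma|$ is small relative to $L$; the logarithmic loss in $t$ handles this uniformly, since $\log_{|\Sigma|}(L+1) \leq \log_2(L+1)$.
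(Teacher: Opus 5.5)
Your argument is correct and is essentially the paper's proof. Both find $L+1$ codewords sharing a common prefix, then build a received word that copies that prefix and agrees with each codeword on its own block of the remaining coordinates. The difference is only in direction and bookkeeping: the paper fixes the prefix length $n-b$ from $\alpha$, shows each prefix class has at most $L$ codewords, and counts $|C| \le |\Sigma|^{n-b} L$, whereas you fix the prefix length from $R$ and pigeonhole directly, and you track the roundings the paper avoids by assuming $b$ and $b/(L+1)$ are integers.
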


\begin{proof}
Let $b:= \frac{L+1} {L} \cdot \alpha n$, for simplicity assume that $b$ is an integer.

We first claim that for any string $u \in \Sigma^{n-b}$, there are at most $L$ distinct codewords in $C$ whose prefix is $u$. To see this, suppose on the contrary that there exists a string $u \in \Sigma^{n-b}$ and $L+1$ distinct codewords $c_0,c_1, \ldots, c_L \in C$ whose prefix is $u$. Let $w \in \Sigma^n$ be a string whose first $n-b$ entries agree with $u$, and for $i=0,1, \ldots, L$, $w$ and $c_i$ agree on the $i$-th subsequent block of length $\frac {b} {L+1}$ (once more, assume that $\frac {b} {L+1}$  is an integer for simplicity). Then by construction for any $i \in \{0,1,\ldots, L\}$, 
$$\dist(w, c_i) \leq \frac{L} {L+1} \cdot \frac b n= \alpha,$$
which contradicts the assumption that $C$ is $(\alpha,L)$-list decodable.

Now, since for any string $u \in \Sigma^{n-b}$, there are at most $L$ distinct codewords in $C$ whose prefix is $u$, we have that $|C| \leq \Sigma^{n-b} \cdot L$, 
which rearranging and taking logarithms gives
$$  \frac b n \leq 1 -  \frac{\log(|C|)} {n \cdot  \log(|\Sigma|)}  + \frac{\log (L)} {n \cdot \log(|\Sigma|)}.$$
Plugging into the above inequality the values $R = \frac{\log(|C|)} {n \cdot  \log(|\Sigma|)}$ and $b = \frac{L+1} {L} \cdot \alpha n$ gives
 $$
\alpha \leq 
\frac{L} {L+1} \cdot \left(1- R + \frac{\log (L)} {n \cdot \log(|\Sigma|)} \right) =
 \frac{L} {L+1} \cdot \left(1- R + o_L(1) \right),
 $$
 which completes the proof of the theorem.
\end{proof}

The above generalized Singleton Bound in particular implies that $(\alpha,L)$-list decodable codes must have $\alpha \leq 1 - R- \epsilon$ for some $\epsilon = \epsilon (L)$, which approaches zero as $L$ grows (specifically, $\epsilon = O(\frac 1L)$, or equivalently, $L=O(\frac 1 \epsilon)$).  We say that a code $C \subseteq \Sigma^n$ attains  \textsf{list-decoding capacity} if it matches this coarser bound, i.e., it is $(1-R- \epsilon, O_\epsilon(1))$-list decodable for any $\epsilon >0$\footnote{We will sometimes abuse notation and say that the code achieves list-decoding capacity even if the list is bounded as a function of the block length.}

\subsection{Polynomials over finite fields}\label{subsec:prelim_poly}

This survey is mainly concerned with algebraic codes, based on low-degree polynomials over finite fields. In what follows, we introduce some definitions and gather several known facts about polynomials over finite fields that will be used for defining these codes and analyzing their list decoding properties.

\subsubsection{Univariate polynomials}

For a field $\F$, we let $\F[X]$ denote the \emph{ring} of univariate polynomials over $\F$, and we let $\F(X)$ denote the \emph{field} of rational functions over $\F$ in the indeterminate $X$.  
The  \textsf{degree} $\deg(f)$ of a polynomial $f(X) = \sum_{i\in \N} f_i X^i \in  \F[X]$ is the maximum $i$ so that $f_i \neq 0$. 
 We let $\F_{<d}[X]$ ($\F_{\leq d}[X]$, respectively) denote the linear space of  all univariate polynomials over $\F$ of degree smaller than  $d$ (at most  $d$, respectively). A \textsf{root} of a polynomial $f(X) \in \F[X]$ is a point $a\in \F$ so that $f(a)=0$. 
 It is well-known that a  non-zero polynomial $f(X) \in \F[X]$ of degree $d$ has at most $d$ roots in $\F$. 

It will sometimes be useful for us to also count roots with \emph{multiplicities}. To define this notion, we first need to define the notion of a \textsf{(Hasse) derivative}, which is a variant of derivatives that is more suitable for finite fields. Let $f(X) \in  \F[X]$ be a non-zero univariate polynomial over a field $\F$. 
For a non-negative integer $i$,  the \textsf{$i$'th order (Hasse) derivative} $f^{(i)}(X)$ of $f$ is defined as the coefficient of $Z^i$ in the expansion
\[ f(X + Z) = \sum_{i \in \N} f^{(i)}(X) Z^{i}. \] 
In particular, for any $a \in \F$,
substituting $a, X-a$ into $X, Z$ respectively in the above expression gives the expansion
\begin{equation}\label{eq:prelim_hasse}
 f(X) = \sum_{i \in \N} f^{(i)}(a) (X-a)^{i}.
 \end{equation}
The above expansion should be compared with the more familiar Taylor Expansion with respect to classical derivatives, where the coefficient of $(X-a)^{i}$ is divided by an additional factor of $i!$. 
Eliminating this additional factor of $i!$, which could be zero over a finite field, turns out to be convenient, and tends to make theorem statements cleaner. 

The \textsf{multiplicity} of a polynomial $f(X) \in \F[X]$ at a point $a \in \F$, denoted $\mult(f, a)$, is the largest integer $s$ such that for any non-negative integer $i <s$, we have that $f^{(i)} (a) =0$. Note that $\mult(f, a) \geq 0$ for any $a \in \F$, and that $\mult(f, a) > 0$ for any root $a$ of $f$.
The following fact, which is a direct consequence of the expansion given in (\ref{eq:prelim_hasse}), generalizes the well-known bound on the number of roots of a low-degree univariate polynomial, taking into account also multiplicities. 

\begin{fact}\label{fact:prelim_hasse_sz_univ}
Let $f(X) \in  \F[X]$ be a non-zero univariate polynomial of degree $d$. Then
$$\sum_{a \in \F} \mult(f, a) \leq d .$$
In particular, for any positive integer $s$, $\mult(f, \ba) \geq s$ for at most a
 $\frac{d} {s|\F|}$-fraction of the  points $a \in \F$.
\end{fact}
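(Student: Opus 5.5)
The plan is to show that if $\mult(f,a) \geq s$, then $(X-a)^s$ divides $f$. Distinct points then give pairwise coprime factors, and comparing degrees gives the bound. The divisibility comes directly from the expansion (\ref{eq:prelim_hasse}). Since $f(X) = \sum_{i \in \N} f^{(i)}(a)(X-a)^i$ and $f^{(i)}(a)=0$ for all $i < s$, we get
\[ f(X) = (X-a)^{s} \sum_{i \geq s} f^{(i)}(a) (X-a)^{i-s}. \]
So $(X-a)^{\mult(f,a)}$ divides $f(X)$ in $\F[X]$. Note that $\mult(f,a)$ is finite because $f \neq 0$: the expansion cannot have all coefficients vanish, and $f^{(i)}=0$ for $i>d$, so $\mult(f,a)\le d$.

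Next, let $a_1,\dots,a_r$ be the distinct points with $\mult(f,a_j)>0$; the other points contribute zero to the sum. The polynomials $(X-a_j)^{\mult(f,a_j)}$ are pairwise coprime, since distinct linear factors $X-a_j$ are non-associate irreducibles in the UFD $\F[X]$. Hence their product $\prod_j (X-a_j)^{\mult(f,a_j)}$ divides $f$. One can see this either by unique factorization, or by induction: write $f=(X-a_1)^{m_1} g$, and note that $g^{(i)}(a_j)$ relates to $f$ via the factor $(X-a_1)^{m_1}$, which is invertible in the local expansion at $a_j \neq a_1$. Since $f\neq 0$, the degree of a divisor is at most $\deg f$, so $\sum_a \mult(f,a) \le d$. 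This also shows that only finitely many points contribute, even if $\F$ is infinite.

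For the ``in particular'' part, let $S=\{a : \mult(f,a)\ge s\}$. Then $s|S| \le \sum_a \mult(f,a) \le d$, so $|S|/|\F| \le d/(s|\F|)$.

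The only mildly delicate step is justifying coprimality over an arbitrary field. I would simply invoke unique factorization in $\F[X]$, so the expected obstacle is minor.
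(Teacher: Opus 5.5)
Your proof is correct and takes essentially the paper's route. The paper presents this fact as a direct consequence of the expansion (\ref{eq:prelim_hasse}), which is exactly your observation that $(X-a)^{\mult(f,a)}$ divides $f$, combined with the standard coprimality and degree count; your write-up supplies the details the paper leaves implicit.
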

Note that the $s=1$ case of the 'in particular' part of the above fact corresponds to the standard bound on the number of roots of a low-degree univariate polynomial.

We note the following basic properties of the Hasse Derivative, which follow directly from the definition of the Hasse derivative.
\begin{lemma}\label{lem:prelim_hasse_properties_univ}
The following holds for any field $\F$, univariate polynomials $f(X), g(X) \in \F[X]$, and non-negative integers $i,j$:
 \begin{enumerate}
\item (Linearity) $(f(X)+ g(X))^{(i)} = f^{(i)}(X) + g^{(i)}(X)$.
\item (Product rule) $(f(X) \cdot g(X))^{(i)} = \sum_{k=0}^{i}  f^{(k)}(X) \cdot g^{(i-k)}(X)$.
\item (Iterative application) $(f^{(i)}(X))^{(j)} = {i+j \choose i} \cdot f^{(i+j)}(X) $.
\end{enumerate}
\end{lemma}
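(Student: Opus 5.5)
All three properties will be read off from the defining identity $f(X+Z)=\sum_{i\in\N} f^{(i)}(X)Z^i$. The Hasse derivatives of $f$ are exactly the coefficients of $f(X+Z)$, viewed as a polynomial in $Z$ with coefficients in $\F[X]$. Each claimed identity should then follow by writing a suitable polynomial in two ways and comparing coefficients of the appropriate monomial. This comparison is legitimate because $\F[X][Z]$ is a polynomial ring, so coefficients are uniquely determined.

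\emph{Linearity.} Substitution $X\mapsto X+Z$ is a ring homomorphism, so $(f+g)(X+Z)=f(X+Z)+g(X+Z)$. Expanding both summands and collecting the coefficient of $Z^i$ gives $f^{(i)}+g^{(i)}$, which is $(f+g)^{(i)}$ by definition.

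\emph{Product rule.} By the same homomorphism property, $(fg)(X+Z)=f(X+Z)\,g(X+Z)$. I will multiply the two expansions $\sum_k f^{(k)}(X)Z^k$ and $\sum_\ell g^{(\ell)}(X)Z^\ell$. The coefficient of $Z^i$ in the product is the Cauchy product $\sum_{k=0}^i f^{(k)}(X)g^{(i-k)}(X)$, which is the claimed formula.

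\emph{Iterative application.} This is the only step needing a little care, because it involves two shifts. I will introduce a second indeterminate $W$ and expand $f(X+Z+W)$ in two ways.
\begin{itemize}
\item Treating $Z+W$ as a single increment gives $\sum_m f^{(m)}(X)(Z+W)^m$. By the binomial theorem, the coefficient of $Z^iW^j$ here is $\binom{i+j}{i}f^{(i+j)}(X)$.
\item Applying the definition first with increment $Z$, then replacing $X$ by $X+W$ (legitimate because the definition is an identity in $\F[X,Z]$), gives $f(X+W+Z)=\sum_i f^{(i)}(X+W)Z^i$. Expanding each $f^{(i)}(X+W)$ by the definition again gives $\sum_{i,j}(f^{(i)})^{(j)}(X)\,W^jZ^i$.
\end{itemize}
Comparing coefficients of $Z^iW^j$ in $\F[X][Z,W]$ yields $(f^{(i)})^{(j)}=\binom{i+j}{i}f^{(i+j)}$.

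The main thing to verify is that these substitutions are valid ring homomorphisms and that the binomial coefficients are read in $\F$, i.e., reduced mod the characteristic. Both points are routine, so no real obstacle is expected.
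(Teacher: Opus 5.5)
Your proposal is correct and is the direct-from-the-definition verification the paper has in mind. The paper gives no separate proof; it only remarks that these properties follow from the expansion $f(X+Z)=\sum_i f^{(i)}(X)Z^i$, and your two-way expansion of $f(X+Z+W)$ for the iterative rule (comparing coefficients in $\F[X][Z,W]$, with binomial coefficients read in $\F$) correctly handles the only step that needs any care.
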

Note that, as opposed to classical derivatives, the Hasse derivative is only iterative up to multiplication by an additional binomial coefficient, but
on the other hand, the product rule does not require multiplication by an additional binomial coefficient. 

Finally, for univariate polynomials $f(X), h(X)\in \F[X]$, we use $\left(f(X) \mod  h(X) \right)$ to denote the unique remainder obtained by dividing $f(X)$ by $h(X)$.  We also say that $f(X)$ and $g(X)$ are \textsf{congruent modulo $h(X)$} (denoted by $(f(X) \equiv g(X) \mod h(X))$) if  $(f(X)-g(X))$ is divisible by $h(X)$.

\subsubsection{Multivariate polynomials}

The notions defined above can also be generalized to the setting of multivariate polynomials. 
Specifically, for a field $\F$ and a positive integer $m$, we let $\F[X_1, \ldots, X_m]$ denote the \emph{ring} of $m$-variate polynomials over $\F$, and we let $\F(X_1, \ldots, X_m)$ denote the \emph{field} of rational functions over $\F$ in the indeterminates $X_1, \ldots, X_m$. For ease of notation, we let $\bX:=(X_1, X_2, \ldots)$, and throughout the survey we also use the convention that bold letters $\ba, \bb, ...$ denote vectors over $\F$. For $\bi =(i_1, \ldots, i_m) \in \N^m$, we let $\bX^{\bi}$ denote the monomial $\bX^{\bi}:=X_1^{i_1} \cdots X_m^{i_m}$. 

The \textsf{(total) degree} $\deg(f)$ of a polynomial $f(\bX) = \sum_{\bi} f_{\bi} \bX^{\bi} \in  \F[\bX]$ is the maximum weight $\wt(\bi) = \sum_j {\bi}_j$ of $\bi$ so that $f_\bi \neq 0$.
As before, we let $\F_{<d}[X_1, \ldots,X_m]$ ($\F_{\leq d}[X_1, \ldots, X_m]$, respectively) denote the linear space of  all $m$-variate polynomials over $\F$ of (total) degree smaller than  $d$ (at most  $d$, respectively). In the multivariate setting, it will sometimes be useful for us to also consider  a notion of weighted degree. Specifically, for a vector $\bb \in \N^m$, 
the \textsf{$\bb$-weighted degree $\deg_{\bb}(f)$} of a polynomial $f(\bX) = \sum_{\bi} f_{\bi} \bX^{\bi} \in  \F[\bX]$  is the maximum weighted sum $\sum_j \bb_j \cdot {\bi}_j$ of $\bi$ so that $f_\bi \neq 0$. 

As in the univariate setting, we define a \textsf{root} of a polynomial $f(\bX) \in \F[\bX]$ as a point $\ba$ so that $f(\ba)=0$. The following well-known \textsf{Schwartz-Zippel Lemma} generalizes the well-known bound on the number of roots of a low-degree univariate polynomial to the multivariate setting. 

\begin{lemma}[\cite{Schwartz80, Zippel79,DL78}]\label{lem:prelim_sz}
Let $f(X_1, \ldots, X_m) \in  \F[X_1, \ldots, X_m]$ be a non-zero $m$-variate polynomial over a finite field $\F$ of (total) degree $d$. Then $f$ has at most $d \cdot |\F|^{m-1}$ roots in $\F^m$.
\end{lemma}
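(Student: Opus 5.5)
We prove the Schwartz--Zippel bound (Lemma~\ref{lem:prelim_sz}) by induction on the number of variables $m$. The claim is that a non-zero $f$ of total degree $d$ has at most $d\cdot|\F|^{m-1}$ roots in $\F^m$.

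\emph{Base case.} For $m=1$ the bound reads $d\cdot|\F|^{0}=d$. This is exactly the fact that a non-zero univariate polynomial of degree $d$ has at most $d$ roots, which is the $s=1$ case of Fact~\ref{fact:prelim_hasse_sz_univ}.

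\emph{Inductive step.} Suppose the claim holds for $m-1$ variables. Write $f$ as a polynomial in $X_m$ with coefficients in the remaining variables:
$$f(X_1,\ldots,X_m)=\sum_{i=0}^{k} g_i(X_1,\ldots,X_{m-1})\,X_m^i,$$
where $k$ is the largest power of $X_m$ that appears. The leading coefficient $g_k$ is then non-zero. Since the monomials of $g_k X_m^k$ lie in $f$, we have $\deg(g_k)\le d-k$.

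We split the roots $\ba=(a_1,\ldots,a_m)$ of $f$ according to whether $g_k(a_1,\ldots,a_{m-1})=0$.

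\begin{itemize}
\item \emph{Case $g_k(a_1,\ldots,a_{m-1})=0$.} By the induction hypothesis (which also covers $\deg(g_k)=0$, where $g_k$ is a non-zero constant with no roots), $g_k$ has at most $(d-k)|\F|^{m-2}$ roots in $\F^{m-1}$. Each such prefix extends to at most $|\F|$ choices of $a_m$. This case therefore contributes at most $(d-k)|\F|^{m-1}$ roots.
\item \emph{Case $g_k(a_1,\ldots,a_{m-1})\neq 0$.} For a fixed prefix, the univariate polynomial $f(a_1,\ldots,a_{m-1},X_m)$ is non-zero of degree exactly $k$. It therefore has at most $k$ roots in $a_m$. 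Summing over at most $|\F|^{m-1}$ prefixes gives at most $k|\F|^{m-1}$ roots.
\end{itemize}

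Adding the two cases gives at most $d\cdot|\F|^{m-1}$ roots, which completes the induction.

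The main point requiring care is the degree bookkeeping: $\deg(g_k)\le d-k$ must hold so that the two cases sum to exactly $d$. This follows directly from total degree, because every monomial of $g_kX_m^k$ has weight at most $d$.
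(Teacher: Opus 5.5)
Your proof is correct. The induction on $m$ splits the roots according to whether the leading $X_m$-coefficient $g_k$ vanishes on the prefix. With the bookkeeping $\deg(g_k)\le d-k$, this gives exactly $(d-k)|\F|^{m-1}+k|\F|^{m-1}=d\,|\F|^{m-1}$. The edge cases $k=0$ and $\deg(g_k)=0$ are handled correctly.

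The paper itself gives no proof of this lemma; it only cites \cite{Schwartz80, Zippel79, DL78}. It also remarks that the lemma is the $s=1$ case of the multiplicity bound in Theorem~\ref{thm:prelim_hasse_sz}, which is likewise cited from \cite{DKSS13} without proof. Your argument is the standard self-contained proof and needs only the univariate root bound. Deducing the lemma from Theorem~\ref{thm:prelim_hasse_sz} would instead rely on a strictly stronger statement whose proof is more involved.
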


The notion of a Hasse derivative can also be extended to the multivariate setting as follows. Let $f(\bX) \in  \F[\bX]$ be a non-zero $m$-variate polynomial over a field $\F$.
For a vector $\bi \in \N^m$,  we define the \textsf{$\bi$'th (Hasse) derivative} $f^{(\bi)}(\bX)$ of $f$ as the coefficient of $\bZ^\bi$ in the expansion
\[ f(\bX + \bZ) = \sum_{\bi \in \N^m} f^{(\bi)}(\bX) \bZ^{\bi}. \] 
We define the \textsf{multiplicity} of $f$ at a point $\ba \in \F^m$, denoted $\mult(f, \ba)$, to be the largest integer $s$ such that for any $\bi \in \N^m$ with $\wt(\bi) <s$, we have that $f^{(\bi)} (\ba) =0$. 

The following theorem extends the above Lemma \ref{lem:prelim_sz} to also account  for multiplicities of roots.

\begin{theorem}[\cite{DKSS13}, Lemma 2.7]\label{thm:prelim_hasse_sz}
Let $f(\bX) \in  \F[\bX]$ be a non-zero $m$-variate polynomial over a finite field $\F$ of (total) degree $d$. Then
$$\sum_{\ba \in \F^m} \mult(f, \ba) \leq d \cdot |\F|^{m-1}.$$
In particular, for any positive integer $s$, $\mult(f, \ba) \geq s$ for at most a
 $\frac{d} {s|\F|}$-fraction of the  points $\ba \in \F^m$.
\end{theorem}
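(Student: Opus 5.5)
We prove the multiplicity Schwartz--Zippel bound of Theorem~\ref{thm:prelim_hasse_sz} by induction on the number of variables $m$. The base case $m=1$ is exactly Fact~\ref{fact:prelim_hasse_sz_univ}. The ``in particular'' part follows by Markov's inequality: if $N$ points have multiplicity at least $s$, then $sN \le \sum_{\ba}\mult(f,\ba) \le d|\F|^{m-1}$, so $N/|\F|^m \le d/(s|\F|)$. We therefore focus on the sum bound.

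For the inductive step, write $f(X_1,\ldots,X_m)=\sum_{j=0}^{t} f_j(X_1,\ldots,X_{m-1})X_m^j$, where $f_t\neq 0$ and $\deg f_t\le d-t$. For a point $\ba=(\ba',a_m)$ we use two basic facts about Hasse derivatives, both of which follow by comparing coefficients in $f(\bX+\bZ)$.
\begin{enumerate}
\item \textbf{Derivatives lower multiplicity.} For any $\bi$, $\mult(f^{(\bi)},\ba)\ge \mult(f,\ba)-\wt(\bi)$. This follows from the iterative property $(f^{(\bi)})^{(\bj)}={\bi+\bj\choose \bi}f^{(\bi+\bj)}$.
\item \textbf{Restriction does not lower multiplicity.} For a line or coordinate restriction, $\mult(f(\ba',X_m),a_m)\ge\mult(f,\ba)$.
\end{enumerate}

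The key local inequality we aim for is the following. Let $m_{\ba'}:=\mult(f_t,\ba')$, and choose $\bi'\in\N^{m-1}$ with $\wt(\bi')=m_{\ba'}$ and $f_t^{(\bi')}(\ba')\neq 0$. Set $g:=f^{(\bi',0)}$. The coefficient of $X_m^t$ in $g$ is $f_t^{(\bi')}$, so the univariate polynomial $g(\ba',X_m)$ is nonzero of degree $t$. By the two facts above,
$$\mult(f,\ba)\le m_{\ba'}+\mult(g,\ba)\le m_{\ba'}+\mult(g(\ba',X_m),a_m).$$
Summing over $a_m\in\F$ and applying Fact~\ref{fact:prelim_hasse_sz_univ} to $g(\ba',X_m)$ gives
$$\sum_{a_m}\mult(f,(\ba',a_m))\le |\F|\, m_{\ba'}+t.$$
Summing next over $\ba'\in\F^{m-1}$ and using the induction hypothesis on $f_t$, which has $m-1$ variables and degree at most $d-t$, we get
$$\sum_{\ba}\mult(f,\ba)\le |\F|(d-t)|\F|^{m-2}+t|\F|^{m-1}=d|\F|^{m-1}.$$

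The main obstacle is the local inequality $\mult(f,\ba)\le \wt(\bi)+\mult(f^{(\bi)},\ba)$ for a derivative that may be of ``mixed'' type, together with the claim that restriction can only increase multiplicity. I would verify both directly from the defining expansion. For restriction, $f(\ba'+\bZ',a_m+Z_m)$ vanishing in all monomials of degree $<s$ implies the same after setting $\bZ'=0$. For the derivative inequality, write $\bj$ for a multi-index of weight less than $\mult(f,\ba)-\wt(\bi)$. Then ${\bi+\bj\choose\bi}f^{(\bi+\bj)}(\ba)=0$ because $\wt(\bi+\bj)<\mult(f,\ba)$, so $(f^{(\bi)})^{(\bj)}(\ba)=0$. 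The binomial coefficients cause no trouble, since we only use this direction of the implication.
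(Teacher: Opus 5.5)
Your proof is correct. The induction on $m$ via the leading $X_m$-coefficient $f_t$ works, as does choosing a derivative of $f_t$ that is nonzero at $\ba'$ and has weight exactly $\mult(f_t,\ba')$, and the local bound $\mult(f,\ba)\le \mult(f_t,\ba')+\mult\bigl(f^{(\bi',0)}(\ba',X_m),a_m\bigr)$ holds. The paper gives no proof of its own and only cites \cite{DKSS13}; your argument is essentially the standard inductive proof of Lemma 2.7 there.
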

Note that the $m=1$ case of the above theorem corresponds to the above Fact \ref{fact:prelim_hasse_sz_univ}, while the $s=1$ case of the 'in particular' part corresponds to the above Lemma \ref{lem:prelim_sz}.

We also have the following extension of Lemma \ref{lem:prelim_hasse_properties_univ}. 
\begin{lemma}\label{lem:prelim_hasse_properties}
The following holds for any field $\F$, $m$-variate polynomials $f(\bX), g(\bX) \in \F[\bX]$, and $\bi, \bj \in \N^m$:
 \begin{enumerate}
\item (Linearity) $(f(\bX)+ g(\bX))^{(\bi)} = f^{(\bi)}(\bX) + g^{(\bi)}(\bX)$.
\item (Product rule) $(f(\bX) \cdot g(\bX))^{(\bi)} = \sum_{\mathbf{r}, \mathbf{k} \;:\; \mathbf{r}+ \mathbf{k} = \bi}  f^{(\mathbf{r})}(\bX) \cdot g^{(\mathbf{k})}(\bX)$.
\item (Iterative application) $(f^{(\bi)}(\bX))^{(\bj)} = {\bi+\bj \choose \bi} \cdot f^{(\bi+\bj)}(\bX) $.
\end{enumerate}
\end{lemma}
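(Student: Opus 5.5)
The plan is to derive all three properties directly from the defining identity $f(\bX+\bZ)=\sum_{\bi} f^{(\bi)}(\bX)\bZ^{\bi}$. We work in the polynomial ring $\F[\bX,\bZ]$ (and later $\F[\bX,\bZ,\mathbf{W}]$) and compare coefficients of monomials in the auxiliary variables. Since the derivatives are uniquely determined as coefficients, it suffices to exhibit an expansion of the relevant polynomial in powers of $\bZ$ and read off the coefficient of $\bZ^{\bi}$.

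For linearity, we expand both sides of $(f+g)(\bX+\bZ)=f(\bX+\bZ)+g(\bX+\bZ)$ and compare coefficients of $\bZ^{\bi}$. For the product rule, we write
\[
(fg)(\bX+\bZ)=f(\bX+\bZ)\,g(\bX+\bZ)=\Big(\sum_{\mathbf{r}} f^{(\mathbf{r})}(\bX)\bZ^{\mathbf{r}}\Big)\Big(\sum_{\mathbf{k}} g^{(\mathbf{k})}(\bX)\bZ^{\mathbf{k}}\Big).
\]
Since $\bZ^{\mathbf{r}}\bZ^{\mathbf{k}}=\bZ^{\mathbf{r}+\mathbf{k}}$, the coefficient of $\bZ^{\bi}$ on the right is $\sum_{\mathbf{r}+\mathbf{k}=\bi} f^{(\mathbf{r})}(\bX)\,g^{(\mathbf{k})}(\bX)$, as claimed.

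The iterative property is the only step that needs a small trick, and it is the main obstacle. We introduce a second vector of fresh variables $\mathbf{W}$ and expand $f(\bX+\bZ+\mathbf{W})$ in two ways.

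First, we apply the definition with $\bZ$ replaced by $\bZ+\mathbf{W}$:
\[
f(\bX+\bZ+\mathbf{W})=\sum_{\mathbf{l}} f^{(\mathbf{l})}(\bX)(\bZ+\mathbf{W})^{\mathbf{l}}.
\]
Expanding coordinatewise with the binomial theorem, $(\bZ+\mathbf{W})^{\mathbf{l}}=\sum_{\bi+\bj=\mathbf{l}}\binom{\mathbf{l}}{\bi}\bZ^{\bi}\mathbf{W}^{\bj}$, where $\binom{\mathbf{l}}{\bi}=\prod_k\binom{l_k}{i_k}$ as in the notation section. So the coefficient of $\bZ^{\bi}\mathbf{W}^{\bj}$ in this expansion is $\binom{\bi+\bj}{\bi}f^{(\bi+\bj)}(\bX)$.

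Second, we first expand in $\bZ$ and then in $\mathbf{W}$. The defining identity is a polynomial identity in $\bX$, so we may substitute $\bX+\mathbf{W}$ for $\bX$ to get $f(\bX+\mathbf{W}+\bZ)=\sum_{\bi} f^{(\bi)}(\bX+\mathbf{W})\bZ^{\bi}$. Applying the definition again to each polynomial $f^{(\bi)}$ gives $f^{(\bi)}(\bX+\mathbf{W})=\sum_{\bj}(f^{(\bi)})^{(\bj)}(\bX)\mathbf{W}^{\bj}$. Hence the coefficient of $\bZ^{\bi}\mathbf{W}^{\bj}$ is $(f^{(\bi)})^{(\bj)}(\bX)$.

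Comparing the two coefficients of $\bZ^{\bi}\mathbf{W}^{\bj}$ yields $(f^{(\bi)})^{(\bj)}(\bX)=\binom{\bi+\bj}{\bi}f^{(\bi+\bj)}(\bX)$. The binomial coefficients are integers interpreted in $\F$, so no division occurs and the argument is valid in any characteristic.

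The points to check carefully are that all sums are finite, since $f$ is a polynomial, and that the substitution $\bX\mapsto\bX+\mathbf{W}$ is a ring homomorphism; both are routine. The univariate Lemma \ref{lem:prelim_hasse_properties_univ} is the special case $m=1$ of the same argument.
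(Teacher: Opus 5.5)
Your proposal is correct. Linearity and the product rule are immediate comparisons of coefficients in $f(\bX+\bZ)$, and expanding $f(\bX+\bZ+\mathbf{W})$ in two ways correctly yields the iterative rule with the factor $\binom{\bi+\bj}{\bi}$ in every characteristic. The paper gives no written proof of this lemma; for the univariate version it remarks that the properties follow directly from the definition of the Hasse derivative, which is exactly the route you take.
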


\subsection{Some families of polynomial codes}\label{subsec:prelim_poly_codes}

Next we introduce some families of polynomial codes that will be the focus of this survey. All codes are defined over a finite field $\F_q$ of $q$ elements, where $q$ is some prime power. 
We begin with the most basic family of Reed-Solomon Codes. 

\paragraph{Reed-Solomon (RS) Codes.}
Let $a_1, \ldots, a_n$ be distinct points in $\F_q$ (called \textsf{evaluation points}), and let $k < n$ be a positive integer (the \textsf{degree parameter}).  
The \textsf{Reed-Solomon (RS) Code} $\RS_q(a_1,\ldots,a_n;k)$ is a code 
which associates with any univariate polynomial $f(X) \in \F_q[X]$ of degree smaller than $k$ a codeword $(f(a_1), \ldots, f(a_n)) \in \F_q^n.$ In the special case where $n =q$, we let  $\RS_q(k):=\RS_q(a_1,\ldots,a_n;k)$.

It follows by definition that the Reed-Solomon Code $\RS_q(a_1,\ldots,a_n;k)$ is a linear code of rate $\frac k n$. Furthermore, since any two distinct univariate polynomials of degree smaller than $k$ can agree on at most $k-1$ points in $\F_q$ (since their difference is a non-zero univariate polynomial of degree smaller than $k$, and so has at most $k-1$ roots in $\F_q$), it follows that $\RS_q(a_1,\ldots,a_n;k)$ has distance at least $n-k+1$, and relative distance at least $1- \frac k n$. So the Reed-Solomon Code attains the Singleton Bound (and so is an MDS code). 

\medskip

Another family of polynomial codes that will be at the focus of this survey are \emph{multiplicity codes}, which extend Reed-Solomon Codes by also evaluating  \emph{derivatives} of polynomials.

\paragraph{Multiplicity codes.}

As before, let $a_1, \ldots, a_n$ be distinct evaluation points in $\F_q$, and let $s$ be a positive integer (the \textsf{multiplicity parameter}). Setting the multiplicity parameter $s$ to be large allows us to choose a degree parameter that is larger than the block length, specifically, we can choose the degree parameter $k$ to be any positive integer so that $k<sn$. 
The \textsf{(univariate) multiplicity code} $\MULT_q^{(s)}(a_1,\ldots,a_n;k)$ is defined similarly to the corresponding Reed-Solomon Code $\RS_q(a_1,\ldots,a_n;k)$, except that  each codeword of the form $(f(a_1), \ldots, f(a_n))$ is now replaced with the codeword $\left(f^{(<s)}(a_1),  \ldots, f^{(<s)}(a_n)\right)$, where for $a \in \F_q$, we use $f^{(<s)}(a) \in \F_q^s$  
to denote the vector which includes all (Hasse) derivatives of $f$ at $a$ of order smaller than $s$, that is, $f^{(<s)}(a) = (f(a), f^{(1)}(a), \ldots, f^{(s-1)}(a))$. 

It follows by definition that the multiplicity code $\MULT_q^{(s)}(a_1,\ldots,a_n;k)$ is an $\F_q$-linear code over the alphabet $\F_q^s$ of  rate $\frac{k} {sn}$. Furthermore, by Fact \ref{fact:prelim_hasse_sz_univ}, $\MULT_q^{(s)}(a_1,\ldots,a_n;k)$ has distance at least $n-\frac {k-1} s$, and relative distance at least $1-\frac {k} {sn}$. Finally, note that Reed-Solomon Codes correspond to the special case of multiplicity parameter $s=1$.

\medskip

Next we present multivariate extensions of the above codes. The first such extension are the \emph{Reed-Muller  Codes} which extend Reed-Solomon codes to the setting of multivariate polynomials. 

\paragraph{Reed-Muller (RM) Codes.} 
For simplicity, we only present here the definition of Reed-Muller codes evaluated over the whole field. 
More specifically, as in the case of Reed-Solomon Codes, let $k$ be a degree parameter so that $k<q$, and let $m$ be a positive integer. 
The \textsf{Reed-Muller ($\RM$) Code}  $\RM_{q,m}(k)$ is a code 
which associates with any $m$-variate polynomial $f(X_1, \ldots, X_m) \in \F_q[X_1, \ldots, X_m]$ of degree smaller than $k$ a codeword  $(f(\ba))_{\ba \in \F_q^m}.$

It follows by definition that the Reed-Muller Code $\RM_{q,m}(k)$ is a linear code of block length $q^m$. Its dimension is the number of monomials in $m$ variables of degree smaller than $k$ which equals ${m+k-1 \choose m}$, and so this code has rate $\frac{{m+k-1 \choose m}} {q^m}$. Furthermore, by the Schwartz-Zippel Lemma (Lemma \ref{lem:prelim_sz}), $\RM_{q,m}(k)$ has  relative distance at least $1-\frac {k} {q}$. 
Further note that the $m=1$ case of the Reed-Muller Code defined above corresponds to the Reed-Solomon code $\RS_{q}(k)$, evaluated over the whole field.
 
 \bigskip
 
Finally, we also present an extension of multiplicity codes to the multivariate setting.

\paragraph{Multivariate multiplicity codes.} Once more, for simplicity, we only present here the definition of multivariate multiplicity codes evaluated over the whole field. 
As in the case of univariate multiplicity codes, let $k$ be a degree parameter and let  $s$ be a multiplicity parameter so that $k<s \cdot q$, and as in the case of Reed-Muller codes, let $m$ be a positive integer. The \textsf{multivariate multiplicity code} $\MULT_{q,m}^{(s)}(k)$ is defined similarly to the corresponding Reed-Muller Code $\RM_{q,m}(k)$, except that  each codeword of the form $(f(\ba))_{\ba \in \F_q^m}$ is now replaced with the codeword $(f^{(<s)}(\ba))_{\ba \in \F_q^m}$, where similarly to the univariate multiplicity code case, we let $f^{(<s)}(\ba)$ 
denote the vector which includes all $m$-variate (Hasse) derivatives of $f$ at $\ba$ of order smaller than $s$, that is, $f^{(<s)}(\ba) = (f^{(\bi)}(\ba))_{\bi \in \N^m \;:\; \wt(\bi)<s}$. Note that the length of this vector is the number of vectors $\bi \in \N^m$ of weight smaller than $s$ which equals ${m+s-1 \choose m}$. 

It follows by definition that the multivariate multiplicity code $\MULT^{(s)}_{q,m}(k)$ is an $\F_q$-linear code over the alphabet $\F_q^{ {m+s-1 \choose m}}$ and of dimension ${m+k-1 \choose m}$, and so has rate $\frac{ {m+k-1 \choose m} } { {m+s-1 \choose m} \cdot q^m}$. Furthermore, by Theorem \ref{thm:prelim_hasse_sz}, $\MULT^{(s)}_{q,m}(k)$ has relative distance at least $1-\frac {k} {sq}$. 

\subsection{Bibliographic notes}

\paragraph{List decoding.}
The model of \textsf{list decoding} was first introduced by Elias \cite{Elias57} and Wozencraft \cite{wozencraft58}. 
The \textsf{Johnson Bound} is a classical bound in coding theory, and a bound in a similar form to the one stated in Theorem \ref{thm:johnson} appeared in the context of list decoding in 
\cite[Section 4.1]{GRS00}. Our proof of Theorem \ref{thm:johnson} follows the proof of  \cite[Theorem 3.1]{Gur-survey} which is attributed there to Radhakrishnan. The \textsf{Singleton Bound} is another classical bound in coding theory that is attributed to Singleton \cite{Singleton64}, but also previously appeared  in
\cite{Joshi58}. The \textsf{Generalized Singleton Bound} was stated and proved much more recently by Shangguan and Tamo \cite{ST20}, and further improvements and extensions of this bound were given by Roth \cite{Roth22} and by Goldberg, Shangguan and Tamo \cite{GST24}.

 In this section we only stated and proved simplified \emph{alphabet-independent}  versions of both the Johnson Bound and the generalized Singleton Bound, as our focus in this survey is on polynomial-based codes that are defined over a large alphabet, and achieve the best possible list-decoding radius. Alphabet-dependent versions of the Johnson Bound can be found in \cite[Section 3.1]{Gur-survey}. It is also known that over a $q$-ary alphabet, the \textsf{list-decoding capacity} (i.e., the best possible list-decoding radius for codes of rate $R$) is $H_q^{-1}(1- R)$, where
$$
H_q(\alpha)=\alpha\log_q(q-1)+\alpha\log_q\left(\frac 1 \alpha\right) +(1-\alpha)\log_q\left(\frac 1 {1-\alpha}\right)
$$
 denotes the \emph{$q$-ary entropy function} (see e.g., \cite[Section 3.2]{Gur-survey}). It can be checked that  $H_q^{-1}(1- R)$ approaches $1-R$ for a growing alphabet.

\paragraph{Polynomial-based codes.}
  Theorem \ref{thm:prelim_hasse_sz} which bounds the number of roots of a low-degree polynomial, \emph{counted with multiplicities}, was stated and proved by Dvir, Kopparty, Saraf, and Sudan \cite{DKSS13}. This bound extends the classical bound for the number of roots without multiplicities, stated in Lemma \ref{lem:prelim_sz}, that was proven independently by Schwartz \cite{Schwartz80}, Zippel \cite{Zippel79}, and DeMillo and Lipton \cite{DL78}. 

   \textsf{Reed-Solomon Codes} are classical codes that were introduced by Reed and Solomon \cite{RS60}, while \textsf{Reed-Muller Codes} were introduced by Muller \cite{Mul54}, and a fast decoder for these codes was presented by Reed \cite{Reed54}. The univariate version of \textsf{multiplicity codes} was introduced by Rosenbloom and Tsfasman \cite{RT97}, while their multivariate extension was introduced more recently by Kopparty, Saraf, and Yekhanin \cite{KSY14}.

\section{Algorithmic list decoding up to Johnson Bound}\label{sec:johnson}

In this section we focus on the most basic family of \emph{Reed-Solomon $(\RS)$  Codes}, and show how to efficiently list-decode these codes beyond the unique decoding radius, up to the \emph{Johnson Bound}  (cf.,  Theorem \ref{thm:johnson}). Note that up to this bound the list size is guaranteed to be constant, so the question is only \emph{computational}, namely, designing efficient list decoding algorithms which given a received word,  output the list of close-by codewords.

As a warmup, we first present in Section \ref{subsec:rs_unique} below an efficient (polynomial-time) algorithm for \emph{unique decoding} of Reed-Solomon Codes up to half their minimum distance. Then, in Section \ref{subsec:rs_list} we present an efficient algorithm for list decoding of Reed-Solomon Codes \emph{beyond} half their minimum distance, while in Section \ref{subsec:rs_johnson} we show how to extend this algorithm up to the \emph{Johnson Bound}.  
Finally, in Section we present combinatorial lower bounds on the list size of Reed-Solomon Codes beyond the Johnson bound, which in particular imply that there is no efficient algorithm which list decodes general Reed-Solomon codes well beyond the Johnson bound. 

\subsection{Unique decoding of Reed-Solomon Codes}\label{subsec:rs_unique}

In this section, we first present, as a warmup, an efficient (polynomial time) algorithm for \emph{unique decoding} of Reed-Solomon Codes up to half their minimum distance.

More specifically, fix  a prime power $q$, distinct evaluation points $a_1, \ldots, a_n \in \F_q$, and a positive integer $k<n$. Recall that the  \emph{Reed-Solomon $(\RS)$  Code} $\RS_{q}(a_1, \ldots, a_n; k)$ is the code which associates with each polynomial $f(X) \in \F_q[X]$ of degree smaller than $k$ a codeword $(f(a_1), \ldots, f(a_n)) \in \F_q^n.$
In order to avoid annoying rounding issues, in what follows we shall assume for simplicity that $k,n$ have the same parities.
Suppose that $w \in \F_q^n$ is a received word, and suppose that there exists a (unique) polynomial $f(X) \in \F_q[X]$ of degree smaller than $k$ so that $f(a_i) \neq w_i$ for at most $e :=\frac{n-k} {2}$ indices $i \in [n]$. That is, $f(a_i) = w_i$ for at least $t: =n-e= \frac{n+k} {2}$ indices $i \in [n]$. Below we shall present an algorithm which finds $f(X)$ (and so also the associated codeword $(f(a_1), f(a_2), \ldots, f(a_n))$). 
In particular, if we let $R:=\frac{k} {n}$ denote that rate of the $\RS_{q}(a_1, \ldots, a_n; k)$ code, then this algorithm will be able to decode from $\frac{1-R}{2}$ fraction of errors. 

\paragraph{Overview.}
As will typically be the case, the decoding algorithm is divided into two main steps. The first step is an \emph{interpolation} step in which we find a non-zero low-degree bivariate polynomial 
$Q(X,Y) \in \F_q[X,Y]$ so that $Q(a_i, w_i)=0$ for any $i \in [n]$. 
The second step is a \emph{root finding} step in which we find a univariate polynomial $g(X) \in \F_q[X]$ satisfying that $Q(X, g(X)) = 0$. To show correctness, we need to prove that such a non-zero low-degree polynomial $Q$ exists, and that $g(X) = f(X)$.
We also need to ensure that both steps can be performed efficiently (in time polynomial in $n$ and $\log (q)$). In what follows, we describe separately each of these steps, and analyze their correctness and efficiency.

\paragraph{Step 1: Interpolation.}

We start by describing the first step of interpolation, in which we would like to find a non-zero low-degree bivariate polynomial 
$Q(X,Y) \in \F_q[X,Y]$ so that $Q(a_i, w_i)=0$ for any $i \in [n]$. 

How can such a polynomial $Q$ be found? Let 
$$E(X) :=  \prod_{i : f(a_i) \neq w_i} \left(X-a_i\right)$$ 
denote the 
\textsf{error-locating polynomial} whose roots are all the error-locations, i.e., all evaluation points $a_i$ for which $f(a_i) \neq w_i$. 

The main observation is that if we let
$$
Q(X,Y) = E(X) \cdot f(X) - E(X) \cdot Y,
$$
then for any $i \in [n]$ we have that 
$$
Q(a_i, w_i) = E(a_i) \cdot f(a_i) -  E(a_i) \cdot w_i =0.
$$
To see that the above indeed holds, note that if $f(a_i) = w_i$, then we clearly have that $Q(a_i, w_i) =0$, and if  $f(a_i) \neq w_i$, then $E(a_i)=0$ by the definition of $E$, and so we also clearly have that $Q(a_i,w_i)=0$ in this case.

Motivated by the above, we would like to search for a non-zero bivariate polynomial $Q(X,Y)$ over $\F_q$ of the form
$$Q(X,Y) = A (X) +  B (X) \cdot Y ,$$ 
where $A(X)$ has degree at most $e+k-1= \frac{n+k} {2} -1$ and $B(X)$ has degree at most $e=\frac{n-k} {2}$, and which satisfies that $Q(a_i, w_i)=0$ for any $i \in [n]$. 
By the above, such a non-zero polynomial $Q(X,Y)$  exists by the choice of $A(X) = E(X) \cdot f(X) $ and $B(X) = -E(X)$. Next we give an alternative linear-algebraic argument for the existence of $Q$ that will later be useful when attempting to extend the unique decoding algorithm to the setting of list decoding.

\begin{lemma}\label{lem:rs_unique_interpolation}
There exists a non-zero bivariate polynomial $Q(X,Y)$ over $\F_q$ of the form
$$Q(X,Y) = A (X) +  B (X) \cdot Y ,$$ 
where $A(X)$ has degree at most $\frac{n+k} {2}-1$ and $B(X)$ has degree at most $\frac{n-k} {2}$, and which satisfies that $Q(a_i, w_i)=0$ for any $i \in [n]$. 

Furthermore, such a polynomial $Q(X,Y)$ can be found by solving a system of linear equations which is obtained by viewing the  coefficients of the monomials in $A(X)$ and $B(X)$ as unknowns, and viewing each requirement of the form $Q(a_i,w_i)= A(a_i)+ B(a_i) \cdot w_i=0$ as a homogeneous linear constraint on these unknowns.       
\end{lemma}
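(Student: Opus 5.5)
The plan is to give the promised linear-algebraic existence argument by counting unknowns against constraints. The "furthermore" part is then immediate from the same setup.

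First, we write $A(X)=\sum_{j=0}^{(n+k)/2-1}\alpha_j X^j$ and $B(X)=\sum_{j=0}^{(n-k)/2}\beta_j X^j$, and treat the coefficients $\alpha_j,\beta_j\in\F_q$ as unknowns. There are $\frac{n+k}{2}$ unknowns of type $\alpha$ and $\frac{n-k}{2}+1$ of type $\beta$, so $n+1$ unknowns in total. The parity assumption on $n,k$ guarantees these are integers.

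Next, for each $i\in[n]$ the requirement $Q(a_i,w_i)=0$ reads
\[
\sum_j \alpha_j a_i^j+\sum_j \beta_j a_i^j w_i=0.
\]
Since $a_i,w_i$ are fixed field elements, this is a homogeneous linear equation in the unknowns. This gives a homogeneous system of $n$ equations in $n+1$ unknowns over $\F_q$.

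A homogeneous linear system with more unknowns than equations has a non-trivial solution, because the kernel of an $n\times(n+1)$ matrix has dimension at least $1$. A non-zero coefficient vector yields a non-zero polynomial $Q(X,Y)=A(X)+B(X)Y$: the monomials $X^j$ and $X^jY$ are distinct, so $Q$ is non-zero exactly when some $\alpha_j$ or $\beta_j$ is non-zero. This proves existence with the stated degree bounds.

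For the furthermore part, such a non-zero kernel vector can be computed by Gaussian elimination in $\poly(n)$ field operations, which is $\poly(n,\log q)$ time. Reading off its entries as the coefficients gives $Q$.

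There is no real obstacle here. The only points needing care are the exact count of unknowns, where the "$+1$" from the constant term of $B$ is what makes the count exceed $n$, and the observation that a non-zero coefficient vector gives a non-zero polynomial. As a consistency check, the explicit choice $A=Ef$, $B=-E$ also lies in this solution space.
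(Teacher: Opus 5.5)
Your proposal is correct and matches the paper's proof: both count $(\deg A+1)+(\deg B+1)=\frac{n+k}{2}+\frac{n-k}{2}+1=n+1$ unknowns against $n$ homogeneous constraints, and conclude that a non-zero solution exists. Your remark that a non-zero coefficient vector gives a non-zero $Q$ fills in a detail the paper leaves implicit. Note that the consistency check with $A=Ef$, $B=-E$ only applies when some codeword within distance $\frac{n-k}{2}$ of $w$ actually exists.
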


\begin{proof}
If we view the coefficients of the monomials in $A(X)$ and $B(X)$ as unknowns, then each requirement of the form $Q(a_i,w_i)= A(a_i)+ B(a_i) \cdot w_i=0$ imposes a homogeneous linear constraint on these unknowns. Further note that the number of unknowns is 
$(\deg(A)+1) + (\deg(B) +1) =n+1$,
while the number of linear constraints is $n$, and consequently this linear system has a non-zero solution.
\end{proof}

\paragraph{Step 2: Root finding.} In this step, we would like to find a univariate polynomial $g(X) \in \F_q[X]$ satisfying that $Q(X, g(X)) = 0$, and we would like to show that $g(X)=f(X)$. To this end, we first show that $Q(X,f(X)) =0$. This will follow by showing that the \emph{univariate} polynomial $P_f(X):=Q(X,f(X))$ has more roots than its degree.

\begin{lemma}\label{lem:rs_unique_P_is_root}
Suppose that $Q(X,Y)$ is a non-zero bivariate polynomial over $\F_q$ of the form
$$Q(X,Y) = A (X) +  B (X) \cdot Y ,$$ 
where $A(X)$ has degree at most $\frac{n+k} {2} - 1$ and $B(X)$ has degree at most $\frac{n-k} {2}$, and which satisfies that $Q(a_i, w_i)=0$ for any $i \in [n]$. 
Let $f(X) \in \F_q[X]$ be a univariate polynomial  of degree smaller than $k$ so that $f(a_i) = w_i$ for at least $\frac{n+k} {2}$ indices $i \in [n]$. Then $Q(X,f(X))=0$. 
\end{lemma}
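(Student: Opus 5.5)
The plan is to consider the univariate polynomial $P_f(X) := Q(X, f(X)) = A(X) + B(X)\cdot f(X)$. We will show that it has more roots than its degree, so it must be identically zero by the standard fact that a non-zero univariate polynomial of degree $d$ has at most $d$ roots (the $s=1$ case of Fact~\ref{fact:prelim_hasse_sz_univ}).

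The first step is to bound the degree of $P_f$. We have $\deg(A) \le \frac{n+k}{2}-1$. Since $\deg(B) \le \frac{n-k}{2}$ and $\deg(f) \le k-1$, we also get $\deg(B\cdot f) \le \frac{n-k}{2} + k - 1 = \frac{n+k}{2}-1$. Hence $P_f$ has degree at most $\frac{n+k}{2}-1$. Here we use the standing assumption that $n,k$ have the same parity, so these quantities are integers.

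The second step is to count roots. For every index $i$ with $f(a_i)=w_i$, we have
$$P_f(a_i) = A(a_i) + B(a_i) f(a_i) = A(a_i) + B(a_i) w_i = Q(a_i,w_i) = 0.$$
By hypothesis there are at least $\frac{n+k}{2}$ such indices. The points $a_i$ are distinct, so $P_f$ has at least $\frac{n+k}{2}$ distinct roots in $\F_q$. This exceeds the degree bound $\frac{n+k}{2}-1$, so $P_f \equiv 0$, i.e.\ $Q(X,f(X))=0$.

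There is no real obstacle. The only point needing care is the degree bookkeeping in the first step, which is exactly why the degree bounds on $A$ and $B$ in Lemma~\ref{lem:rs_unique_interpolation} were chosen to balance. We never use that $Q$ is non-zero; that assumption matters only later, when recovering $f$ from $Q$.
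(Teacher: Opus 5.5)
Your proof is correct and is the same argument the paper gives: $P_f(X)=Q(X,f(X))$ has degree at most $\frac{n+k}{2}-1$ but vanishes at the at least $\frac{n+k}{2}$ distinct agreement points $a_i$, so it is identically zero. Your explicit degree bound for $B\cdot f$ and your observation that the non-zeroness of $Q$ is not used here are both accurate.
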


\begin{proof}
The polynomial $P_f(X):=Q(X,f(X)) \in \F_q[X]$ is a univariate polynomial of degree at most
$ \frac{n+k} {2}-1$, which satisfies that 
$P_f(a_i)=Q(a_i, f(a_i)) = Q(a_i, w_i)=0$ for any $i \in [n]$ for which $f(a_i) = w_i$.  Thus $P_f(X)$ is a univariate polynomial of degree at most $\frac {n+k} {2}-1$ with at least $ \frac{n+k} {2}$ roots, and so it must be the zero polynomial. 
\end{proof}

The next lemma shows that $f(X)$ is the unique polynomial satisfying that $Q(X,f(X))=0$. 

\begin{lemma}\label{lem:rs_unique_solution_size}
Suppose that $Q(X,Y)$ is a non-zero bivariate polynomial over $\F_q$ of the form
$$Q(X,Y) = A (X) +  B (X) \cdot Y.$$
Then there exists at most one univariate polynomial $f(X) \in \F_q[X]$ so that  $Q(X,f(X))=0$. 
Furthermore, if $Q(X,f(X))=0$, then $B(X) \neq 0$ and $f(X) = -\frac {A(X)} {B(X)}$.
\end{lemma}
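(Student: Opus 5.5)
The plan is to read off the constraint $Q(X,f(X))=0$ directly as the polynomial identity $A(X)+B(X)\cdot f(X)=0$ in $\F_q[X]$. Everything then follows from the facts that $\F_q[X]$ is an integral domain and $\F_q(X)$ is a field.

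First we show that $B(X)\neq 0$ whenever some $f$ satisfies $Q(X,f(X))=0$. Suppose, for contradiction, that $B(X)=0$. Since $Q$ is non-zero, this forces $A(X)\neq 0$. But then $Q(X,f(X))=A(X)$ for every $f$, which is non-zero. So no $f$ satisfies $Q(X,f(X))=0$, contradicting the assumption. Hence $B(X)\neq 0$.

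Next, with $B(X)\neq 0$, we divide in the field of rational functions $\F_q(X)$. The identity $A(X)+B(X)f(X)=0$ gives $f(X)=-\frac{A(X)}{B(X)}$, which proves the ``furthermore'' part.

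Uniqueness comes from the same identity. If $f$ and $g$ both satisfy it, subtracting yields $B(X)\cdot(f(X)-g(X))=0$. Since $\F_q[X]$ has no zero divisors and $B\neq 0$, we get $f=g$.

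There is no real obstacle here. The only point needing care is the case split on whether $B$ is zero, which is what lets us use the non-zeroness of $Q$.
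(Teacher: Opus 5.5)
Your proposal is correct and follows essentially the same route as the paper. Both rule out $B=0$ using the assumption that $Q$ is non-zero, and then solve $A+Bf=0$ as $f=-A/B$ in $\F_q(X)$. Your cancellation argument $B\cdot(f-g)=0\Rightarrow f=g$ simply spells out the paper's remark that $f$ is uniquely determined by this formula.
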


\begin{proof}
Assume that $Q(X,f(X))=0$ for some univariate polynomial  $f(X) \in \F_q[X]$.
We first show that  $B(X) \neq 0$.
Suppose on the contrary that $B(X) = 0$. Then by our assumption, we also have that $0=Q(X,f(X))=A(X)$, which implies in turn that $Q(X,Y)= A(X)+B(X) \cdot Y =0$, contradicting the assumption that $Q(X,Y)$ is non-zero. By assumption that $0=Q(X,f(X))= A(X)+B(X) \cdot f(X)$, this implies in turn that $f(X)=-\frac {A(X)} {B(X)}$, and so $f(X)$ is uniquely determined.
In particular, there exists at most one univariate polynomial $f(X) \in \F_q[X]$ so that  $Q(X,f(X))=0$.
\end{proof}

\medskip

The full description of the algorithm appears  in Figure \ref{fig:rs_unique} below, followed by correctness and efficiency analysis.

\begin{figure}[h]
  \begin{boxedminipage}{\textwidth} \small \medskip \noindent
    $\;$

    \underline{\textbf{Unique decoding of $\RS_{q}(a_1, \ldots, a_n;k)$:}}
    
    \medskip
 $\triangleright$ For simplicity assume that $n,k$ have the same parities.
\begin{itemize}
\item \textbf{INPUT:} $w\in \F_q^n$.
\item \textbf{OUTPUT:} A polynomial $f(X) \in \F_q[X]$ of degree smaller than $k$ so that $f(a_i) = w_i$ for at least $\frac{n+k} {2}$ indices $i \in [n]$ if such a polynomial exists, or $\bot$ otherwise.

\end{itemize}

    \begin{enumerate}
\item \label{step:rs_unique_interpolation} Find a non-zero bivariate polynomial $Q(X,Y)$ over $\F_q$ of the form
$$Q(X,Y) = A (X) +  B (X) \cdot Y ,$$ 
where $A(X)$ has degree at most $\frac{n+k} {2}-1$ and $B(X)$ has degree at most $\frac{n-k} {2}$, and which satisfies that $Q(a_i, w_i)=0$ for any $i \in [n]$.

Such a polynomial $Q(X,Y)$ can be found by solving a system of linear equations which is obtained by viewing the  coefficients of the monomials in $A(X)$ and $B(X)$ as unknowns, and viewing each requirement of the form $Q(a_i,w_i)= A(a_i)+ B(a_i) \cdot w_i=0$ as a homogeneous linear constraint on these unknowns.       

\item  \label{step:rs_unique_root_finding}
If $B(X) = 0$ return $\bot$; Otherwise, find  a univariate polynomial $f(X)$ so that $Q(X, f(X))=0$ by letting $f(X):=-\frac{A(X)} {B(X)}$.
\item \label{step:rs_unique_check} If $f(X)$ is a polynomial of degree smaller than $k$ so that $f(a_i) = w_i$ for at least $\frac{n+k} {2}$ indices $i \in [n]$,
return $(f(a_1),f(a_2),\ldots, f(a_n))$; Otherwise, return $\bot$. 

\end{enumerate}

  \medskip

  \end{boxedminipage}

\caption{Unique decoding of Reed-Solomon Codes}
\label{fig:rs_unique}
\end{figure}

\paragraph{Correctness.}  Suppose that $f(X) \in \F_q[X]$ is a univariate polynomial of degree smaller than $k$ so that $f(a_i) = w_i$ for at least $t :=\frac{n+k} {2}$ indices $i \in [n]$. By Lemma \ref{lem:rs_unique_interpolation}, there exists a non-zero bivariate polynomial $Q(X,Y)$ satisfying the requirements on Step \ref{step:rs_unique_interpolation}. By Lemma \ref{lem:rs_unique_P_is_root}, we have that $Q(X,f(X))=0$, and by Lemma \ref{lem:rs_unique_solution_size} this implies in turn that $B(X) \neq 0$ and $f(X)=- \frac{A(X)} {B(X)}$, and so the algorithm will compute the correct $f(X)$ on Step \ref{step:rs_unique_root_finding}. Clearly, $f(X)$ will also pass the checks on Step \ref{step:rs_unique_check}, and so the algorithm will return $f(X)$ as required.

\paragraph{Efficiency.} Step \ref{step:rs_unique_interpolation} can be performed in time $\poly(n, \log (q))$ by solving a system of $n$ linear equations in $n+1$ variables using Gaussian elimination. Step \ref{step:rs_unique_root_finding} can be performed in time $\poly(n,\log (q))$ by performing polynomial division using Euclid's algorithm. Step \ref{step:rs_unique_check} can also clearly be performed in time $\poly(n, \log (q))$. 

\subsection{List decoding Reed-Solomon Codes beyond unique decoding radius}\label{subsec:rs_list}

In this section, we present an efficient algorithm for list decoding of Reed-Solomon Codes \emph{beyond} half their minimum distance.

As before, fix  a prime power $q$, distinct evaluation points $a_1, \ldots, a_n \in \F_q$, and a positive integer $k<n$, and let $\RS_q(a_1, \ldots, a_n;k)$ be the corresponding Reed-Solomon Code. 
Suppose that $w\in \F_q^n$ is a received word. 
Our goal is to compute the \emph{list} of all polynomials $f(X) \in \F_q[X]$ of degree smaller than $k$ so that $f(a_i) \neq w_i$ for at most $e: = n-\sqrt{2 k n}-1$ indices $i \in [n]$. That is, $f(a_i) = w_i$ for at least $t: =n-e= \sqrt{2 k n}+1$ indices $i \in [n]$. 

In particular, if we let $R:=\frac{k} {n}$ denote the rate of the $\RS_{q}(a_1, \ldots, a_n;k)$ code, then this algorithm will be able to list-decodable from up to roughly $(1- \sqrt{2R})$-fraction of errors. It can be verified that this is strictly larger than the unique decoding radius of $\frac{1 -R} {2}$ whenever the rate $R$ is smaller than $0.17$.

\paragraph{Overview.}
Recall that in the unique decoding setting, the decoding algorithm was divided into two main steps: The interpolation step in which we searched  for a non-zero low-degree bivariate polynomial $Q(X,Y) \in \F_q[X,Y]$
of the form $Q(X,Y) = A (X) +  B (X) \cdot Y$  
so that $Q(a_i, w_i)=0$ for any $i \in [n]$, and the root finding step in which we searched for the (unique) univariate polynomial $f(X) \in \F_q[X]$ satisfying that $Q(X, f(X)) = 0$. 

Inspecting the proofs of  Lemmas \ref{lem:rs_unique_interpolation} and  \ref{lem:rs_unique_P_is_root}, we observe that the main properties we needed out of $Q$ 
were that the number of unknown coefficients in $Q$ is greater than the number $n$ of constraints of the form $Q(a_i,w_i)=0$, and that the $(1,k)$-\emph{weighted} $\deg_{(1,k)}(Q)$ of $Q$ is smaller than the \emph{agreement parameter} $t$.\footnote{Recall that for a vector $\bb \in \N^m$, 
the \emph{$\bb$-weighted degree $\deg_{\bb}(f)$} of a polynomial $f(\bX) = \sum_{\bi} f_{\bi} \bX^{\bi} \in  \F[\bX]$  is the maximum weighted sum $\sum_j \bb_j \cdot {\bi}_j$ of $\bi$ so that $f_\bi \neq 0$. } 
The first property guarantees the existence of a non-zero $Q$. The second property  guarantees that for any univariate polynomial $f(X) \in \F_q[X]$ of degree smaller than $k$ so that $f(a_i)=w_i$ for at least $t$ of the indices $i \in [n]$, it holds that
 $P_f(X):=Q(X,f(X))$ is a univariate polynomial of degree smaller than $t$ that has at least $t$ roots, and consequently $P_f(X)=Q(X,f(X))=0$.

In the list-decoding setting, the agreement parameter $t$ is smaller, which means that the $(1,k)$-weighted degree of $Q$ is also required to be smaller. Consequently, $Q$ may potentially have less than $n$ monomials, and so we may not be able to argue the existence of a non-zero polynomial $Q$. To cope with this, we now no longer require that  $Q$ has the special form 
$Q(X,Y) = A (X) +  B (X) \cdot Y$, but we only require that $Q$ has a low $(1,k)$-weighted degree, which can potentially increase the number of monomials in $Q$.
It turns out that this change indeed suffices for setting the agreement parameter $t$ to be significantly smaller than in the unique decoding setting. 
We also need to ensure that the root finding step can be performed efficiently when $Q$ has this general form. 

In what follows, we describe separately the changes made in the interpolation and root finding steps, and analyze their correctness and efficiency.

\paragraph{Step 1: Interpolation.}

In this step we would now like to find a non-zero bivariate polynomial 
$Q(X,Y) \in \F_q[X,Y]$ with low $(1,k)$-weighted degree, so that $Q(a_i, w_i)=0$ for any $i \in [n]$. As opposed to the unique decoding setting, we now do not impose any additional structure on $Q$ beyond the requirement on its weighted degree. The following lemma shows the existence of such a non-zero polynomial $Q$ with $(1,k)$-weighted degree that is significantly smaller than  in the unique decoding setting.

\begin{lemma}\label{lem:rs_list_interpolation}
There exists a non-zero bivariate polynomial $Q(X,Y)$ over $\F_q$ with $(1,k)$-weighted degree at most $\sqrt{2 kn}$, and which satisfies that $Q(a_i, w_i)=0$ for any $i \in [n]$. 
Furthermore, such a polynomial $Q(X,Y)$ can be found by solving a system of linear equations which is obtained by viewing the  coefficients of the monomials in $Q(X,Y)$ as unknowns, and viewing each requirement of the form $Q(a_i,w_i)=0$ as a homogeneous linear constraint on these unknowns.      
\end{lemma}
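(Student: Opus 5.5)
The plan mirrors the proof of Lemma \ref{lem:rs_unique_interpolation}: set up the interpolation conditions as a homogeneous linear system and show there are more unknowns than constraints. Let $D := \sqrt{2kn}$, and consider all monomials $X^i Y^j$ with $i,j \in \N$ and $(1,k)$-weighted degree $i + kj \leq D$. Treat the coefficient of each such monomial in $Q(X,Y)=\sum_{i+kj\leq D} q_{ij} X^iY^j$ as an unknown. Each requirement $Q(a_i,w_i)=0$ is then the homogeneous linear equation $\sum q_{ij} a_i^{\,i} w_i^{\,j} = 0$ in these unknowns, giving $n$ constraints in total. If the number of monomials exceeds $n$, the system has a non-zero solution by basic linear algebra. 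Such a solution is a non-zero $Q$ of $(1,k)$-weighted degree at most $D$ vanishing on all $(a_i,w_i)$, and it is found by Gaussian elimination, which gives the ``furthermore'' part.

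The only real step is the counting: we must show that the number $N$ of pairs $(i,j)\in\N^2$ with $i+kj\leq D$ is strictly larger than $n$. For each $j \in \{0,1,\ldots,\lfloor D/k\rfloor\}$ there are $\lfloor D - kj\rfloor + 1 > D - kj$ choices of $i$. Summing over $j$ from $0$ to $J:=\lfloor D/k\rfloor$ gives
\[
N > \sum_{j=0}^{J} (D - kj) = (J+1)\left(D - \tfrac{kJ}{2}\right).
\]
We lower bound this expression using $J+1 > D/k$ and $D - kJ/2 \geq D/2$ (since $kJ\le D$). This gives $N > \frac{D}{k}\cdot\frac{D}{2} = \frac{D^2}{2k} = n$.

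An equivalent geometric picture is that the lattice points in the triangle $\{x,y\geq 0,\ x+ky\leq D\}$ outnumber its area $D^2/(2k)=n$. The main care needed is making the inequality strict and not losing to floors when $D$ is not an integer or $D/k$ is not an integer. The bound above handles this, because each of the strict inequalities $\lfloor D-kj\rfloor+1 > D-kj$ and $J+1>D/k$ holds unconditionally. Given $N>n$, the homogeneous system has a non-trivial kernel, which completes the argument.
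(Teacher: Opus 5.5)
Your proof is correct and is essentially the paper's argument: the same homogeneous linear system in the monomial coefficients, and the same count of monomials grouped by $Y$-degree showing there are more than $n$ of them. The only difference is that you handle a non-integral $D/k$ with floors, via the strict inequalities $\lfloor D-kj\rfloor+1>D-kj$ and $J+1>D/k$, whereas the paper simply assumes $\sqrt{2n/k}$ is an integer. One cosmetic fix: you use $i$ both as the $X$-exponent and as the index of $(a_i,w_i)$, so rename one of them.
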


\begin{proof}
If we view the coefficients of the monomials in $Q$ as unknowns, then each requirement of the form $Q(a_i,w_i)=0$ imposes a homogeneous linear constraint on these unknowns. 
It thus suffices to show that the number of monomials in $Q$ is greater than the number of linear constraints $n$, and so this linear system has a non-zero solution.

Let $M$ denote the number of monomials in $Q(X,Y)$, and assume for simplicity that $\ell := \sqrt {2 n /k} $ is an integer. First note that by our requirement that $\deg_{(1,k)}(Q) \leq \ell \cdot k$,
 the degree of the $Y$ variable in $Q(X,Y)$ can be any $j \in \{0,1, \ldots, \ell \}$, and  for any monomial in which the degree of the $Y$ variable is $j$, the degree of the $X$ variable can be any $i \in \{0,1,\ldots,  k(\ell-j) \} $. 
Thus we have that
\begin{equation*}
M= \sum_{j=0}^{\ell} (k \ell - kj+1) 
=  \left(\frac{ k \ell } {2} +1\right) \cdot (\ell+1) 
 >   \frac {k \ell^2} {2} =n.
\end{equation*}

We conclude that the number of unknown coefficients in $Q$ is greater than the number of linear constraints $n$, and so there exists a non-zero polynomial $Q$ satisfying the requirements of the lemma.
\end{proof}

\paragraph{Step 2: Root finding.} In this step, we would like to find all univariate polynomials $f(X) \in \F_q[X]$ satisfying that $Q(X, f(X)) = 0$, and we would like to show that any 
univariate polynomial
  $f(X) \in \F_q[X]$ of degree smaller than $k$ so that  $f(a_i) = w_i$ for at least $t=\sqrt{ 2 k n}+1$ indices $i \in [n]$ satisfies that 
  $Q(X,f(X))=0$.

\begin{lemma}\label{lem:rs_list_P_is_root}
Suppose that $Q(X,Y)$ is a non-zero bivariate polynomial over $\F_q$ which satisfies that $\deg_{(1,k)}(Q) \leq \sqrt{ 2 kn}$ and $Q(a_i, w_i)=0$ for any $i \in [n]$. 
Let $f(X) \in \F_q[X]$ be a univariate polynomial  of degree smaller than $k$ so that $f(a_i) = w_i$ for at least $\sqrt{2 kn} +1$ indices $i \in [n]$. Then $Q(X,f(X))=0$. 
\end{lemma}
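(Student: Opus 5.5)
The plan is to mirror the proof of Lemma \ref{lem:rs_unique_P_is_root}. We show that the univariate polynomial $P_f(X):=Q(X,f(X))$ has degree smaller than its number of roots, and therefore must vanish identically.

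\textbf{Degree bound.} Write $Q(X,Y)=\sum_{i,j} q_{ij} X^i Y^j$. Every monomial with $q_{ij}\neq 0$ satisfies $i + kj \leq \deg_{(1,k)}(Q) \leq \sqrt{2kn}$. Substituting $Y=f(X)$, where $\deg(f)\leq k-1$, turns the monomial $X^iY^j$ into $X^i f(X)^j$. This has degree at most $i+(k-1)j \leq i+kj \leq \sqrt{2kn}$. By linearity, $P_f(X)$ is a univariate polynomial of degree at most $\sqrt{2kn}$, and it may of course be the zero polynomial.

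\textbf{Root count.} For every index $i\in[n]$ with $f(a_i)=w_i$, we have $P_f(a_i)=Q(a_i,f(a_i))=Q(a_i,w_i)=0$ by the interpolation condition. The evaluation points $a_i$ are distinct, so $P_f$ has at least $\sqrt{2kn}+1$ distinct roots in $\F_q$.

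\textbf{Conclusion.} A non-zero univariate polynomial of degree $d$ has at most $d$ roots. Since $\sqrt{2kn}+1>\sqrt{2kn}\geq \deg(P_f)$, it follows that $P_f=0$, that is, $Q(X,f(X))=0$.

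There is no substantial obstacle here. The only point needing care is the degree bound: the $(1,k)$-weighted degree exactly controls the degree after substituting a polynomial of degree below $k$. Non-integrality of $\sqrt{2kn}$ is harmless, because $\deg(P_f)$ is an integer at most $\sqrt{2kn}$, which is strictly less than the number of roots.
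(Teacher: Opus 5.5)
Your proof is correct and follows essentially the same approach as the paper: bound $\deg(P_f)$ by the $(1,k)$-weighted degree of $Q$, observe that $P_f$ vanishes at the at least $\sqrt{2kn}+1$ distinct agreement points $a_i$, and conclude that $P_f=0$. Your explicit monomial-by-monomial degree check and your remark on the non-integrality of $\sqrt{2kn}$ just spell out details that the paper's proof leaves implicit.
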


\begin{proof}
By assumption that $\deg_{(1,k)}(Q) \leq \sqrt{ 2 kn}$, the
polynomial $P_f(X):=Q(X,f(X)) \in \F_q[X]$ is a univariate polynomial of degree at most
$ \sqrt{2 kn}$. Additionally, it satisfies that 
$Q(a_i, f(a_i)) = Q(a_i, w_i)=0$ for any $i \in [n]$ for which $f(a_i) = w_i$.  Thus $P_f(X)$ is a univariate polynomial of degree at most $\sqrt{2 kn}$ with at least $\sqrt{2 kn}+1$ roots, and so it must be the zero polynomial. 
\end{proof}

In order to ensure that the root finding step can be performed efficiently when $Q$ has a general form,  we need to show that there are not too many polynomials $f(X)$ of degree smaller than $k$ satisfying that $Q(X,f(X))=0$, and that these polynomials can be found efficiently.  To this end we observe the following.

\begin{lemma}\label{lem:rs_list_solution_size}
Suppose that $Q(X,Y)$ is a non-zero bivariate polynomial over $\F_q$, and  $f(X) \in \F_q[X]$ is a univariate polynomial so that  $Q(X,f(X))=0$. Then $Y-f(X)$ divides $Q(X,Y)$ in the bivariate polynomial ring $\F_q[X,Y]$. 
\end{lemma}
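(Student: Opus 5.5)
The plan is to view $Q(X,Y)$ as a univariate polynomial in $Y$ with coefficients in the ring $\F_q[X]$, that is, $Q(X,Y)=\sum_{j=0}^{D} Q_j(X) Y^j$ with $Q_j(X)\in\F_q[X]$, and to perform division by the \emph{monic} (in $Y$) polynomial $Y-f(X)$. Since the leading coefficient in $Y$ of the divisor is $1$, long division in $Y$ never requires inverting an element of $\F_q[X]$. It therefore stays inside $\F_q[X][Y]=\F_q[X,Y]$ and produces a quotient $S(X,Y)\in\F_q[X,Y]$ and a remainder $r(X)\in\F_q[X]$ of $Y$-degree $0$. These satisfy
$$Q(X,Y) = (Y-f(X))\cdot S(X,Y) + r(X).$$

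To make the division step concrete without invoking general division algorithms, I would instead use the substitution $Y = (Y-f(X)) + f(X)$. Expanding each power gives
$$Y^j = \bigl((Y-f(X))+f(X)\bigr)^j = f(X)^j + (Y-f(X))\cdot T_j(X,Y)$$
for some $T_j\in\F_q[X,Y]$ by the binomial theorem; equivalently, $Y^j - f(X)^j$ is divisible by $Y-f(X)$ via the identity $Y^j-Z^j=(Y-Z)\sum_{l<j}Y^{l}Z^{j-1-l}$. Summing with coefficients $Q_j(X)$ yields the displayed identity with $r(X)=\sum_j Q_j(X) f(X)^j = Q(X,f(X))$.

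Finally, substitute the hypothesis $Q(X,f(X))=0$. This makes $r(X)=0$, hence $Q(X,Y)=(Y-f(X))\cdot S(X,Y)$, which is the claimed divisibility in $\F_q[X,Y]$. Note that nonzeroness of $Q$ is not even needed for the divisibility itself.

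There is no real obstacle. The only point requiring care is to justify that division by $Y-f(X)$ is legitimate over the ring $\F_q[X]$ rather than over a field. The explicit telescoping identity above handles this directly, since it uses only ring operations.
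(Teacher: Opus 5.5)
Your proposal is correct and is essentially the paper's own argument. The paper also writes $Q(X,Y)=\sum_i Q_i(X)Y^i$, expresses $Q(X,Y)-Q(X,f(X))=\sum_i Q_i(X)\bigl(Y^i-f(X)^i\bigr)$, and factors each $Y^i-f(X)^i$ through the same telescoping identity. Your side remark that the nonzeroness of $Q$ is not needed for the divisibility is also correct.
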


\begin{proof}
Let $D$ be the degree of $Y$ in $Q$. So, we can write $Q$ as $Q(X,Y) = \sum_{i = 0}^{D} Q_i(X) Y^i$. Clearly, $Q(X,f(X))$ equals $\sum_{i = 0}^{D} Q_i(X) (f(X))^i$. Therefore, 
\[
Q(X,Y) - Q(X,f(X)) = \sum_{i = 1}^D Q_i(X) (Y^i - f(X)^i) \, .
\]
For every positive integer $i$, we note that $(Y^i - f(X)^i)$ factors as $(Y-f(X))(Y^{i-1} + Y^{i-2}f(X) + \dots + Yf(X)^{i-2} + f(X)^{i-1})$. In other words, every term in the sum $\sum_{i = 1}^D Q_i(X) (Y^i - f(X)^i)$ is divisible by $(Y-f(X))$, and furthermore, the quotient is a polynomial in $\F_q[X,Y]$. Thus, $Q(X,Y) - Q(X,f(X))$ is divisble by $(Y-f(X))$ over $\F_q[X,Y]$. So, if $f(X)$ is such that $Q(X,f(X))$ is zero, then we get that $Q(X,Y)$ is divisble by $(Y-f(X))$ over $\F_q[X,Y]$. 
\end{proof}

Thus, in order to find all univariate polynomials $f(X) \in \F_q[X]$ satisfying that $Q(X,f(X))=0$, it suffices to factorize $Q(X,Y)$ which can be performed in polynomial time by well known algorithms for bivariate/multivariate polynomial factorization. We summarise the result we need in the following theorem and refer to the cited references and \cite[Section 4.5]{Gur-survey} for a description of the algorithm. 

\begin{theorem}[Bivariate factorization \cite{Kaltofen1985, Lenstra1985}]\label{thm:bivariate-factorization}
    Let $\F$ be any finite field. Then, there is a randomized algorithm that takes as input a bivariate polynomial $Q(X,Y) \in \F[X,Y]$ and outputs all its factors of the form $Y-f(X)$ where $f \in \F[X]$ in time $\poly(\log (|\F|), \deg(Q))$. 
\end{theorem}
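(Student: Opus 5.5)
The plan is to find all polynomial roots $f(X)$ of $Q$ in the variable $Y$ by a ``lift and check'' strategy, rather than reproducing a general factorization algorithm. By Lemma~\ref{lem:rs_list_solution_size}, the factors of the form $Y-f(X)$ are exactly the $f$ with $Q(X,f(X))=0$. Any such $f$ has $\deg(f)\le \deg_X(Q)=:d$: if $\deg f > d$, then in $\sum_i Q_i(X) f(X)^i$ the term with the largest $i$ such that $Q_i\neq 0$ has strictly larger degree than all the others, so the sum cannot vanish. Hence it suffices to determine every such $f$ modulo $X^{d+1}$ and then verify each candidate directly by substituting it into $Q$.

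\textbf{Step 1: normalization.} First divide out the largest power of $X$ common to all coefficients, and replace $Q$ by its squarefree part in $Y$. Concretely, compute $Q/\gcd(Q,\partial Q/\partial Y)$ over $\F(X)$ and clear denominators; this has the same roots $f(X)$ as $Q$. In characteristic $p$, if $\partial Q/\partial Y=0$ then $Q$ is a polynomial in $Y^p$. In that case I would rewrite it as $\tilde Q(X^p,Y^p)$ after handling $X$-exponents, or more simply use that over a perfect field $Q(X,Y)=R(X,Y)^p$ whenever all exponents are divisible by $p$, and recurse.

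Next pick a random shift $a\in\F$ and replace $X$ by $X+a$. The goal is that $Q(0,Y)$ is squarefree with the same $Y$-degree as $Q$. This holds unless $a$ is a root of the product of the leading $Y$-coefficient of $Q$ and the discriminant $\mathrm{Disc}_Y(Q)(X)$, which is a nonzero polynomial of degree $\poly(\deg Q)$. If $|\F|$ is too small for such an $a$ to exist, I would take $a$ from an extension field of degree $O(\log \deg Q)$. At the end only the candidates $f$ with coefficients in $\F$ are kept.

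\textbf{Step 2: roots at $X=0$ and Hensel/Newton lifting.} Use a randomized univariate root-finding algorithm over finite fields (Berlekamp / Cantor--Zassenhaus) to find all roots $\beta\in\F$ of $Q(0,Y)$. Any true root $f$ satisfies $f(0)=\beta$ for one of them. Since $Q(0,Y)$ is squarefree, $\partial_Y Q(0,\beta)\neq 0$. So Newton iteration
\[
y_{j+1}=y_j-\frac{Q(X,y_j)}{\partial_Y Q(X,y_j)} \bmod X^{2^{j+1}}
\]
is well defined, and it lifts $\beta$ to the unique power series $y(X)$ with $y(0)=\beta$ and $Q(X,y)=0$. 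Here uniqueness comes from Hensel's lemma, and it is exactly what guarantees that no true root is missed. Run this for $O(\log d)$ rounds to obtain $g_\beta := y \bmod X^{d+1}$. Then undo the shift and output $Y-g_\beta$ for every $\beta$ such that $Q(X,g_\beta(X))=0$. Every operation is on polynomials of degree $\poly(\deg Q)$ over $\F$ or a small extension of it, so the running time is $\poly(\log|\F|,\deg Q)$. Randomness enters only through root finding and the choice of $a$.

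\textbf{Main obstacle.} I expect the main difficulty to be Step 1: making the reduction to a separable situation rigorous. One issue is inseparability in characteristic $p$, which requires the $p$-th power/recursion argument sketched above. The other is guaranteeing a good shift $a$ when $\F$ is small, where I would first pass to a small extension field and argue that roots lying in $\F[X]$ are preserved and can be recognized. Once $Q(0,Y)$ is squarefree, the lifting and verification are routine.
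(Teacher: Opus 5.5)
The paper gives no proof of this statement; it only cites general factorization algorithms (Kaltofen, Lenstra, and Guruswami's survey). Your route is a reasonable, more specialized alternative, close in spirit to the Roth--Ruckenstein root finder: bound $\deg f\le \deg_X Q$, make $Q$ separable with $Q(0,Y)$ squarefree, find the roots at $X=0$, and Newton/Hensel-lift them. The degree bound, the Hensel uniqueness argument and the extension-field shift are all sound. However, Step~1 has a genuine gap. Your claim that $Q/\gcd(Q,\partial_Y Q)$ has the same roots $f(X)$ as $Q$ is false in characteristic $p$ whenever some factor $Y-f(X)$ occurs with multiplicity divisible by $p$. This can happen even when $\partial_Y Q\neq 0$, so your fallback, which only triggers when $\partial_Y Q=0$, does not catch it. For example, over $\F_p$ take $Q=(Y-X)^p(Y-X^2)=(Y^p-X^p)(Y-X^2)$. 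Then $\partial_Y Q=Y^p-X^p$, so $\gcd(Q,\partial_Y Q)=(Y-X)^p$ and $Q/\gcd(Q,\partial_Y Q)=Y-X^2$. The root $f=X$ is silently discarded, and your algorithm never outputs $Y-X$. The theorem is stated for every finite field, and in the paper's applications (e.g.\ $q=2^m$) $\deg Q$ routinely exceeds the characteristic, so this is a real correctness failure, not a technicality.

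There are two ways to repair it. The first is a full squarefree decomposition. Every root of $Q$ is a root of $Q/G$ or of $G=\gcd(Q,\partial_Y Q)$, so run your Hensel procedure on the separable part $Q/G$ and recurse on $G$, which has smaller $Y$-degree. When $\partial_Y Q=0$, make your ``after handling $X$-exponents'' precise as follows. Write $Q=\sum_{r=0}^{p-1}X^r S_r(X^p,Y^p)$. Note that $f(X)^p=\tilde f(X^p)$, where $\tilde f$ has the $p$-th powers of the coefficients of $f$. Since the summands have disjoint exponent classes mod $p$, $\tilde f$ is a root of every $S_r(Z,W)$. Recurse on a nonzero $S_r$, then recover $f$ by taking $p$-th roots of coefficients.

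The second fix is much simpler and avoids separability altogether. Pick an irreducible $E(X)$ of degree $e>\deg_X Q$. By your own degree bound, every root satisfies $\deg f<e$, so $f\mapsto f \bmod E$ is injective on roots. Moreover, $Q \bmod E$ is a nonzero polynomial in $Y$ over the field $\F[X]/(E)$ of size $|\F|^e$, and it vanishes at every $f \bmod E$. Berlekamp or Cantor--Zassenhaus finds all of its roots there, and repeated roots are harmless because one first takes the $\gcd$ with $Y^{|\F|^e}-Y$. You then keep exactly those residues whose degree-$<e$ representative $f$ satisfies $Q(X,f(X))=0$.
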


\medskip

The full description of the algorithm appears  in Figure \ref{fig:rs_list} below, followed by correctness and efficiency analysis.

\begin{figure}[h]
  \begin{boxedminipage}{\textwidth} \small \medskip \noindent
    $\;$

    \underline{\textbf{List decoding of $\RS_{q}(a_1, \ldots, a_n;k)$:}}
    
    \medskip

\begin{itemize}
\item \textbf{INPUT:} $ w\in \F_q^n$.
\item \textbf{OUTPUT:} The list $\calL$ of all polynomials $f(X) \in \F_q[X]$ of degree smaller than $k$ so that $f(a_i)= w_i$ for at least $\sqrt{2k n} +1$ indices $i \in [n]$.

\end{itemize}

    \begin{enumerate}
\item \label{step:rs_list_interpolation} Find a non-zero bivariate polynomial $Q(X,Y)$ over $\F_q$ with $(1,k)$-weighted degree at most $\sqrt{2nk}$, and which satisfies that $Q(a_i, w_i)=0$ for any $i \in [n]$.

Such a polynomial $Q(X,Y)$ can be found by solving a system of linear equations which is obtained by viewing the  coefficients of the monomials in $Q(X,Y)$ as unknowns, and viewing each requirement of the form $Q(a_i,w_i)=0$ as a homogeneous linear constraint on these unknowns.       

\item  \label{step:rs_list_root_finding}
Find the list $\calL'$ of all univariate polynomials $f(X)$ so that $Q(X, f(X))=0$ by factoring $Q(X,Y)$ using the algorithm in \autoref{thm:bivariate-factorization}, and including in $\calL'$ all univariate polynomials $f(X)$ so that $Y-f(X)$ is a factor of $Q(X,Y)$.
\item \label{step:rs_list_check} 
Output the list $\calL$ of all codewords $(f(a_1),f(a_2),\ldots, f(a_n))$ corresponding to
univariate polynomials $f(X) \in \calL'$ of degree smaller than $k$ which satisfy that $f(a_i) = w_i$ for at least $\sqrt{2 nk}+1$ indices $i \in [n]$.

\end{enumerate}

  \medskip

  \end{boxedminipage}

\caption{List decoding of Reed-Solomon Codes}
\label{fig:rs_list}
\end{figure}

\paragraph{Correctness.}  Suppose that $f(X) \in \F_q[X]$ is a univariate polynomial of degree smaller than $k$ so that $f(a_i) = w_i$ for at least $\sqrt{2nk}+1$ indices $i \in [n]$, we shall show that $f(X) \in \calL$. By Lemma \ref{lem:rs_list_interpolation}, there exists a non-zero bivariate polynomial $Q(X,Y)$ satisfying the requirements on Step \ref{step:rs_list_interpolation}. By Lemma \ref{lem:rs_list_P_is_root}, we have that $Q(X,f(X))=0$, and by Lemma \ref{lem:rs_list_solution_size} this implies in turn that $Y-f(X)$ divides $Q(X,Y)$, and so  $f(X)$ will be included in $\calL'$ on Step \ref{step:rs_unique_root_finding}. Clearly, $f(X)$ will also pass the checks on Step \ref{step:rs_list_check}, and thus $f(X)$ will also be included in $\calL$.

\paragraph{Efficiency.} Step \ref{step:rs_list_interpolation} can be performed in time $\poly(n, \log (q))$ by solving a system of $n$ linear equations in $O(n)$ variables using Gaussian elimination. Step \ref{step:rs_list_root_finding} can be performed in time $\poly(n,\log (q))$ by the algorithm in \autoref{thm:bivariate-factorization}. Step \ref{step:rs_list_check} can also clearly be performed in time $\poly(n, \log (q))$.

\subsection{List decoding Reed-Solomon Codes up to the Johnson Bound}\label{subsec:rs_johnson}

In this section, we present an efficient algorithm for list decoding of  Reed-Solomon Codes \emph{up to the Johnson Bound} (cf., Theorem \ref{thm:johnson}). 

Once more, fix  a prime power $q$, distinct evaluation points $a_1, \ldots, a_n \in \F_q$, and a positive integer $k<n$, and let $\RS_q(a_1, \ldots, a_n;k)$ be the corresponding Reed-Solomon code. 
Suppose that $w\in \F_q^n$ is a received word. 
We shall show how to compute the list of all polynomials $f(X) \in \F_q[X]$ of degree smaller than $k$ so that $f(a_i) \neq w_i$ for at most $e: = n-\sqrt{k n}-1$ indices $i \in [n]$. That is, $f(a_i) = w_i$ for at least $t: =n-e= \sqrt{ k n}+1$ indices $i \in [n]$. 

In particular, if we let $R:=\frac{k} {n}$ denote the rate of the $\RS_{q}(a_1, \ldots, a_n;k)$ code, then this algorithm will be able to list-decodable from up to roughly $(1- \sqrt{R})$-fraction of errors. This roughly shaves off a factor of $\sqrt{2}$ from the agreement needed by the algorithm discussed in the previous section and matches the Johnson Bound of Theorem \ref{thm:johnson} (recalling that the relative distance $\delta$ of the $\RS_q(a_1, \ldots, a_n;k)$ code satisfies that $R \leq 1-\delta$). 

\paragraph{Overview. }
The algorithm 
builds upon the ideas in the previous algorithms that we saw, and adds one new technical ingredient to the framework, namely, the method of \emph{multiplicities}. To describe this idea, we first briefly recall a high level sketch of the algorithm in the last section. 

The algorithm in Figure \ref{fig:rs_list} essentially proceeds by finding a non-zero, low degree (in an appropriate sense) bivariate polynomial $Q(X,Y)$ with the following property: for every polynomial $f(X)$ of degree smaller than $k$, if the $i^{th}$ entry $w_i$ of the received word satisfies $w_i = f(a_i)$, then the univariate polynomial $P_{f}(X) := Q(X, f(X))$ has a zero at $a_i$. Therefore, for an $f$ whose encoding is close to the received word $w$, the univariate $P_f$ has a \emph{lot} of distinct zeroes and if the number of such agreements exceeds the degree of $P_f$ (which is at most the $(1,k)$-weighted degree of $Q$), then $P_f$ must be identically zero, or equivalently, $f(X)$ must be a root of $Q$ (when $Q$ is viewed as a univariate polynomial in the indeterminate $Y$, with coefficients coming from the univariate polynomial ring $\F_q[X]$). 

The polynomial $Q$ is obtained by solving an appropriate linear system in its coefficients which guarantees that  $Q(a_i, w_i)=0$ for any $i$, and we need to set the degree of $Q$ to be not too low in order to guarantee a non-zero solution. 
On the other hand, it is also crucial that the degree of $Q$ is not to high, 
since the minimum number of agreements needed for this algorithm to work is essentially one more than the degree of $P_f$, and hence the error tolerance is higher if $Q$ has low degree. These two requirements, that are in tension with each other, determine the quantitative bounds needed for the algorithm to succeed. 

At a high level, the main new idea that makes it possible to tolerate more errors is to obtain a $Q(X,Y)$ in the interpolation step with the stronger property that if a polynomial $f(X)$ and the received word $w$ agree on some entry $i$, i.e., $f(a_i) = w_i$, then, the univariate polynomial $P_f(X)$ vanishes with \emph{high} multiplicity at $a_i$, i.e.,  $P_f^{(j)}(a_i)=0$ for $j=0,1,\ldots, r-1$ for a sufficiently large parameter $r$, where $P_f^{(j)}(X)$ denotes the $j$-th \emph{Hasse Derivative} (cf., Section \ref{subsec:prelim_poly}). This intuitively seems to indicate that we now need fewer agreements between $f$ and $w$ to ensure that $P_f(X)$ is identically zero. To obtain  a non-zero bivariate $Q(X,Y)$ that satisfies this property, the idea is to strengthen the interpolation step and look for a $Q(X,Y)$ that (essentially) vanishes on every point $(a_i, w_i)$ with sufficiently high multiplicity. Since the number of constraints has increased in the process, we now need to work with bivariates $Q(X,Y)$ of somewhat higher $(1,k)$-weighted degree for this system to have a non-zero solution. 

Once again, these two steps are in tension with each other - the interpolation step is likely to succeed if the $(1,k)$-weighted degree of $Q$ is not too low, while for $P_f(X)$ to be identically zero for a polynomial $f(X)$ of degree smaller than $k$, whose encoding is close to $w$, it helps to have the degree of $P_f$, which is at most the $(1,k)$-weighted degree of $Q$, to be low. It is apriori unclear that the quantitative bounds obtained using this approach will be an improvement over the bounds obtained in the algorithm of Figure \ref{fig:rs_list} that was presented in the previous section. However, as we shall see, this is indeed the case, and the new algorithm list decodes all the way up to the Johnson Bound. 

We now describe the main steps of the algorithm formally. 

\paragraph{Step 1: Interpolation. } The following lemma is a high multiplicity variant of 
Lemma \ref{lem:rs_list_interpolation} from the previous section.

\begin{lemma}\label{lem:gs-interpolation}
There exists a non-zero bivariate polynomial $Q(X,Y)$ over $\F_q$ with $(1,k)$-weighted degree at most $\sqrt{kn(r+1)r}$, such that for every $i \in [n]$, 
$Q$ vanishes with multiplicity at least $r$ on $(a_i,w_i)$. 
Furthermore, such a polynomial $Q(X,Y)$ can be found by solving a system of linear equations which is obtained by viewing the  coefficients of the monomials in $Q(X,Y)$ as unknowns, and viewing each vanishing constraint for a Hasse Derivative of $Q$ as a homogeneous linear constraint on these unknowns.      
\end{lemma}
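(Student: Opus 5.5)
We will argue exactly as in the proof of Lemma \ref{lem:rs_list_interpolation}: set up a homogeneous linear system whose unknowns are the coefficients of $Q(X,Y)$, and show that there are more unknowns than constraints. The first step is to express the multiplicity requirement linearly. By the multivariate definition of multiplicity, $Q$ vanishes with multiplicity at least $r$ at $(a_i,w_i)$ if and only if $Q^{(j_1,j_2)}(a_i,w_i)=0$ for every $(j_1,j_2)\in\N^2$ with $j_1+j_2<r$.

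Each Hasse derivative $Q^{(j_1,j_2)}$ is obtained from the coefficients of $Q$ by a fixed linear map, by linearity (Lemma \ref{lem:prelim_hasse_properties}). Concretely, the coefficient of $X^aY^b$ contributes ${a \choose j_1}{b \choose j_2}X^{a-j_1}Y^{b-j_2}$, which can be read off from the expansion of $(X+Z_1)^a(Y+Z_2)^b$. Evaluating at $(a_i,w_i)$ therefore gives a homogeneous linear constraint on the unknowns.

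Next we count constraints. For each point, the number of pairs $(j_1,j_2)$ with $j_1+j_2<r$ is ${r+1\choose 2}=\frac{r(r+1)}{2}$. Summed over the $n$ points, this gives $n\cdot\frac{r(r+1)}{2}$ constraints in total.

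Then we count monomials of $(1,k)$-weighted degree at most $D:=\sqrt{kn r(r+1)}$. For simplicity assume $\ell:=D/k$ is an integer. The $Y$-degree $j$ ranges over $\{0,\ldots,\ell\}$, and for each such $j$ the $X$-degree ranges over $\{0,\ldots,k(\ell-j)\}$. So
$$M=\sum_{j=0}^{\ell}(k\ell-kj+1)=\left(\frac{k\ell}{2}+1\right)(\ell+1)>\frac{k\ell^2}{2}=\frac{D^2}{2k}=\frac{nr(r+1)}{2}.$$
This is the same computation as before, with $n$ replaced by $n\,r(r+1)/2$. Since $M$ exceeds the number of constraints, the homogeneous system has a nonzero solution, and the resulting nonzero $Q$ satisfies the lemma. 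This also gives the "furthermore" part: $Q$ is found by Gaussian elimination on this system.

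The only delicate point is the first step, namely checking that vanishing of Hasse derivatives at a point is indeed linear in the coefficients; this is immediate from the definition. The rest is the counting above. If $D/k$ is not an integer, we take $\ell=\lfloor D/k\rfloor$ and also allow the extra $X$-degree slack; the same inequality still holds up to rounding.
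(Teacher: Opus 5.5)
Your proposal is correct and is essentially the paper's proof. It uses the same homogeneous system and the same $n\binom{r+1}{2}$ constraints, together with the same monomial count $(\frac{k\ell}{2}+1)(\ell+1)=\frac{(D+2)(D+k)}{2k}>\frac{D^2}{2k}=n\binom{r+1}{2}$, under the same integrality assumption.

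Your closing caveat about non-integral $D/k$ can be sharpened. Write $\lfloor D\rfloor=k\ell+\rho$ with $0\le\rho<k$. Then the number of monomials of $(1,k)$-weighted degree at most $D$ is $(\ell+1)(\frac{k\ell}{2}+\rho+1)>\frac{(\lfloor D\rfloor+1)^2}{2k}>\frac{D^2}{2k}$, so there is no rounding loss at all.
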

\begin{proof}
The proof is essentially the same as that of Lemma \ref{lem:rs_list_interpolation}, which corresponds to the $r=1$ case. We view the coefficients of $Q$ as variables, and notice that the property that any Hasse Derivative of $Q$ vanishes at $(a_i, w_i)$ is just a homogeneous linear constraint on the coefficients of $Q$. Thus, if the parameters are set in a way that the number of unknowns is more than the number of constraints, then this system must have a non-zero solution. Thus, to complete the proof, we just count the number of variables and constraints in the linear system. 

The number of Hasse Derivatives of order less than $r$ of a bivariate polynomial is equal to the number of monomials in two variables of total degree at most $(r-1)$, which is at most $\binom{r+1}{2}$. Thus, for every $i \in [n]$, we have $\binom{r+1}{2}$ homogeneous linear constraints, which gives $n\binom{r+1}{2}$ constraints in total. Let $D$ be a parameter (to be set soon), which denotes the $(1,k)$-weighted degree of $Q(X,Y)$. If $D$ is large enough to ensure that the number of bivariate monomials of $(1,k)$-weighted degree at most $D$ exceeds $n\binom{r+1}{2}$, then the above homogeneous linear system has more variables than constraints, and hence has a non-zero solution. Thus, by the same computation as in Lemma \ref{lem:rs_list_interpolation} (setting $\ell = \frac{D} {k}$), it suffices to choose $D$ such that 
\[
\frac{(D+2)(D+k)}{2k} > n \binom{r+1}{2} \, .
\]
Setting $D$ to be equal to $\sqrt{kn (r+1) r}$ (which we assume is an integer for simplicity) satisfies this constraint.  
\end{proof}

 \paragraph{Step 2: Root finding.}   
 We now argue that any polynomial $f(X)$ of degree smaller than $k$ whose encoding has large agreement with the received word $w$ satisfies that $Q(X, f(X))$ is identically zero. 
To this end, we first observe that if $Q(X,Y)$ vanishes on a point $(a_i, f(a_i))$ with multiplicity at least $r$, then $P_f(X):=Q(X,f(X))$ vanishes on 
 $a_i$ with multiplicity at least $r$.
 
 \begin{claim}\label{clm:gs-zero-at-agreement}
 Let $Q(X,Y) \in \F_q[X,Y]$ and $f(X) \in \F_q[X]$ be polynomials, and let  $P_f(X):=Q(X,f(X))$. Suppose that $Q(X,Y)$ vanishes on a point $(a, f(a))$ for some $a \in \F_q$ with multiplicity at least $r$. Then $P_f(X)$ vanishes on 
 $a$ with multiplicity at least $r$.
 \end{claim}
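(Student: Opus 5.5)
The plan is to expand $Q$ around the point $(a, f(a))$ using the bivariate Hasse expansion, and then substitute $Y = f(X)$. Let $b := f(a)$. By the definition of the bivariate Hasse derivative, substituting $(a,b)$ for $\bX$ and $(X-a, Y-b)$ for $\bZ$ in $Q(\bX+\bZ) = \sum_{\bi} Q^{(\bi)}(\bX)\bZ^{\bi}$ gives
\[
Q(X,Y) = \sum_{(i,j) \in \N^2} Q^{(i,j)}(a,b)\,(X-a)^{i}(Y-b)^{j},
\]
which is the bivariate analogue of the expansion (\ref{eq:prelim_hasse}). The hypothesis that $Q$ vanishes on $(a,b)$ with multiplicity at least $r$ says exactly that $Q^{(i,j)}(a,b)=0$ whenever $i+j<r$. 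Hence every surviving term satisfies $i+j \geq r$.

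Next, substitute $Y = f(X)$. The key observation is that $f(X) - b = f(X) - f(a)$ is divisible by $(X-a)$ in $\F_q[X]$, since $a$ is a root of $f(X)-f(a)$. Therefore each surviving term $Q^{(i,j)}(a,b)(X-a)^i (f(X)-b)^j$ is divisible by $(X-a)^{i+j}$, and thus by $(X-a)^r$. Summing the terms, $(X-a)^r$ divides $P_f(X)$.

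Finally, I would translate this divisibility back into the Hasse-derivative definition of multiplicity. Write $P_f(X) = (X-a)^r g(X)$ and use the univariate expansion (\ref{eq:prelim_hasse}): $P_f(X) = \sum_{\ell} P_f^{(\ell)}(a)(X-a)^{\ell}$. This representation in the basis $\{(X-a)^\ell\}$ is unique. Expanding $g$ in the same basis shows that the coefficients of $(X-a)^\ell$ for $\ell<r$ vanish, so $P_f^{(\ell)}(a)=0$ for all $\ell<r$. That is, $\mult(P_f,a)\ge r$.

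There is no real obstacle in this argument; the only step that needs care is justifying the bivariate expansion at $(a,b)$. That justification is just the defining identity of the Hasse derivative with a specific substitution, which is legitimate because it is a polynomial identity in $\bX,\bZ$.
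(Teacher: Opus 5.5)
Your proof is correct and is essentially the paper's argument. The paper substitutes $X = a+Z$, writes $f(a+Z) = f(a) + Z\tilde f(Z)$, and expands $Q(a+Z, f(a)+Z\tilde f(Z)) = \sum_{u,v} Q^{(u,v)}(a,f(a))\,Z^{u+v}\tilde f(Z)^v$. This is exactly your expansion around $(a,f(a))$ combined with $(X-a) \mid f(X)-f(a)$, written in the shifted variable $Z = X-a$. Your final step, using uniqueness of the expansion in powers of $(X-a)$, makes explicit what the paper uses implicitly when it reads off the multiplicity from the lowest-degree term of $P_f(a+Z)$.
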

 \begin{proof}

 The statement follows by a standard application of the chain rule for Hasse Derivatives. 
 In more detail, recall from the definition of Hasse Derivatives that 
 $$f(a + Z) = \sum_{i = 0}^{k-1} f^{(i)}(a) Z^i = f(a) + Z \cdot \tilde{f}(Z),$$ for some univariate polynomial $\tilde f(Z)$. 
 From the definition of Hasse Derivatives, we also get 
\[ 
 P_f(a + Z) = Q(a + Z, f(a) + Z \cdot \tilde{f}(Z)) = \sum_{u,v} Q^{(u,v)}(a, f(a)) Z^{u + v} \tilde{f}(Z)^v \, .
 \]
Now, by assumption that all Hasse Derivatives of order less than $r$ of $Q$ are zero at $(a, f(a))$, we have that 
\[ 
 P_f(a + Z) = \sum_{u,v \in \Z_{\geq 0}: u + v \geq r} Q^{(u,v)}(a, f(a)) Z^{u + v} \tilde{f}( Z)^v \, .
\]
Thus, the lowest degree of a monomial in $Z$ that can possibly have a non-zero coefficient in $P_f(a + Z)$ is at least $r$, and hence, by the definition of multiplicity, we get that the multiplicity of $P_f$ at $a$ is at least $r$. 
\end{proof}

The following lemma is analogous to Lemma \ref{lem:rs_list_P_is_root} from the previous section.

  \begin{lemma}\label{lem:gs-close-poly-are-roots}
 Suppose that $Q(X,Y)$ is a non-zero bivariate polynomial over $\F_q$ which satisfies that 
 $\deg_{(1,k)}(Q) \leq \sqrt{kn(r+1)r}$ for $r=2kn$. Suppose furthermore that $Q$ vanishes on each point $(a_i,w_i)$ for $i \in [n]$ with multiplicity at least $r$. 
 Let $f(X) \in \F_q[X]$ be a univariate polynomial of degree smaller than $k$ so that $f(a_i) = w_i$ for at least $\sqrt{nk} + 1$ indices $i \in [n]$. Then $Q(X, f(X))=0$.   
  \end{lemma}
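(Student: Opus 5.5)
The plan is to show that $P_f(X):=Q(X,f(X))$ has more roots, counted with multiplicity, than its degree, and then invoke Fact \ref{fact:prelim_hasse_sz_univ}.

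First, the degree bound. Since $\deg(f)<k$, every monomial $X^uY^v$ of $Q$ becomes $X^u f(X)^v$, whose degree is at most $u+kv\le \deg_{(1,k)}(Q)$. Hence $\deg(P_f)\le D:=\sqrt{kn(r+1)r}$.

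Second, the multiplicity lower bound. For every index $i$ with $f(a_i)=w_i$, the point $(a_i,w_i)$ equals $(a_i,f(a_i))$, and $Q$ vanishes there with multiplicity at least $r$. Claim \ref{clm:gs-zero-at-agreement} then gives $\mult(P_f,a_i)\ge r$. Because the $a_i$ are distinct, summing over the at least $\sqrt{nk}+1$ agreeing indices yields
$$\sum_{a\in\F_q}\mult(P_f,a)\ \ge\ r\left(\sqrt{nk}+1\right).$$

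Third, the comparison. If $P_f$ were non-zero, Fact \ref{fact:prelim_hasse_sz_univ} would force $r(\sqrt{nk}+1)\le \sqrt{kn(r+1)r}$. The task is to show this fails for $r=2kn$. Dividing by $r$, it suffices to prove
$$\sqrt{nk}+1>\sqrt{nk}\,\sqrt{1+\tfrac1r}.$$
Using $\sqrt{1+1/r}\le 1+\frac{1}{2r}$, the right-hand side is at most $\sqrt{nk}+\frac{\sqrt{nk}}{2r}=\sqrt{nk}+\frac{1}{4\sqrt{nk}}$, which is strictly less than $\sqrt{nk}+1$. So the inequality is violated, $P_f$ must be the zero polynomial, and therefore $Q(X,f(X))=0$.

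This final numerical comparison is the only step that needs care. It is where the choice $r=2kn$ enters, and a much smaller $r$, for instance one with $r<\sqrt{nk}$, would also suffice. The other two steps follow directly from results already established.
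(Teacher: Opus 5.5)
Your proposal is correct and matches the paper's proof step for step: you bound $\deg(P_f)$ by the $(1,k)$-weighted degree of $Q$, use Claim~\ref{clm:gs-zero-at-agreement} to get multiplicity at least $r$ at each agreeing $a_i$, and conclude via Fact~\ref{fact:prelim_hasse_sz_univ}. Your explicit estimate $\sqrt{nk}\sqrt{1+1/r}\le \sqrt{nk}+\frac{1}{4\sqrt{nk}}$ for $r=2kn$ simply makes precise the paper's remark that $\sqrt{nk(1+1/r)}<\sqrt{nk}+1$ once $r\ge kn$.
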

  \begin{proof}
  From the bound on the $(1,k)$-weighted degree of $Q$, we have that $P_f(X) := Q(X, f(X))$ is a univariate polynomial of degree at most $D := \sqrt{kn(r+1)r}$. From Claim \ref{clm:gs-zero-at-agreement}, we know that for any $i \in [n]$, if $f(a_i) = w_i$, then $P_f$ has a zero of multiplicity at least $r$ at $a_i$. Thus, if $f$ and $w$ agree on greater than $\frac D r = \frac{\sqrt{kn(r+1)r}} {r} = \sqrt{nk(1 + \frac 1 r)}$ many distinct $a_i$'s, we have from Fact \ref{fact:prelim_hasse_sz_univ} that $P_f$ must be identically zero. For sufficiently large $r$, e.g. $r \geq  kn$, we have that $\sqrt{nk(1 + \frac 1 r)} < \sqrt{nk} + 1$. Thus, by picking large enough $r$ (for example, $r=2kn$), we get that more than $\sqrt{nk}$ agreements suffice for $P_f$ to be identically zero. 
   \end{proof}
   
Thus, as in the previous section, we get that any univariate polynomial of degree smaller than $k$ with large agreement with $w$ is contained as a root of $Q(X,Y)$ and can be recovered  using Theorem \ref{thm:bivariate-factorization}. Once again, the time complexity of the list decoding algorithm is polynomially bounded in the input size, since the main steps just involve solving a linear system of polynomially bounded size and a special case of bivariate polynomial factorization. The full description of the algorithm is given in Figure \ref{fig:rm_list_johnson} below.

\begin{figure}[h]
  \begin{boxedminipage}{\textwidth} \small \medskip \noindent
    $\;$

    \underline{\textbf{List decoding up to Johnson Bound of $\RS_{q}(a_1, \ldots, a_n;k)$:}}
    
    \medskip

     $\triangleright$ Let $r := 2nk$.  
\begin{itemize}
\item \textbf{INPUT:} $ w\in \F_q^n$.
\item \textbf{OUTPUT:} The list $\calL$ of all polynomials $f(X) \in \F_q[X]$ of degree smaller than $k$ so that $f(a_i)= w_i$ for at least $\sqrt{nk} +1$ indices $i \in [n]$.

\end{itemize}

    \begin{enumerate}
\item \label{step:gs_list_interpolation} Find a non-zero bivariate polynomial $Q(X,Y)$ over $\F_q$ with $(1,k)$-weighted degree at most $\sqrt{nk(r+1)r}$, and which satisfies that $Q^{(u,v)}(a_i, w_i)=0$ for any $i \in [n]$, and for any $u, v \in \N$ with $u + v < r$. 

Such a polynomial $Q(X,Y)$ can be found by solving a system of linear equations which is obtained by viewing the  coefficients of the monomials in $Q(X,Y)$ as unknowns, and viewing each requirement of the form $Q^{(u,v)}(a_i,w_i)=0$ as a homogeneous linear constraint on these unknowns.       

\item  \label{step:rs_list_root_finding}
Find the list $\calL'$ of all univariate polynomials $f(X)$ so that $Q(X, f(X))=0$ by factoring $Q(X,Y)$ using the algorithm in \autoref{thm:bivariate-factorization}, and including in $\calL'$ all univariate polynomials $f(X)$ so that $Y-f(X)$ is a factor of $Q(X,Y)$.
\item \label{step:rs_list_check} 
Output the list $\calL$ of all codewords $(f(a_1),f(a_2),\ldots, f(a_n))$ corresponding to 
univariate polynomials $f(X) \in \calL'$ of degree smaller than $k$ which satisfy that $f(a_i) = w_i$ for at least $\sqrt{nk}+1$ indices $i \in [n]$.

\end{enumerate}

  \medskip

  \end{boxedminipage}

\caption{List decoding of Reed-Solomon Codes up to Johnson Bound}
\label{fig:rs_list_johnson}
\end{figure}

\subsection{Limitations on list decoding of Reed-Solomon Codes beyond the Johnson bound}\label{subsec:rs_limit}

In the previous section, we showed that Reed-Solomon codes are efficiently list-decodable up to the Johnson Bound. That is, we presented an algorithm that efficiently outputs all polynomials of degree smaller than $k$ whose encoding agrees with the received word on at least $\sqrt{R}n $ points. In this section, we shall present a combinatorial lower bound on the list size of Reed-Solomon Codes, which shows
that Reed-Solomon Codes are generally \emph{not} list-decodable from a slightly smaller agreement of $R^{\frac 1 2 +\epsilon} n$ with a list-size that is polynomial of $n$.
This implies in turn  that there is no polynomial-time algorithm that can list-decode Reed-Solomon Codes up to this amount of agreement, and in particular, from an agreement arbitrarily close to $R$. This bound follows as a consequence of the following theorem.

\begin{theorem}\label{thm:rs_limitation}
For any $\gamma \in (0,1)$, $\beta \in (\gamma, \sqrt{\gamma})$, a sufficiently large integer $m$, and $n=2^m$, there is a word $w \in (\F_n)^n$, so that there are at least $n^{(\gamma - \beta^2) \log (n)}$ 
codewords in $ \RS_n(n^\gamma)$  that agree with $w$ on at least $n^\beta$ entries.
\end{theorem}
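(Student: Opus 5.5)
This is the Ben-Sasson--Kopparty--Radhakrishnan subspace-polynomial construction. Work over $\F_n=\F_{2^m}$, viewed as an $m$-dimensional vector space over $\F_2$.

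\textbf{Parameters.} Assume $\gamma m$ and $\beta m$ are integers. Set $d:=\beta m$, so $2^d=n^\beta$. For every $\F_2$-linear subspace $V\subseteq \F_n$ of dimension $d$, consider the subspace polynomial
$$P_V(X):=\prod_{v\in V}(X-v).$$

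\textbf{Step 1 (linearized form).} It is a classical fact that $P_V$ is a linearized polynomial:
$$P_V(X)=X^{2^d}+\sum_{j=0}^{d-1}c_{V,j}X^{2^j}$$
for some coefficients $c_{V,j}\in \F_n$. I would prove this by induction on $d$. If $V=V'\oplus\spn(u)$, then $P_V(X)=P_{V'}(X)\,P_{V'}(X-u)$. Since $P_{V'}$ is $\F_2$-linear, this equals $P_{V'}(X)\bigl(P_{V'}(X)-P_{V'}(u)\bigr)$. This is a composition of the linearized polynomial $Y^2-P_{V'}(u)Y$ with $P_{V'}$, and it is therefore linearized.

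\textbf{Step 2 (common top part and the received word).} Fix $s:=\gamma m$, so $2^s=n^\gamma$. For $V$ of dimension $d$, the monomials of $P_V$ with degree at least $2^s$ are $X^{2^j}$ for $s\le j\le d$. Let the tuple $(c_{V,s},\ldots,c_{V,d-1})\in\F_n^{d-s}$ be the \emph{signature} of $V$. There are at most $n^{d-s}$ possible signatures.

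By pigeonhole, some signature $\sigma$ is shared by at least $N/n^{d-s}$ subspaces, where $N$ is the number of $d$-dimensional subspaces. Let
$$T(X):=X^{2^d}+\sum_{j=s}^{d-1}\sigma_jX^{2^j}$$
be the common high-degree part, and define $w_a:=T(a)$ for every $a\in\F_n$.

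For each $V$ with signature $\sigma$, set $f_V:=T-P_V$. Then $f_V$ has degree less than $2^s=n^\gamma$, so its evaluation is a codeword of $\RS_n(n^\gamma)$. Moreover $f_V(a)=w_a$ exactly when $P_V(a)=0$, that is, for all $a\in V$, which gives $2^d=n^\beta$ agreements. Distinct $V$ give distinct $P_V$, since $V$ is the root set of $P_V$, and hence distinct codewords.

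\textbf{Step 3 (counting).} The number of $d$-dimensional subspaces satisfies
$$N=\prod_{i=0}^{d-1}\frac{2^m-2^i}{2^d-2^i}\ \ge\ 2^{(m-d)d}=2^{m^2\beta(1-\beta)}.$$
The number of subspaces sharing $\sigma$ is therefore at least
$$2^{m^2\beta(1-\beta)-m\cdot m(\beta-\gamma)}=2^{m^2(\gamma-\beta^2)}=n^{(\gamma-\beta^2)\log n},$$
using $m=\log n$. This is positive precisely because $\beta<\sqrt\gamma$, and $\beta>\gamma$ ensures $d>s$, so the construction is nontrivial.

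The main obstacle is Step 1, establishing the linearized structure. It is what collapses the number of free coefficients above degree $n^\gamma$ from $n^\beta-n^\gamma$ to only $d-s$, which makes the pigeonhole loss $n^{d-s}$ small. The remaining steps are routine once integrality of $\gamma m$ and $\beta m$ is handled, either by rounding or by taking $m$ large.
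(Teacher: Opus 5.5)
Your proposal is correct and follows the paper's proof essentially step for step. Both arguments use the subspace polynomials of the $\beta m$-dimensional $\F_2$-subspaces and their linearized form, pigeonhole on the roughly $\beta m-\gamma m$ coefficients above degree $n^\gamma$, and obtain the same count $2^{\beta(1-\beta)m^2}/2^{(\beta-\gamma)m^2}$.

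There are two small differences. First, you prove the linearized form by induction via $P_V(X)=P_{V'}(X)\bigl(P_{V'}(X)-P_{V'}(u)\bigr)$, whereas the paper uses the linear dependence of the maps $v\mapsto v^{2^i}$. Second, your bookkeeping is cleaner than the paper's: you fix the coefficients at indices $s,\dots,d-1$ and put $X^{2^d}$ into $T$. The paper's $P(X)$ omits the leading term $X^{2^{\beta m}}$ and leaves the index-$\gamma m$ coefficient free, so $P_V+P$ as written does not quite have degree below $n^\gamma$; your version avoids this.
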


Setting $\gamma := 1-\epsilon$ and $\beta := (\frac 1 2 - \epsilon)+ \gamma(\frac 1 2 +\epsilon)$ in the above lemma (noting that $\beta \in (\gamma, \sqrt{\gamma})$), shows, for infinitely many  $n$, the existence of a word $w \in (\F_n)^n$, for which there exist a super-polynomial number of codewords in $ \RS_n(k)$  that agree with $w$ on at least  $n^{\frac 1 2 -\epsilon}k^{\frac 1 2 + \epsilon} = R^{\frac 1 2 +\epsilon} n$ entries, where $k=n^{1-\epsilon}$. 

The proof of the above Theorem \ref{thm:rs_limitation}
is based on \emph{subspace polynomials}, defined as follows.

\begin{definition}[Subspace polynomials]
Let $V \subseteq \F_{2^m}$ be an $\F_2$-linear subspace of $\F_{2^m}$. The \textsf{subspace polynomial} for $V$ is defined by 
$P_V(X) := \prod_{v \in V} (X-v).$
\end{definition}

The two main properties of subspace polynomials we shall use are that they have relatively many roots (the whole subspace $V$), and on the other hand, these polynomials are rather sparse, as shown by the following claim.

\begin{claim}\label{clm:subspace_poly}
Let $V \subseteq \F_{2^m}$ be an $\F_2$-linear subspace of $\F_{2^m}$ of dimension $d$. Then 
there exist $b_0,b_1, \ldots, b_{d-1} \in \F_{2^m}$ so that 
$P_V (X)=X^{2^d} + \sum_{i=0}^{d-1} b_i X^{2^i}.$
\end{claim}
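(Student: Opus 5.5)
The plan is to show that $P_V(X)$ is an $\F_2$-linear map from $\F_{2^m}$ to itself, that is, a \emph{linearized polynomial}. From that, the form $X^{2^d} + \sum_{i<d} b_i X^{2^i}$ follows by a dimension and degree argument. I would argue by induction on $d = \dim V$.

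\textbf{Base case.} For $d=0$ we have $V=\{0\}$, so $P_V(X)=X=X^{2^0}$, which has the required form with no lower terms.

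\textbf{Inductive step.} Suppose $\dim V = d \geq 1$. Write $V = U \oplus \spn\{u\}$, where $U$ has dimension $d-1$ and $u \in V\setminus U$. Then $V = U \cup (u+U)$, and hence
$$P_V(X) = P_U(X)\cdot \prod_{v\in U}(X-u-v) = P_U(X)\cdot P_U(X-u).$$
By the induction hypothesis, $P_U(X) = X^{2^{d-1}} + \sum_{i<d-1} c_i X^{2^i}$. In characteristic $2$ the Frobenius map $x\mapsto x^{2^i}$ is additive, so $P_U$ is additive as a polynomial: $P_U(X-u) = P_U(X) - P_U(u)$. Let $\lambda := P_U(u)$. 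Then $\lambda\neq 0$, since $u\notin U$ and the roots of $P_U$ are exactly the elements of $U$. Therefore
$$P_V(X) = P_U(X)^2 - \lambda P_U(X).$$

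\textbf{Reading off the form.} Squaring is additive in characteristic $2$, so
$$P_U(X)^2 = X^{2^d} + \sum_{i<d-1} c_i^2 X^{2^{i+1}}.$$
Subtracting $\lambda P_U(X)$ contributes only monomials $X^{2^i}$ with $i \le d-1$. Collecting coefficients gives $P_V(X) = X^{2^d}+\sum_{i=0}^{d-1} b_i X^{2^i}$. Explicitly, $b_0 = -\lambda c_0$ (taking $c_{d-1}=1$ as the leading coefficient of $P_U$), and for $i\ge 1$, $b_i = c_{i-1}^2 - \lambda c_i$.

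\textbf{Main point.} The only nontrivial step is the additivity identity $P_U(X-u)=P_U(X)-P_U(u)$. This holds as a polynomial identity because each monomial $X^{2^i}$ satisfies $(X+Y)^{2^i}=X^{2^i}+Y^{2^i}$ over a field of characteristic $2$, and $-u=u$ there. Everything else is routine bookkeeping.
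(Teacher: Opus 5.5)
Your proof is correct, and it takes a genuinely different route from the paper's.

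The paper does not induct. It observes that the $\F_2$-linear maps $V \to \F_{2^m}$ form a $d$-dimensional $\F_{2^m}$-vector space. Hence the $d+1$ Frobenius maps $v \mapsto v^{2^i}$, $0 \le i \le d$, satisfy a nontrivial $\F_{2^m}$-linear dependency. This gives a nonzero polynomial $\sum_{i=0}^{d} b_i X^{2^i}$ vanishing on all of $V$. Its top coefficient $b_d$ must be nonzero, since otherwise it would be a nonzero polynomial of degree at most $2^{d-1}$ with $2^d$ roots. After normalizing to be monic, it is identified with $P_V$ by root counting.

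You instead split $V = U \cup (u+U)$ and use the additivity of the linearized polynomial $P_U$. This gives the recursion $P_V = P_U^2 - P_U(u)\,P_U$, from which the required shape can be read off directly.

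What your version buys:
\begin{itemize}
\item It is constructive: it produces the coefficients $b_i$ explicitly from a basis of $V$, giving an efficient way to compute $P_V$.
\item It needs neither the dimension count for the space of $\F_2$-linear maps nor the final degree and root-count comparison with $P_V$.
\end{itemize}

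What the paper's version buys: it is shorter and explains the result as a pure dimension count, but it only establishes that suitable $b_i$ exist.

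Both arguments extend verbatim to $\F_q$-linear subspaces of $\F_{q^m}$. In your version the recursion becomes $P_V = P_U^q - P_U(u)^{q-1} P_U$.
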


\begin{proof}
For a non-negative integer $i$, let $f_i(X): V \to \F_{2^m}$ denote the $\F_2$-linear function given by 
$f_i(v):=v^{2^i}$ for $v \in V$. Next observe that the collection of $\F_2$-linear functions from $V$ to $\F_{2^m}$ forms an $\F_{2^m}$-linear subspace of dimension $d$, and so $f_0,f_1, \ldots, f_d$ must have a non-trivial linear dependency. That is, there exist scalars $b_0,b_1, \ldots,b_d \in \F_{2^m}$, not all zero, so that $P(X):=\sum_{i=0}^d b_i f_i(X) = \sum_{i=0}^d b_i X^{2^i}$ evaluates to zero on any $v \in V$. 
Furthermore, since $V$ has $2^d$ elements, we must have that $\deg(P(X)) =2^d$, and we may further assume that $P(X)$ is monic. So $P(X)$ and $P_V(X)$ are two monic polynomials of degree $2^d$ that vanish on all $2^d$ elements of $V$, 
and therefore we must have that $P_V(X) = P(X)=X^{2^d} + \sum_{i=0}^{d-1} b_i X^{2^i}$ (since $P(X)-P_V(X)$ is a polynomial of degree smaller than $2^d$ with at least $2^d$ roots, and so must be the zero polynomial).
\end{proof}

We now turn to the proof of the above Theorem \ref{thm:rs_limitation}.

\begin{proof}[Proof of Theorem \ref{thm:rs_limitation}.]
Let $\gamma \in (0,1)$, $\beta \in (\gamma, \sqrt{\gamma})$, let $m$ be a sufficiently large integer. Let $\mathcal{V}$ be the collection of $\F_2$-linear subspaces $V\subseteq \F_{2^m}$ of dimension $\beta m$. Then the size of $\mathcal{V}$ is  
$\prod_{i=0}^{\beta m -1} \frac{2^m -2^i} {2^{\beta m}-2^i} \geq 2^{\beta(1-\beta)m^2}$ (since to choose a subspace $V$ of dimension $d$ in $\F_{2^m}$, we need for each $i=0,1, \ldots, d-1$ to choose a vector which is not in the linear span of all previous $i$ vectors, and we also need to divide by the number of options to choose $d$ linearly independent vectors in $V$). 
Furthermore, by Claim \ref{clm:subspace_poly}, for each $V \in \mathcal{V}$ there exist $b_0,b_1, \ldots, b_{\beta m -1} \in \F_{2^m}$ so that 
$P_V(X) = X^{2^{\beta m}} + \sum_{i=0}^{\beta m -1} b_i X^{2^i}$.

By the pigeonhole principle, there exist $b_{\gamma m+1}, b_{\gamma m+2}, \ldots, b_{\beta m-1 } \in \F_{2^m}$ such that the number of subspace polynomials $P_V(X)$ for $V \in \mathcal{V}$ with their coefficient of $X^{2^i}$ equal to $b_i$ for $i=\gamma m+1, \ldots, \beta m-1$ is at least 
$2^{\beta(1-\beta)m^2} / 2^{ (\beta - \gamma)m^2}\geq  2^{ (\gamma - \beta^2)m^2}$.  Let $\mathcal{V}' \subseteq \mathcal{V}$ denote the collection of subspaces $V \in \mathcal{V}$ for which $P_V(X)$ has this property.

Let $P(X):= \sum_{i=\gamma m+1}^{\beta m -1} b_i X^{2^i}$, and let $w \in (\F_n)^n$ be the evaluation table of $P(X)$ over $\F_{2^m}$, where $n=2^m$. For $V \in \mathcal{V}'$, let $P'_V(X) = P_V(X) + P(X)$. Then the polynomials $P'_V(X)$ for $V \in \mathcal{V}'$ are a collection of $2^{ (\gamma - \beta^2)m^2 }= n^{(\gamma -\beta^2)\log (n)}$ distinct polynomials of degree at most $2^{\gamma m}=n^\gamma$. Moreover, for each $V \in \mathcal{V'}$,  $P'_V(X)$ agrees with $P(X)$ on all $2^{\beta m} = n^{\beta}$ elements of $V$.  
We conclude that  there are at least $n^{(\gamma - \beta^2) \log (n)}$ 
codewords in $ \RS_n(n^\gamma)$ (the evaluations of $P'_V(X)$ over $\F_{2^m}$ for $V \in \mathcal{V}'$)
that agree with $w$ on at least $n^\beta$ entries.
\end{proof}

\subsection{Bibliographic notes}

The first efficient algorithms for \textsf{unique decoding of Reed-Solomon codes} were given by Peterson \cite{Peterson60}, Gorenstein and Zierler \cite{GZ61}, Berlekamp \cite{Berlekamp_book}, and Massey \cite{Massey69}. 
The unique decoding algorithm for Reed-Solomon Codes described in Section \ref{subsec:rs_unique} is due to  Berlekamp and Welch \cite{BW87}, with the exposition based on that of Gemmell
and Sudan \cite{GS92} (see also \cite[Section 12.1]{GRS_survey}). 
The \textsf{list-decoding algorithm for Reed-Solomon codes beyond unique decoding radius} described in Section \ref{subsec:rs_list} is due to Sudan \cite{Sudan97}, and the \textsf{list-decoding algorithm for Reed-Solomon Codes up to the Johnson Bound} is due to Guruswami and Sudan \cite{GS-list-dec} (see also \cite[Section 4]{Gur-survey} and \cite[Section 12.2]{GRS_survey}). 

The root finding algorithm (\autoref{thm:bivariate-factorization}) used in the  algorithms for list decoding Reed-Solomon Codes follows from the more general algorithms known for factorization of bivariate (and multivariate) polynomials, for instance from the works of Kaltofen \cite{Kaltofen1985} and that of Lenstra \cite{Lenstra1985}. 

The limitation on list-decoding of Reed-Solomon Codes beyond the Johnson Bound, presented in Section \ref{subsec:rs_limit},  was discovered by Ben-Sasson, Kopparty, and Radhakrishnan \cite{BKR10}.

\section{Algorithmic list decoding up to capacity}\label{sec:capacity}

In the previous section, we presented efficient (polynomial-time) algorithms for list decoding of Reed-Solomon Codes \emph{up to the Johnson Bound},  and we also showed that Reed-Solomon Codes are generally \emph{not} list-decodable much beyond the Johnson Bound, and in particular, up to capacity. 

In this section, we present algorithms for list-decoding variants of Reed-Solomon codes \emph{up to capacity}. 
Specifically, in Sections \ref{subsubsec:mult_no_mult} and \ref{subsubsec:mult_mult} we present efficient  algorithms for list decoding \emph{multiplicity codes} up to capacity, and in Section \ref{subsec:rs_subfield} we also show that a similar algorithm can be used to list decode Reed-Solomon codes over subfield evaluation points  in slightly non-trivial sub-exponential time. These algorithms in particular show that the list size of multiplicity codes at capacity is at most polynomial in the block length, while the list-size of Reed-Solomon codes over subfield evaluation points is at most sub-exponential in the block length. 
In Section \ref{sec:list}, we shall present combinatorial upper bounds on the list size of these codes, which effectively reduce the list size to a constant, independent of the block length.

\subsection{List decoding multiplicity codes beyond unique decoding radius}\label{subsubsec:mult_no_mult}

Next we present efficient (polynomial-time) algorithms for 
list decoding the family of \emph{multiplicity codes}, which extend
 Reed-Solomon Codes by also evaluating \emph{derivatives} of polynomials, \emph{up to capacity}. Towards this, we first present in this section, as a warmup, a simple list-decoding algorithm, similar to the \emph{unique decoding} algorithm for  Reed-Solomon Codes presented in Section \ref{subsec:rs_unique}, that list decodes multiplicity codes \emph{beyond the unique decoding radius}. Then in Section \ref{subsubsec:mult_mult}, we shall show that augmenting this algorithm with multiplicities, similarly to  the list-decoding algorithm for Reed-Solomon Codes \emph{up to the Johnson Bound} presented in Section \ref{subsec:rs_johnson}, leads to an algorithm that list decodes multiplicity codes \emph{up to capacity}. 

In what follows, fix a prime power $q$,  distinct evaluation points $a_1, \ldots, a_n \in \F_q$, and positive integers $s,k$ so that $k< sn$. Recall that 
  the \emph{multiplicity code} $\MULT^{(s)}_q(a_1, \ldots, a_n;k)$ is the code which associates with each polynomial $f(X) \in \F_q[X]$ of degree smaller than $k$ a codeword $(f^{(<s)}(a_1), \ldots, f^{(<s)}(a_n)) \in (\F_q^s)^n$,
 where  for $a \in \F_q$,  $f^{(<s)}(a) = (f(a), f^{(1)}(a), \ldots, f^{(s-1)}(a))$.
For simplicity, in what follows we shall assume that  $\max\{k,s\} \leq \char(\F_q)$.\
Suppose that $w\in (\F_q^s)^n$ is a received word, where $w_i = (w_{i,0}, \ldots, w_{i,s-1}) \in \F_q^s$ for any $ i \in[n]$.
We shall show an algorithm, running in time roughly $q^s$, which computes the list of all polynomials $f(X) \in \F_q[X]$ of degree smaller than $k$ so that $f^{(<s)}(a_i) \neq w_i$ for at most $e :=\frac{s \cdot  (n-k+1)-1} {s+1}   $ 
 indices $i \in [n]$. That is, $f^{(<s)}(a_i) = w_i$ for at least $t: =n -e = \frac{n+s (k-1) +1} {s+1} $ indices $i \in [n]$. 
 
 Note that the special case of $s=1$ corresponds to the unique decoding algorithm for Reed-Solomon codes presented in Section \ref{subsec:rs_unique}. 
Moreover, if we let $R:=\frac{k} {s \cdot n}$ denote the rate of the $\mult_q^{(s)}(a_1, \ldots, a_n; k)$ code, then this algorithm will be able to list decode from up to  roughly $\frac {s} {s+1} (1- s \cdot R)$-fraction of errors in time roughly $q^s$. It can be verified that this is strictly larger than the unique decoding radius of $\frac{1 -R} {2}$ for any $s>1$ and $R \leq \frac{1} {3s}$. Furthermore, for any constant $s$ and $q=\poly(n)$ the running time (and list size) is polynomial in the block length.

\paragraph{Overview.}
Recall that in the unique decoding algorithm for Reed-Solomon Codes (Algorithm \ref{fig:rs_unique}), the decoding algorithm was divided into two main steps: The interpolation step in which we searched  for a non-zero low-degree bivariate polynomial $Q(X,Y) \in \F_q[X,Y]$
of the form $Q(X,Y) = A (X) +  B (X) \cdot Y$  
so that $Q(a_i, w_i)=0$ for any $i \in [n]$, and the root finding step in which we searched for the univariate polynomial $f(X) \in \F_q[X]$ satisfying that $Q(X, f(X)) = 0$. 

Furthermore, the main properties we needed out of $Q$ 
were that the number of unknown coefficients in $Q$ is greater than the number $n$ of constraints of the form $Q(a_i,w_i)=0$, and that the $(1,k)$-weighted degree $\deg_{(1,k)}(Q)$ of $Q$ is smaller than the agreement parameter $t$. The first property guarantees the existence of a non-zero $Q$, while the second property  guarantees that for any univariate polynomial $f(X) \in \F_q[X]$ of degree smaller than $k$ so that $f(a_i)=w_i$ for at least $t$ of the indices $i \in [n]$, it holds that
 $Q(X,f(X))$ is a univariate polynomial of degree smaller than $t$ that has at least $t$ roots, and consequently $Q(X,f(X))=0$.

In the list-decoding setting, we would like the agreement parameter $t$ to be smaller, which means that the $(1,k)$-weighted degree of $Q$ is required to be smaller. Consequently, $Q$ may potentially have less than $n$ monomials, and so we may not be able to argue the existence of a non-zero polynomial $Q$. In the list-decoding algorithm for Reed-Solomon Codes beyond the unique decoding radius (Algorithm \ref{fig:rs_list}), we coped with this by no longer requiring that $Q$ has the form  $Q(X,Y) = A (X) +  B (X) \cdot Y$ that is linear in $Y$, which can potentially increase the number of monomials in $Q$.
For list decoding of multiplicity codes beyond unique decoding radius we deal with this differently by setting $Q$ to be a low-degree \emph{$(s+1)$-variate polynomial} $Q(X,Y_0, \ldots, Y_{s-1})$ that is \emph{linear in $Y_0, \ldots, Y_{s-1}$}, which can also potentially increase the number of monomials in $Q$.

More specifically, in the interpolation step, we  search for a non-zero  $(s+1)$-variate polynomial $Q(X,Y_0, \ldots, Y_{s-1}) \in \F_q[X,Y_0, \ldots, Y_{s-1}]$ 
of the form 
$$
Q(X,Y_0, \ldots, Y_{s-1})=A(X)+B_0(X) \cdot Y_0+ \cdots + B_{s-1}(X) \cdot Y_{s-1},
$$
of $(1,k, \ldots, k)$-weighted degree smaller than $t$,
so that $Q(a_i,w_i)=0$ for any $i \in [n]$ (where we view $w_i = (w_{i,0}, \ldots, w_{i,s-1}) \in \F_q^s$ as an assignment to $s$ variables in $\F_q$). Then, in the root finding step, we search for all univariate polynomials $f(X) \in \F_q[X]$ satisfying that $Q(X, f(X), f^{(1)}( X), \ldots, f^{(s-1)}(X)) = 0$. 

The analysis is similar to the algorithm for unique decoding of Reed-Solomon Codes, but a crucial point exploited in the proof is that there are not too many polynomials $f(X)$ solving the \emph{differential equation} $Q(X, f(X), f^{(1)}(X), \ldots, f^{(s-1)}(X)) = 0$. 
In what follows, we describe separately the interpolation and root finding steps, and analyze their correctness and efficiency.

\paragraph{Step 1: Interpolation.}

In this step, we search for a non-zero low-degree $(s+1)$-variate polynomial $Q(X,Y_0, \ldots, Y_{s-1})$ so that $Q(a_i, w_i)=0$  for any $i \in [n]$.

\begin{lemma}\label{lem:mult_no_mult_list_interpolation}
Let $t= \frac{n+s (k-1) +1} {s+1}$.
Then there exists a non-zero $(s+1)$-variate polynomial $Q(X,Y_0, \ldots, Y_{s-1})$ over $\F_q$ of the form
$$Q(X,Y_0, \ldots, Y_{s-1})=A(X)+B_0(X) \cdot Y_0+ \cdots + B_{s-1}(X) \cdot Y_{s-1},$$ 
where $A(X)$ has degree smaller than $t $ and each $B_\ell(X)$ has degree smaller than $t-k+1$, and which satisfies that $Q(a_i,w_i)=0$ 
for any $i \in [n]$ (where we view $w_i = (w_{i,0}, \ldots, w_{i,s-1}) \in \F_q^s$ as an assignment to $s$ variables in $\F_q$).  

Furthermore, such a polynomial $Q(X,Y_0, \ldots, Y_{s-1})$ can be found by solving a system of linear equations which is obtained by viewing the  coefficients of the monomials in $A(X),B_0(X), \ldots, B_{s-1}(X)$ as unknowns, and viewing each requirement of the form $Q(a_i,w_i)$ as a homogeneous linear constraint on these unknowns.       
\end{lemma}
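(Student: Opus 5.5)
The plan is to follow the proof of Lemma \ref{lem:rs_unique_interpolation} verbatim: count unknowns and constraints in a homogeneous linear system. I view the coefficients of $A(X)$ and of each $B_\ell(X)$, for $\ell = 0, \ldots, s-1$, as unknowns. For each $i \in [n]$, the requirement
$$Q(a_i, w_i) = A(a_i) + \sum_{\ell=0}^{s-1} B_\ell(a_i)\, w_{i,\ell} = 0$$
is a homogeneous linear constraint on these unknowns, since $a_i$ and $w_{i,\ell}$ are fixed field elements. This already gives the ``furthermore'' part: any non-zero solution of the system is a valid $Q$.

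To get a non-zero solution, it suffices to show that the number of unknowns exceeds the number $n$ of constraints.
\begin{itemize}
\item Since $\deg(A) < t$, the polynomial $A$ contributes $t$ unknowns.
\item Since each $\deg(B_\ell) < t-k+1$, each $B_\ell$ contributes $t-k+1$ unknowns.
\end{itemize}
The total number of unknowns is therefore
$$t + s(t-k+1) = (s+1)t - s(k-1).$$
Substituting $t = \frac{n+s(k-1)+1}{s+1}$ gives
$$(s+1)t - s(k-1) = n + s(k-1) + 1 - s(k-1) = n+1 > n.$$
Hence the homogeneous system has more unknowns than equations and admits a non-zero solution. A non-zero coefficient vector gives a non-zero polynomial $Q$, because the monomials $X^j$ and $X^j Y_\ell$ are distinct.

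The only potential obstacle is integrality. As elsewhere in the survey, I assume for simplicity that $t$ is an integer and that $t-k+1 \geq 1$ (so the degree bounds make sense). The latter holds whenever $t \geq k$, which is the regime of interest. If rounding were needed, I would take $t$ to be the ceiling and check that the count is still at least $n+1$; this is routine.
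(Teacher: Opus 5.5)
Your proposal is correct and is the paper's argument: the paper likewise treats the coefficients of $A, B_0, \ldots, B_{s-1}$ as unknowns, computes $t + s(t-k+1) = (s+1)t - s(k-1) = n+1$ unknowns against $n$ homogeneous constraints, and concludes that a non-zero solution exists. Your added remarks on integrality and on $t-k+1 \geq 1$ match the paper's standing simplifying assumptions and do not change the argument.
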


\begin{proof}
If we view the coefficients of the monomials in $A(X),B_0(X), \ldots, B_{s-1}(X)$ as unknowns, then each requirement of the form  $Q(a_i,w_i)=0$ imposes a homogeneous linear constraint on these unknowns. Further note that 
the number of unknowns is 
$$t + s \cdot (t-k+1) = (s+1) \cdot t - s \cdot (k-1) = n+1,$$
while the number of linear constraints is $n$, and consequently this linear system has a non-zero solution.
\end{proof}

\paragraph{Step 2: Root finding.} 

 In this step, we search for all univariate polynomials $f(X) \in \F_q[X]$ of degree smaller than $k$ that solve the differential equation 
$$ 
 Q\left(X, f(X), f^{(1)}(X), \ldots, f^{(s-1)}(X)\right) = 0,
$$
 and we would like to show that any 
univariate polynomial $f(X) \in \F_q[X]$ of degree smaller than $k$ so that  $f^{(<s)}(a_i) = w_i$ for at least $t$ indices $i \in [n]$ satisfies the above differential equation. We start with the latter.

\begin{lemma}\label{lem:mult_no_mult_list_P_is_root}
Suppose that $Q(X,Y_0, \ldots, Y_{s-1})$ is a non-zero $(s+1)$-variate polynomial over $\F_q$ of the form
$$Q(X,Y_0, \ldots, Y_{s-1})=A(X)+B_0(X) \cdot Y_0+ \cdots + B_{s-1}(X) \cdot Y_{s-1},$$ 
where $A(X)$ has degree smaller than $t $ and each $B_\ell(X)$ has degree smaller than $t-k+1$, and which satisfies that $Q(a_i,w_i)=0$ 
for any $i \in [n]$ (where we view $w_i = (w_{i,0}, \ldots, w_{i,s-1}) \in \F_q^s$ as an assignment to $s$ variables in $\F_q$).  
Let $f(X) \in \F_q[X]$ be a univariate polynomial  of degree smaller than $k$ so that $f^{(<s)}(a_i) = w_i$ for at least $t$ indices $i \in [n]$. Then 
$$ Q(X, f(X), f^{(1)}(X), \ldots, f^{(s-1)}(X)) = 0.$$
\end{lemma}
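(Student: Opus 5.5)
The plan mirrors the proof of Lemma \ref{lem:rs_unique_P_is_root}: show that the univariate polynomial
$$P_f(X):=Q\left(X, f(X), f^{(1)}(X), \ldots, f^{(s-1)}(X)\right) = A(X) + \sum_{\ell=0}^{s-1} B_\ell(X)\, f^{(\ell)}(X)$$
has more roots than its degree, and hence is the zero polynomial.

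\textbf{Degree bound.} Since $\deg(f)<k$, each Hasse derivative $f^{(\ell)}(X)$ has degree at most $k-1-\ell \leq k-1$. This follows from the expansion $f(X+Z)=\sum_i f^{(i)}(X)Z^i$: the coefficient of $Z^\ell$ comes from monomials $X^j$ with $j\geq \ell$, so it has degree at most $\deg(f)-\ell$. Each $B_\ell(X)$ has degree smaller than $t-k+1$, that is, at most $t-k$. Hence each term $B_\ell(X) f^{(\ell)}(X)$ has degree at most $(t-k)+(k-1)=t-1$. Also $\deg(A)\leq t-1$. So $P_f$ is a univariate polynomial of degree at most $t-1$.

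\textbf{Roots.} Fix any index $i$ with $f^{(<s)}(a_i)=w_i$, meaning $f^{(\ell)}(a_i)=w_{i,\ell}$ for every $\ell\in\{0,\ldots,s-1\}$. Evaluating $P_f$ at $a_i$ and substituting gives
$$P_f(a_i)=A(a_i)+\sum_\ell B_\ell(a_i)\,w_{i,\ell}=Q(a_i,w_i)=0.$$
The $a_i$ are distinct, and there are at least $t$ such indices. So $P_f$ has at least $t$ distinct roots. Since its degree is at most $t-1$, $P_f$ must be identically zero, which is exactly the claimed identity.

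\textbf{Main point to check.} The only step needing care is the degree bound on $f^{(\ell)}$. It is the fact that taking Hasse derivatives does not increase degree that makes the weighted-degree budget $t-k+1$ for the $B_\ell$ sufficient. No characteristic assumption is needed for this step. The remaining steps are routine.
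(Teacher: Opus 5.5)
Your proposal is correct and follows the paper's proof: you bound $\deg P_f<t$, observe $P_f(a_i)=Q(a_i,w_i)=0$ at each of the at least $t$ agreement points, and conclude $P_f\equiv 0$. Your justification that $\deg f^{(\ell)}\le \deg f-\ell$ for Hasse derivatives supplies the one degree step that the paper's proof states without argument.
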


\begin{proof}
The polynomial $P_f(X):= Q(X, f(X), f^{(1)}(X), \ldots, f^{(s-1)}(X))$ is a univariate polynomial  of degree smaller than $t$, which satisfies that 
$$P_f(a_i) = Q(a_i, f(a_i), f^{(1)}(a_i), \ldots, f^{(s-1)}(a_i))= Q(a_i, w_i)=0$$ 
for any $i \in [n]$ for which $f^{(<s)}(a_i) = w_i$.  Thus $P_f(X)$ is a univariate polynomial of degree smaller than $t$ with at least $ t$ roots, and so it must be the zero polynomial. 
\end{proof}

The main new ingredient in the list decoding algorithm for multiplicity codes is the next lemma which bounds the number of univariate polynomials $f(X) \in \F_q[X]$ satisfying the differential equation.

\begin{lemma}\label{lem:mult_no_mult_list_solution_size} 
Suppose that $Q(X,Y_0, \ldots, Y_{s-1})$ is a non-zero $(s+1)$-variate polynomial over $\F_q$ of the form
$$Q(X,Y_0, \ldots, Y_{s-1}) = A (X) +  B_0 (X) \cdot Y_0+ \cdots +B_{s-1}(X)\cdot Y_{s-1},$$
and let $\calL$ be the list containing all polynomials $f(X) \in \F_q[X]$ of degree smaller than $k$ so that  
$$Q(X, f(X), f^{(1)}(X), \ldots, f^{(s-1)}(X))=0.$$ Then  if $\max\{k,s\}\leq \char(\F_q)$, then  $|\calL| \leq q^{s-1}$. 

Furthermore, $\calL$ forms an affine subspace over  $\F_q$ of dimension at most $s-1$. A basis for this subspace can be found by solving a system of linear equations which is obtained by viewing the  coefficients of $f(X)$ as unknowns, and viewing the requirement that each of the coefficients of $Q(X, f(X), f^{(1)}(X), \ldots, f^{(s-1)}(X))$ is zero as a (non-homogeneous) linear constraint on these unknowns.     
\end{lemma}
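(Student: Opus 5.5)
The plan is to show that the solution set is an affine subspace and then bound its dimension by showing that a solution is determined by its first $s-1$ Taylor coefficients at a well-chosen point.

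The map $f \mapsto Q(X, f, f^{(1)}, \ldots, f^{(s-1)}) = A(X) + \sum_\ell B_\ell(X) f^{(\ell)}(X)$ is affine in $f$: it is the constant $A$ plus an $\F_q$-linear operator $T(f) := \sum_\ell B_\ell f^{(\ell)}$, using linearity of Hasse derivatives (Lemma \ref{lem:prelim_hasse_properties_univ}). Hence $\calL$, the set of $f$ of degree $<k$ with $T(f) = -A$, is either empty or a coset of $\ker T \cap \F_q[X]_{<k}$. Requiring every coefficient of $A + T(f)$ to vanish gives a non-homogeneous linear system in the coefficients of $f$, which yields the algorithmic ``furthermore'' part. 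It therefore suffices to show $\dim(\ker T \cap \F_{q,<k}[X]) \le s-1$, which also gives $|\calL|\le q^{s-1}$.

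For the kernel, first handle the degenerate case. If all $B_\ell = 0$, then $Q = A \ne 0$, so $Q(X,f,\ldots)=A\neq 0$ for every $f$ and $\calL = \emptyset$. Otherwise let $\ell^*$ be the largest index with $B_{\ell^*} \ne 0$; we may assume $\ell^* \le s-1$. Pick a point $a \in \F_q$ with $B_{\ell^*}(a) \ne 0$. If none exists, which can happen when $\deg B_{\ell^*} \ge q$, pass to an extension field; the dimension count over the extension bounds the dimension over $\F_q$. Write $f(X) = \sum_j c_j (X-a)^j$ via the expansion (\ref{eq:prelim_hasse}). The Hasse derivatives in this basis are $f^{(\ell)}(X) = \sum_j \binom{j}{\ell} c_j (X-a)^{j-\ell}$, by iterating the definition. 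Expand each $B_\ell$ around $a$ as well, and look at the coefficient of $(X-a)^{m}$ in $T(f)$ for $m = 0,1,2,\ldots$. It equals $B_{\ell^*}(a)\binom{m+\ell^*}{\ell^*} c_{m+\ell^*}$ plus terms involving only $c_j$ with $j < m+\ell^*$: terms with $\ell<\ell^*$ contribute $c_j$ with $j\le m+\ell < m+\ell^*$, and terms from higher-order coefficients of the $B_\ell$ lower $j$ further.

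Since $m+\ell^* < k \le \char(\F_q)$, the binomial coefficient $\binom{m+\ell^*}{\ell^*}$ is nonzero in $\F_q$. Together with $B_{\ell^*}(a)\neq 0$, this lets us solve inductively for $c_{m+\ell^*}$ in terms of lower coefficients. So a kernel element is determined by $c_0,\ldots,c_{\ell^*-1}$, and the kernel has dimension at most $\ell^* \le s-1$.

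The main obstacle is this triangularity bookkeeping: verifying that the coefficient of $(X-a)^m$ really involves $c_{m+\ell^*}$ only through the leading term, and checking that the characteristic assumption is exactly what keeps the pivots nonzero. The choice of $a$ with $B_{\ell^*}(a)\ne0$ (possibly over an extension) is a minor technicality.
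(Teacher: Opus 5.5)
Your proposal is correct and is essentially the paper's argument. The coefficient of $(X-a)^m$ in $T(f)$ is exactly $P_f^{(m)}(a)-A^{(m)}(a)$, so your triangular solve for $c_{m+\ell^*}$ with pivot $\binom{m+\ell^*}{\ell^*}B_{\ell^*}(a)$ is the paper's Claim~\ref{clm:mult_no_mult_list_solution_size}, after the same reduction to the top nonzero $B_\ell$, and the affine structure comes from the same linearity. Your extension-field remark for the case where $B_{\ell^*}$ vanishes on all of $\F_q$ (possible when $\deg B_{\ell^*}\ge q$) fills a point the paper glosses over. The resulting dimension comparison is valid, since $\F_q$-independent coefficient vectors remain independent over the extension field.
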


\begin{proof}
First note that if there is no polynomial $f(X) \in \F_q[X]$ satisfying that $Q(X, f(X), f^{(1)}(X), \ldots, f^{(s-1)}(X))=0$, then 
we are done, so we may assume that there exists a polynomial $f(X) \in \F_q[X]$ satisfying that
$
Q(X, f(X), f^{(1)}(X), \ldots, f^{(s-1)}(X)) = 0.
$
Next we observe that in this case  $B_0(X), \ldots, B_{s-1}(X)$ are not all zero. 
To see this, suppose on the contrary that $B_0(X), \ldots, B_{s-1}(X)$ are all zero. Then by assumption that $Q(X, f(X), f^{(1)}(X), \ldots, f^{(s-1)}(X))= A(X) + \sum_{
\ell=0}^{s-1} B_\ell(X)  \cdot f^{(\ell)}(X)=0$, we also have that $A(X)=0$, which implies in turn that 
$Q(X,Y_0, \ldots, Y_{s-1})= A(X)+ \sum_{\ell=0}^{s-1} B_\ell(X) \cdot Y_\ell =0$, contradicting the assumption that $Q(X,Y_0, \ldots, Y_{s-1})$ is non-zero. Without loss of generality, we may further assume that $B_{s-1}(X) \neq 0$, otherwise we prove the statement for the maximal value $s' <s$ for which $B_{s'}(X) \neq 0$, which implies the statement also for the value $s$. 
Finally, since $B_{s-1}(X)$ is a non-zero polynomial over $\F_q$, there exists an element $a \in \F_q$  
so that $B_{s-1}(a) \neq 0$. The main observation is the following.

\begin{claim}\label{clm:mult_no_mult_list_solution_size}
For any $u \in \F_q^{s-1}$, there is a \emph{unique} polynomial $f(X) \in \F_q[X]$ of degree smaller than $k$ so that $f^{(<s-1)}(a) = u$ and $Q(X, f(X), f^{(1)}(X), \ldots, f^{(s-1)}(X))=0$.
\end{claim}

\begin{proof}
  To show this, fix $u \in \F_q^{s-1}$, and let $f(X) \in \F_q[X]$ be a polynomial of degree smaller than $k$ so that $f^{(<s-1)}(a) = u$ and
$$
P_f(X):=Q(X, f(X), f^{(1)}(X), \ldots, f^{(s-1)}(X))=A(X) + \sum_{\ell=0}^{s-1} B_\ell(X)  \cdot f^{(\ell)}(X) = 0.
$$
By the definition of the Hasse Derivative, we can write
$f(X) = \sum_{i=0}^{k-1} f^{(i)}(a) \cdot (X-a)^i$. It therefore suffices to show that the values of $f^{(<s-1)}(a)$ determine the rest of the derivatives $f^{(s-1)}(a), \ldots, f^{(k-1)}(a)$.

To show the above, we first compute the derivatives of $P_f(X)$. 
By the properties of the Hasse Derivative (Lemma \ref{lem:prelim_hasse_properties_univ}), for any $j \in \N$ we have that:
\begin{eqnarray}\label{eq:mult_no_mult_uni_der}
P_f^{(j)} (X)& = & A^{(j)}(X) + \sum_{\ell=0}^{s-1} (B_\ell(X) \cdot f^{(\ell)} (X) )^{(j)} \nonumber \\
& = & A^{(j)}(X) + \sum_{\ell=0}^{s-1} \sum_{h=0}^{j} B_\ell^{(j - h)}(X)  \cdot (f^{(\ell)})^{(h)}(X)  \nonumber \\ 
& = & A^{(j)}(X) + \sum_{\ell=0}^{s-1} \sum_{h=0}^{j} {h + \ell \choose \ell} B_\ell^{(j - h)}(X)  \cdot f^{(h+\ell)}(X).
\end{eqnarray}

Fix $j \in \{0,1,\ldots, k-s\}$. 
Since $P_f(X)$ is the zero polynomial, we have that $P_f^{(j)}(a)=0$, 
and so by the above,
$$ A^{(j)}(a) + \sum_{\ell=0}^{s-1} \sum_{h=0}^{j} {h + \ell \choose \ell} B_\ell^{(j - h)}(a)  \cdot f^{(h+\ell)}(a)=0.$$
Inspecting the above expression, we see that the highest order derivative of $f$ that appears there is $f^{(j+s-1)}(a)$,  which is only obtained for the choice of $\ell=s-1$ and $h=j$. Furthermore, the coefficient multiplying 
$f^{(j+s-1)}(a)$ is ${j +s-1 \choose s-1} B_{s-1}(a)$, which is non-zero by assumptions that $\max\{k,s\} \leq \char(\F_q)$ and that $B_{s-1}(a) \neq 0$. Consequently, the derivative $f^{(j+s-1)}(a)$ is a linear combination of lower derivatives $f^{(<j+s-1)}(a)$. In particular, $f^{(s-1)}(a), \ldots, f^{(k-1)}(a)$, and so also the polynomial $f(X)$, are uniquely determined by $f^{(<s-1)}(a)=u$.  
\end{proof}

By the above claim, we clearly have that $|\calL| \leq q^{s-1}$. The moreover part follows by the linear structure of $Q$. 
\end{proof}

\medskip

The full description of the algorithm appears  in Figure \ref{fig:mult_no_mult_list} below, followed by correctness and efficiency analysis.

\begin{figure}[h]
  \begin{boxedminipage}{\textwidth} \small \medskip \noindent
    $\;$

    \underline{\textbf{List decoding of $\mult_q^{(s)}(a_1, \ldots, a_n; k)$:}}
    
    \medskip
\begin{itemize}
\item \textbf{INPUT:} $w \in (\F_q^s)^n$, where $w_i = (w_{i,0}, \ldots, w_{i,s-1}) \in \F_q^s$ for each $i \in [n]$.
\item \textbf{OUTPUT:} 
The list $\calL$ of all polynomials $f(X) \in \F_q[X]$ of degree smaller than $k$ so that $f^{(<s)}(a
_i)= w_i$ for at least $t=\frac{n+s(k-1) +1} {s+1}$ indices $i \in [n]$.
\end{itemize}

    \begin{enumerate}
\item \label{step:mult_no_mult_list_interpolation} Find a non-zero $(s+1)$-variate polynomial $Q(X,Y_0, \ldots, Y_{s-1})$ over $\F_q$ of the form
$$Q(X,Y_0, \ldots, Y_{s-1}) = A (X) +  B_0 (X) \cdot Y_0 + \cdots+ B_{s-1}(X)\cdot Y_{s-1} ,$$ 
where $A(X)$ has degree smaller than $t$ and each $B_\ell(X)$ has degree smaller than $t-k+1$, and which satisfies that $Q(a_i,w_i)=0$ 
for any $i \in [n]$ (where we view $w_i = (w_{i,0}, \ldots, w_{i,s-1}) \in \F_q^s$ as an assignment to $s$ variables in $\F_q$).

Such a polynomial $Q(X,Y_0, \ldots, Y_{s-1})$ can be found by solving a system of linear equations which is obtained by viewing the  coefficients of the monomials in $Q$ as unknowns, and viewing each requirement of the form $Q(a_i,w_i)=0$ as a homogeneous linear constraint on these unknowns.       

\item  \label{step:mult_no_mult_list_root_finding}
Find a basis for the list $\calL'$ of all univariate polynomials $f(X)$ of degree smaller than $k$ so that 
$$Q(X, f(X), f^{(1)}(X), \ldots, f^{(s-1)}(X))=0.$$
This basis can be found by solving a system of linear equations which is obtained by viewing the  $k$ coefficients of $f(X)$ as unknowns, and viewing the requirement that each of the $t$ coefficients of $Q(X, f(X), f^{(1)}(X), \ldots, f^{(s-1)}(X))$ is zero as $t$ (non-homogeneous) linear constraints on these unknowns.     
 
\item \label{step:mult_no_mult_list_check} Output the list $\calL$ of all codewords $(f(a_1),f(a_2),\ldots, f(a_n))$ corresponding to 
univariate polynomials $f(X) \in \calL'$ of degree smaller than $k$ which satisfy that $f^{(<s)}(a_i) = w_i$ for at least $t$ indices $i \in [n]$.
\end{enumerate}

  \medskip

  \end{boxedminipage}

\caption{List decoding of multiplicity codes}
\label{fig:mult_no_mult_list}
\end{figure}

\paragraph{Correctness.}  Suppose that $f(X) \in \F_q[X]$ is a univariate polynomial of degree smaller than $k$ so that $f^{(<s)}(a_i) = w_i$ for at least $t$ indices $i \in [n]$, we shall show that $f(X) \in \calL$. By Lemma \ref{lem:mult_no_mult_list_interpolation}, there exists a non-zero $(s+1)$-variate polynomial $Q(X,Y_0, \ldots, Y_{s-1})$ satisfying the requirements on Step \ref{step:mult_no_mult_list_interpolation}. By Lemma \ref{lem:mult_no_mult_list_P_is_root}, we have that 
$Q(X,f(X), f^{(1)}(X), \cdots, f^{(s-1)}(X))=0,$ and so  $f(X)$ will be included in $\calL'$ on Step \ref{step:mult_no_mult_list_root_finding}. Clearly, $f(X)$ will also pass the checks on Step \ref{step:mult_no_mult_list_check}, and thus $f(X)$ will also be included in $\calL$.

\paragraph{Efficiency.} Step \ref{step:mult_no_mult_list_interpolation} can be performed in time $\poly(n, s, \log (q))$ by solving a system of $n$ linear equations in $n+1$ variables using Gaussian elimination. Step \ref{step:mult_no_mult_list_root_finding} can be performed in time $\poly(n,s, \log (q))$ by solving a system of $t$ linear equations in $k$ variables. 
By Lemma \ref{lem:mult_no_mult_list_solution_size},
Step \ref{step:mult_no_mult_list_check} can be performed in time $|\calL| \cdot \poly(n,s, \log (q)) \leq q^s \cdot \poly(n,s, \log (q)) $, where the latter bound is polynomial in the block length for a constant $s$ and $q=\poly(n)$. 

\subsection{List decoding multiplicity codes up to capacity}\label{subsubsec:mult_mult}

We now show how to extend the list decoding algorithm of the previous section to an algorithm that list decodes multiplicity codes up to capacity.

As before, let $\mult_q^{(s)}(a_1, \ldots, a_n; k)$ be the multiplicity code, 
where $q$ is a prime power, $a_1,a_2,\ldots,a_n$ are distinct points in $\F_q$, and $s,k$ are 
positive integers so that   $k< sn$ and $\max\{k,s\} \leq \char(\F_q)$. Suppose also that $w\in (\F_q^s)^n$ is a received word, where $w_i = (w_{i,0}, \ldots w_{i,s-1}) \in \F_q^s$ for any $ i \in[n]$.
Let $r \in \{1,2,\ldots, s\}$ be a parameter to be determined later on.
We shall show how to compute the list of all polynomials $f(X) \in \F_q[X]$ of degree smaller than $k$ so that $f^{(<s)}(a_i) \neq w_i$ for at most $e := \frac{(s-r+1)  r  n  - r   (k-1) - 1} {(s-r+1)  (r+1)} $  indices $i \in [n]$. That is, $f^{(<s)}(a_i) = w_i$ for at least $t: =n -e =  \frac{(s-r+1)  n  + r  (k-1)+1} {(s-r+1)  (r+1)} $ indices $i \in [n]$. 

Note that the algorithm from the previous section corresponds to the special case that $r=s$. 
Moreover, if we let $R:=\frac{k} {s \cdot n}$ denote the rate of the  $\mult_q^{(s)}(a_1, \ldots, a_n; k)$ code, then this algorithm will be able to list decode from up to  roughly a $\frac{r} {r+1} \left(1 -  \frac {s} {s-r+1} \cdot R\right)$-fraction of errors in time roughly $q^r$. Thus, for any $\epsilon >0$, we can choose $r = \Theta( 1/\epsilon)$ and $s = \Theta( 1/\epsilon^2)$ so that the fraction of errors is at least $1-R-\epsilon$ and the 
running time (and list size) is roughly $q^{r}  =q^{1/\epsilon}$, which is polynomial in the block length for $q=\poly(n)$. Later, in Section \ref{subsec:const_list_mult} we shall show that the list size is in fact a constant independent of the block length (and even matches the generalized Singlton Bound).  

\paragraph{Overview.}
The improved algorithm is based on the \emph{method of multiplicities}, similarly to the way the algorithm presented in Section \ref{subsec:rs_johnson} for list decoding of Reed-Solomon Codes up to the Johnson Bound improves on the algorithm presented in Section \ref{subsec:rs_list}. 

In more detail, recall that in the previous Algorithm \ref{fig:mult_no_mult_list} for list-decoding of the multiplicity code, in the interpolation step, we searched  for a non-zero low-degree $(s+1)$-variate polynomial $Q(X,Y_0, \ldots, Y_{s-1}) = A(X)+\sum_{\ell=0}^{s-1} B_\ell(X) Y_\ell$
so that $Q(a_i, w_i)=0$ for any $i \in [n]$, and in the root finding step, we searched for all univariate polynomials $f(X) \in \F_q[X]$ of degree smaller than $k$ satisfying the differential equation $Q(X,f(X), f^{(1)}(X), \cdots, f^{(s-1)}(X)) = 0$. 

Furthermore, the main properties we needed out of $Q$ 
were that the number of unknown coefficients in $Q$ is greater than the number $n$ of constraints of the form $Q(a_i,w_i)=0$, and that the $(1,k, \ldots, k)$-weighted degree of $Q$ is smaller than the agreement parameter $t$. The first property guarantees the existence of a non-zero $Q$, while the second property  guarantees that for any univariate polynomial $f(X) \in \F_q[X]$ of degree smaller than $k$ so that $f^{(<s)}(a_i)=w_i$ for at least $t$ of the indices $i \in [n]$, it holds that
 $Q(X,f(X), f^{(1)}(X), \cdots, f^{(s-1)}(X)) $ is a univariate polynomial of degree smaller than $t$ that has at least $t$ roots, and consequently it is the zero polynomial.

In order to list-decode the multiplicity code up to capacity,  we would like to set the agreement parameter $t$ to be even smaller than in Algorithm \ref{fig:mult_no_mult_list}. 
As was previously the case, this means that $Q(X,f(X), f^{(1)}(X), \cdots, f^{(s-1)}(X))$ has a smaller number of roots, which forces the degree of $Q$ to be smaller in order to argue that it is the zero polynomial. Consequently, $Q$ may potentially have less than $n$ monomials, and so we may not be able to argue the existence of a non-zero polynomial $Q$.

To overcome this, 
we define $Q$ to be an $(r+1)$-variate polynomial  $Q(X,Y_0, \ldots, Y_{r-1})$ for some $r<s$, 
and we require that $Q(X, f(X), f^{(1)}(X) ,\ldots, f^{(r-1)}(X))$ vanishes on $a_i$ with \emph{multiplicity} $s-r+1$
 for any point $i \in [n]$ for which $f^{(<s)}(a_i)=w_i$. 
Thus $Q(X,f(X), f^{(1)}(X), \cdots, f^{(s-1)}(X))$ has more roots, counted with multiplicities, and consequently it can have higher degree and more coefficients, which may be helpful in arguing the existence of a non-zero polynomial $Q$. However, this also comes with a price, as we also increased the number of constraints on $Q$ by a multiplicative factor of $s-r+1$, and consequently we now require that $Q$ has more coefficients than before. It turns out however that this trade-off is favorable, and allows to decrease the agreement parameter $t$ to the minimum possible for an appropriate setting of $r$ and $s$.

In what follows, we describe separately the changes made in the interpolation and root finding steps, and analyze their correctness and efficiency.

\paragraph{Step 1: Interpolation.}

In the interpolation step, we would now like to find a non-zero low degree $(r+1)$-variate polynomial $Q(X,Y_0, \ldots, Y_{r-1})$ which satisfies that $P_f(X):=Q(X, f(X), f^{(1)}(X) ,\ldots, f^{(r-1)}(X)) $ vanishes on $a_i$ with multiplicity $s-r+1$  for any point $i \in [n]$ for which $f^{(<s)}(a_i)=w_i$.  In the algorithm for list decoding of Reed-Solomon Codes up to the Johnson Bound (cf., Figure \ref{fig:rs_list_johnson}) this was guaranteed by requiring that $Q$ vanishes with high \emph{multivariate} multiplicity on any point $(a_i,w_i)$ (cf., Claim \ref{clm:gs-zero-at-agreement}). However, 
requiring that $Q$ vanishes with high multivariate multiplicity on all these points turns out to be too costly in the current setting, and instead we directly enforce the property that $P_f(X)$ vanishes on each $a_i$ for which $f^{(<s)}(a_i)=w_i$
with high \emph{univariate} multiplicity.\footnote{Specifically, as we show below, enforcing that  $P_f$ vanishes with univariate multiplicity at least $s-r+1$ on $a_i$ imposes $s-r+1$ linear constraints on the coefficients of $Q$, while enforcing that $Q$ vanishes with mulivariate multiplicity at least $s-r+1$ on $a_i$ imposes $ {(r+1)+(s-r+1)-1 \choose r+1} = {s+1 \choose r+1} \approx s^r$ linear constraints on the coefficients of $Q$.}

To understand what requirement this imposes on $Q$, recall that by 
(\ref{eq:mult_no_mult_uni_der}), for any $j \in \N$, 
we have that 
$$
P_f^{(j)} (X)  =  A^{(j)}(X) + \sum_{\ell=0}^{r-1} \sum_{h=0}^{j} {h+ \ell \choose \ell} B_\ell^{(j - h)}(X)  \cdot f^{(h+\ell)}(X). 
$$
Further note that in the case that $f^{(<s)}(a_i)=w_i$, for any $j \in \{0,1,\ldots, s-r\}$,
we have all derivatives of the form $f^{(h+\ell)}(a_i)$ for $\ell \in \{0,1, \ldots, r-1\}$  and $h \in \{0,1, \ldots,j\} $
available for us.

\begin{lemma}\label{lem:mult_cap_interpolation}
Let $t  =  \frac{(s-r+1) n  + r (k-1)+1} {(s-r+1)  (r+1)}.$
Then there exists a non-zero $(r+1)$-variate polynomial $Q(X,Y_0, \ldots, Y_{r-1})$ over $\F_q$ of the form
$$Q(X,Y_0, \ldots, Y_{r-1})=A(X)+B_0(X) \cdot Y_0+ \cdots + B_{r-1}(X) \cdot Y_{r-1},$$ 
where $A(X)$ has degree smaller than $(s-r+1) \cdot t$ and each $B_\ell(X)$ has degree smaller than
$(s-r+1) \cdot t-k+1$, and which satisfies that
\begin{equation}\label{eq:mult_cap_interpolation}
 A^{(j)}(a_i) + \sum_{\ell=0}^{r-1} \sum_{h=0}^{j} {h + \ell \choose \ell} B_\ell^{(j - h)}(a_i)  \cdot w_{i,h+\ell}=0
\end{equation}
for any $i \in [n]$ and $j=0,1,\ldots, s-r$.

Furthermore, such a polynomial $Q(X,Y_0, \ldots, Y_{r-1})$ can be found by solving a system of linear equations which is obtained by viewing the  coefficients of the monomials in $A(X),B_0(X), \ldots, B_{r-1}(X)$ as unknowns, and viewing each requirement of the form  (\ref{eq:mult_cap_interpolation})  as a homogeneous linear constraint on these unknowns.       
\end{lemma}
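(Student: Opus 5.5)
The plan is the same linear-algebra counting argument used for Lemma \ref{lem:mult_no_mult_list_interpolation}, with the constraint count multiplied by $s-r+1$.

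First, each requirement (\ref{eq:mult_cap_interpolation}) is a homogeneous linear constraint on the coefficients of $A(X),B_0(X),\ldots,B_{r-1}(X)$. For fixed $i$ and $j$, the values $w_{i,h+\ell}$ are fixed field elements. The indices stay in range, since $h+\ell \le (s-r)+(r-1)=s-1$, so every $w_{i,h+\ell}$ is defined. Each Hasse derivative evaluation $A^{(j)}(a_i)$ or $B_\ell^{(j-h)}(a_i)$ is an $\F_q$-linear function of the coefficients of $A$ or $B_\ell$. This follows from linearity of the Hasse derivative (Lemma \ref{lem:prelim_hasse_properties_univ}) together with the fact that $(X^m)^{(u)}={m\choose u}X^{m-u}$, which comes from expanding $(X+Z)^m$. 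The binomial factors ${h+\ell\choose \ell}$ are constants, so the whole left-hand side of (\ref{eq:mult_cap_interpolation}) is a linear form in the unknowns. The total number of constraints is therefore $n(s-r+1)$: one for each $i\in[n]$ and each $j\in\{0,\ldots,s-r\}$.

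Next we count unknowns. $A$ contributes $(s-r+1)t$ coefficients, and each of the $r$ polynomials $B_\ell$ contributes $(s-r+1)t-k+1$ coefficients. The total is
\[
(s-r+1)t + r\bigl((s-r+1)t-k+1\bigr) = (s-r+1)(r+1)\,t - r(k-1).
\]
Plugging in $t=\frac{(s-r+1)n+r(k-1)+1}{(s-r+1)(r+1)}$, this equals $(s-r+1)n+1$, exactly one more than the number of constraints. A homogeneous system with more unknowns than equations has a non-zero solution, which gives a non-zero $Q$. The ``furthermore'' part is just this description of the system, which can then be solved by Gaussian elimination.

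The only delicate point is bookkeeping: checking that the degree bounds are non-negative and integral. As elsewhere in the survey, I would assume for simplicity that $t$ and $(s-r+1)t-k+1$ are positive integers, and otherwise round the degree bounds so that the unknown count still exceeds $(s-r+1)n$. No genuine obstacle is expected beyond this.
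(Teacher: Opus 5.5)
Your proposal is correct and follows the paper's proof. Both treat the requirements (\ref{eq:mult_cap_interpolation}) as $(s-r+1)n$ homogeneous linear constraints on the coefficients of $A,B_0,\ldots,B_{r-1}$, and both count $(s-r+1)t+r\bigl((s-r+1)t-k+1\bigr)=(s-r+1)n+1$ unknowns; your extra checks (that $h+\ell\le s-1$, that Hasse-derivative evaluations are linear in the coefficients, and the rounding of the degree bounds) are details the paper leaves implicit.
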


\begin{proof}
If we view the coefficients of the monomials in $A(X),B_0(X), \ldots, B_{r-1}(X)$ as unknowns, then each requirement of the form  (\ref{eq:mult_cap_interpolation}) imposes a homogeneous linear constraint on these unknowns. 

Further note that the number of unknowns is 
$$(s-r+1) \cdot t+ r \cdot ( (s-r+1)\cdot t-k+1) = (r+1) \cdot  (s-r+1) \cdot t- r \cdot (k-1) = (s-r+1)\cdot n+1,$$
while the number of linear constraints is $(s-r+1) \cdot n $, and consequently this linear system has a non-zero solution.
\end{proof}

\paragraph{Step 2: Root finding.} 

 In this step, we search for all univariate polynomials $f(X) \in \F_q[X]$ of degree smaller than $k$ satisfying  that $Q(X, f(X), f^{(1)}(X), \ldots, f^{(r-1)}(X) )=0$, and we would like to show that any 
univariate polynomial $f(X) \in \F_q[X]$ of degree smaller than $k$ so that  $f^{(<s)}(a_i) = w_i$ for at least $t$ indices $i \in [n]$ satisfies that $Q(X, f(X), f^{(1)}(X), \ldots, f^{(r-1)}(X) )=0$. We have already shown how to perform the former in Lemma \ref{lem:mult_no_mult_list_solution_size}, the following lemma shows the latter.

\begin{lemma}\label{lem:mult_cap_P_is_root}
Suppose that $Q(X,Y_0, \ldots, Y_{r-1})$ is a non-zero $(r+1)$-variate polynomial over $\F_q$ of the form
$$Q(X,Y_0, \ldots, Y_{r-1})=A(X)+B_0(X) \cdot Y_0+ \cdots + B_{r-1}(X) \cdot Y_{r-1},$$ 
where $A(X)$ has degree smaller than $(s-r+1) \cdot t$ and each $B_\ell(X)$ has degree smaller than $(s-r+1) \cdot t-k+1$, and which satisfies (\ref{eq:mult_cap_interpolation}) for any $i \in [n]$ and $j=0,1, \ldots, s-r$.
Let $f(X) \in \F_q[X]$ be a univariate polynomial  of degree smaller than $k$ so that $f^{(<s)}(a_i) = w_i$ for at least $t$ indices $i \in [n]$. Then 
$$Q(X, f(X), f^{(1)}(X), \ldots, f^{(r-1)}(X) )=0.$$
\end{lemma}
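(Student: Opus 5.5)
The plan is to show that the univariate polynomial $P_f(X):=Q(X, f(X), f^{(1)}(X), \ldots, f^{(r-1)}(X))$ has low degree but too many roots counted with multiplicity, and then invoke Fact \ref{fact:prelim_hasse_sz_univ}.

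\emph{Degree bound.} Since $\deg(A) < (s-r+1)t$, and since each $B_\ell(X)\cdot f^{(\ell)}(X)$ has degree smaller than $((s-r+1)t-k+1)+(k-1) = (s-r+1)t$, we get $\deg(P_f) < (s-r+1)\cdot t$. Here we use that $\deg(f^{(\ell)}) \le \deg(f) - \ell \le k-1$.

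\emph{Multiplicity at agreement points.} Fix $i$ with $f^{(<s)}(a_i)=w_i$, and fix $j\in\{0,\ldots,s-r\}$. By the expansion (\ref{eq:mult_no_mult_uni_der}), which follows from linearity, the product rule, and iterative application in Lemma \ref{lem:prelim_hasse_properties_univ}, with the sum over $\ell$ running up to $r-1$, we have
\[
P_f^{(j)}(a_i) = A^{(j)}(a_i) + \sum_{\ell=0}^{r-1}\sum_{h=0}^{j}\binom{h+\ell}{\ell} B_\ell^{(j-h)}(a_i)\, f^{(h+\ell)}(a_i).
\]
Every index $h+\ell$ appearing here is at most $j+r-1 \le s-1$. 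Hence each $f^{(h+\ell)}(a_i)$ equals $w_{i,h+\ell}$, and the right-hand side is exactly the left-hand side of (\ref{eq:mult_cap_interpolation}), which vanishes. Thus $P_f^{(j)}(a_i)=0$ for all $j<s-r+1$, that is, $\mult(P_f,a_i)\ge s-r+1$. The only point needing care is this index check $h+\ell\le s-1$, which is what makes the received word supply every needed derivative.

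\emph{Conclusion.} Summing over the at least $t$ agreement points gives $\sum_a \mult(P_f,a) \ge (s-r+1)t > \deg(P_f)$. By Fact \ref{fact:prelim_hasse_sz_univ}, $P_f$ cannot be a non-zero polynomial, so $P_f=0$, which is the claim. I do not expect a real obstacle here; the argument is the routine multiplicity-counting analogue of Lemma \ref{lem:mult_no_mult_list_P_is_root}.
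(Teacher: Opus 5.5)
Your proposal is correct and follows the paper's proof essentially verbatim. You bound $\deg P_f < (s-r+1)t$, use the expansion (\ref{eq:mult_no_mult_uni_der}) together with (\ref{eq:mult_cap_interpolation}) to get multiplicity $s-r+1$ at each of the $t$ agreement points, and conclude via Fact \ref{fact:prelim_hasse_sz_univ}; your explicit degree computation and the index check $h+\ell\le j+r-1\le s-1$ are details the paper leaves implicit, and both are right.
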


\begin{proof}
The polynomial $P_f(X):=Q(X, f(X), f^{(1)}(X), \ldots, f^{(r-1)}(X) )$ is a univariate polynomial  of degree smaller than $(s-r+1) \cdot t$. 
Furthermore, by  (\ref{eq:mult_no_mult_uni_der}), for any $i \in [n]$ for which $f^{(<s)}(a_i) = w_i$, it holds that:
\begin{eqnarray*}
P_f^{(j)}(a_i) & = & A^{(j)}(a_i) + \sum_{\ell=0}^{r-1} \sum_{h=0}^{j} {h + \ell \choose \ell} B_\ell^{(j - h)}(a_i)  \cdot f^{(h+\ell)}(a_i) \\
& = & A^{(j)}(a_i) + \sum_{\ell=0}^{r-1} \sum_{h=0}^{j} {h + \ell \choose \ell} B_\ell^{(j - h)}(a_i)  \cdot w_{i,h+\ell} 
\end{eqnarray*}
for any $j=0,1, \ldots, s-r$. So $P_f(X)$ vanishes on $a_i$ with multiplicity $s-r+1$ for any $i \in [n]$ for which $f^{(<s)}(a_i) = w_i$.
Thus $P_f(X)$ is a univariate polynomial of degree smaller than $(s-r+1) \cdot t$ with at least $ t$ roots of multiplicity $s-r+1$, and so by Fact \ref{fact:prelim_hasse_sz_univ} it must be the zero polynomial. 
\end{proof}

The full description of the algorithm appears  in Figure \ref{fig:mult_cap} below, followed by correctness and efficiency analysis.

\begin{figure}[h]
  \begin{boxedminipage}{\textwidth} \small \medskip \noindent
    $\;$

    \underline{\textbf{List decoding up to capacity of $\MULT_q^{(s)}(a_1, \ldots,a_n;k)$:}}
    
    \medskip
\begin{itemize}
\item \textbf{INPUT:} $w\in (\F_q^s)^n$, where $w_i = (w_{i,0}, \ldots, w_{i,s-1}) \in \F_q^s$ for each $i \in [n]$.
\item \textbf{OUTPUT:} 
The list $\calL$ of all polynomials $f(X) \in \F_q[X]$ of degree smaller than $k$ so that $f^{(<s)}(a_i)= w_i$ for at least $t  =  \frac{(s-r+1) n  + r (k-1)+1} {(s-r+1)  (r+1)}$ indices $i \in [n]$.
\end{itemize}

    \begin{enumerate}
\item \label{step:mult_cap_interpolation} Find a non-zero $(r+1)$-variate polynomial $Q(X,Y_0, \ldots, Y_{r-1})$ over $\F_q$ of the form
$$Q(X,Y_0, \ldots, Y_{r-1}) = A (X) +  B_0 (X) \cdot Y_0 + \cdots+ B_{r-1}(X)\cdot Y_{r-1} ,$$ 
where $A(X)$ has degree smaller than $(s-r+1) \cdot t$ and each $B_\ell(X)$ has degree smaller than $(s-r+1)\cdot (t-k+1)$, and which satisfies that 
 $$A^{(j)}(a_i) + \sum_{\ell=0}^{r-1} \sum_{h=0}^{j} {h + \ell \choose \ell} B_\ell^{(j - h)}(a_i)  \cdot w_{i,h+\ell}=0$$
for any $i \in [n]$ and $j=0,1,\ldots, s-r$.

Such a polynomial $Q(X,Y_0, \ldots, Y_{r-1})$ can be found by solving a system of linear equations which is obtained by viewing the  coefficients of the monomials in $Q$ as unknowns, and viewing each requirement as above  as a homogeneous linear constraint on these unknowns.       

\item  \label{step:mult_cap_root_finding}
Find a basis for the list $\calL'$ of all univariate polynomials $f(X)$ of degree smaller than $k$ so that 
$$Q(X, f(X), \ldots, f^{(r-1)}(X))=0.$$
This basis can be found by solving a system of linear equations which is obtained by viewing the  $k$ coefficients of $f(X)$ as unknowns, and viewing the requirement that each of the $t$ coefficients of $Q(X, f(X), f^{(1)}(X), \ldots, f^{(r-1)}(X) )$ is zero as $t$ (non-homogeneous) linear constraints on these unknowns.     
 
\item \label{step:mult_cap_check} Output the list $\calL$ of all codewords $(f(a_1),f(a_2), \ldots, f(a_n))$ corresponding to 
univariate polynomials $f(X) \in \calL'$ of degree smaller than $k$ which satisfy that $f^{(<s)}(a_i) = w_i$ for at least $t$ indices $i \in [n]$.
\end{enumerate}

  \medskip

  \end{boxedminipage}

\caption{List decoding of multiplicity codes up to capacity}
\label{fig:mult_cap}
\end{figure}

\paragraph{Correctness.}  Suppose that $f(X) \in \F_q[X]$ is a univariate polynomial of degree smaller than $k$ so that $f^{(<s)}(a_i) = w_i$ for at least $t$ indices $i \in [n]$, we shall show that $f(X) \in \calL$. By Lemma \ref{lem:mult_cap_interpolation}, there exists a non-zero $(r+1)$-variate polynomial $Q(X,Y_0, \ldots, Y_{r-1})$ satisfying the requirements on Step \ref{step:mult_cap_interpolation}. By Lemma \ref{lem:mult_cap_P_is_root}, we have that 
$Q(X, f(X), f^{(1)}(X), \ldots, f^{(r-1)}(X) )=0,$ and so  $f(X)$ will be included in $\calL'$ on Step \ref{step:mult_cap_root_finding}. Clearly, $f(X)$ will also pass the checks on Step \ref{step:mult_cap_check}, and thus $f(X)$ will also be included in $\calL$.

\paragraph{Efficiency.} Step \ref{step:mult_cap_interpolation} can be performed in time $\poly(n, s, \log (q))$ by solving a system of $(s-r+1) \cdot n$ linear equations in $(s-r+1) \cdot n+1$ variables using Gaussian elimination. Step \ref{step:mult_cap_root_finding} can be performed in time $\poly(n,s, \log (q))$ by solving a system of $t$ linear equations in $k$ variables. By Lemma
\ref{lem:mult_no_mult_list_solution_size}, 
Step \ref{step:mult_cap_check} can be performed in time $|\calL| \cdot \poly(n,s, \log(q)) \leq q^{r} \cdot \poly(n,s, \log (q)) $, where the latter bound is polynomial in the block length for a constant $r$ and $q=\poly(n)$.

\subsection{List decoding Reed-Solomon Codes over subfield evaluation points up to capacity}\label{subsec:rs_subfield}

We now turn back to the basic family of Reed-Solomon Codes, and consider a special subclass of these codes which are defined over an extension field, and \emph{evaluated over a subfield}. We show that an algorithm similar to the one presented in Section \ref{subsubsec:mult_no_mult} can be used to list decode these codes  up to capacity, albeit with only a slightly non-trivial sub-exponential running time. This in particular implies a sub-exponential upper bound on the list size of these codes, and later on, in Section \ref{subsec:const_list_rs_subfield} we shall show that the list-size can be reduced to a constant independent of the block length (and the running time to polynomial in the block length) by bypassing to a (carefully chosen) subcode. 

In more detail, in what follows fix a prime power $q$, distinct points $a_1, \ldots,a_n \in \F_q$, 
a positive integer $k$ such that $k<n$, and a positive integer $s$, and consider the Reed-Solomon Code $\RS_{q^s}(a_1, \ldots, a_n; k)$ that is defined over the \emph{extension field} $\F_{q^s}$.
We shall show that  $\RS_{q^s}(a_1, \ldots, a_n; k)$ can be list-decoded with similar guarantees to that of the list-decoding algorithm for multiplicity codes beyond the unique decoding radius (Algorithm \ref{fig:mult_no_mult_list}).
Suppose that $w\in (\F_{q^s})^n$ is a received word.
Let $r \in \{1,\ldots, s\}$ be a parameter to be determined later on.
We shall show how to compute the list of all polynomials $f(X) \in \F_{q^s}[X]$ of degree smaller than $k$ so that $f(a_i) \neq w_i$ for at most $e = \frac{r (n-k+1)-1} {r+1} $  indices $i \in [n]$. That is, $f(a_i) = w_i$ for at least $t: =n -e = \frac{n+r (k-1) +1} {r+1} $ indices $i \in [n]$. 

The main advantage however of the Reed-Solomon Code $\RS_{q^s}(a_1, \ldots,a_n;k)$  over the corresponding multiplicity code $\MULT_q^{(s)}(a_1, \ldots,a_n;k)$ is that the rate of  $\RS_{q^s}(a_1, \ldots,a_n;k)$  is higher by a multiplicative factor of $s$. 
More specifically, if we let $R:=\frac{k} { n}$ denote the rate of the $\RS_{q^s}(a_1, \ldots,a_n;k)$ code, then this algorithm will be able to list decode from up to roughly a $\frac {r} {r+1} (1- R)$-fraction of errors in time roughly $q^{r k}  $.  
In particular, for any constant $\epsilon >0$, we can choose $r = \Theta( 1/\epsilon)$  and 
$s = \Theta( 1/\epsilon^2)$ so that the fraction of errors is at least $1-R-\epsilon$, and the running time (and list size) is roughly $q^{r k}  =q^{\epsilon sk}$, which is
sub-polynomial in the number $q^{sk}$ of distinct codewords.  Later, in Section \ref{subsec:const_list_rs_subfield}, we shall show that the list size can be reduced to a constant independent of the block length (and the running time to polynomial in the block length) by passing to an appropriate subcode. 

\paragraph{Overview.}
The list-decoding algorithm is very similar to the list-decoding algorithm for multiplicity codes beyond the unique decoding radius (Algorithm \ref{fig:mult_no_mult_list}). Recall that in Algorithm \ref{fig:mult_no_mult_list}, in 
the interpolation step, we searched for a non-zero low-degree $(s+1)$-variate polynomial $Q(X,Y_0, \ldots, Y_{s-1})$ of the form 
$$
Q(X,Y_0, \ldots, Y_{s-1})=A(X)+B_0(X) \cdot Y_0+ \cdots + B_{s-1}(X) \cdot Y_{s-1},
$$
so that 
$
Q(a_i,w_i)=0
$
for any $i \in [n]$, and in the root finding step, we searched for all univariate polynomials $f(X) \in \F_q[X]$ satisfying that 
$$
Q(X, f(X), f^{(1)}(X), \ldots, f^{(s-1)}(X)) =  0.
$$

The main observation we shall use for list decoding of Reed-Solomon Codes over subfield evaluation points is that for any  polynomial $f(X)= \sum_{i=0}^{k-1} f_i X^i $ over the extension field $\F_{q^s}$ and for any point $a$ in the base field $\F_q$, 
$$ (f(a))^q=\left(\sum_{i=0}^{k-1} f_i a^i\right)^q=\sum_{i=0}^{k-1} (f_i)^q (a^q)^i =  \sum_{i=0}^{k-1} (f_i)^q a^i,$$
where the second equality uses the fact 
that raising to the power of $q$ is an $\F_q$-linear transformation, and the third equality uses the fact that $a^q = a$ for any $a \in \F_q$.
Motivated by this, for a polynomial $f(X)= \sum_{i=0}^{k-1} f_i X^i \in \F_{q^s}[X]$, we let $\sigma(f) = \sum_{i=0}^{k-1} (f_i)^q X^i \in \F_{q^s}[X]$. Under this notation, we have that 
$  (f(a))^q = \sigma(f)(a)$ for any polynomial $f(X) \in \F_{q^s}[X]$ and  point $a \in \F_q$.

Using the above, in the list-decoding algorithm for $\RS_{q^s}(a_1, \ldots,a_n;k)$, in the interpolation step, we once more search for a non-zero low degree $(r+1)$-variate polynomial of the form $Q(X)=A(X)+\sum_{\ell=0}^{r-1} B_\ell(X) \cdot Y^\ell$
(with coefficients in the extension field $\F_{q^s}$). But now in the interpolation step, we require that
$$
Q\left(a_i, w_i, (w_i)^q, \ldots, (w_i)^{q^{r-1}}\right)=0
$$
for any $i \in [n]$. 
Then in the root finding step, we we search for all univariate polynomials $f(X) \in \F_{q^s}[X]$
satisfying that 
$$
Q\left(X, f(X), \sigma(f)(X), \ldots, \sigma^{r-1}(f)(X) \right)=0.
$$

In what follows, we describe separately the interpolation and root finding steps, and analyze their correctness and efficiency.

\paragraph{Step 1: Interpolation.}

In this step, we would like to find a non-zero low-degree $(r+1)$-variate polynomial $Q(X,Y_0, \ldots, Y_{r-1})$ satisfying 
that 
$Q\left(a_i, w_i, (w_i)^q, \ldots, (w_i)^{q^{r-1}}\right)=0
$.
The proof of the following lemma is identical to the proof of Lemma \ref{lem:mult_no_mult_list_interpolation}.

\begin{lemma}\label{lem:rs_subfield_cap_interpolation}
Let $t= \frac{n+r (k-1) +1} {r+1}$.
Then there exists a non-zero $(r+1)$-variate polynomial $Q(X,Y_0, \ldots, Y_{r-1})$ over $\F_{q^s}$ of the form
$$Q(X,Y_0, \ldots, Y_{r-1})=A(X)+B_0(X) \cdot Y_0+ \cdots + B_{r-1}(X) \cdot Y_{r-1},$$ 
where $A(X)$ has degree smaller than $t $ and each $B_\ell(X)$ has degree smaller than $t-k+1$, and which satisfies that 
$$
Q\left(a_i, w_i, (w_i)^q, \ldots, (w_i)^{q^{r-1}}\right)=0
$$
for any $i \in [n]$.  

Furthermore, such a polynomial $Q(X,Y_0, \ldots, Y_{r-1})$ can be found by solving a system of linear equations which is obtained by viewing the  coefficients of the monomials in $A(X),B_0(X), \ldots, B_{r-1}(X)$ as unknowns, and viewing each requirement of the form 
$Q\left(a_i, w_i, (w_i)^q, \ldots, (w_i)^{q^{r-1}}\right)=0
$ as a homogeneous linear constraint on these unknowns.       
\end{lemma}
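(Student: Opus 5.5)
The plan is a direct parameter count, mirroring the proof of Lemma \ref{lem:mult_no_mult_list_interpolation} with $s$ replaced by $r$ and the field $\F_q$ replaced by $\F_{q^s}$. We treat the coefficients of $A(X)$ and of each $B_\ell(X)$, $\ell=0,\ldots,r-1$, as unknowns in $\F_{q^s}$. For a fixed $i\in[n]$, the tuple $(a_i, w_i, w_i^q, \ldots, w_i^{q^{r-1}})$ is a known point of $\F_{q^s}^{r+1}$. The condition
$$Q\left(a_i, w_i, (w_i)^q, \ldots, (w_i)^{q^{r-1}}\right) = A(a_i) + \sum_{\ell=0}^{r-1} B_\ell(a_i)\, w_i^{q^\ell} = 0$$
is therefore a homogeneous $\F_{q^s}$-linear equation in these unknowns. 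Its coefficients are the monomials $a_i^j$, multiplied by $w_i^{q^\ell}$ in the $B_\ell$ part.

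The one point to stress is that the Frobenius powers enter only through the known received values $w_i^{q^\ell}$, never applied to the unknowns. Linearity over $\F_{q^s}$ is thus immediate, which is why the argument is identical to the multiplicity-code case.

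Next we count. Since $\deg A < t$ and $\deg B_\ell < t-k+1$, the number of unknowns is
$$t + r(t-k+1) = (r+1)t - r(k-1) = n+1,$$
using $t=\frac{n+r(k-1)+1}{r+1}$. There are $n$ constraints, one per $i\in[n]$. A homogeneous system with more unknowns than equations has a nonzero solution over $\F_{q^s}$, and any nonzero solution gives a nonzero $Q$ of the required form. The ``furthermore'' part then follows, since $Q$ is obtained by Gaussian elimination on exactly this system.

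There is no genuine obstacle here. The only care needed is the integrality of $t$, which we assume for simplicity as the paper does elsewhere; otherwise we take ceilings so that the unknowns still exceed $n$.
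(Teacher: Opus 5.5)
Your proposal is correct and is essentially the paper's argument. The paper proves this lemma by the same count as in Lemma~\ref{lem:mult_no_mult_list_interpolation}: there are $t + r(t-k+1) = n+1$ unknowns over $\F_{q^s}$ and $n$ homogeneous constraints, and linearity holds because the Frobenius powers enter only through the known values $w_i^{q^\ell}$.
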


\paragraph{Step 2: Root finding.} 

 In this step, we search for all univariate polynomials $f(X) \in \F_{q^s}[X]$ satisfying 
 $
Q\left(X, f(X), \sigma(f)(X), \ldots, \sigma^{r-1}(f)(X) \right)=0,
$
 and we would like to show that any 
univariate polynomial $f(X) \in \F_{q^s}[X]$ of degree smaller than $k$ so that  $f(a_i) = w_i$ for at least $t$ indices $i \in [n]$ satisfies that $
Q\left(X, f(X), \sigma(f)(X), \ldots, \sigma^{r-1}(f)(X) \right)=0
$. We start by showing the latter.

\begin{lemma}\label{lem:rs_subfield_cap_P_is_root}
Suppose that $Q(X,Y_0, \ldots, Y_{r-1})$ is a non-zero $(r+1)$-variate polynomial over $\F_{q^s}$ of the form
$$Q(X,Y_0, \ldots, Y_{r-1})=A(X)+B_0(X) \cdot Y_0+ \cdots + B_{r-1}(X) \cdot Y_{r-1},$$ 
where $A(X)$ has degree smaller than $t $ and each $B_\ell(X)$ has degree smaller than $t-k+1$, and which satisfies that
$$
Q\left(X, f(X), \sigma(f)(X), \ldots, \sigma^{r-1}(f)(X) \right)=0
$$
for any $i \in [n]$.  
Let $f(X) \in \F_{q^s}[X]$ be a univariate polynomial  of degree smaller than $k$ so that $f(a_i) = w_i$ for at least $t$ indices $i \in [n]$. Then 
$$Q\left(X, f(X), \sigma(f)(X), \ldots, \sigma^{r-1}(f)(X) \right) = 0.$$
\end{lemma}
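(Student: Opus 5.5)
The plan is to mirror the proof of Lemma \ref{lem:mult_no_mult_list_P_is_root}. Note that the hypothesis as written contains a typo: it should say that $Q$ satisfies the interpolation constraints of Lemma \ref{lem:rs_subfield_cap_interpolation}, namely $Q\left(a_i, w_i, (w_i)^q, \ldots, (w_i)^{q^{r-1}}\right)=0$ for every $i \in [n]$. I will use that reading. Define the univariate polynomial
$$P_f(X) := Q\left(X, f(X), \sigma(f)(X), \ldots, \sigma^{r-1}(f)(X)\right) = A(X) + \sum_{\ell=0}^{r-1} B_\ell(X)\cdot \sigma^{\ell}(f)(X) \in \F_{q^s}[X].$$
The goal is to show that $P_f$ has degree smaller than $t$ but at least $t$ roots, so it must be the zero polynomial.

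The first step is the degree bound. The map $\sigma$ raises each coefficient to the $q$-th power, so it does not increase degree, and hence $\deg(\sigma^\ell(f)) < k$ for every $\ell$. Each term $B_\ell(X)\,\sigma^\ell(f)(X)$ therefore has degree at most $(t-k) + (k-1) = t-1$. Since $\deg(A) < t$ as well, we get $\deg(P_f) < t$.

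The second step is to show that every agreement point is a root. The key identity, established in the overview, is $(f(a))^q = \sigma(f)(a)$ for $a \in \F_q$. Iterating it gives $\sigma^\ell(f)(a) = (f(a))^{q^\ell}$. The induction is routine: $\sigma^{\ell}(f) = \sigma(\sigma^{\ell-1}(f))$, and we apply the identity to the polynomial $\sigma^{\ell-1}(f)$. Now take any $i$ with $f(a_i) = w_i$. Since $a_i \in \F_q$, we get $\sigma^\ell(f)(a_i) = w_i^{q^\ell}$ for all $\ell$, and therefore $P_f(a_i) = Q\left(a_i, w_i, w_i^q, \ldots, w_i^{q^{r-1}}\right) = 0$ by the interpolation constraint.

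Since the $a_i$ are distinct, $P_f$ has at least $t$ distinct roots, and its degree is smaller than $t$, so $P_f = 0$.

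The only point needing any care is the iterated Frobenius identity, which uses that $a_i$ lies in the base field $\F_q$. This is immediate from the computation already given in the overview, so I expect no real obstacle.
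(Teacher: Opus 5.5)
Your proposal is correct and matches the paper's proof: you show $\deg(P_f) < t$, then use $\sigma^{\ell}(f)(a_i) = (f(a_i))^{q^{\ell}}$ for $a_i \in \F_q$ so that each agreement index gives a root of $P_f$ through the interpolation constraint, and you conclude $P_f = 0$ because it has at least $t$ roots. You also read the garbled hypothesis correctly as $Q\left(a_i, w_i, (w_i)^q, \ldots, (w_i)^{q^{r-1}}\right) = 0$ for all $i \in [n]$, which is exactly the condition the paper's proof uses (cf.\ Lemma~\ref{lem:rs_subfield_cap_interpolation}).
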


\begin{proof} 
The polynomial $P_f(X):=Q\left(X, f(X), \sigma(f)(X), \ldots, \sigma^{r-1}(f)(X) \right) $ is a univariate polynomial over $\F_{q^s}$ of degree smaller than $t$, which satisfies that 
\begin{eqnarray*}
P_f(a_i) & = & Q \left(a_i, f(a_i), \sigma(f)(a_i), \ldots, \sigma^{r-1}(f)(a_i) \right) \\
& = & Q \left(a_i, f(a_i), (f(a_i))^q, \ldots, (f(a_i))^{q^{r-1}} \right) \\
& = &  Q \left(a_i, w_i, (w_i)^q, \ldots, (w_i)^{q^{r-1}} \right) =0
\end{eqnarray*} 
for any $i \in [n]$ for which $f(a_i) = w_i$.  Thus $P_f(X)$ is a univariate polynomial of degree smaller than $t$ with at least $ t$ roots, and so it must be the zero polynomial. 
\end{proof}

The next lemma bounds the number of univariate polynomials $f(X) \in \F_{q^s}[X]$ satisfying the equation 
$
Q\left(X, f(X), \sigma(f)(X), \ldots, \sigma^{r-1}(f)(X) \right)=0.
$

\begin{lemma}\label{lem:rs_subfield_cap_solution_size}
Suppose that $Q(X,Y_0, \ldots, Y_{r-1})$ is a non-zero $(r+1)$-variate polynomial over $\F_{q^s}$ of the form
$$Q(X,Y_0, \ldots, Y_{r-1}) = A (X) +  B_0 (X) \cdot Y_0+ \cdots +B_{r-1}(X)\cdot Y_{r-1},$$
and let $\calL$ be the list containing all polynomials $f(X) \in \F_{q^s}[X]$  of degree smaller than $k$ so that  
$$Q\left(X, f(X), \sigma(f)(X), \ldots, \sigma^{r-1}(f)(X)\right) = 0.$$ Then $|\calL| \leq q^{(r-1) \cdot k}$. 

Furthermore, $\calL$ forms an  affine subspace over $\F_q$ of dimension at most $(r-1)k$. A basis for this subspace can be found by solving a system of linear equations over $\F_q$ which is obtained by viewing the  coefficients of $f(X)$ (represented as strings in $\F_q^s$) as unknowns, and viewing the requirement that each of the coefficients of $Q\left(X, f(X), \sigma(f)(X), \ldots, \sigma^{r-1}(f)(X)\right)$ is zero as a  (non-homogeneous) $\F_q$-linear constraint on these unknowns.     
\end{lemma}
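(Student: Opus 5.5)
The plan is to mirror the proof of Lemma \ref{lem:mult_no_mult_list_solution_size}. The Hasse-derivative recursion at a point $a$ is replaced by a coefficient-by-coefficient recursion in $X$. The operator $\sigma$ acts coefficientwise by the Frobenius map $x \mapsto x^q$, so it is $\F_q$-linear on $\F_{q^s}[X]$. Hence the map $f \mapsto Q(X, f, \sigma(f), \ldots, \sigma^{r-1}(f)) - A(X)$ is $\F_q$-linear. It follows that $\calL$ is either empty or an affine $\F_q$-subspace, namely a translate of the kernel $\calL_0$ of the homogeneous map $f \mapsto \sum_{\ell} B_\ell(X)\sigma^\ell(f)(X)$ restricted to polynomials of degree $<k$. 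So it suffices to show $\dim_{\F_q}\calL_0 \le (r-1)k$, which gives $|\calL| \le q^{(r-1)k}$. The algorithmic part is immediate: writing each coefficient $f_i$ in an $\F_q$-basis of $\F_{q^s}$, the vanishing of each coefficient of $P_f$ becomes $s$ $\F_q$-linear equations, and Gaussian elimination produces a basis.

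First, as in Lemma \ref{lem:mult_no_mult_list_solution_size}, assume $\calL \neq \emptyset$. Then not all $B_\ell$ vanish, since otherwise $A=0$ and $Q=0$. Let $\ell^*$ be the largest index with $B_{\ell^*} \neq 0$; it suffices to prove the bound $\ell^* k \le (r-1)k$. Next, factor out the largest power of $X$ dividing all the $B_\ell$. This does not change $\calL_0$, since $\F_{q^s}[X]$ is a domain. We may therefore assume some $B_{\ell}$ has nonzero constant term $b_\ell := B_\ell(0)$.

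The key step is to consider the coefficient of $X^j$ in $\sum_\ell B_\ell \sigma^\ell(f)$ for $j=0,1,\ldots,k-1$. It equals $\sum_\ell b_\ell f_j^{q^\ell}$ plus terms involving only $f_0,\ldots,f_{j-1}$. Define the nonzero $q$-linearized polynomial $L(Z) := \sum_{\ell=0}^{\ell^*} b_\ell Z^{q^\ell}$, which has degree at most $q^{\ell^*}$. For $f\in\calL_0$, each equation reads $L(f_j) = (\text{$\F_q$-linear function of } f_0,\ldots,f_{j-1})$. The roots of a linearized polynomial form an $\F_q$-subspace of dimension at most $\ell^*$. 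So once $f_0,\ldots,f_{j-1}$ are fixed, the admissible $f_j$ form a coset of $\ker L$ (or an empty set), giving at most $q^{\ell^*}$ choices. Inductively, $|\calL_0| \le q^{\ell^* k} \le q^{(r-1)k}$.

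The main obstacle is the normalization step. We need some $B_\ell$ with nonzero constant term; otherwise $L$ could be identically zero and the recursion gives no constraint. Dividing out the common power of $X$ handles this. A subtler point is that $L$ may have $b_{\ell}=0$ for the top index $\ell^*$, but its degree is still at most $q^{\ell^*}$ and it is nonzero. Its kernel is an $\F_q$-subspace of $\F_{q^s}$ of size at most $q^{\ell^*}$, which is exactly what the induction needs.
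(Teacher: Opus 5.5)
Your proof is correct and follows the same core strategy as the paper. Both arguments run a coefficient-by-coefficient recursion: given $f_0,\ldots,f_{j-1}$, the next coefficient $f_j$ must satisfy $L(f_j)=c_j$ for the $q$-linearized polynomial $L(Z)=\sum_\ell B_\ell(0)Z^{q^\ell}$, and this equation has at most $q^{r-1}$ solutions.

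The one real difference is how you guarantee $L\neq 0$. The paper shifts $X\mapsto X+a$ for some $a$ with $B_{r-1}(a)\neq 0$. You instead divide every $B_\ell$ by their common power of $X$, which leaves the homogeneous solution space unchanged because $\F_{q^s}[X]$ is a domain, and you then need only some $B_\ell(0)\neq 0$.

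Your normalization is the more robust of the two. The shift commutes with $\sigma$ only when $a\in\F_q$, since $\sigma(f(X+a))=\sigma(f)(X+a^q)$. In the generality of the lemma, with no bound on $\deg B_{r-1}$, the polynomial $B_{r-1}$ could vanish on all of $\F_q$, and the paper's step would then need an $a$ outside $\F_q$.

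You also make the affine structure explicit by writing $\calL$ as a translate of the homogeneous kernel $\calL_0$. The paper leaves that point to ``the linear structure of $Q$.''
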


\begin{proof}
As in the proof of Lemma \ref{lem:mult_no_mult_list_solution_size}, we may assume that there exists an element $a \in \F_q^s$ so that $B_{r-1} (a) \neq 0$. Without loss of generality, we may further assume that $a=0$, since otherwise we can translate all polynomials $A(X), B_0(X), \ldots, B_{r-1}(X)$ by $a$, which corresponds to translating all polynomials in 
$\calL$ by $a$, which does not change the size of $\calL$.

Let $f(X)=\sum_{j=0}^{k-1} f_j X^j $ be a polynomial of degree smaller than $k$ over $\F_{q^s}$ satisfying that 
\begin{eqnarray*}
P_f(X) & := & Q\left(X, f(X), \sigma(f)(X), \ldots, \sigma^{r-1}(f)(X)\right) \\
& = &A(X) + \sum_{\ell=0}^{r-1} B_\ell(X)  \cdot \sigma^{\ell}(f)(X) \\ 
& = &A(X) + \sum_{\ell=0}^{r-1}  B_\ell(X)  \sum_{j=0}^{k-1}  (f_j)^{q^\ell} X^j  \\
& = & 0.
\end{eqnarray*}
We shall show that for any $j \in \{0,1,\ldots, k-1\}$, given the first $j-1$ coefficients $f_0, f_1,\ldots, f_{j-1}$, there are at most $q^{r-1}$ options for $f_j$, and consequently $|\calL| \leq q^{(r-1)k}$.

To show the above, fix $j \in \{0,1,\ldots, k-1\}$. Since $P_f(X)$ is the zero polynomial, the coefficient of $X^j$ in 
$P_f(X)$ must be zero. On the other hand, inspecting the above expression for $P_f(X)$, we see that the coefficient of $X^j$ in $P_f(X)$ can be written as $y_j+\sum_{\ell=0}^{r-1} B_\ell(0)  \cdot (f_j)^{q^{\ell}}$, where $y_j \in \F_{q^s}$ is determined by $f_0, f_1, \ldots, f_{j-1}$. 

Let
$$
B(X):= B_0(0) \cdot X  + B_1(0) \cdot X^q+ \cdots + B_{r-1}(0) \cdot X^{q^{r-1}}.
$$
Then by the above,  $f_j \in \F_{q^s}$ must satisfy that $B(f_j) = -y_j$. 
On the other hand, by assumption that $B_{r-1}(0)\neq 0$, $B(X)$ is a non-zero polynomial of degree $q^{r-1}$, and so it attains any value for at most $q^{r-1}$ assignments in $\F_{q^s}$. 
Consequently, for any $j \in \{0,1,\ldots, k-1\}$, given prior coefficients $f_0, \ldots, f_{j-1}$, there are at most $q^{r-1}$ options for the coefficient $f_j$, which implies in turn that $|\calL| \leq q^{(r-1)k}$.
 
The moreover part follows by the linear structure of $Q$, and by the fact that raising to a power of $q$ is an $\F_q$-linear transformation.    
\end{proof}

\medskip

The full description of the algorithm appears  in Figure \ref{fig:rs_subfield_cap} below, followed by correctness and efficiency analysis.

\begin{figure}[h]
  \begin{boxedminipage}{\textwidth} \small \medskip \noindent
    $\;$

    \underline{\textbf{List decoding of $\RS_{q^s}(a_1, \ldots,a_n;k)$ for $a_1, \ldots, a_n \in \F_q$:}}
    
    \medskip
\begin{itemize}
\item \textbf{INPUT:} $w\in (\F_{q^s})^n$. 
\item \textbf{OUTPUT:} 
The list $\calL$ of all polynomials $f(X) \in \F_{q^s}[X]$ of degree smaller than $k$ so that $f(a_i)= w_i$ for at least $t=\frac{n+r(k-1) +1} {r+1}$ indices $i \in [n]$.
\end{itemize}

    \begin{enumerate}
\item \label{step:rs_subfield_cap_interpolation} Find a non-zero $(r+1)$-variate polynomial $Q(X,Y_0, \ldots, Y_{r-1})$ over $\F_q$ of the form
$$Q(X,Y_0, \ldots, Y_{r-1}) = A (X) +  B_0 (X) \cdot Y_0 + \cdots+ B_{r-1}(X)\cdot Y_{r-1} ,$$ 
where $A(X)$ has degree smaller than $t$ and each $B_\ell(X)$ has degree smaller than $t-k+1$, and which satisfies that $Q(a_i,(w_i)^q, \ldots, (w_i)^{q^{r-1}})=0$ 
for any $i \in [n]$.

Such a polynomial $Q(X,Y_0, \ldots, Y_{r-1})$ can be found by solving a system of linear equations which is obtained by viewing the  coefficients of the monomials in $Q$ as unknowns, and viewing each requirement of the form $Q(a_i,(w_i)^q, \ldots, (w_i)^{q^{s-1}})=0$  as a homogeneous linear constraint on these unknowns.       

\item  \label{step:rs_subfield_cap_root_finding}
Find a basis for the list $\calL'$ of all univariate polynomials $f(X)$ over $\F_{q^s}$ of degree smaller than $k$ so that 
$$Q\left(X, f(X), \sigma(f)(X), \ldots, \sigma^{r-1}(f)(X)\right)=0.$$
This basis can be found by solving a system of linear equations over $\F_q$ which is obtained by viewing the  $k$ coefficients of $f(X)$ (represented as strings over $\F_q^s$) as unknowns, and viewing the requirement that each of the $t$ coefficients of $Q\left(X, f(X), \sigma(f)(X), \ldots, \sigma^{r-1}(f)(X)\right)$ is zero as $t$ (non-homogeneous) $\F_q$-linear constraints on these unknowns.     
 
\item \label{step:rs_subfield_cap_check} Output the list $\calL$ of all codewords $(f(a_1), f(a_2), \ldots, f(a_n))$
corresponding to univariate polynomials $f(X) \in \calL'$ of degree smaller than $k$ which satisfy that $f(a_i) = w_i$ for at least $t$ indices $i \in [n]$.
\end{enumerate}

  \medskip

  \end{boxedminipage}

\caption{List decoding of Reed-Solomon Codes over subfield evaluation points}
\label{fig:rs_subfield_cap}
\end{figure}

\paragraph{Correctness.}  Suppose that $f(X) \in \F_{q^s}[X]$ is a univariate polynomial of degree smaller than $k$ so that $f(a_i) = w_i$ for at least $t$ indices $i \in [n]$, we shall show that $f(X) \in \calL$. By Lemma \ref{lem:rs_subfield_cap_interpolation}, there exists a non-zero $(r+1)$-variate polynomial $Q(X,Y_0, \ldots, Y_{r-1})$ satisfying the requirements on Step \ref{step:rs_subfield_cap_interpolation}. By Lemma \ref{lem:rs_subfield_cap_P_is_root}, we have that 
$Q\left(X, f(X), \sigma(f)(X), \ldots, \sigma^{r-1}(f)(X)\right)=0,$ and so  $f(X)$ will be included in $\calL'$ on Step \ref{step:rs_subfield_cap_root_finding}. Clearly, $f(X)$ will also pass the checks on Step \ref{step:rs_subfield_cap_check}, and thus $f(X)$ will also be included in $\calL$.

\paragraph{Efficiency.} Step \ref{step:rs_subfield_cap_interpolation} can be performed in time $\poly(n, s, \log (q))$ by solving a system of $n$ linear equations in $n+1$ variables using Gaussian elimination. 
Step \ref{step:rs_subfield_cap_root_finding} can be performed in time $\poly(n,s, \log (q))$ by solving a system of $t$ linear equations in $ks$ variables. 
By Lemma \ref{lem:rs_subfield_cap_solution_size},
Step \ref{step:rs_subfield_cap_check} can be performed in time $|\calL| \cdot \poly(n,s, \log (q)) \leq q^{(r-1)k} \cdot \poly(n,s, \log (q)) $, where the latter bound is sub-polynomial in the number $q^{sk}$ of distinct codewords for $r \leq \epsilon s$.

\subsection{Bibliographic notes}

In \cite{GR08_folded_RS}, Guruswami and Rudra introduced the family of \textsf{Folded Reed-Solomon (FRS) Codes}, 
and gave an efficient algorithm for list decoding these codes up to capacity, which gave the first explicit family of codes that can be efficiently list-decoded up to capacity. 
Folded Reed-Solomon Codes are defined similarly to Reed-Solomon Codes,  except that each codeword entry $f(a_i)$ is replaced with a \emph{tuple} of evaluations $(f(a_i), f(\gamma \cdot a_i), \ldots, f(\gamma^{s-1} \cdot a_i)) \in \F_q^s$, where $\gamma$ is a generator of the multiplicative group $\F_q^*$, $s \geq 1$ is a \emph{folding parameter}, and 
all sets $\{a_i, \gamma a_i, \ldots, \gamma^{s-1}a_i \}$ for $i \in [n]$ are pairwise disjoint. Guruswami and Rudra showed that for a sufficiently large $s$ (depending on the gap to capacity $\epsilon$), these codes are list-decodable up to capacity, with an algorithm similar to the one presented in Section \ref{subsec:rs_johnson} for list decoding of Reed-Solomon Codes up to the Johnson Bound. Later,  Kopparty \cite{Kop15} showed that a similar algorithm can also be used to list decode multiplicity codes up to capacity. 

The algorithm for list decoding multiplicity (and FRS) codes up to capacity with an interpolating polynolmial $Q$ with a \emph{linear structure}, presented in Section \ref{subsubsec:mult_mult},  was discovered by Vadhan \cite[Section 5.2.4]{Vadhan_survey} and Guruswami and Wang \cite{GW13}. 
The simpler algorithm  for list decoding multiplicity codes beyond the unique decoding radius, presented in Section \ref{subsubsec:mult_no_mult} as a warmup, follows the presentation of \cite[Section 17.2]{GRS_survey} for list-decoding of Folded Reed-Solomon Codes. 

It was shown by Bhandari, Harsha, Kumar, and Sudan \cite{BHKS24_ideal} that list-decoding algorithms for Folded Reed-Solomon Codes and multiplicity codes can be extended to the more general family of 
\textsf{polynomial-ideal codes}  that includes both Folded Reed-Solomon Codes and multiplicity codes as a special case.  The algorithm for list-decoding of Reed-Solomon Codes over subfield evaluation points up to capacity, presented in Section \ref{subsec:rs_subfield}, was discovered by Guruswami and Xing \cite{GX22}.

All the aforementioned list-decoding algorithms can also be extended to the setting of \textsf{list recovery}, where one is given for each codeword entry a small list of $\ell$ possible values, and the goal is to return the list of all codewords that are consistent with most of these lists (list-decoding corresponds to the special case of $\ell=1$).
We refer the reader to the survey \cite{RV_survey} for motivation and more information on the list recovery model and the extension of the above algorithms to the setting of list recovery.

\section{Combinatorial upper bounds on list size}\label{sec:list}

In the previous section, we presented efficient (polynomial-time)  algorithms for list decoding multiplicity codes up to capacity, and we also showed that a similar algorithm can be used to list decode Reed-Solomon codes over subfield evaluation points  in slightly non-trivial sub-exponential time. 
These algorithms in particular imply combinatorial upper bounds on the list sizes of these codes, however, 
the obtained bounds were quite large.
Specifically, 
the list size was a large polynomial $n^{\Theta(1/\epsilon)}$ for multiplicity codes, and only slightly sub-exponential $N^{\epsilon}$ for Reed-Solomon codes over subfield evaluation points, where $n,N$ denote the block length and code size, respectively, and $\epsilon$ denotes the gap to capacity.

 In  this section, we present tighter combinatorial upper bounds on the list size of these codes.  
 Specifically, we first show in 
Section \ref{subsec:random_rs} below that Reed-Solomon Codes over \emph{random} evaluation points are (combinatorially) list decodable \emph{up to the generalized Singleton Bound}, and so 
are in particular list-decodable up to capacity with list-size $O(1/\epsilon)$. 
Then, in Section \ref{subsec:const_list_mult},  we show that multiplicity codes attain the generalized Singleton Bound over \emph{any} set of evaluation points.  
Finally, in Section \ref{subsec:const_list_rs_subfield}, we show that a certain \emph{subcode} (i.e., a code formed by only picking a subset of the codewords) of Reed-Solomon Codes over subfield evaluation points is (efficiently) list-decodable up to capacity with a \emph{constant} list size (depending on $\epsilon$).

\subsection{Reed-Solomon Codes over random evaluation points}\label{subsec:random_rs}

In this section, we show that Reed-Solomon Codes, evaluated over \emph{random} points, chosen from a sufficiently large (exponential-size) field, are list-decodable all the way up to the \emph{generalized Singleton Bound} (cf., Theorem \ref{thm:gen_singleton}), with high probability. 
The proof relies on two main ingredients: higher-order MDS codes and hypergraph connectivity. We first review each of these ingredients in Sections \ref{subsubsec:higher-order-mds} and \ref{subsubsec:hypergraph} below, and then use these to establish the result about random Reed-Solomon Codes in Section \ref{subsec:rs_singleton}.

\subsubsection{Higher-order MDS codes}\label{subsubsec:higher-order-mds}

The \emph{Singleton Bound} states that for any code $C \subseteq \Sigma^n$ of distance $\Delta$ and size $|C|=\Sigma^k$ it must hold that $\Delta \leq n-k+1$ (which in particular implies that $\delta \leq 1-R$). The code $C$ is said to be \textsf{maximum distance separable } (MDS) if it attains the Singleton Bound, that is, $\Delta = n-k+1$. It is well-known that a \emph{linear} code $C$ is an MDS code if and only if its \emph{dual code}  $C^\perp$ is an MDS code. 
In Section \ref{subsec:prelim_list}, we presented the \emph{generalized Singleton Bound} (cf., Theorem \ref{thm:gen_singleton}) which extends the notion of MDS codes to the setting of list decoding. In this section, we shall present another extension of the notion of \emph{linear} MDS codes which we shall refer to as \emph{higher-order MDS codes}. Later, in Section  \ref{subsec:rs_singleton}, we shall connect the two notions by showing that if $C$ is a higher-order MDS code then its \emph{dual code} $C^\perp$ attains the generalized Singleton Bound. 

Let
 $C \subseteq \F^n$ be a linear code of dimension $k$, and 
let
 \(G\in\F^{n\times k}\) be a generator matrix for $C$, where $G_1, \ldots, G_k \in \F^n$ denote the columns of $G$. For a string $w \in \F^n$, let $\mathcal{Z}(w):=\{i \in [n] \mid w_i=0\}$ denote the set of zero entries of $w$. Note that any column $G_j$ of $G$ is a non-zero codeword of $C$, and so, if $C$ is an MDS code, then  $G_j$ has weight at least $n-k+1$, or equivalently, $|\mathcal{Z}(G_j)| \leq k-1$. The following fact extends this observation by stating that for a linear MDS code, any $t$ columns of $G$ have at most $k-t$ common zeros. 
	
	\begin{fact}\label{fact:mds}
	Suppose that $C \subseteq \F^n$ is a linear MDS code with generating matrix $G \in \F^{n\times k}$ with columns $G_1, \ldots, G_k$.  Then $\bigg|\bigcap_{j \in J}\mathcal{Z}(G_j)\bigg|\le k-|J| $ for any non-empty subset $J \subseteq [k]$.
	\end{fact}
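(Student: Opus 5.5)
The idea is to find, for each subset $J \subseteq [k]$, a single non-zero codeword that vanishes on all the common zeros of the columns $G_j$ with $j\in J$, plus $|J|-1$ additional coordinates. The MDS weight bound applied to that codeword will then give the claim.

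Fix a non-empty $J \subseteq [k]$ and let $T := \bigcap_{j \in J}\mathcal{Z}(G_j)$. Consider the subspace $V := \spn\{G_j : j \in J\}\subseteq C$. Since $G$ is a generator matrix of full column rank, the columns $G_j$ are linearly independent, so $\dim V = |J|$.

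\textbf{Step 1.} Every vector in $V$ vanishes on $T$, because each $G_j$ with $j\in J$ does.

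\textbf{Step 2.} Suppose $|T| \ge n - |J| + 1$; otherwise the bound is immediate. Choose any set $S$ of $|J|-1$ coordinates outside $T$; if fewer than $|J|-1$ coordinates lie outside $T$, take all of them.

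\textbf{Step 3.} Requiring a vector of $V$ to vanish on $S$ imposes at most $|J|-1$ homogeneous linear constraints on a space of dimension $|J|$. Hence there is a non-zero $c \in V \subseteq C$ that is zero on $T \cup S$.

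\textbf{Step 4.} Since $C$ is MDS, $c$ has weight at least $n-k+1$, so it has at most $k-1$ zeros. Therefore $|T| + |S| \le k-1$.

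If $|S| = |J|-1$, this gives $|T| \le k-|J|$, as claimed. If instead fewer than $|J|-1$ coordinates lie outside $T$, then $c$ vanishes on all $n$ coordinates, so $c=0$, which is a contradiction. This case therefore cannot occur.

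The argument is short, and the only point needing care is the dimension count in Step 3. It relies on linear independence of the columns, which holds because $G$ is full rank by definition. Using the dual-code characterization of MDS codes is not necessary.
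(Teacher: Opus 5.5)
Your argument is correct and is essentially the paper's proof. The paper argues by contradiction: it moves $k-|J|+1$ common zeros into the top $k$ rows and notes that the $J$-columns of the top $k\times k$ minor are dependent, because only $|J|-1$ of those rows can be non-zero on them. This yields a non-zero codeword with at least $k$ zeros, which is exactly your dimension count in $\spn\{G_j : j\in J\}$ with $|J|-1$ extra vanishing conditions.

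One slip: the sentence in Step 2 (``Suppose $|T|\ge n-|J|+1$; otherwise the bound is immediate'') is false as written, since $|T|\le n-|J|$ does not give $|T|\le k-|J|$. Your Steps 2--4 never use that assumption and already cover every case, so just delete the sentence.
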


\begin{proof}
 Assume towards a contradiction that there exists a non-empty subset $J \subseteq [k]$ of size $t$ so that
 $\bigg|\bigcap_{j \in J}\mathcal{Z}(G_j)\bigg| \geq k-t+1$. Without loss of generality we may assume that $J=\{1,2,\ldots, t\}$ and $\bigcap_{j \in J}\mathcal{Z}(G_j) \supseteq \{1, 2, \ldots, k-t+1 \}$. Let $A$ denote the top $k \times k$ minor of $G$. Then $A$ has a $(k-t+1) \times t$  block of zeros on its top left corner, and so its first $t$ columns have rank at most 
 $t-1$, and so are linearly dependent.  Consequently, $A$ is not full rank, and so there exists $0 \neq m \in \F^k$ so that $A \cdot m = 0$. But this implies in turn that $G\cdot m$ is a non-zero codeword of $C$ of weight at most $n-k$, which contradicts the assumption that $C$ is an MDS code. 
\end{proof}

Higher-order MDS codes are codes which satisfy the converse of the above fact, i.e., for any zero patterns satisfying the intersection condition given by the above fact, there exists a generator matrix for the code with this zero pattern.
More formally, we say that a series of (not necessarily distinct) subsets $Z_1, Z_2, \ldots, Z_k \subseteq [n]$ is a \textsf{generic zero pattern} (GZP) if $\bigg|\bigcap_{j \in J} Z_j \bigg| \le k-|J| $ for any $\emptyset \neq J \subseteq [k]$. Further, we say that a matrix
 $G\in\F^{n\times k}$ with columns $G_1, \ldots, G_k \in \F^n$ 
 \textsf{attains} $Z_1,\ldots,Z_k$ if $Z_j\subseteq\mathcal{Z}(G_j)$ for all $j\in[k]$. 
	
	\begin{definition}[Higher-order MDS codes]
    Let $C \subseteq \F^n$ be  a linear $\MDS$ code of dimension $k$. We say that $C$ is a \textsf{higher-order MDS code} if for any $\GZP$ $Z_1, \ldots, Z_k \subseteq [n]$, there exists a generator matrix $G$  for $C$ which attains $Z_1, \ldots, Z_k$.
\end{definition}
	
\subsubsection{Hypergraph connectivity}\label{subsubsec:hypergraph}

To relate the notion of higher-order MDS codes to list decoding we shall use a certain notion of \emph{hypergraph connectivity}. To describe this notion, we shall first need to introduce some notation and definitions.
A \textsf{hypergraph} $H = (V, E)$ consists of a ground set of vertices $V=\{v_1, \ldots, v_n\}$ and a collection (possibly a multiset) of (hyper-)edges $E=\{e_1, \ldots, e_m\}$ which are (possibly empty)  subsets of the vertices $V$. 
We define the \textsf{weight} $\wt(e)$ of an edge $e \in E$ as $\wt(e):=\max\{|e|-1,0\}$, and the \textsf{weight} $\wt(H)$ of the hypergraph $H$ as the total edge weight  $\wt(H):=\sum_{i=1}^m  \wt(e_i)$. For $U \subseteq V$, let $H|_U$ denote the \emph{sub-hypergrpah} $H|_U=(U, E|_U)$, where  $E|_U:=\{ e_1 \cap U, \ldots, e_m \cap U\}$. 

A hypergraph $H=(V,E)$ is said to be $k$-\textsf{edge connected} if for any non-empty proper subset $U \subseteq V$, the number of \emph{crossing edges} between $U$ and $V \setminus U$ (i.e., the number of edges $e \in E$ with at least one vertex in $U$ and at least one vertex in $V \setminus U$) is at least $k$. The $k=1$ case corresponds to the standard notion of hypergraph connectivity which requires that there is a path between any pair of vertices in $V$.\footnote{A  \textsf{path} in a hypergraph $H=(V,E)$  is a sequence $v_1, e'_1, v_2, e'_2, \ldots, v_{\ell-1}, e'_{\ell-1}, v_\ell$, where $v_1, \ldots, v_\ell \in V$, $e'_1, \ldots, e'_{\ell-1} \in E$, and 
 $v_i, v_{i+1} \in e'_i$  for all $i =1, \ldots, \ell-1$.} By the \emph{Hypergraph Menger's Theorem} \cite[Theorem 1.11]{Kiraly-thesis}, the $k\geq 2$ case is equivalent to the property that every pair of vertices in $V$ have $k$ edge-disjoint paths between them. 
 
 The notion of $k$-\textsf{partition connectivity} is a strengthening of the above notion of $k$-edge connectivity which requires that for any integer $t \geq 2$, and for any partition $\mathcal{P}$ of $V$ into $t$ parts, the number of crossing edges (i.e., the number of edges which are not contained in a single part of the partition) is at least $k(t-1)$. Note that $k$-edge connectivity corresponds to the special case of $t=2$. We shall use the following weakening of partition connectivity.

\begin{definition}[Weak partition connectivity]
A hypergraph $H=(V,E)$ is \textsf{$k$-weakly partition connected} if for every partition $\mathcal{P}$ of $V$ into $t$ parts, it holds that:
$$
 \sum_{e \in E} \max\{|\mathcal{P}(e)|-1,0\} \geq k (t-1),
$$ 
 where $|\mathcal{P}(e)|$ denotes the number of parts in the partition intersecting $e$. 
\end{definition}
Note that for any partition $\mathcal{P}$, any crossing edge $e \in E$ contributes at least $1$ to the left-hand side of the above inequality, and so $k$-partition connectivity implies $k$-weak partition connectivity. Moreover, for a partition $\mathcal{P}$ into two parts, 
the left-hand side of the above inequality equals the number of crossing edges, and so $k$-weak partition connectivity also implies $k$-edge connectivity.

Next we show that any hypergraph of large weight has a sub-hypergraph that is weakly partition connected.

\begin{lemma}\label{lem:weight_partition}
Let $H=(V,E)$ be a hypergraph with at least two vertices and of weight at least $k (|V|-1).$ Then there exists a subset $U \subseteq V$ 
of at least two vertices so that $H|_U$ is  $k$-weakly partition connected.
\end{lemma}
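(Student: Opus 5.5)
We argue by strong induction on $|V|$, or equivalently we take a minimal counterexample. Suppose $H=(V,E)$ has at least two vertices and $\wt(H)\ge k(|V|-1)$. If $H$ itself is $k$-weakly partition connected we take $U=V$ and are done. Otherwise there is a partition $\mathcal{P}=\{P_1,\ldots,P_t\}$ of $V$ with $t\ge 2$ parts such that
\[
\sum_{e\in E}\max\{|\mathcal{P}(e)|-1,0\} < k(t-1).
\]
The plan is to show that some part $P_i$ with $|P_i|\ge 2$ satisfies $\wt(H|_{P_i})\ge k(|P_i|-1)$. Since $|P_i|<|V|$, the induction hypothesis then gives $U\subseteq P_i$ with $|U|\ge 2$ such that $(H|_{P_i})|_U=H|_U$ is $k$-weakly partition connected. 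The identity $(H|_{P_i})|_U=H|_U$ holds because restriction is just intersection.

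The key step is a weight decomposition. For every edge $e$ we claim
\[
\wt(e)=|e|-1\le \max\{|\mathcal{P}(e)|-1,0\}+\sum_{i=1}^t \wt(e\cap P_i).
\]
To check it for nonempty $e$, write $|e|=\sum_{i:\,e\cap P_i\neq\emptyset}|e\cap P_i|$. Each nonempty piece $e\cap P_i$ has $|e\cap P_i|=\wt(e\cap P_i)+1$. Summing over the $|\mathcal{P}(e)|$ nonempty pieces gives $|e|-1=(|\mathcal{P}(e)|-1)+\sum_i\wt(e\cap P_i)$, so in fact equality holds. Empty edges contribute $0$ to both sides.

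Summing over $E$ yields
\[
\wt(H)\le \sum_{e}\max\{|\mathcal{P}(e)|-1,0\}+\sum_{i=1}^t \wt(H|_{P_i}) < k(t-1)+\sum_i \wt(H|_{P_i}).
\]
Combining this with $\wt(H)\ge k(|V|-1)$ gives
\[
\sum_i \wt(H|_{P_i}) > k(|V|-t)=\sum_i k(|P_i|-1).
\]
Hence some $i$ has $\wt(H|_{P_i})>k(|P_i|-1)$. A singleton part has weight $0$ and right-hand side $0$, so a strict inequality cannot hold for it. Therefore this $P_i$ has $|P_i|\ge 2$, and the induction applies.

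The base case $|V|=2$ is immediate. The only partition into $t\ge 2$ parts is into the two singletons, and the left-hand side then counts the edges containing both vertices. That number equals $\wt(H)\ge k$, so $H$ itself is $k$-weakly partition connected. In any case, the inductive argument above already covers small $|V|$, because every part is strictly smaller than $V$.

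The main obstacle is getting the accounting exactly right. It needs the weight identity per edge, including the treatment of empty and singleton intersections through $\max\{\cdot,0\}$. It also needs a strict inequality so that the chosen part is not a singleton. Both come out cleanly because the violating partition gives a strict inequality, which is carried through to the final sum.
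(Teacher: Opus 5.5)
Your proof is correct and is essentially the paper's argument. The paper takes an inclusion-minimal $U$ with $|U|\ge 2$ and $\wt(H|_U)\ge k(|U|-1)$, and shows it is $k$-weakly partition connected via the same identity $\sum_{e}\max\{|\mathcal{P}(e)|-1,0\}=\wt(H|_U)-\sum_i \wt(H|_{P_i})$; this is the contrapositive of your descent step, in which a violating partition yields a strictly smaller part that still satisfies the weight bound (and your explicit per-edge check of that identity, including empty and singleton intersections, fills in a step the paper states without justification).
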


\begin{proof}
Let $U \subseteq V$ be an inclusion-minimal subset with $|U|\geq 2$ so that 
\begin{equation}\label{eq:partition_1}
\wt(H|_U)\geq k(|U|-1).
\end{equation} 
Note that for any subset $U' \subseteq V$ of size one, $\wt(H|_{U'})$ and $k(|U'|-1)$ are both zero, and so by minimality of $U$, for any non-empty $U' \subsetneq U$, we have that 
\begin{equation}\label{eq:partition_2}
\wt(H|_{U'})\leq k(|U'|-1).
\end{equation}

Let $\mathcal{P}$ be a partition of $U$ into $t$ parts $P_1, P_2, \ldots, P_t$, we shall show that
$$ \sum_{e \in E|_U} \max\{|\mathcal{P}(e)|-1, 0 \} \geq k (t-1),$$
and so $H|_U$ is $k$-weakly partition connected. If $t=1$, then we clearly have that both sides in the above inequality are zero, and so the inequality holds. Hence we may assume that $t\geq 2$.

Then we have that:
\begin{eqnarray*}
  \sum_{e \in E|_U} \max\{|\mathcal{P}(e)|-1, 0 \}  & = & \sum_{e \in E|_U} \bigg( \wt(e) -  \sum_{i=1}^t\wt(e \cap P_i) \bigg) \\
    & = & \sum_{e \in E|_U}  \wt(e) - \sum_{i=1}^t \sum_{e \in E|_U}  \wt(e \cap P_i) \ \\
    & = & \wt(H|_U) - \sum_{i=1}^t\wt(H|_{P_i}) \\
    & \geq & k (|U|-1) -  \sum_{i=1}^t k (|P_i| -1) \\
    & = & k \bigg( (|U|-1) - \sum_{i=1}^t (|P_i| -1) \bigg) \\
   & = & k (t-1),
  \end{eqnarray*}
    where the
    inequality follows by (\ref{eq:partition_1}) and (\ref{eq:partition_2}) (noting that $ P_i$  is a non-empty  proper subset of $U$ for any $i \in [t]$ by assumption that $t \geq 2$).
 \end{proof}

 The main property of weakly-partition-connected hypergraphs we shall use is that they can be \emph{oriented}.  A \textsf{directed hypergraph} $H=(V,E)$ is a hypergraph where in each edge $e \in E$, one vertex is assigned as the \textsf{head}, and the rest of the vertices are assigned as \textsf{tails}. The \textsf{in-degree} $\deg_{\mathrm{in}}(v)$ of a vertex $v \in V$ is the number of edges for which $v$ is the head. 
 A \textsf{path} in $H$ is a sequence $v_1, e'_1, v_2, e'_2, \ldots, v_{\ell-1}, e'_{\ell-1}, v_\ell$, where $v_1, \ldots, v_\ell \in V$, $e'_1, \ldots, e'_{\ell-1} \in E$, 
 and for all $i =1, \ldots, \ell-1$, vertex $v_i$ is a tail of the edge $e'_i$, and vertex $v_{i+1}$ is the head of $e'_i$. 
 An \textsf{orientation} of an (undirected) hypergraph is obtained by assigning head to each hyperedge. 
 
 \begin{theorem}\label{thm:partition_orient}
 A hypergraph $H=(V,E)$ is $k$-weakly-partition-connected if and only if it has an orientation such that for some vertex $v$ (the 'root'), every other vertex $u$ has $k$ edge-disjoint paths to $v$. 
 \end{theorem}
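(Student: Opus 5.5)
The plan is to prove the two directions separately. In both, I would reformulate the path condition as a cut condition via Menger's theorem for directed hypergraphs. Fix an orientation and a root $v$. A set $X \subseteq V\setminus\{v\}$ is \emph{left} by an edge $e$ if $e$ has a tail in $X$ and its head outside $X$; write $\rho^{out}(X)$ for the number of such edges. Replace each hyperedge by a gadget vertex with unit-capacity arcs from its tails and one unit-capacity arc to its head, and apply max-flow/min-cut. This shows that every $u \neq v$ has $k$ edge-disjoint paths to $v$ if and only if $\rho^{out}(X)\ge k$ for every nonempty $X\subseteq V\setminus\{v\}$.

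For the direction from orientation to weak partition connectivity, take a partition $\mathcal{P}=\{P_0,\dots,P_{t-1}\}$ with $v\in P_0$. For each $i\ge 1$, the cut condition applied to $X=P_i$ gives at least $k$ edges with a tail in $P_i$ and head outside $P_i$. Charge each such edge $e$ to the pair $(e,P_i)$. A fixed edge $e$ is charged only for parts that $e$ meets and that do not contain its head. There are at most $|\mathcal{P}(e)|-1$ such parts, and none if $e$ lies inside one part. Summing over the $t-1$ non-root parts gives $\sum_e \max\{|\mathcal{P}(e)|-1,0\}\ge k(t-1)$, as required.

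For the converse, I would construct the orientation by an augmentation argument. Assume $H$ is $k$-weakly partition connected, fix any root $v$, and start from an arbitrary orientation. Among all orientations, choose one minimizing the deficiency $\sum_X \max\{k-\rho^{out}(X),0\}$, or more simply one maximizing $\min_X \rho^{out}(X)$ and then minimizing the number of minimizing sets. Suppose some $X$ has $\rho^{out}(X)<k$. Call a set \emph{tight} if it attains the minimum value. The standard uncrossing step is this: $\rho^{out}$ is submodular on intersecting pairs, because an edge counts for $X\cup Y$ or $X\cap Y$ only if it counts for $X$ or $Y$. Hence intersecting tight sets have tight union and intersection. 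Now try to increase $\rho^{out}(X)$ by changing the head of an edge $e$ that meets $X$ and has its head inside $X$, moving the head to a vertex of $e$ outside $X$. More generally, I would follow an alternating sequence of head changes, like an augmenting path, from $X$ toward $v$. If such a sequence exists, it reaches a vertex outside every deficient set and yields a better orientation, contradicting the choice.

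If no augmentation exists, let $X_1,\dots,X_{t-1}$ be the maximal sets reachable in this process. After uncrossing they are disjoint deficient sets not containing $v$, and every edge meeting $X_i$ has its head inside $X_i$ unless it is already counted as leaving. Together with $P_0=V\setminus\bigcup X_i$ they form a partition. Each edge $e$ then contributes at most $|\mathcal{P}(e)|-1$ to $\sum_i\rho^{out}(X_i)$, with equality forced by non-augmentability. This gives $\sum_e \max\{|\mathcal{P}(e)|-1,0\} = \sum_i \rho^{out}(X_i) < k(t-1)$, contradicting weak partition connectivity.

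The main obstacle is making this last step rigorous. Specifically, I must define the augmenting structure so that the final deficient sets are disjoint and each edge's contribution is counted exactly, with no slack. If this becomes too delicate, I would instead invoke Frank's theorem on orienting hypergraphs to cover an intersecting supermodular requirement, here $p(X)=k$ on nonempty $X\subseteq V\setminus\{v\}$. Its existence criterion, taken over subpartitions, reduces exactly to the weak partition inequality.
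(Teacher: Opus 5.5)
The paper gives no proof of this theorem; it only cites Frank's book, so your attempt goes beyond it.

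\textbf{The ``if'' direction is complete.} The gadget digraph plus Menger's theorem gives exactly the cut condition $\rho^{out}(X)\ge k$ for every nonempty $X\subseteq V\setminus\{v\}$. An edge leaves only parts that it meets and that miss its head, so it leaves at most $\max\{|\mathcal P(e)|-1,0\}$ of them.

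\textbf{The augmentation toward $v$ is sound.} Write $i(X)$ for the number of edges meeting $X$ and $m(x)$ for the number of edges with head $x$. Then $\rho^{out}(X)=i(X)-m(X)$. This is also the clean proof of submodularity; your one-line reason checks only half of what is needed. Let $D$ have an arc $x\to y$ whenever some edge with head $x$ contains $y$. Moving heads along a simple $D$-path $x_0\to\cdots\to v$ uses distinct edges and changes $m$ only at $x_0$ (by $-1$) and at $v$ (by $+1$). So it raises $\rho^{out}$ on every constrained set containing $x_0$ and lowers none.

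\textbf{The gap is in the stuck case, and it is more than a matter of rigor.} For your count to be an equality, you need pairwise disjoint \emph{deficient} sets whose union $U$ is \emph{closed} in $D$, meaning no edge with head in $U$ meets $V\setminus U$. The sentence ``every edge meeting $X_i$ has its head inside $X_i$ unless it is already counted as leaving'' is a tautology and supplies neither property. Sets reachable in $D$ are closed but need not be deficient; tight sets are deficient but need not be closed.

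For instance, take $k=1$ and edges $\{a,b\},\{b,c\},\{c,v\}$ headed at $a,c,v$. The only deficient set is $\{a\}$, and no edge is headed at $b$, so nothing starting in $\{a\}$ reaches $v$. The reachable set $\{a,b\}$ has $\rho^{out}=1$, yet the hypergraph is $1$-weakly partition connected. So non-augmentability toward $v$ does not yield a violating partition.

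Suppose instead you let chains stop at any vertex $y$ lying in no deficient set. Then every constrained set containing $y$ but not $x_0$ loses one, and for $k\ge 2$ this can create new deficient sets. Take $k=2$ and edges $\{x,y\},\{x,v\},\{y,v\},\{y,v\}$ headed at $x,v,v,y$. Moving the head of $\{x,y\}$ to $y$ merely trades the deficient set $\{x\}$ for $\{y\}$, so ``yields a better orientation'' fails. A correct augmenting scheme essentially has to reproduce a submodular-flow argument.

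\textbf{Your fallback is what the paper does, with one caveat.} Your reduction of the criterion is correct for the \emph{out}-degree version: with $P_0=V\setminus\bigcup X_i$, an edge leaves at most $|\mathcal P(e)|-1$ of the $X_i$. However, the commonly quoted orientation theorem for intersecting supermodular $p$ bounds the \emph{in}-degree. Applied to your $p$, it yields $k$ paths \emph{from} $v$, and its criterion counts crossing edges, which is ordinary $k$-partition connectivity. Make sure you invoke the right version.

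\textbf{A short self-contained route.} By Hall's theorem, an integer vector $m$ is the in-degree vector of an orientation if and only if $m(V)=|E|$ and $m(X)\le i(X)$ for all $X$. So it suffices to find an integer point in the base polyhedron of $b$, where $b(X)=i(X)-k$ for $\emptyset\ne X\subseteq V\setminus\{v\}$ and $b(X)=i(X)$ otherwise. This $b$ is integer-valued and intersecting submodular. Hence such a point exists if and only if $\sum_j b(V_j)\ge b(V)=|E|$ for every partition $\{V_0,\dots,V_{t-1}\}$. With $v\in V_0$, one computes $\sum_j b(V_j)-|E|=\sum_e(|\mathcal P(e)|-1)-k(t-1)$. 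This is exactly the weak partition connectivity inequality.
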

 
 The proof of the above theorem is beyond the scope of the current survey. The interested reader is referred to \cite[Theorem 9.4.13 and 15.4.4]{Frank11}.
 
 Finally, the following lemma relates the notion of weakly-partition-connected-hypergraphs to the notion of GZPs, discussed in the previous section.
 
 \begin{lemma}\label{lem:partition_gzp}
 Let $H=(V,E)$ be a $k$-weakly-partition-connected hypergraph with $V=\{v_1, \ldots, v_n\}$ 
and $E=\{e_1, \ldots, e_m\}$, where $n \geq 2$. For $i \in [n]$, let $Z_i = \{j \in [m] \mid v_i \notin e_j \} \subseteq [m]$. Then 
 there exist non-negative integers $\delta_1, \ldots, \delta_n$ summing to $m-k$ so that taking $\delta_i$ copies of each $Z_i$ gives a GZP. 
 \end{lemma}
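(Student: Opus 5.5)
We would prove the statement by taking the orientation given by Theorem~\ref{thm:partition_orient} and letting each $\delta_i$ be the in-degree of $v_i$, decreased by $k$ at the root. First fix notation. By Theorem~\ref{thm:partition_orient}, since $H$ is $k$-weakly-partition-connected, it has an orientation with a root $v_{i_0}$ such that every other vertex has $k$ edge-disjoint paths to $v_{i_0}$. Edges of size at most one are also given a head; if an edge is empty, we assign it an arbitrary vertex as head. This changes no in-degree argument below, because such edges never appear on a path. Then every edge has exactly one head, so $\sum_{i=1}^n \deg_{\mathrm{in}}(v_i)=m$. Set
$$\delta_i:=\deg_{\mathrm{in}}(v_i)\ \text{ for } i\neq i_0,\qquad \delta_{i_0}:=\deg_{\mathrm{in}}(v_{i_0})-k ,$$
so that $\sum_i \delta_i=m-k$. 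To see that $\delta_{i_0}\ge 0$, use $n\ge 2$: pick some $u\neq v_{i_0}$. The last edges of its $k$ edge-disjoint paths to $v_{i_0}$ are $k$ distinct edges with head $v_{i_0}$, so $\deg_{\mathrm{in}}(v_{i_0})\ge k$.

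Next we translate the GZP condition. We have $m-k$ subsets of $[m]$, so we must show that $\big|\bigcap_{Z\in J} Z\big|\le (m-k)-|J|$ for every non-empty sub-multiset $J$ of the copies. Let $S$ be the set of vertices $v_i$ for which some copy of $Z_i$ lies in $J$. Then $|J|\le \sum_{i\in S}\delta_i$, and $\bigcap_{i\in S} Z_i$ is the set of edges disjoint from $S$. Writing $N(S)$ for the number of edges meeting $S$, it therefore suffices to prove
$$N(S)\ \ge\ \sum_{v_i\in S}\delta_i + k .$$

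We split into two cases according to whether the root lies in $S$.

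\emph{Case 1: $v_{i_0}\in S$.} Every edge whose head lies in $S$ meets $S$. Hence $N(S)\ge\sum_{v_i\in S}\deg_{\mathrm{in}}(v_i)=\sum_{v_i\in S}\delta_i+k$, which is the required bound.

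\emph{Case 2: $v_{i_0}\notin S$.} Again the edges with head in $S$ contribute $\sum_{v_i\in S}\deg_{\mathrm{in}}(v_i)=\sum_{v_i\in S}\delta_i$. We need $k$ more edges, and this is where connectivity is used; it is the main (though short) step. Take any $u\in S$ and its $k$ edge-disjoint paths to $v_{i_0}\notin S$. Each path starts inside $S$ and ends outside $S$, so it contains a first edge whose tail (the path vertex preceding it) lies in $S$ and whose head lies outside $S$. These $k$ edges are distinct because the paths are edge-disjoint. Each of them meets $S$ through its tail, and none of them was already counted, since their heads lie outside $S$. Thus $N(S)\ge\sum_{v_i\in S}\delta_i+k$, which completes the argument.
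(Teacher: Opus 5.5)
Apart from your treatment of empty edges, this is correct and is the paper's argument. You use the same choice of $\delta_i$ (in-degrees, with $k$ subtracted at the root). You reduce the GZP condition to a bound on the number of edges meeting, or equivalently avoiding, the vertex set $S$ used by $J$. You then split on whether the root lies in $S$, and in the second case the $k$ edge-disjoint paths supply $k$ extra edges with a tail in $S$ and head outside $S$.

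Your remark on empty edges, however, is wrong. If an empty edge is given an artificial head $v\in S$, it is counted in $\sum_{v_i\in S}\deg_{\mathrm{in}}(v_i)$ but does not meet $S$. So the inequality $N(S)\ge\sum_{v_i\in S}\deg_{\mathrm{in}}(v_i)$, which you use in both cases, fails.

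No choice of heads can repair this, because the lemma is false whenever $H$ has an empty edge $e_{j_0}$. In that case $j_0\in Z_i$ for every $i$. Applying weak partition connectivity to the partition of $V$ into singletons shows there are at least $k$ further edges, so $m-k\ge 1$. The intersection of all $m-k$ copies then contains $j_0$, violating the GZP bound $(m-k)-(m-k)=0$.

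The right fix is to add the hypothesis that every edge is non-empty. The paper's proof tacitly assumes this when it says the in-degrees sum to $m$. In the application (Theorem~\ref{thm:high_mds_singleton}) the assumption is harmless: redefine $w$ on the coordinates where $H|_J$ has an empty edge so that it agrees with some $c_j$, $j\in J$. This turns those edges into singletons, leaves weak partition connectivity unchanged, and does not increase $\Delta(w,c_j)$ for any $j\in J$.
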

 
 \begin{proof}
 Consider the orientation of $H$ given by Theorem \ref{thm:partition_orient}. Without loss of generality, we may assume that $v_1$ is the root in this orientation. Let $\delta_1:= \deg_{\mathrm{in}}(v_1) - k$ and $\delta_i :=  \deg_{\mathrm{in}}(v_i)$ for any $i >1$. Then we clearly have that $\delta_i \geq 0$ for $i >1$, and we also have that $\delta_1 \geq 0$ since there are $k$ edge-disjoint paths from $v_2$ to $v_1$, and so $\deg_{\mathrm{in}}(v_1) \geq k$. Moreover, since there are $m$ edges, the $\delta_i$'s must sum to $m-k$.

It remains to show the GZP property. Consider an arbitrary non-empty multiset $U \subseteq V$ such that each vertex $v_i$ appears at most $\delta_i$ times in $U$. We shall show that 
$
  \bigg|\bigcap_{i : v_i \in U} Z_i \bigg| \leq \sum_{i: v_i \notin U} \delta_i,
$
and so
 $$
   \bigg|\bigcap_{i : v_i \in U} Z_i \bigg| \leq \sum_{i: v_i \notin U} \delta_i = m-k- \sum_{i: v_i \in U} \delta_i \leq m-k-|U|,
   $$
which gives the GZP property.

  To show that $\bigg|\bigcap_{i : v_i \in U} Z_i \bigg| \leq \sum_{i: v_i \notin U} \delta_i$, first observe that the left-hand side in this inequality 
  equals the number of edges in $E$ which do not contain any vertex from $U$.  This number is clearly at most the sum of the indegrees in vertices not in $U$, which equals the right-hand side in the case that $v_1 \in U$. Next assume that  $v_1 \notin U$, and fix an arbitrary vertex $v_i \in U$. Then there are $k$ edge-disjoint paths  from $v_o$ to $v_1$. Each of these paths must contain an edge whose head is not in $U$, but contains some vertex from $U$. Thus, the left-hand side is at most $\left(\sum_{i: v_i \notin U} \deg_{\mathrm{in}}(v_i) \right)- k$, which also equals the right-hand side in the case that $v_1 \notin U$. 
 \end{proof}

\subsubsection{Reed-Solomon Codes over random evaluation points are list-decodable up to the generalized Singleton Bound}\label{subsec:rs_singleton}

Next we use the notions of higher-order MDS codes and hypergraph connectivity, introduced in the previous sections, to show that Reed-Solomon Codes over random evaluation points, chosen from an exponentially-large field, are list-decodable up to the generalized Singleton Bound.  To do so, we shall use the following notion of agreement hypergraph.

\begin{definition}[Agreement hypergraph]\label{def:agreement_hypergraph}
For strings $c_0,c_1, \ldots, c_\ell, w \in \Sigma^n$, we define the \textsf{agreement hypergraph of $c_0, c_1,\ldots,c_\ell$ and $w$}  as the hypergraph on vertex set $\{0,1,\ldots, \ell\}$ whose edges are 
$e_i:=\{j \in \{0,1,\ldots, \ell\} \mid (c_j)_i =w_i  \} \subseteq \{0,1,\ldots, \ell\}$ 
for $i \in [n]$ (i.e., each edge $e_i$ corresponds to all strings $c_j$ which agree with $w$ on the $i$-th entry). 
\end{definition}

\begin{claim}\label{clm:agreement_hypergraph}
Suppose that $c_0,c_1, \ldots, c_\ell, w \in \Sigma^n$ are strings so that $\Delta(c_j, w) \leq \frac {\ell} {\ell+1} \cdot (n- k)$ for any $j \in \{0,1,\ldots,\ell\}$. 
Then the  agreement hypergraph $H$ of $c_0,c_1,\ldots,c_\ell$ and $w$ has weight at least $\ell \cdot k$. 
\end{claim}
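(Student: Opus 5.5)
The plan is a double count of agreements, entry by entry. For each $i \in [n]$, the edge $e_i$ has weight $\wt(e_i) = \max\{|e_i|-1,0\} \geq |e_i| - 1$. Summing over all entries gives
$$
\wt(H) = \sum_{i=1}^n \wt(e_i) \;\geq\; \sum_{i=1}^n \left(|e_i| - 1\right) = \left(\sum_{i=1}^n |e_i|\right) - n .
$$
It remains to lower bound $\sum_i |e_i|$.

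Exchanging the order of summation, $\sum_{i=1}^n |e_i|$ counts pairs $(i,j)$ with $(c_j)_i = w_i$. This equals $\sum_{j=0}^{\ell} (n - \Delta(c_j,w))$. Each $c_j$ satisfies $\Delta(c_j,w) \leq \frac{\ell}{\ell+1}(n-k)$, so each term is at least $n - \frac{\ell}{\ell+1}(n-k)$. Hence
$$
\sum_{i=1}^n |e_i| \;\geq\; (\ell+1)n - \ell(n-k) = n + \ell k .
$$

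Substituting this into the first display gives $\wt(H) \geq n + \ell k - n = \ell k$, as claimed.

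There is no real obstacle here. The only subtle point is the inequality $\max\{|e|-1,0\} \geq |e|-1$, which is loose exactly for empty edges (entries where no $c_j$ agrees with $w$). Using it in the direction we need is harmless, since it only lowers our estimate of $\wt(H)$.
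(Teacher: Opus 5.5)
Your proof is correct and is essentially the paper's argument. Both bound each edge weight below by $|e_i|-1$, swap the order of summation to rewrite $\sum_i |e_i|$ as $\sum_{j=0}^{\ell}(n-\Delta(c_j,w))$, and then plug in the distance bound to get $\ell k$.
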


\begin{proof}
We have that:
\begin{eqnarray*}
\wt(H) & =&  \sum_{i=1}^n \wt(e_i) \\
& \geq&  \sum_{i=1}^n (|e_i|-1)\\
 &= &\sum_{i=1}^n \left( \left( \sum_{j=0}^\ell \mathbf{1}_{ (c_j)_i =w_i}\right) - 1\right)  \\
 &=&  -n + \sum_{j=0}^\ell  (n - \Delta(w,c_j)) \\
 &\geq&  -n + (\ell+1) n - \ell (n-k) 
 =  \ell \cdot k.
\end{eqnarray*}
\end{proof}

The following theorem says that the \emph{dual code} of a higher-order MDS code attains the generalized Singleton Bound. 

\begin{theorem}\label{thm:high_mds_singleton}
Let $C \subseteq \F^n$ be a linear $\MDS$ code of rate $R$, and suppose that $C^{\perp}$ is a higher-order MDS code. 
Then $C$ is $(\frac{L} {L+1} (1- R ),L)$-list decodable for any positive integer $L$.
\end{theorem}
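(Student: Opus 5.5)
Assume towards a contradiction that some $w \in \F^n$ has $L+1$ distinct codewords $c_0,\ldots,c_L \in C$ with $\Delta(c_j,w) \le \frac{L}{L+1}(n-k)$, where $k=Rn=\dim C$. By Claim \ref{clm:agreement_hypergraph}, their agreement hypergraph $H$ on $\{0,\ldots,L\}$ has weight at least $L\cdot k = k(|V|-1)$. By Lemma \ref{lem:weight_partition}, there is a subset $U$ with $|U|=\ell+1\ge 2$ such that $H|_U$ is $k$-weakly partition connected. Relabel so that $U=\{0,\ldots,\ell\}$. Note that the edges of $H|_U$ are exactly the agreement sets of $c_0,\ldots,c_\ell$ with $w$, so from now on we work only with these $\ell+1$ codewords.

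Next I set up a linear system. Let $H_C\in\F^{(n-k)\times n}$ be a parity-check matrix for $C$, so $H_C^T$ generates $C^\perp$, which has dimension $n-k$. Write $d_j:=c_j-c_0$ for $j=1,\ldots,\ell$; these are nonzero codewords, so $H_C d_j=0$. Treat $c_0,\ldots,c_\ell$ as unknowns $x_0,\ldots,x_\ell\in\F^n$ and impose two sets of constraints. First, the agreement constraints: $(x_j)_i=(x_{j'})_i$ whenever $j,j'\in e_i$, which we impose by having each $x_j$ agree on $e_i$ with a common auxiliary value $y_i$. Second, the codeword constraints: $H_C x_j=0$. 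To remove the dependence on $w$, eliminate the auxiliary values: set $x_j = y + z_j$ with $(z_j)_i=0$ for $j\in e_i$. The condition $H_C x_j = 0$ for all $j$ becomes a homogeneous system in $y$ and the $z_j$. I would show this system has only solutions with all $x_j$ equal, which contradicts distinctness. Equivalently, the map $(x_0,\ldots,x_\ell)\mapsto$ (differences) is forced to vanish. The natural formulation is the block matrix with rows indexed by $\{1,\ldots,\ell\}\times[n-k]$ acting on the agreement-compatible differences. Concretely, for each index $i$, the vector $(x_j)_i$ is constant on $e_i$, so the differences live in a space of dimension $\sum_i(\ell+1-|e_i|)$ (up to the edge-weight bookkeeping). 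The equations say $H_C(x_j-x_0)=0$, giving $\ell(n-k)$ equations. Since $\wt(H|_U)\ge k\ell$, the unknowns number at most $\ell n-\wt\le \ell(n-k)$. So the system is square or overdetermined, and we must show it is nonsingular.

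The nonsingularity step is where the higher-order MDS property of $C^\perp$ enters. The determinant of the resulting $\ell(n-k)$-dimensional square subsystem, whose columns are columns of $H_C$ placed in block positions dictated by the edges, is a ``symbolic'' determinant. By a Laplace/block expansion it is nonzero provided one can choose a generator matrix $G$ of $C^\perp$ (that is, $H_C^T$ up to change of basis) realizing a suitable zero pattern. The zero pattern I intend is exactly the one produced by Lemma \ref{lem:partition_gzp}. Apply the lemma to $H|_U$ with its $m=n$ edges and parameter $k$: it gives multiplicities $\delta_0,\ldots,\delta_\ell$ summing to $n-k=\dim C^\perp$ such that $\delta_j$ copies of $Z_j=\{i: j\notin e_i\}$ form a GZP. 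Since $C^\perp$ is higher-order MDS, there is a generator matrix $G$ of $C^\perp$ attaining this GZP. Its columns split into groups of sizes $\delta_j$, with the group-$j$ columns vanishing on the coordinates where $c_j$ disagrees with $w$. Using these columns as parity checks, group $j$ yields $\delta_j$ checks $g$ with $\langle g,c_j\rangle=\langle g,w\rangle$ restricted appropriately. Combining all groups determines $\langle g, w'\rangle$ for an agreement-compatible vector $w'$ against all $n-k$ independent checks, and so pins down the codewords.

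The main obstacle is making this last combination rigorous: turning the GZP realization into a proof that the differences $c_j-c_0$ must vanish. I would try an induction over the orientation from Theorem \ref{thm:partition_orient}, propagating equalities $c_j=c_{j'}$ along hyperedges toward the root. The $k$ edge-disjoint paths supply the $k$ extra degrees of freedom used to certify $d_j=0$. If this gets tangled, the fallback is the cleaner determinant argument: view the coefficient matrix as a generic-rank statement and use that attaining a GZP is exactly a nonvanishing-minor condition.
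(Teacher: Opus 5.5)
There is a genuine gap: the proof is never finished. Your setup matches the paper's up to the point where you take a generator matrix $G$ of $C^{\perp}$ attaining the GZP from Lemma~\ref{lem:partition_gzp}. You even write the key identity $\langle g,c_j\rangle=\langle g,w\rangle$ for every column $g$ of $G$ in group $j$, which holds because $g$ vanishes on $Z_j$ and $Z_j$ contains the support of $w-c_j$.

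You then declare the ``combination'' step the main obstacle and fall back on an unspecified induction over orientations or a symbolic-determinant expansion. Neither is carried out, and the supporting counting proves nothing. Bounding the unknowns by $\ell n-\wt(H|_U)\le \ell(n-k)$ only shows the system is square or overdetermined, and such a system can still be singular. Your claim that attaining a GZP ``is exactly a nonvanishing-minor condition'' for your block system is not what the definition of higher-order MDS says, and justifying it would require substantial extra work. Your plan also never uses the distance of $C$, which any correct argument must.

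The missing idea is one line. Each such $g$ lies in $C^{\perp}$ and $c_j\in C$, so $\langle g,w\rangle=\langle g,c_j\rangle=0$. The groups together exhaust all $n-k$ columns of $G$, and these span $C^{\perp}$, so $G^{T}w=0$, i.e.\ $w\in (C^{\perp})^{\perp}=C$.

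Now use that $C$ is MDS. Since $\ell\ge 1$ and $c_0,\ldots,c_\ell$ are distinct, some $c_j\neq w$. Yet $\Delta(w,c_j)\le \frac{L}{L+1}(n-k)<n-k+1=\Delta(C)$, a contradiction.

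So the right target is to show that $w$ itself is a codeword, not to show directly that the differences $c_j-c_0$ vanish. No orientation induction or determinant argument is needed beyond what Lemma~\ref{lem:partition_gzp} already provides.
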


\begin{proof}
The theorem clearly holds for $L=1$ by assumption that $C$ is MDS. 
Next assume that $C$ is not $(\frac{L} {L+1} (1- R),L)$-list decodable for some integer $L\geq 2$. Then there exist a string $w \in \F^n$ and distinct codewords $c_0,c_1, \ldots c_L \in C$ so that $\Delta(w,c_j) \leq \frac{L} {L+1} (1- R)  n =\frac{L} {L+1} (n-k)$ for any $j \in \{0,1,\ldots,L\}$, where $k=\dim(C)=Rn$. 
Let $H=(\{0,1,\ldots,L\}, E)$ be the agreement hypergraph of $c_0, c_1, \ldots, c_L$ and $w$. 
Then by Claim \ref{clm:agreement_hypergraph} we have that $\wt(H) \geq L \cdot k$.  
By Lemma \ref{lem:weight_partition}, there exists a subset $J \subseteq \{0,1,\ldots,L\}$ 
of at least two vertices so that $H|_J$ is  $k$-weakly partition connected. Without loss of generality, we may assume that $J=\{0,1,\ldots,\ell\}$ for some $\ell \in [L]$.  For $j \in \{0,1,\ldots,\ell\}$, let 
$Z_j =  \{i \in [n] \mid (c_j)_i \neq w_i \} \subseteq [n]$.  
By Lemma \ref{lem:partition_gzp}, there exist  non-negative integers $\delta_0, \delta_1, \ldots, \delta_\ell$ summing to $n-k$ so that taking $\delta_j$ copies of each $Z_j$ gives a GZP. 

By assumption that $C^{\perp}$ is a higher-order MDS code, there exists a generating matrix $G \in \F^{n \times (n-k)}$
 for $C^{\perp}$ so that $G$ attains the above GZP. Then $H:=G^T \in \F^{(n-k) \times n}$ is a parity-check matrix for $C$. Since $H  c =0$ for any $c \in C$, we have that
 $H  w = H(w-c_j)$ for any $j \in \{0,1,\ldots,\ell\}$. 
Furthermore, by the definition of the $Z_j$'s, for any $j \in \{0,1,\ldots, \ell\}$, the entries in $Hw=H (w-c_j)$ which correspond to $Z_j$ are zero. We conclude that $H  w=0$, and so $w \in C$. 
Finally, since $\ell \geq 1$, there must exist some $j \in \{0,1,\ldots,\ell\}$ so that $w \neq c_j$. So $w$ and $c_j$ are two distinct codewords so that $\Delta(w,c_j) \leq \frac{L}{L+1}(n-k) <n-k+1$, which contradicts the assumption that $C$ is MDS. 
\end{proof}

Next we would like to apply the above theorem to Reed-Solomon Codes. To do so, we need to show that the \emph{dual} of a Reed-Solomon Code is higher-order MDS. It is well-known that the dual of a Reed-Solomon Code $\RS_q(k)$, evaluated over the whole field, is also a Reed-Solomon Code $\RS_q(n-k)$. For a Reed-Solomon Code $\RS_q(a_1,\ldots,a_n;k)$, evaluated over a subset of field elements, it is known that there exist non-zero scalers $b_1, \ldots, b_n \in \F_q$ so that 
the dual code contains all codewords of the form 
$(b_1\cdot c_1, \ldots, b_n \cdot c_n)$ for $(c_1, \ldots, c_n) \in \RS_q(a_1,\ldots,a_n;n-k)$. Note that if $G$ is a generator matrix for $\RS_q(a_1,\ldots,a_n;n-k)$, then $B \cdot G$ is a generator matrix for  
$\RS_q(a_1,\ldots,a_n;k)^\perp$, where $B$ is a diagonal matrix with diagonal entries $b_1, \ldots, b_n$. 
Since $G$ and $B \cdot G$ have the same zero entries, to show that $\RS_q(a_1,\ldots,a_n;k)^\perp$ is higher-order MDS, it suffices to show that $\RS_q(a_1,\ldots,a_n;n-k)$ is higher-order MDS.

We shall show that with high probability, a Reed-Solomon code with random evaluation points is higher-order MDS. The proof relies on the following  theorem, known as the 'Generator-Matrix-MDS (GM-MDS)' Theorem (the proof of this theorem is beyond the scope of the survey, and we refer the reader to \cite{lovett21, YH19} for more details). In what follows, for $Z \subseteq [n]$, let $f_Z$ denote the $(n+1)$-variate polynomial given by $f_Z(X,Y_1, \ldots, Y_n)= \prod_{i \in Z} (X-Y_i)$. Also, for a field $\F$, let $\F(Y_1, \ldots, Y_n)$ denote the \emph{field of rational functions} over $\F$ in the indeterminates $Y_1, \ldots, Y_n$.

	\begin{theorem}[GM-MDS Theorem,  \cite{lovett21, YH19}]\label{thm:gm_mds}
	Let $\F$ be a field, and let  $Z_1, \ldots, Z_k \subseteq [n]$ be a GZP. Then the polynomials  $f_{Z_1}, f_{Z_2}, \ldots, f_{Z_k}$, when viewed as univariate polynomials in  $X$ with coefficients in   $\F(Y_1, \ldots,Y_n)$, 
    are linearly independent over 
    $\F(Y_1, \ldots,Y_n)$.
	\end{theorem}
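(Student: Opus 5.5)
We would follow the inductive strategy of Lovett \cite{lovett21}, which first strengthens the statement so that it is closed under the operations we use. Rather than $k$ single polynomials, we work with sets $Z_1,\dots,Z_m\subseteq[n]$ and positive integer multiplicities $d_1,\dots,d_m$. We consider the family of polynomials
\[
\mathcal{F}=\{X^{j} f_{Z_i} : i\in[m],\ 0\le j<d_i\}
\]
inside the space of polynomials in $X$ of degree smaller than $k$ over $\F(Y_1,\dots,Y_n)$. The strengthened claim is that $\mathcal{F}$ is linearly independent whenever $\sum_i d_i\le k$ and $\bigl|\bigcap_{i\in I}Z_i\bigr|+\sum_{i\in I}d_i\le k$ for every non-empty $I\subseteq[m]$. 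Theorem~\ref{thm:gm_mds} is the special case $m=k$, $d_i=1$. In that case the degree condition $|Z_i|\le k-1$ is part of the GZP condition with $|I|=1$, so $f_{Z_i}$ indeed has degree smaller than $k$.

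The key monotonicity tool is that linear independence over $\F(Y_1,\dots,Y_n)$ follows from linear independence after any specialization of the $Y$'s. If a substitution such as $Y_a\mapsto Y_b$ or $Y_a\mapsto 0$ makes the coefficient matrix of $\mathcal{F}$ have a non-vanishing maximal minor, then that minor was non-zero before the substitution. With this, we would argue by induction on $k$ and on $\sum_i|Z_i|$, splitting into two cases according to whether some constraint is \emph{tight}, meaning a set $I$ with $|I|\ge2$ on which the inequality holds with equality.

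In the case with no tight set of size at least two, the idea is to merge a variable into another one. We pick an element $a$ belonging to some $Z_i$, replace $Y_a$ by a fresh-but-equal copy of $Y_b$ for some $b\notin Z_i$, and rename the sets accordingly. This keeps each set's size, and the slack lets the intersection condition survive. When every set is empty, or we can substitute $Y_a\mapsto0$, another reduction applies: if all $Z_i$ contain a common element $a$, then setting $Y_a=0$ factors $X$ out of every polynomial. We divide by $X$ and apply induction with $Z_i\setminus\{a\}$ and $k-1$. In the base case where all $Z_i=\emptyset$, the family is $\{X^j\}$ with repetitions counted by the $d_i$. Here the condition with $I=[m]$ forces $m=1$, or else $\sum d_i\le k$ with all sets empty means $m=1$ is required by the $|I|\ge2$ constraints. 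Either way the claim is trivial.

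In the case where a tight set $I$ exists, we split the family. The sub-family indexed by $I$ consists of polynomials all divisible by $f_{\cap_{i\in I}Z_i}$. After dividing, it is a smaller instance of degree $k-|\bigcap_I Z_i|$, which is independent by induction. We then quotient by its span and handle the rest: we replace the sets in $I$ by a single block with set $\bigcap_I Z_i$ and multiplicity $\sum_{i\in I}d_i$, which spans the same space. The resulting instance has fewer sets, and it satisfies the hypotheses by submodularity of the tightness function, since the union of two intersecting tight sets is tight.

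I expect the main obstacle to be verifying that the intersection conditions are preserved under each reduction step. This is especially delicate in the no-tight-set case, where we must choose the merged pair $(a,b)$ so that no constraint becomes violated. It is exactly the absence of tight sets, together with a careful choice of $a$ in a minimal set, that should make this work, and this is where I would spend the most care.
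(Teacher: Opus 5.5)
The paper gives no proof of this theorem; it defers to \cite{lovett21, YH19}. So I am judging your plan on its own terms, and it has genuine gaps.

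\textbf{The strengthened claim is false.} Your base case is exactly where it breaks. When all $Z_i=\emptyset$, the constraints only say $\sum_{i\in I}d_i\le k$; they do not force $m=1$. For $k=m=2$ and $d_1=d_2=1$, every hypothesis holds, yet the family is $\{1,1\}$.

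In fact the statement as printed also needs the extra hypothesis $|Z_j|=k-1$, which, as I recall, is the form proved in the cited works. For $k=3$, the sets $Z_1=\{1\}$, $Z_2=\{2\}$, $Z_3=\emptyset$ form a GZP, but $(X-Y_1)-(X-Y_2)+(Y_1-Y_2)\cdot 1=0$.

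The multiplicity version one can hope to prove imposes exact sizes: $|Z_i|+d_i=k$ for every $i$, with $\sum_i d_i=k$. Then, with $K=\F(Y_1,\ldots,Y_n)$, the space $f_{Z_i}\cdot K_{<d_i}[X]$ is the whole space of polynomials of degree $<k$ vanishing at $\{Y_a: a\in Z_i\}$.

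Your common-element reduction and your tight-set collapse both preserve exactness. The collapsed instance satisfies its hypotheses simply because its constraints are the original constraints for $I\cup J$; no submodularity is needed. A general GZP reduces to the exact case by padding each $Z_j$ with fresh indices, which changes no intersection of two or more sets. This suffices for Corollary~\ref{cor:gm_mds}, since attaining only asks for $Z_j\subseteq\mathcal{Z}(G_j)$.

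\textbf{Even with exact sizes, the case analysis does not close.}

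First, under your definition, $I=[m]$ is always tight in the relevant instances: in the theorem, $|\bigcap_j Z_j|\le k-k=0$ and $\sum_i d_i=k$. So your no-tight-set branch never occurs. The tight branch with $I=[m]$ and $\bigcap_I Z_i=\emptyset$ is circular. Tight sets must be proper, i.e.\ $2\le|I|\le m-1$.

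Second, with that repair, the merge step, which carries the entire difficulty, fails. (Also, a merge leaves $\sum_i|Z_i|$ unchanged, so your induction measure would have to be the number of indices in use.) The substitution $Y_a\mapsto Y_b$ acts on all sets at once, so it is only usable under three conditions:
\begin{itemize}
\item it stays inside your framework only if no set contains both $a$ and $b$;
\item it respects the zero-slack constraint for $I=[m]$ only if some set contains neither;
\item it makes progress only if $b$ is actually used.
\end{itemize}

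Now take $k=m=4$ and $d_i=1$. Index the elements by the six edges of $K_4$, and let $Z_i$ consist of the three edges at vertex $i$. Pairwise intersections have size $1$ and triple intersections are empty. So this is a GZP with $|Z_i|=k-1$ and no proper tight set.

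Yet any two distinct edges fall into one of two cases:
\begin{itemize}
\item They share a vertex $v$. Then $f_{Z_v}$ acquires the factor $(X-Y_b)^2$ and leaves the class of square-free products.
\item They form a perfect matching. Then all four merged polynomials are divisible by $X-Y_b$, so they are genuinely dependent.
\end{itemize}
So no ``careful choice'' of $(a,b)$ exists. The non-tight case needs a different degeneration, for instance one that allows repeated roots, with the inductive statement reformulated to accommodate them. This is where the real work in \cite{lovett21, YH19} lies.
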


	\begin{corollary}\label{cor:gm_mds}
    Let $q$ be a prime power, and let $k,n$ be positive integers so that $k <n$ and 
	$2^{kn} \cdot (k n) < q$. 
			Then with probability at least  $ 1-  \frac{2^{kn} \cdot (kn)} {q}$ over the choice of independent and uniform $a_1, \ldots, a_n \in \F_q$, it holds that
			 $\RS_q(a_1,\ldots,a_n;k)$ is a higher-order MDS code.
	\end{corollary}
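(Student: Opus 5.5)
The plan is to reduce higher-order MDS-ness of $\RS_q(a_1,\ldots,a_n;k)$ to the non-vanishing of finitely many polynomial determinants in the evaluation points, and then apply the Schwartz--Zippel Lemma (Lemma \ref{lem:prelim_sz}) together with a union bound.

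First, fix a GZP $Z_1,\ldots,Z_k \subseteq [n]$. If some $Z_j$ had size at least $k$, then taking $J=\{j\}$ would violate the GZP condition $|Z_j|\le k-1$. Hence each $Z_j$ has size at most $k-1$, and the polynomial $f_{Z_j}(X,a_1,\ldots,a_n)=\prod_{i\in Z_j}(X-a_i)$ has degree smaller than $k$ in $X$. Its evaluation vector is therefore a codeword of $\RS_q(a_1,\ldots,a_n;k)$ that vanishes on $Z_j$. Let $G$ be the matrix whose $j$-th column is this codeword. Then $G$ attains $Z_1,\ldots,Z_k$.

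It remains to ensure $G$ is a generator matrix, i.e.\ has rank $k$. Since evaluation at $n\ge k$ distinct points is injective on $\F_q[X]_{<k}$, this holds as soon as the $a_i$ are distinct and the polynomials $f_{Z_1}(X,\mathbf a),\ldots,f_{Z_k}(X,\mathbf a)$ are linearly independent. Write $M(Y_1,\ldots,Y_n)$ for the $k\times k$ matrix of coefficients of $X^0,\ldots,X^{k-1}$ in $f_{Z_1},\ldots,f_{Z_k}$. By the GM-MDS Theorem (Theorem \ref{thm:gm_mds}) these polynomials are independent over $\F_q(Y_1,\ldots,Y_n)$, so $D_Z:=\det M$ is a nonzero polynomial in $Y_1,\ldots,Y_n$.

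Each entry of $M$ is an elementary symmetric polynomial of degree at most $k-1$, so $\deg D_Z\le k(k-1)$. Also include the polynomial $\prod_{i<i'}(Y_i-Y_{i'})$, of degree less than $n^2/2$, to guarantee distinctness; distinctness also gives that the code is MDS with dimension $k$. By Lemma \ref{lem:prelim_sz}, a uniformly random $\mathbf a\in\F_q^n$ is a root of $D_Z$ with probability at most $k^2/q\le kn/q$.

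The number of GZPs is at most the number of $k$-tuples of subsets of $[n]$, namely $2^{kn}$. A union bound over all GZPs, plus the distinctness event, gives failure probability at most roughly $2^{kn}\cdot kn/q$. This is the claimed bound, up to the minor bookkeeping of absorbing the distinctness term, for example by noting that GZPs with repeated singletons already force distinctness, or by slightly loosening constants.

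The main obstacle is essentially outsourced to Theorem \ref{thm:gm_mds}; the only delicate points are checking the degree bound on the determinant and that $|Z_j|\le k-1$, so that the polynomials are actually codewords.
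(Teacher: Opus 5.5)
Your argument is correct and is essentially the paper's proof: fix a GZP, use $|Z_j|\le k-1$ so that the products $\prod_{i\in Z_j}(X-a_i)$ give codewords attaining the pattern, invoke Theorem~\ref{thm:gm_mds} to get a nonzero coefficient determinant, apply Schwartz--Zippel, and take a union bound over the at most $2^{kn}$ GZPs. There are two refinements over the paper. Your degree bound $k(k-1)$ is sharper than the paper's $kn$, which uses only multilinearity of the entries. You also handle distinctness of the $a_i$, which the paper silently skips, and this term is absorbed with no loosening: $2^{kn}k(k-1)+\binom{n}{2}\le 2^{kn}kn$, because $k(n-k+1)\ge 2$ and $2^{n+1}\ge\binom{n}{2}$.
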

	
	\begin{proof}
    Fix a GZP $Z_1, \ldots, Z_k \subseteq [n]$, and note that by the GZP property, $|Z_j| \leq k-1$ for any $j \in [k]$.
   For $j \in [k]$, let $f'_{Z_j}(X) =f_{Z_j}(X,a_1, \ldots, a_n) =  \prod_{i \in Z_j} (X-a_i),$ and note that $f'_{Z_j}$ is a univariate polynomial over $\F_q$ of degree  smaller than $k$. 
   We shall show that with probability at least $1 -\frac{k\cdot n} {q}$ 
over the choice of $a_1, \ldots, a_n \in \F_q$, $f'_{Z_1}(X), \ldots, f'_{Z_k}(X)$ are linearly independent over $\F_q$. 

To see the above, consider the $k \times k$ matrix $M$ whose $j$-th column 
is  the coefficient vector of the polynomial $f_{Z_j}$, when viewed as a univariate polynomial in $X$ with coefficients in $\F_q(Y_1, \ldots, Y_n)$. By Theorem \ref{thm:gm_mds}, 
 $f_{Z_1}, \ldots, f_{Z_k}$, when viewed as univariate polynomials in  $X$ with coefficients in   $\F_q(Y_1, \ldots,Y_n)$, 
    are linearly independent over 
    $\F_q(Y_1, \ldots,Y_n)$, and so $M$ is a full-rank matrix over $\F_q(Y_1, \ldots, Y_n)$. 
 Further note that each entry in $M$ is a multilinear polynomial in $\F_q[Y_1, \ldots, Y_n]$, and  consequently, $\det(M)$ over $\F_q(Y_1, \ldots, Y_n)$ is a non-zero polynomial in $\F_q[Y_1, \ldots, Y_n]$ of  total degree at most $ k \cdot n$. 
By the Schwartz-Zippel Lemma (Lemma \ref{lem:prelim_sz}), with probability at least  
$ 1 -\frac{k\cdot n} {q}$ over the choice of $a_1,\ldots,a_n \in \F_q$, the matrix $M' \in \F_q^{k \times k}$, obtained from $M$ by assigning $a_1, \ldots, a_n$ to the indeterminates $Y_1, \ldots, Y_n$, has a non-zero determinant over $\F_q$, and so is full rank. But note that the columns of $M'$ are precisely the coefficient vectors of  $f'_{Z_1}(X), \ldots, f'_{Z_k}(X)$, and so if $M'$ is full rank, then these polynomials are linearly independent over $\F_q$. 

Next assume that $f'_{Z_1}(X), \ldots, f'_{Z_k}(X)$ are linearly independent, and let $G \in \F_q^{n\times k}$ be the matrix whose columns are the codewords of $\RS_q(a_1,\ldots,a_n;k)$ corresponding to these polynomials. Then the columns of $G$ are $k$ linearly independent codewords in $\RS_q(a_1,\ldots,a_n;k)$, and consequently we clearly have that $\img(G)=\RS_q(a_1,\ldots,a_n;k)$, and so $G$ is a generating matrix for this code. Furthermore, by the definition of $f'_{Z_1}(X), \ldots, f'_{Z_k}(X)$, we also clearly have that $G$ attains $Z_1, \ldots, Z_k$. 

So we conclude that for any GZP $Z_1, \ldots, Z_k \subseteq [n]$, with probability at least $1 -\frac{k\cdot n} {q}$ 
over the choice of $a_1, \ldots, a_n \in \F_q$, there exists a generator matrix for $\RS_q(a_1,\ldots,a_n;k)$ which attains $Z_1, \ldots, Z_k$. 
Since the number of GZPs $Z_1, \ldots, Z_k \subseteq [n]$ is at most $2^{kn}$, by a union bound, we conclude that with probability at least $1 - 2^{kn} \cdot \frac{k \cdot n} {q}$ over the choice of $a_1, \ldots, a_n \in \F_q$, for any GZP $Z_1, \ldots, Z_k \subseteq [n]$, 
 there exists a generator matrix for \(\RS_q(a_1,\ldots,a_n;k)\)  which attains $Z_1, \ldots, Z_k$.
\end{proof}

Combining the above corollary with Theorem \ref{thm:high_mds_singleton}, we conclude that Reed-Solomon Codes over random evaluation points, chosen from an exponentially large field, attain the generalized Singleton Bound with high probability (and in particular, are list-decodable up to capacity with high probability). With some more effort, it can be shown that the field size can be reduced to polynomial, and even linear, in $n$, but at the cost of only \emph{approximately} attaining the generalized Singleton Bound, that is, achieving a list-decoding radius of $\frac{L}{L+1}(1- \frac k n - \epsilon)$ for any positive itneger $L$ and constant $\epsilon >0$ (which can be shown to be necessary), we refer the reader to \cite{AGGLZ25} for more details. Interesting open problems are to find \emph{explicit} evaluation points so that Reed-Solomon Codes evaluated on these points achieve list-decoding capacity, as well as \emph{efficient} list-decoding algorithms for Reed-Solomon Codes up to capacity.

\subsection{Multiplicity codes}\label{subsec:const_list_mult}

In this section, we show that multiplicity codes attain the \emph{generalized Singleton bound} over \emph{any} set of evaluation points. 
More specifically,  we first present in Section \ref{subsubsec:const_list_mult} below a 
simple probabilistic argument 
which shows that multiplicity codes are list-decodable up to capacity with a constant list size that depends \emph{exponentially} on $1/\epsilon$.  This argument is quite general and applies to \emph{any} linear code that is list-decodable up to capacity with a list that is contained in a low-dimensional subspace.  
 Later, in Section  \ref{subsubsec:singleton_list_mult}, we shall show that multiplicity codes even attain the \emph{generalized Singleton Bound} (and in particular the list size only depends  \emph{linearly} on $1/\epsilon$), using more specific properties of multiplicity codes.

 \subsubsection{Constant list size}\label{subsubsec:const_list_mult}
 
In this section, we show that multiplicity codes are list decodable up to capacity with a  \emph{constant} list size (depending on the gap to capacity $\epsilon$), based on a simple probabilistic argument which shows that for a linear code of large distance, the list cannot contain many codewords coming from a low dimensional subspace. 

More specifically, we shall prove the following fairly general lemma which applies to \emph{any} linear code, and
which says that if $C$ is a linear code of relative distance $\delta$ that is list-decodable from a $(\delta-\epsilon)$-fraction of errors with a list that is contained in a linear subspace of dimension at most $r$, then the list size is in fact upper bounded by a quantity that only depends on $\delta, \epsilon$, and $r$. In particular, if $\delta, \epsilon, r$ are all constants, independent of the block length, than so is the list size.

\begin{lemma}\label{lem:const_list_mult}
Let $\F$ be a finite field, and let $C \subseteq \Sigma^n$ be an $\F$-linear code of relative distance $\delta$. Suppose furthermore that $C$ is
list decodable from a $(\delta -\epsilon)$-fraction of errors with a list $\calL$ that is contained in an $\F$-linear  subspace $V\subseteq C$ of dimension $r$. Then 
$$|\calL| \leq  \left( \frac {r} {\epsilon(1-\delta)} \right)^{O\left(   \frac {r} {\epsilon  (1-\delta)} \cdot \log\left( \frac {1} {1-\delta} \right)   \right)}.$$
\end{lemma}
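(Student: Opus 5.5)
This is the Kopparty–Ron-Zewi–Saraf–Wootters random-sampling argument. Fix a received word $w$. Let $\calL$ be the list of codewords within relative distance $\delta-\epsilon$ of $w$, so $\calL \subseteq V$ with $\dim_{\F}(V)=r$. The plan is to show that every $c\in\calL$ is \emph{determined} by a short random "fingerprint" drawn from a set of size independent of $n$. Concretely, I will show that for each $c\in\calL$ there is a small tuple of coordinates $(i_1,\ldots,i_t)\in[n]^t$ with $t = O\!\left(\frac{r}{\epsilon(1-\delta)}\log\frac{1}{1-\delta}\right)$ such that $c$ is the unique element of $V$ agreeing with $w$ on all of $i_1,\ldots,i_t$. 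Then $|\calL|$ is at most the number of such tuples that can serve. Counting tuples naively gives $n^t$, so I instead need a counting argument over \emph{subspaces}, described next.

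The key step is an iterative restriction. Fix $c\in\calL$ and sample coordinates $i_1,i_2,\ldots$ uniformly and independently from the agreement set of $c$ with $w$; equivalently, sample uniformly from $[n]$ and keep only the agreeing coordinates. Maintain the affine subspace $V_j=\{v\in V : v_{i_\ell}=w_{i_\ell}\ \forall \ell\le j\}$, which contains $c$ and has dimension at most $r$. Say a step is \emph{good} if it strictly decreases $\dim V_j$, or if $V_j=\{c\}$.

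The claim to prove is that whenever $\dim V_j\ge 1$, a fresh coordinate sampled uniformly from $[n]$ lands in $\mathrm{Agr}(c,w)$ and cuts the dimension with probability at least roughly $\epsilon$. The reason is as follows. If $V_j$ is nontrivial, the direction space $V_j - c$ contains a nonzero codeword $u$, so by distance $\delta$ the coordinates where some direction is nonzero make up at least a $\delta$-fraction. Meanwhile $c$ agrees with $w$ on at least a $1-\delta+\epsilon$ fraction. Hence at least an $\epsilon$-fraction of coordinates are both agreements and places where $V_j$ is not constant, and picking such a coordinate reduces the dimension. So after $O(r/\epsilon)$ expected steps we reach $\{c\}$. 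To hit the stated bound I would instead sample a batch of $t$ uniform coordinates from $[n]$ and bound the probability, via a Chernoff/union argument over the at most $r$ dimension drops, that all of them are agreements and the restriction collapses to $c$.

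The main obstacle, and where the $(1-\delta)$ factors arise, is converting this into a list-size bound without losing a factor of $n$. The approach I would try is as follows. Pick $t$ coordinates uniformly from $[n]$, independently of $c$. Call $c$ \emph{captured} if all $t$ coordinates lie in $\mathrm{Agr}(c,w)$ and they determine $c$ within $V$. Each $c\in\calL$ is captured with probability at least $p \approx (1-\delta)^{t}\cdot\frac12$, since the agreement fraction is at least $1-\delta$ and, conditioned on landing in agreements, the dimension argument above shows determination with probability at least $1/2$ for $t=O(r/(\epsilon(1-\delta)))$. On the other hand, for any fixed tuple at most one $c$ is captured, because $c$ is the unique point of $V$ consistent with $w$ there. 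Therefore $|\calL|\cdot p\le 1$, giving $|\calL|\le (1/(1-\delta))^{O(t)}$, which lies within the stated bound.

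The delicate computation is the conditional determination probability. Conditioned on agreements, each sample cuts the dimension with probability at least $\epsilon/(1-\delta+\epsilon)\ge\epsilon$. That requires about $r/\epsilon$ samples, plus a $\log(r/\epsilon)$ slack for concentration, which accounts for the $\left(\frac{r}{\epsilon(1-\delta)}\right)$ base appearing in the statement.
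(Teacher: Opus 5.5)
Your argument is correct. Its skeleton is the paper's $\PRUNE$ argument: draw $I=\{i_1,\ldots,i_t\}$ uniformly from $[n]$ independently of $c$, note that a fixed $I$ captures at most one list element, and conclude $|\mathcal{L}|\le 1/p$. The difference is how you lower-bound $p$.

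The paper uses $p\ge\Pr[E_1]-\Pr[E_2]$. It bounds the bad event $E_2$ (some nonzero $u\in V$ vanishes on $I$) unconditionally, with per-step dimension-drop rate $\delta$, so $\Pr[E_2]\le(1-\delta)^t(t/(1-\delta))^r$. It must then take $t\approx\frac{r}{\epsilon(1-\delta)}\log\frac{r}{\epsilon(1-\delta)}$, so that the advantage $(1+\frac{\epsilon}{1-\delta})^t$ of $\Pr[E_1]$ outweighs the $(t/(1-\delta))^r$ loss.

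You instead write $p\ge\Pr[E_1]\cdot\Pr[\neg E_2\mid E_1]$. Given $E_1$, the samples are i.i.d.\ uniform on $\mathrm{Agr}(c,w)$, and each one cuts the dimension with probability at least $\epsilon/(1-\delta+\epsilon)$. This multiplicative form is cleaner and stronger:
\begin{itemize}
\item The time to reach dimension $0$ is dominated by a sum of $r$ geometric variables, so Markov gives $\Pr[\neg E_2\mid E_1]\ge\frac12$ already at $t=\lceil 2r(1-\delta+\epsilon)/\epsilon\rceil$, with no logarithmic slack.
\item Hence $|\mathcal{L}|\le 2(1-\delta+\epsilon)^{-t}\le 2(1-\delta)^{-(2r/\epsilon+1)}$. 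This lies inside the stated bound, and it is also at most $\frac{2}{\epsilon}e^{2r/(e\epsilon)}$ for every $\delta$.
\end{itemize}
What the paper's route buys is a bad event that depends only on $V$ and $I$, not on $c$ or $w$. The price is the extra logarithm in the exponent.

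For the write-up, three fixes are needed.
\begin{itemize}
\item Delete the abandoned threads in your first three paragraphs: the counting over tuples and subspaces, the Chernoff bound, and sampling from $\mathrm{Agr}(c,w)$. That sampling is $c$-dependent and would break the at-most-one-capture argument.
\item Replace $p\approx(1-\delta)^t\cdot\frac12$ by the inequality $p\ge\frac12(1-\delta+\epsilon)^t$.
\item Actually prove the conditional determination step. Either use the geometric-domination plus Markov argument above, or use the paper's union bound $\binom{t}{r-1}(1-\rho)^{t-r+1}$ with $\rho=\epsilon/(1-\delta+\epsilon)$.
\end{itemize}
Your two prescriptions for $t$ (order $\frac{r}{\epsilon(1-\delta)}$ in one place, $\frac{r}{\epsilon}$ plus a $\log\frac{r}{\epsilon}$ slack in another) disagree. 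Either one still yields a bound within the lemma, which is simply weaker than what you prove.
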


\bigskip

Recall that multiplicity codes of rate $R$ have relative distance at least $\delta = 1-R$, and that by Lemma  \ref{lem:mult_no_mult_list_solution_size}, these codes are list-decodable from a $(\delta -\epsilon)$-fraction of errors with a list that is contained in an $\F$-linear subspace of  dimension $r=O(1/\epsilon)$. 
The above lemma then implies that  these codes are list-decodable from an $(1-R-\epsilon)$-fraction of errors with a constant list size on the order of $(1/\epsilon)^{O(1/\epsilon^2)}$. 
 Later, in Section  \ref{subsubsec:singleton_list_mult}, we shall show that the list size can be further reduced to  $O(1/\epsilon)$ (and even all the way up to the generalized Singleton Bound!), using some additional properties that are more specific to multiplicity codes. We now turn to the proof of the above Lemma \ref{lem:const_list_mult}. 

\begin{proof}[Proof of Lemma \ref{lem:const_list_mult}]
The proof is algorithmic: we will give a simple randomized algorithm $\PRUNE$,
which when given the received word $w \in \Sigma^n$,
either outputs a vector $v \in V$ or outputs $\bot$. The guarantee is 
that for any $v \in \calL$, $v$ is output by the algorithm $\PRUNE$
with probability at least $p_0=p_0(\delta, \epsilon, r)$, which implies in turn that $|\calL| \leq \frac{1} {p_0}$.

The algorithm $\PRUNE$ works as follows. For some parameter $t$, to be determined later on, 
it picks entries $i_1, i_2, \ldots, i_t \in [n]$ uniformly and independently at random, and lets $I:=\{i_1,\ldots,i_t\}$. Then the algorithm 
checks if there is a unique $v \in V$ that agrees with $w$ on $I$
(that is, $v_i =w_i$ for all $i \in I$).
If so, it outputs that unique element $v$; otherwise (i.e., either there are zero or greater than one 
such $v$'s) it outputs $\bot$.

We would like to show that for any $v \in \calL$, the algorithm outputs $v$ with constant probability $p_0$ (depending only on $\delta, \epsilon$, and $r$). 
Fix such a $v \in \calL$. Let $E_1$ be the event that $v$ agrees with $w$ on $I$, and let $E_2$ be the event that two different codewords in $V$ agree on $I$.
Note that the algorithm will output $v$ if and only if the event $E_1$ holds and the event $E_2$ does not hold, so the probability that $v$ is output is at least $\Pr[E_1] - \Pr[E_2]$. The following two claims give lower and upper bounds on the probabilities of the events $E_1$ and $E_2$, respectively.

\begin{claim}\label{clm:const_list_mult_e1}
$\Pr[E_1] \geq (1-\delta+\epsilon)^t.$
\end{claim}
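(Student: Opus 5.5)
The plan is to use independence of the samples together with the agreement guarantee for $v \in \calL$. Since $v \in \calL$, we have $\dist(v,w) \leq \delta - \epsilon$. Hence $v$ agrees with $w$ on at least a $(1-\delta+\epsilon)$-fraction of the coordinates $i \in [n]$. The event $E_1$ is the event that $v_{i_j} = w_{i_j}$ for every $j = 1, \ldots, t$.

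First I will observe that each $i_j$ is uniform in $[n]$. Therefore, for each fixed $j$,
$$\Pr[v_{i_j} = w_{i_j}] = 1 - \dist(v,w) \geq 1 - \delta + \epsilon.$$

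Second, because $i_1, \ldots, i_t$ are chosen independently, the events $\{v_{i_j} = w_{i_j}\}$ for $j = 1, \ldots, t$ are mutually independent. So the probability of their intersection is the product of their probabilities, which gives
$$\Pr[E_1] = \prod_{j=1}^{t} \Pr[v_{i_j} = w_{i_j}] \geq (1-\delta+\epsilon)^t.$$

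There is no real obstacle here. The only point needing care is that sampling is done with replacement, so possible repetitions among the $i_j$ do no harm: the product formula holds exactly as stated, and repetitions could only make $I$ smaller, which is irrelevant for $E_1$. No earlier results of the paper are needed beyond the definition of the list $\calL$ and of relative distance.
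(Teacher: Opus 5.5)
Your proposal is correct and is the same argument as the paper's proof: membership in $\calL$ gives an agreement fraction of at least $1-\delta+\epsilon$. Independence of the uniformly chosen samples $i_1,\ldots,i_t$ then gives the bound $(1-\delta+\epsilon)^t$.
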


\begin{proof}
Follows since $v \in \calL$, and so $v_i = w_i$ for at least a ($1-\delta+\epsilon$)-fraction of the entries $i \in [n]$, and since $i_1, \ldots, i_t \in [n]$ are chosen uniformly and independently at random.
\end{proof}  

\begin{claim}\label{clm:const_list_mult_e2}
$ \Pr[E_2] \leq (1-\delta)^t\cdot \left(\frac{t} {1-\delta}\right)^{r}.$
\end{claim}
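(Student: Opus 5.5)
The plan is to recast $E_2$ as the event that some non-zero codeword of $V$ vanishes on all of $I$, and then to track how the dimension of the space of such codewords shrinks as the samples $i_1, \ldots, i_t$ are revealed one at a time. Since $V$ is $\F$-linear, two distinct $v, v' \in V$ agree on $I$ exactly when $u := v - v'$ is a non-zero element of $V$ with $u_i = 0$ for all $i \in I$. So I will define the nested subspaces
$$V_0 := V, \qquad V_j := \{u \in V \mid u_{i_1} = \cdots = u_{i_j} = 0\} \quad (j = 1, \ldots, t).$$
These satisfy $V_j \subseteq V_{j-1}$ and $\dim V_0 = r$, and $E_2$ is precisely the event $V_t \neq \{0\}$.

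The key step is a per-sample bound. Condition on $i_1, \ldots, i_{j-1}$ and suppose $V_{j-1} \neq \{0\}$. Call step $j$ a \emph{stall} if $V_j = V_{j-1}$, and a \emph{drop} otherwise, in which case $\dim V_j \leq \dim V_{j-1} - 1$. A stall means every element of $V_{j-1}$ vanishes at $i_j$. Fix any non-zero $u \in V_{j-1}$. Since $C$ has relative distance $\delta$ and is linear, $u$ is non-zero on at least a $\delta$-fraction of the coordinates. Hence a stall forces $u_{i_j} = 0$, which has probability at most $1-\delta$ over the fresh uniform choice of $i_j$. So, conditioned on any history with $V_{j-1} \neq \{0\}$, step $j$ is a stall with probability at most $1-\delta$.

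To finish, note that since the dimension starts at $r$ and each drop lowers it by at least one, the event $V_t \neq \{0\}$ allows at most $r-1$ drops among the $t$ steps. Therefore at least $t - r + 1$ steps are stalls, each occurring while the current subspace is still non-zero. I will take a union bound over the set $S \subseteq [t]$ of at most $r-1$ drop positions. For a fixed $S$, chaining the conditional bounds over the steps outside $S$ shows that all of them are stalls with probability at most $(1-\delta)^{t-|S|} \leq (1-\delta)^{t-r+1}$. There are at most $\sum_{m < r} \binom{t}{m} \leq t^{r}$ such sets $S$, so
$$\Pr[E_2] \leq t^r (1-\delta)^{t-r} = (1-\delta)^t \left(\frac{t}{1-\delta}\right)^r.$$
Here I used $(1-\delta)^{t-r+1} \leq (1-\delta)^{t-r}$, which absorbs the slack.

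I expect the main obstacle to be the conditioning in the chaining step. The stall bound only holds while $V_{j-1}$ is non-zero, and it must be applied to the adaptively defined subspace after an arbitrary history. I would handle this by arguing step by step with the chain rule of conditional probability, with the drop pattern $S$ fixed in advance. One edge case to check separately is $t < r$, where the claimed bound exceeds $1$ and so holds trivially.
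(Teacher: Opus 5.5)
Your proposal is correct and is essentially the paper's proof. It uses the same nested subspaces $V_j = V\cap A_{i_1}\cap\cdots\cap A_{i_j}$, the same observation that a fresh sample fails to lower the dimension of a non-zero $V_{j-1}$ with probability at most $1-\delta$, and the same union bound over the at least $t-r+1$ non-dropping steps (the paper counts these by $\binom{t}{t-r+1}$ rather than your $\sum_{m<r}\binom{t}{m}\le t^r$); your chain-rule handling of the conditioning and of the case $t<r$ just makes explicit what the paper leaves implicit.
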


\begin{proof}
For $i \in [n]$, let $A_i:=\{v \in C \mid v_i=0 \},$
let $V_0:=V$, 
and for $ j=1,\ldots, t$, let $$ V_j :=  V \cap A_{i_1}\cap A_{i_2} \cap \cdots \cap A_{i_j},$$ and $r_j: =  \dim_{\F}( V_j ).$
Observe  that $r = r_0 \geq r_1 \geq \ldots \geq r_t$, and that the event $E_2$ holds if and only if $r_t >0$. Note also  that $|v| \geq \delta n$ for any non-zero $v \in V$, since $V \subseteq C$ and $C$ has relative distance $\delta$. 

Next we claim that for any $j=0,1,\ldots,t-1$, it holds that $r_{j+1} \leq  \max\{0,r_j-1\}$ with probability at least $\delta$ over the choice of $i_{j+1}$. To see this, note first that if $r_j=0$, then $r_{j+1} \leq r_j \leq 0$ and so we are done. Otherwise, if $r_{j} \neq 0$ then there exists a non-zero vector $v \in V_j$. Recalling that $V_j \subseteq V $, we have that $|v| \geq \delta n$, and so  $v_{i_{j+1}} \neq 0$ with probability at least $\delta$ over the choice of $i_{j+1}$. But recalling that $V_{j+1} \subseteq A_{i_{j+1}}$, this implies in turn that $v \notin V_{j+1}$. So in this case there exists a non-zero $v \in V_j$ so that $v \notin V_{j+1}$, and so the dimension of $V_{j+1}$ is strictly smaller than that of $V_j$.

Finally, note that if $r_t > 0$, then it must hold that $r_{j+1} > \max \{0, r_j-1\}$ for at least $t-r+1$ of the $j$'s in $0,1,\ldots,t-1$. By the above and the union bound, the probability of this event is at most 
$$ {t \choose t-r+1} \cdot (1-\delta)^{t-r+1} \leq  (1-\delta)^t \cdot \left(\frac t {1-\delta} \right)^{r} .$$
\end{proof}

By the above Claims \ref{clm:const_list_mult_e1} and \ref{clm:const_list_mult_e2}, we conclude that any $v \in \calL$ is output by the algorithm $\PRUNE$ with probability at least
$$
p_0 \geq \Pr[E_1] - \Pr[E_2] \geq (1-\delta+\epsilon)^t - (1-\delta)^t \cdot \left(  \frac{t} {1-\delta}  \right)^r.
$$

Finally, by setting $t := 3 \cdot \frac {r} {\epsilon  (1-\delta)} \cdot \log\left( \frac {r} {\epsilon  (1-\delta)}\right)$ in the above expression we get that
\begin{eqnarray*}
p_0:  & = & (1-\delta+\epsilon)^t - (1-\delta)^t\cdot \left(\frac{t} {1-\delta}\right)^{r} \\
& \geq & (1-\delta)^t \cdot \left[ (1+\epsilon)^t - \left(\frac{t} {1-\delta}\right)^{r}\right] \\
& \geq & (1-\delta)^t \cdot \left[ \left( \frac {r} {\epsilon  (1-\delta)} \right) ^{3 \cdot r}- \left(3 \cdot \frac {r} {\epsilon  (1-\delta)^2} \cdot \log\left( \frac {r} {\epsilon  (1-\delta)}\right)\right)^r \right] \\
& \geq & (1-\delta)^t \\
& \geq & (1-\delta)^{O\left(   \frac {r} {\epsilon  (1-\delta)} \cdot \log\left( \frac {r} {\epsilon  (1-\delta)} \right)   \right)},
\end{eqnarray*}
where the penultimate inequality holds when the ratio $\frac {r} {\epsilon  (1-\delta)}$ is sufficiently large.
This implies in turn that 
$$|\calL| \leq \frac 1 {p_0} \leq  \left( \frac {1} {1-\delta} \right)^{O\left(   \frac {r} {\epsilon  (1-\delta)} \cdot \log\left( \frac {r} {\epsilon  (1-\delta)} \right)   \right)}= \left( \frac {r} {\epsilon(1-\delta)} \right)^{O\left(   \frac {r} {\epsilon  (1-\delta)} \cdot \log\left( \frac {1} {1-\delta} \right)\right)},$$
which concludes the proof of the lemma.
\end{proof}

\subsubsection{Generalized Singleton Bound}\label{subsubsec:singleton_list_mult}

In this section, we show that multiplicity codes (approximately) attain the  \emph{generalized Singleton Bound} of Theorem \ref{thm:gen_singleton}, which in particular implies that the list size of multiplicity codes only depends  \emph{linearly} on $1/\epsilon$.
 
More specifically, in what follows, fix a prime power $q$,  distinct evaluation points $a_1, \ldots, a_n \in \F_q$, and positive integers $s,k$ so that $k< sn$ and $ \max\{k,s\} \leq \char(\F_q)$, and let $\mult_q^{(s)}(a_1, \ldots, a_n; k)$ be the corresponding multiplicity code. We shall show that for any positive integer $L <s$, $\mult_q^{(s)}(a_1, \ldots, a_n; k)$ is $(\alpha,L)$-list decodable for 
$\alpha < \frac{L}{L+1} (1- \frac{k}{n(s-L+1)})$. In particular, if we let $R:=\frac{k} {s \cdot n}$ denote the rate of this code, then it is list decodable from a $\frac {L} {L+1} (1- \frac{s} {s-L+1}\cdot R)$-fraction of errors with list size $L$. So for any $\epsilon >0$, we can choose $s = \Theta(L/\epsilon)$, so that the fraction of errors is at least $\frac{L}{L+1}(1- R - \epsilon)$. In particular, we can choose $s=\Theta(1/\epsilon^2)$ so that the multiplicity code $\mult_q^{(s)}(a_1, \ldots, a_n; k)$ is list decodable from an $(1-R-\epsilon)$-fraction of errors with list size $O(1/\epsilon)$. 

By Claim \ref{clm:agreement_hypergraph}, this would be a consequence of the following lemma.

\begin{lemma}\label{lem:mult_singleton}
Let $q$ be a prime power,  let $a_1, \ldots, a_n$ be distinct points in $\F_q$, and let $s,k,\ell$ be positive integers so that $\ell <s$, $k< sn$,  and $\max\{k,s\} \leq \char(\F_q)$. Let $w \in (\F_q^s)^n$ be a string, let $c_0, c_1, \ldots, c_\ell$ be distinct codewords in 
$\mult_q^{(s)}(a_1, \ldots, a_n; k)$, and let $H$ be the agreement hypergraph of $c_0, c_1, \ldots, c_\ell$ and $w$ (cf., Definition \ref{def:agreement_hypergraph}). Then $\wt(H) < \frac{\ell}{s-\ell+1} \cdot k$. 
\end{lemma}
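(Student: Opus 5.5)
The plan is a Wronskian-type determinant argument. We compare the multiplicity with which a suitable univariate polynomial vanishes at the evaluation points against its degree, much as in the root-finding lemmas of Section \ref{sec:capacity}. Let $f_0,\ldots,f_\ell\in\F_q[X]$ be the distinct polynomials of degree smaller than $k$ whose encodings are $c_0,\ldots,c_\ell$. Consider the $(\ell+1)\times(\ell+1)$ matrix $M(X)$ whose $j$-th row is
\[
\bigl(1, f_j(X), f_j^{(1)}(X), \ldots, f_j^{(\ell-1)}(X)\bigr),
\]
and let $D(X):=\det M(X)$.

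The degree bound comes first. Since $\deg f_j^{(h)}\le k-1-h$, every term in the expansion of $D$ has degree at most $\sum_{h=0}^{\ell-1}(k-1-h)<\ell k$, so $\deg D<\ell k$.

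The vanishing bound comes next. Fix $i\in[n]$ and write the edge as $e_i=\{j_0,\ldots,j_m\}$, so that $\wt(e_i)=m$ (the case $|e_i|\le1$ is trivial). All of $f_{j_0},\ldots,f_{j_m}$ have the same vector $f^{(<s)}(a_i)=w_i$. Subtract row $j_0$ from rows $j_1,\ldots,j_m$; this leaves $D$ unchanged. Each new row has first entry $0$, and its remaining entries are $g^{(h)}$ with $g=f_{j_t}-f_{j_0}$ and $h\le\ell-1$. Since $\mult(g,a_i)\ge s$, the iterative rule of Lemma \ref{lem:prelim_hasse_properties_univ} gives $\mult(g^{(h)},a_i)\ge s-h\ge s-\ell+1$. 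Expanding the determinant, each term contains one entry from each of these $m$ rows, so $\mult(D,a_i)\ge m(s-\ell+1)=\wt(e_i)(s-\ell+1)$. If $D\neq0$, Fact \ref{fact:prelim_hasse_sz_univ} then gives
\[
(s-\ell+1)\,\wt(H)\le\deg D<\ell k,
\]
which is exactly the claim.

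The main obstacle is showing that $D\neq 0$, and I expect this is where the real work lies. If $D=0$, the columns of $M$ are linearly dependent over $\F_q(X)$. After clearing denominators, this yields a nonzero $Q=A(X)+\sum_{h<\ell}B_h(X)Y_h$ with $Q(X,f_j,\ldots,f_j^{(\ell-1)})=0$ for all $j$. By Lemma \ref{lem:mult_no_mult_list_solution_size} (this uses $\max\{k,s\}\le\char(\F_q)$), all the $f_j$ then lie in an affine space of dimension at most $\ell-1$. This is not immediately contradictory.

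I would handle this by strong induction on $\ell$, combined with a reduction to a minimal order. Take the smallest $r$ such that the $f_j$ satisfy a nonzero linear differential relation of order $r$, and pass to affine coordinates. Write $f_j=f_0+\sum_{u=1}^{r'}\lambda_{j,u}g_u$ with $g_1,\ldots,g_{r'}$ affinely spanning the family, where $r'<\ell$. Then run the same determinant argument with the smaller Wronskian $\det\bigl(g_u^{(h)}\bigr)_{u\le r',\,h<r'}$, which is nonzero by minimality. The weight of $H$ would then be controlled by a rank-type count: at each $a_i$, the relevant order of vanishing is $(s-r'+1)$ times the rank of $\{f_j-f_{j_0}: j\in e_i\}$ modulo the span. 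This count is at least what is needed because $r'<\ell$ only makes $\frac{r'}{s-r'+1}$ smaller. The delicate point is to verify that this rank is at least $\wt(e_i)$ after discarding linear dependencies. If it is not, I would fall back on Lemma \ref{lem:weight_partition}, passing to a weakly partition connected subfamily, to reduce to the independent case.
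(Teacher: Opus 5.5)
Your determinant argument is correct when $c_0,\ldots,c_\ell$ are affinely independent. Subtracting row $0$ shows that $D$ is exactly the Wronskian of $f_1-f_0,\ldots,f_\ell-f_0$. Your appeal to Lemma~\ref{lem:mult_no_mult_list_solution_size} shows it is nonzero in that case, and your degree and vanishing counts then give the bound (essentially the argument of Lemma~\ref{lem:mult_singleton_wronskian} with $r=\ell$). But the statement must also hold for affinely dependent families, where $D\equiv 0$, and there the proposal has a genuine gap.

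The inequality you single out as the delicate point, that the rank of $\{f_j-f_{j_0}: j\in e_i\}$ is at least $\wt(e_i)$, is false in general. Take $c_0=0$, $c_1=g$, $c_2=\lambda g$ with $\lambda\notin\{0,1\}$, and $w_i=0$ at every $a_i$ where $g$ vanishes to order $s$. Each such edge is $\{0,1,2\}$, of weight $2$ but rank $1$. So the smaller Wronskian only bounds the number of edges of nonzero rank, not $\wt(H)$.

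The fallback does not repair this. Lemma~\ref{lem:weight_partition} produces a weakly partition connected subfamily, but weak partition connectivity is a property of the agreement pattern alone. It says nothing about affine independence: a hypergraph whose edges are all $\{0,1,2\}$ is weakly partition connected whether or not the three codewords are collinear. So there is no reduction to the independent case. You also invoke strong induction on $\ell$ without saying to which subfamilies it is applied.

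The missing idea, which is how the paper proceeds, is to measure each edge not by $\wt(e_i)$ but by the number $|\mathcal{P}(e_i)|$ of parts it meets in a partition adapted to the linear structure. Translate so that $c_0=0$, and pick a basis $c_1,\ldots,c_r$ of $V=\spn\{c_0,\ldots,c_\ell\}$ from among the codewords. Let $P_t$ consist of those $j$ with $c_j\in\spn\{c_0,\ldots,c_t\}\setminus\spn\{c_0,\ldots,c_{t-1}\}$. Nonzero codewords taken from distinct layers are linearly independent, and all codewords indexed by $e_i$ lie in one coset of $A_i\cap V$ (with $A_i$ the codewords whose $i$-th entry is zero). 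This gives $\dim(A_i\cap V)\ge|\mathcal{P}(e_i)|-1$.

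Then $\wt(H)=\sum_i\max\{|\mathcal{P}(e_i)|-1,0\}+\sum_t\wt(H|_{P_t})$. The first sum is at most $\frac{r(k-1)}{s-r+1}$ by the proof of Lemma~\ref{lem:mult_singleton_wronskian}. Since $r\ge1$, each $P_t$ is a proper subfamily, so induction on $\ell$ bounds the second sum by $\frac{(\ell-r)k}{s-\ell+2}$. Together these are less than $\frac{\ell k}{s-\ell+1}$.

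Your weak-partition-connectivity idea can also be completed, without induction, by applying the definition to this same layer partition of the subfamily. That forces $\sum_i\dim(A_i\cap V)\ge\frac{rk}{s-\ell+1}>\frac{r(k-1)}{s-r+1}$, a contradiction. Either route needs the layer partition and the bound $\dim(A_i\cap V)\ge|\mathcal{P}(e_i)|-1$, which your proposal lacks.
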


Before we prove the above lemma, we show that this lemma implies that $\mult_q^{(s)}(a_1, \ldots, a_n; k)$ is $(\alpha,L)$-list decodable for 
$\alpha < \frac{L}{L+1} (1- \frac{k}{n(s-L+1)})$. To see this, suppose on the contrary that $\mult_q^{(s)}(a_1, \ldots, a_n; k)$ is not $(\alpha,L)$-list decodable. Then there exist a string $w \in (\F_q^s)^n$ and codewords $c_0,c_1 \ldots, c_L \in \mult_q^{(s)}(a_1, \ldots, a_n; k)$ so that $\Delta(w,c_j) \leq \frac{L}{L+1} (n- \frac{k}{s-L+1})$ for any $j \in \{0,1,\ldots, L\}$.  
Let $H$ be the agreement hypergraph of  $c_0,c_1, \ldots, c_L$ and $w$. Then by Claim \ref{clm:agreement_hypergraph}, $H$ has weight at least $L \cdot \frac{k}{s-L+1}$ which contradicts the above Lemma \ref{lem:mult_singleton}. 

For the proof of Lemma \ref{lem:mult_singleton}, we first recall that
in Section \ref{subsubsec:const_list_mult} we showed that for \emph{any} $\F$-linear code $C \subseteq \Sigma^n$, and for any $\F$-linear subspace $V \subseteq C$, the dimension of the intersection $V \cap A_i$ for a random $i \in [n]$ is typically small, where  $A_i:=\{v \in C \mid v_i=0 \}$.
For the proof of Lemma  \ref{lem:mult_singleton}, we first 
show a tighter bound on the expected dimension of $V \cap A_i$ for a random $i \in [n]$ for the special case of multiplicity codes. 

\begin{lemma}\label{lem:mult_singleton_wronskian}
Let $q$ be a prime power,  let $a_1, \ldots, a_n$ be distinct points in $\F_q$, and let $s,k$ be positive integers so that $k< sn$ and $ \max\{k,s\}\leq \char(\F_q)$.  Let $V \subseteq \mult_q^{(s)}(a_1, \ldots, a_n; k)$ be an $\F_q$-linear subspace of dimension $r\leq s$, and for $i \in [n]$, let $A_i:=\{v \in \mult_q^{(s)}(a_1, \ldots, a_n; k) \mid v_i=0 \}$.
Then we have that:
$$\sum_{i=1}^n \dim(A_i \cap V) \leq \frac{r } { s-r+1} \cdot k.$$
\end{lemma}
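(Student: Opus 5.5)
The plan is a Wronskian argument. Identify $V$ with an $r$-dimensional space of polynomials of degree $<k$ and fix a basis $f_1,\ldots,f_r$. We bring this basis to ``echelon form'' at a generic point and then form the Wronskian-type determinant
$$W(X) := \det\left( f_j^{(u)}(X) \right)_{0\le u<r,\; 1\le j\le r},$$
using Hasse derivatives. We will show three things: $W$ is a nonzero polynomial, $\deg W\le r(k-r)$ (or at least $\le rk$), and $\mult(W,a_i)\ge (s-r+1)\dim(A_i\cap V)$ for every $i$. Fact~\ref{fact:prelim_hasse_sz_univ} then gives
$$(s-r+1)\sum_i \dim(A_i\cap V)\le \deg W\le rk,$$
which is the claim.

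\textbf{Nonvanishing of $W$.} Since $k\le\char(\F_q)$, the basis can be chosen with distinct degrees $d_1<\cdots<d_r<k$ by Gaussian elimination on leading coefficients. We then use the standard fact that the Wronskian of polynomials with distinct degrees below the characteristic is nonzero. Its leading term is a Vandermonde-like determinant in binomial coefficients $\binom{d_j}{u}$, namely $\prod_{j<j'}(d_{j'}-d_j)/\prod_u u!$ up to the leading coefficients, and this is nonzero mod $p$ because all $d_j<p$ and $r\le s\le p$. The degree bound is immediate: each term of the expansion has degree at most $\sum_j d_j-\sum_u u\le rk$.

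\textbf{Multiplicity at each $a_i$.} This is the main step. Fix $i$ and let $d=\dim(A_i\cap V)$. Change basis, which only scales $W$ by a nonzero constant, so that $f_1,\ldots,f_d$ span $A_i\cap V$. Each of these vanishes at $a_i$ with multiplicity at least $s$, since its codeword entry $f^{(<s)}(a_i)$ is zero. By Lemma~\ref{lem:prelim_hasse_properties_univ}, $f_j^{(u)}=(X-a_i)^{s-u}g_{j,u}$, so the $u$-th row entry in column $j\le d$ has multiplicity at least $s-u$ at $a_i$ (using that iterated Hasse derivatives agree with $f^{(u+v)}$ up to binomial factors).

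Every term of the determinant uses, from columns $1,\ldots,d$, $d$ distinct rows $u$. The total multiplicity contributed is therefore at least $\sum (s-u)$ over those $d$ rows, which is at least $\sum_{u=r-d}^{r-1}(s-u)\ge d(s-r+1)$. Hence $\mult(W,a_i)\ge d(s-r+1)$.

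The delicate points are the characteristic conditions in the nonvanishing step and the exact row indexing. Even the crude bound $s-u\ge s-r+1$ for $u\le r-1$ suffices, which gives exactly the claimed bound.
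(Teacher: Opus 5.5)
Your proof is correct, and its skeleton matches the paper's. Both use the Hasse-derivative Wronskian, and both change basis so that the first $\dim(A_i\cap V)$ columns span $A_i\cap V$ and are therefore divisible by $(X-a_i)^{s-r+1}$ in rows $u\le r-1$. Both then finish with the root count from Fact~\ref{fact:prelim_hasse_sz_univ}.

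The one step where you take a genuinely different route is showing $W\neq 0$.

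\begin{itemize}
\item The paper argues by contradiction through Lemma~\ref{lem:mult_no_mult_list_solution_size}. A dependency among the rows of $W(X)$ over $\F_q(X)$ gives $B_0,\ldots,B_{r-1}$, not all zero, with $\sum_\ell B_\ell f_t^{(\ell)}=0$ for every $t$. Then $f_1,\ldots,f_r$ would all lie in the solution space of that differential equation, which has dimension at most $r-1$.
\item You instead take a basis with distinct degrees $d_1<\cdots<d_r$. You read off the coefficient of $X^{\sum_j d_j-\binom{r}{2}}$ as $\prod_j c_j\cdot\det\bigl(\binom{d_j}{u}\bigr)$. This determinant is an integer whose product with $\prod_{u<r}u!$ equals $\prod_{j<j'}(d_{j'}-d_j)$, and each factor of that product lies in $[1,p-1]$, so it is nonzero mod $p$.
\end{itemize}

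The paper's route needs no special basis. It also shows that nonvanishing of the Wronskian is the same phenomenon as the low dimensionality of the list.

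Your route is self-contained and yields the slightly sharper bound $\deg W\le r(k-r)$. It also pinpoints that the hypothesis $k\le\char(\F_q)$ enters only through $0<d_{j'}-d_j<p$. That hypothesis is not needed to obtain the echelon basis itself, contrary to how you phrase it. In the same vein, $r\le s\le p$ is not needed to make $\prod_{u<r}u!$ invertible, since the divisibility argument above never inverts it.
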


\begin{proof}
Let $c_1, \ldots, c_r$ be a basis for $V$, and let $f_1(X), \ldots, f_r(X) \in (\F_q)_{<k}[X]$ be the polynomials corresponding to these codewords. Let $W(X)$ be the $r \times r$ \textsf{Wronskian matrix} over the polynomial ring $\F_q[X]$ given by
$$W(X) = \begin{pmatrix}
f_1(X) & f_2(X) & \cdots & f_r(X) \\
f_1^{(1)}(X) & f_2^{(1)}( X) & \cdots & f_r^{(1)}( X) \\
\vdots & \vdots & \vdots & \vdots \\
f_1^{(r-1)}(X) & f_2^{(r-1)}(X) & \cdots & f_r^{(r-1)}( X)
\end{pmatrix} ,$$
and let $D(X):= \det(W(X)) \in \F_q[X]$.

Then  $D(X)$ is a polynomial of degree at most $(k-1) \cdot r$. Next we claim that $D(X) \neq 0$. 
Interestingly, one way to see this is using Lemma \ref{lem:mult_no_mult_list_solution_size} which shows that the list of multiplicity codes is contained in a low-dimensional subspace. In more detail, suppose on the contrary that $D(X) \neq 0$. Then $W(X)$ is singular over the field of rational functions $\F_q(X)$, and so there exists a non-trivial linear combination of rows that sums to zero. That is, there exist $B_0(X), B_1(X), \ldots, B_{r-1}(X) \in \F_q[X]$, not all zeros, so that for any $t =1,\ldots, r$,
$$ B_0(X) \cdot f_t(X) + B_1(X) \cdot f_t^{(1)}(X) + \cdots +B_{r-1}(X) \cdot f_t^{(r-1)}(X) =0 .$$
Lemma  \ref{lem:mult_no_mult_list_solution_size}  then implies that $f_1(X), \ldots, f_r(X)$ lie in an $(r-1)$-dimensional subspace, which contradicts our assumption that they span an $r$-dimensional subspace.

Next we show that for any $i \in [n]$, 
$D(X)$ vanishes on each element $a_i$ with multiplicity at least $(s-r+1)\cdot \dim( A_i \cap V)$. To this end, fix $i \in [n]$, and let $r_i:= \dim(A_i \cap V)$. 
We would like to show  that $D(X)$ vanishes on $a_i$ with multiplicity at least
 $(s-r+1) \cdot r_i$, which by  the definition of the Hasse Derivative,  is equivalent to the property that $(X-a_i)^{(s-r+1) \cdot r_i}$ divides $D(X)$. 

To show the above, first note that performing elementary operations on the columns of $W(X)$ only changes $D(X)$ by a constant multiplicative factor, and hence we may assume without loss of generality that $f_1(X), \ldots, f_{r_i}(X)$ are a basis for $A_i \cap V$, and so 
$f_t^{(j)}(a_i) =0$ 
for any $t =1,2,\ldots,r_i$ and $j=0,1,\ldots, s-1$.
So $f_1(X), \ldots, f_{r_i}(X)$ all vanish on $a_i$ with multiplicity at least $s$, and so $(X-a_i)^s$ divides all these polynomials. Furthermore, $(X-a_i)^{s-r+1}$ divides $f_t^{(j)}(X)$ for any $t=1,2, \ldots, r_i$ and $j=0,1,\ldots, r-1$ 
since taking derivative reduces the degree of each monomial by at most $1$. So we conclude that $(X-a_i)^{s-r+1}$ divides each entry in the first $r_i$ columns, which implies in turn that $(X-a_i)^{(s-r+1) \cdot r_i}$ divides  the determinant $D(X)$ of $W(X)$.

So we conclude that $D(X)$ is a non-zero polynomial of degree at most $(k-1) \cdot r$, which vanishes on each $a_i$ with multiplicity at least $(s-r+1)\cdot \dim( A_i \cap V)$. Consequently, we have that $$\sum_{i=1}^n (s-r+1) \cdot \dim( A_i \cap V) \leq (k-1) \cdot r.$$  Dividing both sides in the above inequality by $ s-r+1$ gives the desired conclusion.
\end{proof}

We now turn to the proof of Lemma \ref{lem:mult_singleton}, based on the above Lemma \ref{lem:mult_singleton_wronskian}. 

\begin{proof}[Proof of Lemma \ref{lem:mult_singleton}]
The proof is by induction on $\ell$. For the base case $\ell=1$, consider a string $w \in (\F_q^s)^n$ and a pair of distinct codewords $c_0, c_1\in \mult_q^{(s)}(a_1, \ldots, a_n; k)$, and let $H=(\{0,1\}, E)$ be the corresponding agreement hypergraph. Then we have that
$$\wt(H) = \sum_{e \in E} \max\{|e|-1,0\} = | \{i \in [n] \mid (c_0)_i=(c_1)_i=w_i\}| \leq n - \Delta(c_0,c_1) <\frac k s,$$
which proves the $\ell=1$ case.

For the induction step, fix $1<\ell<s$. We shall assume that Lemma \ref{lem:mult_singleton} holds for any $1\leq \ell'<\ell$, and we shall show that it also holds for $\ell$. Let $w \in (\F_q^s)^n$ be a string, and let $c_0,c_1,  \ldots, c_\ell$ be distinct codewords in 
$\mult_q^{(s)}(a_1, \ldots, a_n; k)$. Without loss of generality, we may assume that $c_0=0$, since otherwise we can translate $w$ and $c_0, c_1, \ldots, c_\ell$ by $c_0$. 
Let $H=(\{0,1,\ldots,\ell\}, E=\{e_1, \ldots, e_n\})$ be the corresponding agreement hypergraph, and 
assume towards a contradiction that $\wt(H) \geq \frac{\ell}{s-\ell+1}\cdot k$.

Let $V:=\spn\{c_0,c_1, \ldots, c_\ell\}$, let  $r := \dim(V)$, and note that $1 \leq r\leq \ell$ by assumption that $c_0=0$. Without loss of generality, we may further assume that $c_1, \ldots, c_r$ is a basis for $V$. 
Let $\mathcal{P}$ be a partition of $\{0,1,\ldots, \ell\}$ into $r+1$ parts $P_0,P_1,\ldots, P_r$, where  
for $t \in \{0,1, \ldots, r\}$, 
$$P_t:= \left\{j \in [\ell] \mid c_j \in 
\spn\{c_0,c_1, \ldots,c_t\} \setminus  \spn\{c_0, c_1, \ldots,c_{t-1}\} \right\} .$$

For $i \in [n]$, let $A_i:=\{v \in \mult_q^{(s)}(a_1, \ldots, a_n; k) \mid v_i=0 \}$. 
Then the main observation is the following.
\begin{claim}
For any $i \in [n]$, it holds that 
$\dim(A_i \cap V)\geq \max\{|\mathcal{P}(e_i)| -1,0\}$, 
 where $|\mathcal{P}(e_i)|$ denotes the number of parts in the partition intersecting $e_i$. 
 \end{claim}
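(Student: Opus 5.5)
Fix $i \in [n]$ and let $e_i = \{j \in \{0,1,\ldots,\ell\} \mid (c_j)_i = w_i\}$. If $|\mathcal{P}(e_i)| \leq 1$ there is nothing to show, since dimensions are non-negative, so assume $e_i$ meets $p := |\mathcal{P}(e_i)| \geq 2$ parts. Note first that $c_0 = 0$ lies in $P_0$ (interpreting $P_0$ as the indices $j$ with $c_j \in \spn\{c_0\} = \{0\}$, which contains $0$); since the $c_j$ are distinct, $P_0=\{0\}$. The plan is to exhibit $p-1$ linearly independent vectors of $V$ that vanish at coordinate $i$.

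Choose representatives $j_1, \ldots, j_p \in e_i$ lying in distinct parts $P_{t_1}, \ldots, P_{t_p}$ with $t_1 < t_2 < \cdots < t_p$. Since all of them agree with $w$ at coordinate $i$, the differences
$$u_m := c_{j_{m+1}} - c_{j_1}, \qquad m = 1, \ldots, p-1,$$
satisfy $(u_m)_i = w_i - w_i = 0$, so $u_m \in A_i$ (this uses linearity of the multiplicity code, so $u_m$ is a codeword), and clearly $u_m \in V$. It remains to show $u_1, \ldots, u_{p-1}$ are linearly independent; then $\dim(A_i \cap V) \geq p-1$.

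For independence I use the flag $V_t := \spn\{c_0, \ldots, c_t\}$, where by the choice of basis $c_1,\ldots,c_r$ we have $\dim V_t = t$ (with $V_0=\{0\}$) and these are strictly increasing. By definition of the parts, $c_{j_m} \in V_{t_m} \setminus V_{t_m - 1}$. Since $t_1 < t_{m+1}$, $c_{j_1} \in V_{t_{m+1}-1}$, so $u_m \in V_{t_{m+1}} \setminus V_{t_{m+1}-1}$. Thus the $u_m$ have strictly increasing "levels" $t_2 < t_3 < \cdots < t_p$ in the flag, and a standard triangularity argument finishes: in any nontrivial relation $\sum_m \lambda_m u_m = 0$, take the largest $m$ with $\lambda_m \neq 0$; then $u_m$ is a combination of vectors in $V_{t_{m+1}-1}$, contradicting $u_m \notin V_{t_{m+1}-1}$.

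The only delicate point is bookkeeping: confirming the parts really partition $\{0,\ldots,\ell\}$ (every $c_j \in V = V_r$, and each lies in exactly one difference $V_t\setminus V_{t-1}$, with $j=0$ placed in $P_0$), and that the flag is strict because $c_1,\ldots,c_r$ is a basis. Once that is set up, the claim is immediate from the difference-vector construction above.
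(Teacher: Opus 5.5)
Your proof is correct and uses the same two ingredients as the paper: codewords indexed by $e_i$ differ by elements of $A_i\cap V$, and representatives of distinct parts are linearly independent because the parts are the levels of the flag $\spn\{c_0\}\subsetneq\cdots\subsetneq\spn\{c_0,\ldots,c_r\}$. The paper instead bounds $\dim\spn\{c_j : j\in e_i\}$ and splits into the cases $w_i=0$ (so $c_0=0\in e_i$) and $w_i\neq 0$ (so $e_i$ misses $P_0$); your difference vectors $c_{j_{m+1}}-c_{j_1}$ handle both cases at once, and you also spell out two points the paper leaves implicit, namely the triangularity argument and the intended reading $P_0=\{0\}$.
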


\begin{proof}
Fix $i \in [n]$. 
The claim clearly holds if $e_i = \emptyset$, so assume next that $e_i \neq \emptyset$, and let $V_i=\spn\{c_j | j \in e_i\}$ denote the span of all codewords in $e_i$. 

Assume first that $w_i=0$. In this case the $i$-th entry of all codewords in $V_i$ is zero, and so $V_i \subseteq A_i\cap V$. Furthermore, we have that $c_0=0 \in e_i$, and so $e_i$ intersects $P_0$. Suppose that $e_i$ intersects additional $t$ parts in the partition $\mathcal{P}$. Then $e_i$ contains $t$ linearly independent vectors, and so $t \leq \dim(V_i) \leq \dim(A_i\cap V)$. So in this case we have that $\dim(A_i\cap V) \geq t = |\mathcal{P}(e_i)| -1$. 

Next assume that $w_i \neq 0$, and let $c_j$ be an arbitrary codeword in $e_i$.
Then since all codewords in $e_i$ agree on the $i$-th entry, we have that all codewords in $e_i$ are contained in $c_j + A_i \cap V$, 
and so $\dim(V_i) \leq \dim(A_i\cap V) + 1$. Furthermore, $0 \notin e_i$ and so $e_i$ does not intersect $P_0$. Suppose that $e_i$ intersects $t$ parts in the partition $\mathcal{P}$. Then $e_i$ contains $t$ linearly independent vectors, and so $t \leq \dim(V_i) \leq \dim(A_i\cap V) + 1$. So in this case we also have that $\dim(A_i\cap V) \geq t-1 = |\mathcal{P}(e_i)| -1$.
\end{proof}
 
By the above claim, and using the same calculation as in the proof of Lemma \ref{lem:weight_partition}, 
we have that 
\begin{eqnarray*}
  \sum_{i \in [n]} \dim(A_i \cap V) 
 & \geq & \sum_{e \in E} \max\{|\mathcal{P}(e)| -1,0\}  \\
& = & \wt(H) - \sum_{t=0}^r \wt(H|_{P_t}) \\
 &> & \frac{\ell}{s-\ell+1} \cdot k - \sum_{t=0}^r \frac{|P_t|-1}{s-|P_t|+2}  \cdot k\\
  &> & \frac{\ell}{s-\ell+1} \cdot k -  \frac{  \sum_{t=0}^r(|P_t|-1)}{s-\ell+2}  \cdot k\\
  &= & \frac{\ell}{s-\ell+1} \cdot k - \frac{\ell-r}{s-\ell+1} \cdot k \\
   & =&  \frac{r} {s-\ell+1} \cdot k, \\
\end{eqnarray*}
where in the second inequality we used our assumption that  $\wt(H) \geq \frac{\ell}{s-\ell+1} \cdot k$, and that by the induction hypothesis $\wt(H|_{P_t}) <\frac{|P_t|-1}{s-|P_t|+2} \cdot k$ for any $t\in \{0,1,\ldots,r\}$ (since  $\mathcal{P}$ is a proper partition). 
Finally,  recalling that $r \leq \ell$, the above inequality contradicts Lemma \ref{lem:mult_singleton_wronskian}, which concludes the proof of this lemma.
\end{proof}

\subsection{Reed-Solomon Codes over subfield evaluation points}\label{subsec:const_list_rs_subfield}

In this section, we show that a certain \emph{subcode} of Reed-Solomon Codes over subfield evaluation points is (efficiently) list decodable up to capacity with a \emph{constant} list size (depending on the gap to capacity $\epsilon$). 
For Reed-Solomon Codes over subfield evaluation points, Lemma \ref{lem:rs_subfield_cap_solution_size} only gave a \emph{linear} 
upper bound on the dimension of the list, and so the methods for reducing the list size from Section \ref{subsubsec:const_list_mult} do not apply in this setting. 

The main observation that lets us reduce the list size in this setting is that the proof of Lemma  \ref{lem:rs_subfield_cap_solution_size}  in fact shows that the list has a more refined low-dimensional \emph{periodic} structure. 
Namely, there exists an $\F_q$-linear subspace $\hat V \subseteq \F_{q^s}$ of constant dimension $r=O(1/\epsilon)$ so that the following holds: Any polynomial $f(X) =\sum_{i=0}^{k-1} f_i \in \F_{q^s}[X]$ whose associated codeword is in the list satisfies that for any $i \in \{0,1,\ldots, k-1\}$, given the first $i$ coefficients $f_0,f_1,\ldots,f_{i-1} \in \F_{q^s}$, the next coefficient $f_i$ belongs to an affine shift of $\hat V$ (where the affine shift only depends on $f_0 ,f_1 ,\ldots, f_{i-1}$).\footnote{This follows by recalling from the proof of Lemma  \ref{lem:rs_subfield_cap_solution_size} 
that for any $i \in \{0,1,\ldots,k-1\}$, 
$f_i$ must satisfy that  $B(f_i) = -y_i$, where $y_i$ only depends on $f_0, f_1,  \ldots, f_{i-1}$, and $B(X)= \sum_{\ell=0}^{r-1} B_\ell(0) X^{q^\ell}$. Consequently, we have that  $f_i \in z_i + \hat V$, where $\hat V \subseteq \F_{q^s}$ denotes the set of roots of $B(X)$, and $z_i \in \F_{q^s}$ is such that 
$B(z_i) =-y_i$. Finally, note that by the linear structure of $B(X)$, $\hat V$ is an $\F_q$-linear subspace of dimension at most $r-1$. 
}
This motivates the following definition. In what follows, for a vector $v \in \F^{sk}$, we let $v=(v_{s,1}, \ldots, v_{s,k})$, where $v_{s,i}$ denotes the $i$-th block of $v$ of length $s$.

\begin{definition}[Periodic subspace]\label{defn:periodic_subspace}
Let $\F$ be a finite field, and let $k,s,r$ be positive integers.
A linear subspace $V \subseteq \F^{sk}$ is a \textsf{$(k,s,r)$-periodic subspace} if there exists a linear subspace $\hat V \subseteq \F^s$ of dimension $r$, so that the following holds: For any $v \in V$ and $i \in \{1, 2, \ldots, k\}$, there exists $z_i \in \F^s$, that only depends on $v_{s,1}, \ldots, v_{s,i-1}$, so that $v_{s,i}  \in z_i +\hat V$.
\end{definition}

By the above, the Reed-Solomon Code over subfield evaluation points $\RS_{q^s} (a_1, \ldots, a_n ;k)$ for $a_1, \ldots, a_n \in \F_q$  
is list decodable from an $(1-R-\epsilon)$-fraction of errors, where the polynomials whose associated codewords are in the list are contained in a $(k,s,r)$-periodic subspace for $r=O(1/\epsilon)$ (when viewing each polynomial $f(X) \in \F_{q^s}[X]$ of degree smaller than $k$ as a concatenation of $k$ coefficients in 
$\F_{q^s}$, identifying $\F_{q^s}$ with $\F_q^s$ via some $\F_q$-linear bijection, and identifying $(\F_q^s)^k$ with $\F_q^{sk}$ in the natural way).
This periodic structure can lead to an $\F_q$-linear subcode with lists of \emph{constant dimension} (and so also of \emph{constant size} by Lemma \ref{lem:const_list_mult})  when only keeping in the code codewords whose associated polynomials fall in a \emph{periodic evasive subspace}, which is a large subspace whose intersection with any periodic subspace has a low dimension.

\begin{definition}[Periodic evasive subspace]
Let $\F$ be a finite field, and let $k,s,r,t$ be positive integers.
A linear subspace $W \subseteq \F^{sk}$ is a \textsf{$(k,s,r,t)$-periodic evasive subspace} if  $\dim(W \cap V) \leq t$ for any $(k,s,r)$-periodic subspace $V \subseteq \F^{sk}$. 
\end{definition}

One way to obtain a periodic evasive subspace is via the notion of a \emph{subspace design}, 
which is a collection of large subspaces of $\F^s$, whose intersection with any low-dimensional subspace has low dimension \emph{on average}.

\begin{definition}[Subspace design]
Let $\F$ be a finite field, and let $k,s,r,t$ be positive integers.
A collection of $k$ subspaces $H_1, \ldots, H_k \subseteq \F^s$ is an \textsf{$(r,t)$-subspace design} if $\sum_{i=1}^{k} \dim(\hat V \cap H_i)\leq t$ for any linear subspace  $\hat V \subseteq \F^s$ of dimension $r$. 
\end{definition}

The next claim shows that if $H_1, \ldots, H_k$ is a subspace design, then $H_1 \times H_2 \times \cdots \times H_k$ is a periodic evasive subspace with the same parameters.

\begin{claim}\label{clm:subspace_design}
Let $\F$ be a finite field, and let $k,s,r,t$ be positive integers.
 Suppose that a collection of $k$ subspaces  $H_1, \ldots, H_k \subseteq \F^s$  is an $(r,t)$-subspace design. Then $W:= H_1 \times H_2 \times \cdots \times H_k \subseteq \F^{sk}$ is a $(k,s,r,t)$-periodic evasive subspace.
\end{claim}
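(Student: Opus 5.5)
To prove Claim \ref{clm:subspace_design}, I fix an arbitrary $(k,s,r)$-periodic subspace $V \subseteq \F^{sk}$ with its associated $r$-dimensional subspace $\hat V \subseteq \F^s$. The goal is to show $\dim(W \cap V) \leq \sum_{i=1}^k \dim(\hat V \cap H_i)$; the subspace design property then bounds the right-hand side by $t$. (If $\dim \hat V < r$, I first enlarge $\hat V$ to dimension exactly $r$. This only weakens the periodicity condition, and it can only increase each $\dim(\hat V\cap H_i)$, so the design bound still applies.)

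The natural tool is a filtration by prefixes. For $i=0,1,\ldots,k$, let $U_i := W \cap V \cap \{v : v_{s,1}=\cdots=v_{s,i}=0\}$. Then $U_0 = W\cap V$ and $U_k = \{0\}$, so
\[
\dim(W\cap V) = \sum_{i=1}^{k} \left(\dim U_{i-1} - \dim U_i\right).
\]
The map $\pi_i : U_{i-1} \to \F^s$ given by $v \mapsto v_{s,i}$ is linear with kernel $U_i$. Hence $\dim U_{i-1} - \dim U_i = \dim \pi_i(U_{i-1})$.

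The key step is to show that $\pi_i(U_{i-1}) \subseteq \hat V \cap H_i$. Containment in $H_i$ is immediate, since every $v \in W$ has $v_{s,i} \in H_i$. For containment in $\hat V$ I use periodicity. Take $v \in U_{i-1}$, whose first $i-1$ blocks are zero, and compare it with $0 \in V$, whose first $i-1$ blocks are also zero. The shift $z_i$ depends only on those prefix blocks, so both vectors share the same $z_i$. Thus $v_{s,i} \in z_i + \hat V$ and $0 \in z_i + \hat V$, which gives $v_{s,i} \in \hat V$. Summing over $i$ then yields $\dim(W \cap V) \leq \sum_i \dim(\hat V \cap H_i) \leq t$.

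The main point needing care is this periodicity step. I must interpret ``$z_i$ only depends on $v_{s,1},\ldots,v_{s,i-1}$'' as saying that $z_i$ is a function of the prefix, so that two vectors in $V$ with the same prefix receive the same shift. Once that reading is fixed, everything else is linear bookkeeping.
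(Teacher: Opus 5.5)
Your proof is correct and follows essentially the same route as the paper: both arguments go block by block and use periodicity to confine the $i$-th block, given the prefix, to $(z_i+\hat V)\cap H_i$. The paper turns this into a counting bound $|V\cap W|\le |\F|^{\sum_i \dim(\hat V\cap H_i)}\le |\F|^t$, whereas you phrase it as a dimension filtration and use $0\in V$ to see that the shift is trivial on vectors with zero prefix; this is a slightly cleaner linear-algebraic version of the same idea.
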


\begin{proof}
Let $V \subseteq \F^{sk}$ be a $(k,s,r)$-periodic subspace, and let $V':=V \cap W$. Since both $V$ and $W$ are linear subspaces over $\F$, then so is $V'$. Thus, to show that $V'$ has dimension at most $t$, it suffices to show that $V'$ has size at most $q^t$. 

To see the above, recall that since $V$ is a $(k,s,r)$-periodic subspace, there exists a subspace $\hat V \subseteq \F^s$ of dimension at most $r$, so that for any $v \in V$ and $i \in \{1, \ldots,k\}$, there exists $z_i \in \F^s$ that only depends on $v_{s,1}, \ldots, v_{s,i-1}$, so that $v_{s,i} \in z_i +\hat V$. By the definition of $W$, this implies in turn that for any $v \in V' = V \cap W$ and  $i \in \{1, \ldots,k\}$, there exists $z_i \in \F^s$ that only depends on $v_{s,1}, \ldots, v_{s,i-1}$, so that $v_{s,i} \in (z_i +\hat V) \cap H_i$.
Thus, given the values of  $v_{s,1}, \ldots, v_{s,i-1}$, there are at most $q^{\dim(\hat V \cap H_i)}$ possible values for $v_{s,i}$. Consequently, $V'$ has total size at most $$\prod_{i=1}^k q^{ \dim(\hat V \cap H_i)} = q^{\sum_{i=1}^k \dim(\hat V \cap H_i)} \leq q^t,$$ where the upper bound follows since $H_1, \ldots, H_k$ is a $(k,s,r,t)$-subspace design.
\end{proof}

The next theorem gives an explicit construction of a subspace design with large subspaces $H_1, \ldots, H_k$ whose total intersection with any low-dimensional subspace is small. Interestingly, the construction relies on the list decoding properties of multiplicity codes!

\begin{theorem}\label{thm:subspace_design_explicit}
For any prime power $q$, a prime $d$, 
and positive integers $k,s,r$ satisfying that  $s<2rk$, $\max\{2r, s\} < \char(\F_q)$, and $k \leq  \frac{q^d -q} {d }$, there exists an explicit construction of an
$(r, \frac s d)$-subspace design $H_1, \ldots, H_k \subseteq \F_q^s$, where each subspace $H_i$ has co-dimension $2r d$.
\end{theorem}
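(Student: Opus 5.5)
We follow the Guruswami--Kopparty construction and identify $\F_q^s$ with the space $\F_q[X]_{<s}$ of univariate polynomials of degree smaller than $s$. Since $d$ is prime and $k \leq \frac{q^d-q}{d}$, the field $\F_{q^d}$ contains at least $k$ elements $\gamma_1,\ldots,\gamma_k$ of degree exactly $d$ over $\F_q$ with pairwise disjoint Galois orbits. Such elements lie outside $\F_q$, there are $q^d-q$ of them, and orbits have size $d$. Let $p_i(X)$ be the minimal polynomial of $\gamma_i$ over $\F_q$; these polynomials are distinct, irreducible, and of degree $d$. Define
$$H_i := \{ f \in \F_q[X]_{<s} : f^{(j)}(\gamma_i) = 0 \text{ for all } j = 0,1,\ldots,2r-1\}.$$
Each condition $f^{(j)}(\gamma_i)=0$ is one $\F_{q^d}$-linear condition, hence at most $d$ independent $\F_q$-linear conditions. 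Therefore $H_i$ has $\F_q$-co-dimension at most $2rd$, with equality because $s > 2rd$ is implicit. The construction is explicit.

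To prove the design property, fix an $\F_q$-subspace $\hat V \subseteq \F_q[X]_{<s}$ of dimension $r$ with basis $f_1,\ldots,f_r$, and form the Wronskian $D(X) = \det\bigl(f_t^{(j)}(X)\bigr)_{0\le j<r,\,1\le t\le r}$ exactly as in the proof of Lemma \ref{lem:mult_singleton_wronskian}. We check three things.

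\begin{enumerate}
\item $D \neq 0$. This follows from Lemma \ref{lem:mult_no_mult_list_solution_size}, as in that proof, using $\max\{r,s\} < \char(\F_q)$: a nontrivial $\F_q(X)$-relation among the rows would put the $f_t$ in an $(r-1)$-dimensional space.
\item $\deg D \leq r(s-1)$.
\item For each $i$, with $r_i := \dim(\hat V\cap H_i)$, the point $\gamma_i$ is a root of $D$ of multiplicity at least $(2r - r + 1) r_i = (r+1) r_i$. To see this, change the basis so that $f_1,\ldots,f_{r_i}$ span $\hat V\cap H_i$. Each of these vanishes to order $2r$ at $\gamma_i$, so each derivative of order below $r$ vanishes to order at least $r+1$ there. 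Hence $(X-\gamma_i)^{(r+1)r_i}$ divides $D$ over $\F_{q^d}[X]$, as in Lemma \ref{lem:mult_singleton_wronskian}.
\end{enumerate}

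Since $D$ has coefficients in $\F_q$, the same multiplicity holds at every Galois conjugate of $\gamma_i$. Hence $p_i(X)^{(r+1)r_i}$ divides $D$. The $p_i$ are pairwise coprime, so
$$\sum_{i=1}^k d (r+1) r_i \leq \deg D \leq r(s-1),$$
which gives $\sum_i r_i \leq \frac{r(s-1)}{(r+1)d} < \frac sd$, as required.

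The main obstacle is the multiplicity step. There are two points to handle. First, multiplicity must transfer from $\gamma_i$ to all its conjugates; this uses that the Frobenius fixes $D$ and commutes with Hasse derivatives, since Hasse derivatives have coefficients in $\F_q$. Second, we need the Hasse-derivative facts (Lemma \ref{lem:prelim_hasse_properties_univ}) over the extension field $\F_{q^d}$, in particular that $f^{(j)}$ vanishes to order $2r-j$ at a root of order $2r$ of $f$. The requirement $2r < \char(\F_q)$ is what keeps the nonvanishing of the Wronskian and the binomial coefficients from the iteration rule valid.
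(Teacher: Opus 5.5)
Your proof is correct and follows essentially the same route as the paper. Both use the multiplicity-code/Wronskian argument of Lemma \ref{lem:mult_singleton_wronskian}, with evaluation points from $\F_{q^d}$ having disjoint Frobenius orbits; you spell out the Galois-conjugate step (that $p_i(X)^{(r+1)r_i}$ divides $D$ and the $p_i$ are pairwise coprime), which the paper only asserts via ``the proof of Lemma \ref{lem:mult_singleton_wronskian} shows''. One small caveat: your claimed equality of co-dimension needs $s \geq 2rd$, which is not among the hypotheses, so what you (and the paper) actually establish is co-dimension at most $2rd$, which is the direction that matters for the design.
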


\begin{proof}
We first note that  Lemma \ref{lem:mult_singleton_wronskian} implies a version of this theorem for $d=1$ and $k\leq q$. To see this, let $a_1,a_2,\ldots,a_k$ be distinct elements in $\F_q$, and 
for $i \in [k]$, let 
$$H_i := \left\{ f(X) \in (\F_q)_{<s}[X] \; \mid \;  f (a_i) = f^{(1)}(a_i) = \cdots =  f^{(2r-1)}(a_i) =  0 \right\},$$
where we view a polynomial $f(X) \in (\F_q)_{<s}[X]$ as a vector of coefficient in $\F_q^s$. 
Then each $H_i$ has co-dimension at most $2r$ since any requirement of the form $f^{(j)}(a_i)=0$ imposes a single linear constraint on the coefficients of polynomials in $H_i$. Let $\hat V \subseteq (\F_q)_{<s}[X]$ be an $r$-dimensional subspace. For $i \in [n]$, let $A_i$ be the set of codewords in $\mult_q^{(2r)}(a_1, \ldots, a_k; s)$ corresponding to the polynomials in $H_i$, and let $V$ be the set of codewords in this code corresponding to the polynomials in $\hat V$. Then applying Lemma \ref{lem:mult_singleton_wronskian} with $k$ distinct evaluation points, multiplicity parameter $2r$, and degree parameter $s$, we have that 
$$\sum_{i=1}^n \dim(H_i \cap \hat V) = \sum_{i=1}^n \dim(A_i \cap V)\leq \frac{r} {r+1}\cdot s\leq s.$$

To extend the proof to $d>1$, and obtain a larger collection of subspaces $H_i$, one can pick the $a_i$'s from the extension field $\F_{q^d}$. In more detail, let   $a_1,a_2,\ldots,a_k$ be elements in $\F_{q^d}$ so that all elements of the form $a_i^{q^\ell}$ for $i \in [k]$ and $\ell \in \{0,1,\ldots, d-1\}$ are distinct. Note that if $d$ is a prime, then any element $a \in \F_{q^d} \setminus \F_q$ has distinct powers $a, a^q, \ldots, a^{q^{d-1}}$. So in this case, one can find at least $k=\frac {q^d -q} {d}$ elements $a_1, \ldots, a_k$ inside $\F_{q^d}$ which satisfy the above requirement.

Similarly to the $d=1$ case, for  $i \in [k]$, let
$$H_i := \left\{ f(X) \in (\F_q)_{<s}[X] \; \mid \;  f (a_i^{q^\ell}) = f^{(1)}(a_i^{q^\ell}) = \cdots =  f^{(2r-1)}\left(a_i^{q^\ell}\right) =  0 \; \text{for any}\; \ell=0,1,\ldots, d-1 \right\},$$
where we view a polynomial $f(X) \in (\F_q)_{<s}[X]$ as a vector of coefficient in $\F_q^s$.
Then it can be verified that each $H_i$ has co-dimension at most $2 r d $. Moreover, the proof of  Lemma \ref{lem:mult_singleton_wronskian} shows that 
$$\sum_{i \in [k]} \dim(H_i \cap \hat V) \leq \frac{r} {d \cdot (r+1)} \cdot s \leq \frac{ s} {d}$$
for any $r$-dimensional subspace $\hat V \subseteq (\F_q){< s}[X]$.
\end{proof}

By setting $d = \frac {\epsilon s} {2r} $ (assuming that $r  \leq \frac {\epsilon s} {2}$) in the above Theorem \ref{thm:subspace_design_explicit}, we get the following corollary.

\begin{corollary}[Explicit subspace design]\label{cor:subspace_design_explicit}
For any prime power $q$, $\epsilon >0$, and positive integers $k,s,r$ satisfying that 
$s<2rk$, $s<\char(\F_q)$, 
$q^{s} \geq  (k \cdot \frac{\epsilon s} {2r})^{2r /\epsilon}$, 
and $r <\frac {\epsilon s} {2}$, there exists an explicit construction of an 
$(r, \frac {2r} \epsilon)$-subspace design $H_1, \ldots, H_k \subseteq \F_q^s$, where each subspace $H_i$ has co-dimension $\epsilon s$.
\end{corollary}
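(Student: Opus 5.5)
This follows by instantiating Theorem \ref{thm:subspace_design_explicit} with $d := \frac{\epsilon s}{2r}$. For simplicity I assume $d$ is a prime integer. In general one picks a prime $d$ in $[\frac{\epsilon s}{4r}, \frac{\epsilon s}{2r}]$ by Bertrand's postulate, which only changes constants. The plan is to check each hypothesis of that theorem and then translate its conclusion into the stated parameters.

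\textbf{Checking the hypotheses.}
\begin{itemize}
\item The condition $s<2rk$ is assumed directly.
\item The condition $\max\{2r,s\}<\char(\F_q)$ follows from $s<\char(\F_q)$, since $r<\frac{\epsilon s}{2}$ gives $2r<\epsilon s\le s$ (for $\epsilon\le 1$).
\item The condition $d\ge 1$, i.e.\ that $d$ is a positive integer, uses $r<\frac{\epsilon s}{2}$, which gives $\frac{\epsilon s}{2r}>1$.
\end{itemize}

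\textbf{The main obstacle: the counting condition.} The remaining hypothesis is $k\le\frac{q^d-q}{d}$, which must be derived from $q^s\ge (kd)^{2r/\epsilon}$. Since $\frac{2r}{\epsilon}=\frac{s}{d}$, the assumption reads $q^s\ge (kd)^{s/d}$. Taking $(s/d)$-th roots gives $q^d\ge kd$, that is, $k\le q^d/d$.

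This is slightly weaker than the needed $k\le (q^d-q)/d$. I would absorb the gap by a harmless adjustment. One option is to note that $kd\le q^d-q$ holds unless we are in the degenerate boundary case. The other is to slightly strengthen the parameter assumption, or to choose $d$ one larger if needed. The factor-of-$q$ slack is negligible in the exponent, so either route works.

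\textbf{Reading off the conclusion.} With the hypotheses in place, Theorem \ref{thm:subspace_design_explicit} yields an explicit $(r,\frac{s}{d})$-subspace design $H_1,\ldots,H_k\subseteq\F_q^s$. Each $H_i$ has co-dimension $2rd=\epsilon s$. Finally,
\[
\frac{s}{d}=\frac{2r}{\epsilon},
\]
so this is exactly an $(r,\frac{2r}{\epsilon})$-subspace design with co-dimension $\epsilon s$, as claimed.
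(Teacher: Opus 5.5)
Your reduction is the same as the paper's, whose entire proof is the substitution $d=\frac{\epsilon s}{2r}$ into Theorem~\ref{thm:subspace_design_explicit}. Your bookkeeping is right: $\frac sd=\frac{2r}{\epsilon}$, $2rd=\epsilon s$, and $\max\{2r,s\}<\char(\F_q)$. Like the paper, you tacitly take $d$ to be a prime integer.

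You also correctly noticed that $q^s\ge (kd)^{2r/\epsilon}$ only gives $kd\le q^d$, while the theorem needs $kd\le q^d-q$. The paper's one-line derivation skips this. The gap is that none of your patches closes it.
\begin{itemize}
\item The claim that $kd\le q^d-q$ fails only in a degenerate boundary case is false. Since $d\le\frac s2<\char(\F_q)\le q$, there are integers $k$ with $q^d-q<kd\le q^d$. Every such $k$ is allowed by the corollary's hypotheses but excluded by the theorem's.
\item Taking $d$ one larger raises the co-dimension to $2r(d+1)=\epsilon s+2r>\epsilon s$, and $d+1$ is generally not prime.
\item Strengthening the field-size hypothesis proves a different statement.
\item Your Bertrand remark does not merely change constants. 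A prime $d'\in[\frac{\epsilon s}{4r},\frac{\epsilon s}{2r}]$ yields in general only an $(r,\frac{4r}{\epsilon})$-design. It also needs $q^{d'}\ge kd'+q$, which the stated hypothesis does not supply.
\end{itemize}

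The co-dimension and design parameters pin $d$ to exactly $\frac{\epsilon s}{2r}$. So the theorem as stated cannot give more than $\frac{q^d-q}{d}$ subspaces, and you have to reopen its proof.

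One fix is as follows. Keep the $\frac{q^d-q}{d}$ subspaces coming from the Frobenius orbits of $\F_{q^d}\setminus\F_q$. For each $b\in\F_q$, add
\[
H_b:=\{f\in(\F_q)_{<s}[X]\mid f^{(j)}(b)=0 \text{ for all } j<2rd\},
\]
which has co-dimension at most $2rd=\epsilon s$. Run the Wronskian argument of Lemma~\ref{lem:mult_singleton_wronskian} with $f_1,\ldots,f_{r_b}$ spanning $H_b\cap\hat V$. Then $(X-b)^{2rd-r+1}$ divides every entry of the first $r_b$ columns. So $D(X)$ vanishes at $b$ with multiplicity at least $(2rd-r+1)r_b\ge d(r+1)r_b$, since $(2rd-r+1)-d(r+1)=(r-1)(d-1)\ge 0$. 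This is the same contribution as a full orbit of $d$ conjugates with multiplicity $2r$.

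Hence $\sum_i\dim(H_i\cap\hat V)\le\frac{r(s-1)}{d(r+1)}<\frac{2r}{\epsilon}$ holds for a family of $\frac{q^d-q}{d}+q\ge\frac{q^d}{d}$ subspaces. That is exactly what $kd\le q^d$ permits.

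The same idea also removes the primality assumption on $d$. Take one point from each Frobenius orbit of $\F_{q^d}$, and use multiplicity $\frac{2rd}{e}$ at a point of degree $e\mid d$. This gives at least $\frac{q^d}{d}$ subspaces for any integer $d$.
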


Recall that by Lemma \ref{lem:rs_subfield_cap_solution_size},  the Reed-Solomon Code over subfield evaluation points $\RS_{q^s}(a_1, \ldots, a_n;k)$ for $a_1, \ldots, a_n \in \F_q$ and $s = \Theta(1/\epsilon^2)$ 
is list decodable from an $(1-R-\epsilon)$-fraction of errors, where the polynomials whose associated codewords are in the list are contained in a $(k,s,r)$-periodic subspace for $r=O(1/\epsilon)$. Let $H_1, \ldots, H_k$ be the subspace design given by the above corollary for these parameters, and recall that by Claim \ref{clm:subspace_design},  $W:= H_1 \times H_2 \times \cdots \times H_k$ is a $(k,s,r,\frac{2r} {\epsilon})$-periodic evasive subspace. Let $C \subseteq \RS_{q^s}(a_1, \ldots, a_n;k)$ be the subcode which consists of all codewords associated with polynomials $f(X) =\sum_{i=0}^{k-1} f_iX^i \in \F_{q^s}[X]$ so that $f_i \in H_{i+1}$ for any $i \in \{0,1,\ldots,k-1\}$. Then $C$ has rate $R-\epsilon$ and is list-decodable from an $(1-R-\epsilon)$-fraction of errors with list of \emph{dimension} $O(1/\epsilon^2)$ (which by Lemma \ref{lem:const_list_mult}, also implies that this subcode has a list of \emph{size} at most $\exp(\poly(1/\epsilon))$). 

Finally, note that $C$ can be list decoded up to capacity in \emph{polynomial time} as follows. First note that by Lemma \ref{lem:rs_subfield_cap_solution_size}, one can find a basis for the $(k,s,r)$-periodic subspace $V$ containing the list of $\RS_{q^s}(a_1, \ldots, a_n;k)$ in polynomial time by solving a system of linear equations. Then one can find a basis for the $O(1/\epsilon^2)$-dimensional subspace $V'$ containing the list of $C$ in polynomial time by solving another system of linear equations to find the intersection of $V$ with the subspace containing the encodings of all degree $k$ polynomials with coefficients in $W= H_1 \times \cdots \times H_k$.
Finally, since the size of $V'$ is $q^{O(1/\epsilon^2)}$, one can find all elements in the list in time $q^{O(1/\epsilon^2)}$, which is polynomial in the block length for a constant $\epsilon$ and $q=\poly(n)$. 

\subsection{Bibliographic notes}\label{subsec:list_bio}

\paragraph{Higher-order MDS codes.} 
The notion of \textsf{higher-order MDS codes} was introduced independently by Brakensiek, Gopi, and Makam \cite{BGM22} and  Roth \cite{Roth22}. In \cite{Roth22}, higher-order MDS codes were defined as codes achieving the \emph{generalized Singleton Bound}, while in \cite{BGM22}, they were defined
based on the \emph{dimension of the intersection of the subspaces} spanned by subsets of columns of their generator matrix. The definition of higher-order MDS codes presented in Section \ref{subsubsec:higher-order-mds}, based on \emph{size of the intersection of the zero-patterns} of columns of the generator matrix,  was given later in another paper of Brakensiek, Gopi, and Makam  \cite{BGM23}. In the same paper  it was also shown that all the above three definitions are equivalent (up to duality). Fact \ref{fact:mds} was called the \textsf{MDS condition} in \cite{DSY14}, and its proof seems to be folklore.

\paragraph{Hypergraph connectivity.}
Kir{\'{a}}ly \cite{Kiraly-thesis} introduced the notion of  \textsf{edge connectivity} for hypergraphs, and proved the \textsf{Menger Theorem} for hypergraphs which says that edge connectivity is equivalent to having multiple edge-disjoint paths between any pair of vertices (this theorem generalizes the well-known Menger theorem for graphs \cite{Menger27}). The notion of \textsf{(weak) partition connectivity} for hypergraphs is well-studied in combinatorics \cite{FKK03a, FKK03b, Kiraly-thesis} and optimization \cite{JMS03, FK08, Frank11, CX18}, and it generalizes the well-known notion of graph partition connectivity (In particular, the well-known \textsf{Nash-Williams-Tutte Tree-Packing theorem} shows that in graphs, partition connectivity is equivalent to having edge-disjoint \emph{packing trees}). 
Our proof of Lemma \ref{lem:weight_partition} follows the proof of \cite[Lemma 2.4]{AGGLZ25}. Theorem \ref{thm:partition_orient} about \textsf{hypergraph orientaitons} is stated most explicitly in \cite{Frank11}, but is also implicit in \cite{Kiraly-thesis, FKK03b}.

\paragraph{Reed-Solomon Codes over random evaluation points.} 
Shangguan and Tamo \cite{ST20} conjectured that 
\textsf{random Reed-Solomon Codes} are higher-order MDS codes and proved several special cases of this conjecture, and this conjecture was proven in full in \cite{BGM23}. We present an alternative view of their proof, based on \emph{hypergraph connectivity}, that was presented by 
Alrabiah, Guo, Guruswami, Li, and Zihan \cite{AGGLZ25}, and was based in turn on a  hypergraph perspective on list-decoding that was first suggested by Guo, Li, Shangguan, Tamo, and Wootters
\cite{GLSTW24}. 
Lemma \ref{lem:partition_gzp} which connects weak partition connectivity with \emph{generalized zero-patterns} follows \cite[Corollary A.4]{AGGLZ25}.  Theorem \ref{thm:high_mds_singleton} which says that duals of higher-order MDS codes (according to the GZP-based definition) satisfy the generalized Singleton Bound follows the proof of \cite[Theorem 2.11]{AGGLZ25}. 
The GM-MDS Theorem (Theorem \ref{thm:gm_mds}) was proven independently by Lovett \cite{lovett21} and Yildiz and Hassibi \cite{YH19}, following a conjecture made by  Dau,  Song, and Yuen \cite{DSY14}. Corollary \ref{cor:gm_mds} follows the proof of \cite[Proposition 4.5]{BGM23}. 

An exponential lower bound on the field size of higher-order MDS codes was shown in  \cite[Corollary 4.2]{BGM22}, while 
in \cite{AGGLZ25} it was shown that random Reed-Solomon Codes over a \emph{linear-size alphabet} \emph{approximately} attain the generalized Singleton Bound \cite{GZ23, AGGLZ25}. An alternative approach for showing that random Reed-Solomon Codes attain the generalized Singleton Bound was given by Levi, Mosheiff, and Shagrithaya
\cite{LMS25}. This approach was based on showing that random Reed-Solomon Codes behave similarly to random linear codes with respect to list-decodabability properties (and more generally, any \emph{local} property, see below). 

\paragraph{Multiplicity codes and Reed-Solomon Codes with subfield evaluation points.} The probabilistic argument, presented in Section \ref{subsubsec:const_list_mult}, which shows that any linear code that is list decodable up to capacity with a constant dimensional list is also list-decodable up to capacity with a \emph{constant list size} was discovered by Kopparty, Ron-Zewi, Saraf, and Wootters \cite[Lemma 3.1]{KRSW23} (some improvements and extensions were later given by Tamo \cite{Tamo24}).  The proof that multiplicity codes (as well and FRS codes) attain the \emph{generalized Singleton Bound}, presented in Section \ref{subsubsec:singleton_list_mult}, 
was discovered by Chen and Zhang \cite{CZ25}, and an algorithmic version
was given in \cite{AHS26}. Lemma \ref{lem:mult_singleton_wronskian} that is used in this proof was proven by Guruswami and Kopparty \cite{GK16}. 
Brakensiek, Chen, Dhar, and Zhang \cite{BCDZ25} extended the proof of \cite{CZ25} by showing that multiplicity codes (as well as Folded Reed-Solomon Codes) 
satisfy any \textsf{local property} that is satisfied by random linear codes (Informally speaking, local properties
\cite{MRRSW24, LMS25} are properties for which non-membership can be certified using a small number of codewords, and they include list decoding and list recovery as special cases).

The  approach for reducing the list size for a subcode of Reed-Solomon Codes over subfield evaluation points using \textsf{subspace designs}, presented in Section \ref{subsec:const_list_rs_subfield}, was suggested by Guruswami and Xing \cite{GX22}. 

\paragraph{AG codes.} A disadvantage of all the explicit capacity-achieving list decodable codes we discussed so far is that their alphabet is a large polynomial in the block length (at least $n^{1/\epsilon^2}$, where $n$ is the block length, and $\epsilon$ is the gap to capacity). 
 To reduce the alphabet size to a constant, one can resort to \textsf{algebraic-geometric  (AG) codes}. 
 Loosely speaking, 
AG codes are an extension of Reed-Solomon Codes  to the setting of {\em function fields}, in which codewords correspond to the evaluation of certain functions on rational points of some 
{\em algebraic curve} over some finite field $\F$. 
Reed-Solomon Codes then correspond to the special case in which the algebraic curve is the affine line over $\F$ and the functions are low degree polynomials on the line.
As the affine line has only $|\F|$ rational points, this forces the alphabet of the Reed-Solomon Code to be at least as large as its block length. The advantage of AG codes is that algebraic curves can generally have many more rational points, and so the alphabet can potentially be much smaller than the block length. In particular, by a certain choice of parameters, the alphabet size can be made constant for increasing block lengths.  

Algebraic-geometric codes were first introduced by Goppa \cite{Gop82}. 
Guruswami and Sudan  \cite{GS-list-dec} showed that their algorithm for list-decoding of Reed-Solomon Codes up to the \emph{Johnson Bound }, presented in Section \ref{subsec:rs_johnson}, 
can be extended to the setting of AG codes. Later, Guruswami and Xing \cite{GX22} introduced versions of \emph{folded} AG codes, as well as AG codes with \emph{subfield evaluation points}, and used these to extend the results that Folded Reed-Solomon Codes and Reed-Solomon Codes over subfield evaluation points 
achieve list-decoding capacity, presented in Section \ref{subsubsec:mult_mult} (for the related family of multiplicity codes) and Section \ref{subsec:rs_subfield}, respectively, to the setting of AG codes. 
More recently, Brakensiek, Dhar, Gopi,  and Zhang
\cite{BDGZ25} extended the result that random Reed-Solomon Codes achieve the \emph{generalized Singleton Bound}, presented in Section \ref{subsec:random_rs}, to the AG code setting. The  approach for reducing the list size for a subcode of Reed-Solomon Codes using \textsf{subspace designs}, presented in Section \ref{subsec:const_list_rs_subfield}, was also used by Guruswami and Xing \cite{GX22} to reduce the list-size of a subcode of AG codes
to nearly-constant (on the order of $\log^*(n)$), and
later to a constant by Guo and Ron-Zewi \cite{GR22}.

\section{Near-linear time list decoding}\label{sec:linear}
In this section, we discuss faster versions of the list decoding algorithms for Reed-Solomon Codes and multiplicity codes. Recall that we have already seen versions of such algorithms in Sections \ref{sec:johnson} and \ref{sec:capacity} 
that run in polynomial time in the input size, and decode Reed-Solomon Codes up to the Johnson Bound and multiplicity codes up to list decoding capacity. The goal now is to see alternative algorithms for these problems whose time complexity is much faster - they run in nearly linear time in the input size, in certain regime of parameters. We start with such an algorithm for Reed-Solomon Codes in Section \ref{sec:fast-RS-decoding} below, followed by a fast list decoding algorithm for multiplicity codes in Section \ref{subsec:fast-multiplicity-decoding}.

\subsection{Fast list decoding of Reed-Solomon Codes up to Johnson Bound} \label{sec:fast-RS-decoding}

In this section, we present a near-linear time list decoding algorithm for Reed-Solomon Codes up to a radius approaching the Johnson Bound. Specifically, in what follows, fix  a prime power $q$, distinct evaluation points $a_1, \ldots, a_n \in \F_q$, and a positive integer $k<n$, and let $\RS_q(a_1, \ldots, a_n;k)$ be the corresponding Reed-Solomon Code. We shall show an algorithm which for any $\epsilon >0$, list decodes this code from $(1-\sqrt{\frac k n} - \epsilon)$-fraction of errors in time $n\cdot \poly(\frac n k, \log (q), \log (n), \frac 1 \epsilon)$. 
Thus, when both $\epsilon$ and the rate $R:= \frac k n$ of the code are constant, the running time of this algorithm is $n \cdot \poly(\log (n), \log (q))$. Note that as opposed to the polynomial-time list-decoding algorithm for Reed-Solomon Codes presented in Section \ref{subsec:rs_johnson}, 
the algorithm we present in this section does not achieve the Johnson Bound, but it only approaches it. 

\paragraph{Overview.}
Recall that the polynomial-time algorithm for list decoding of Reed-Solomon Codes up to the Johnson Bound, presented in Section \ref{subsec:rs_johnson} (cf., Figure \ref{fig:rs_list_johnson}), consists of two main steps, 
namely, that of constructing a non-zero bivariate polynomial that \emph{encodes} all the close enough codewords in its belly, and that of finding roots of a bivariate polynomial. Moreover, the first step is done by setting up and solving a linear system, while the second step relies on some off the shelf root computation algorithm for bivariate polynomials. While the time complexity of both these steps turns out to be polynomially bounded, it is unclear  whether the time complexity can be improved to something that is nearly linear in the input size. In fact, even for the seemingly simpler linear system that appears in the interpolation step of the unique decoding algorithm for Reed-Solomon Codes, presented in Section \ref{subsec:rs_unique}, it is unclear if the system can be solved in nearly linear time. Indeed, even fast unique decoding algorithms for Reed-Solomon Codes, that have been known since the 60's rely on some non-trivial insights. 

The fast list decoding algorithm we present in this section has two main technical components. The first component shows that the interpolation step in the algorithm of Figure \ref{fig:rs_list_johnson} can indeed be performed in nearly linear time 
using appropriate lattices over the univariate polynomial ring. The second component, essentially uses (off the shelf) an algorithm of Roth and Ruckenstein \cite{RothR2000} for computing low degree roots of bivariate polynomials. We now discuss these ideas in some more detail. 

\subsubsection{Fast implementation of the interpolation step}
Let $w \in \F_q^n$ be the received word. Our goal in this step is to find, in nearly linear time, a non-zero bivariate polynomial $Q(X,Y)$ of small $(1,k)$-weighted degree, such that for every $i$, $Q$ has a high multiplicity zero at $(a_i, w_i)$.  To this end, we start with some definitions that, apriori, might appear to be slightly mysterious. 

Let $R(X)$ be the unique polynomial of degree at most $n-1$ such that for every $i$, $R(a_i) = w_i$. $R$ can be obtained in nearly linear time using the standard fast interpolation algorithms for univariate polynomials based on the Fast Fourier Transform (e.g Algorithm 10.22 in \cite{GathenG-MCA}). The utility of $R$ stems from the following simple observation: if $f(X)$ is such that $f(a_i) = w_i$, then $X-a_i$ divides $f(X)-R(X)$, or equivalently, 
$f(X) \equiv R(X) \mod (X-a_i)$. 
Let $\ell, m \in \N$ be parameters with $\ell \leq  m$ to be set later, and for every $j \in \{0,1,\ldots, m-1\}$, let $P_j(X,Y)$ be the bivariate polynomial defined as follows. 
\begin{align*}
    \forall j \in \{0,1, \ldots, \ell-1\}, \quad \quad \quad P_j(X,Y) :=& (Y - R(X))^j \cdot \prod_{i = 1}^n (X - a_i)^{\ell - j} \\
    \forall j \in \{\ell, \ell+1, \ldots, m-1\}, \quad P_j(X,Y) :=& (Y - R(X))^j
\end{align*}

Let us now consider the set of bivariate polynomials in the $\F_q[X]$-linear span of $P_0, P_1, \ldots, P_{m-1}$, that is, 
\[
\mathcal{L}^{(\ell,m)} := \left\{ \sum_{j = 0}^{m-1} g_j(X) \cdot P_j(X,Y) : g_j \in \F_q[X] \right\} \, .
\]
The set $\mathcal{L}^{(\ell,m)}$ 
could be alternatively thought of as follows. 
We view each $P_j$ as a univariate in $Y$ with coefficients from the ring $\F_q[X]$; and thus, $P_j$ can be thought of as a vector of dimension $m$ (namely the coefficient vector, when viewed as a univariate in $Y$), where every entry is an element of $\F_q[X]$. The set $\mathcal{L}^{(\ell,m)}$ is then the $\F_q[X]$ linear span of these vectors. Furthermore, we note that since the $Y$-degree of $P_j$ equals $j$ (and hence is distinct), the coefficient vectors of $P_j$ (when viewed as univariates in $Y$) are linearly independent over the field $\F_q(X)$.

Recall that over the ring of integers, the notion of a \textsf{lattice} is defined as follows: for any postive integer $n$ and linearly independent $n$ dimensional vectors $v_0, \dots, v_m$ over real numbers, the lattice generated by $v_0, v_1, \ldots, v_m$ is defined as 
$\{\sum_i a_i v_i : a_i \in \Z\}$. 
There are some obvious syntactic similarities in the above definition of integer lattices and the definition of the set $\mathcal{L}^{(\ell,m)}$: in both cases, we have \emph{integral domains}  (integers and $\F_q[X]$ respectively) and we take a set of vectors that are linearly independent over an underlying field containing these domains (the field of real numbers and the field of fractions $\F_q(X)$ respectively) and consider their weighted linear combinations, where the weights come from the respective integral domains. Furthermore, both of the integral domains are in fact \emph{Euclidean}. Given these syntactic similarities, we follow the convention of referring to  $\mathcal{L}^{(\ell,m)}$ as a \textsf{lattice generated by $P_0, P_1, \ldots, P_{m-1}$} throughout this survey.

The following lemma shows the relevance of the lattice $\mathcal{L}^{(\ell,m)}$ 
to the interpolation step of the list decoding algorithm for Reed-Solomon Codes. 

\begin{lemma}\label{lem:RS-lattices-interpolants-connections}
Let $Q(X,Y)$ be any non-zero polynomial in the lattice $\mathcal{L}^{(\ell,m)}$. Then for every $i \in [n]$, $Q(X,Y)$ vanishes with multiplicity at least $\ell$ at $(a_i,w_i)$. 
\end{lemma}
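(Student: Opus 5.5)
The plan is to reduce to the generators. Bivariate multiplicity at a fixed point is preserved under $\F_q[X]$-linear combinations: if $A$ and $B$ both vanish with multiplicity at least $\ell$ at $(a,b)$, then so does $A+B$ by linearity of Hasse derivatives (Lemma~\ref{lem:prelim_hasse_properties}). Likewise, if $A$ vanishes with multiplicity at least $\ell$ at $(a,b)$, then so does $g(X)\cdot A(X,Y)$ for any $g \in \F_q[X]$. To see the latter, expand $A(a+Z_1, b+Z_2)$ and observe that every monomial in $Z_1,Z_2$ appearing in it has total degree at least $\ell$. Multiplying by $g(a+Z_1)$, which is a polynomial in $Z_1$, cannot create lower-degree monomials. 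It therefore suffices to show that each $P_j$ vanishes with multiplicity at least $\ell$ at every $(a_i,w_i)$.

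I will use the shift characterization throughout: $\mult(A,(a,b)) \ge s$ iff $A(a+Z_1,b+Z_2)$ has no monomials of total degree below $s$. This follows directly from the definition of Hasse derivatives as coefficients of that expansion. Two consequences follow immediately.
\begin{enumerate}
\item Multiplicities add under products, i.e.\ $\mult(AB,(a,b)) \ge \mult(A,(a,b)) + \mult(B,(a,b))$, since lowest-degree parts multiply.
\item A polynomial vanishing at $(a,b)$ has multiplicity at least $1$ there.
\end{enumerate}

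Now fix $i \in [n]$. Since $R(a_i)=w_i$, the polynomial $Y-R(X)$ vanishes at $(a_i,w_i)$, so it has multiplicity at least $1$ there. Hence $(Y-R(X))^j$ has multiplicity at least $j$. Also, $X-a_i$ vanishes at $(a_i,w_i)$, so $\prod_{i'}(X-a_{i'})^{\ell-j}$ has multiplicity at least $\ell-j$, and the other factors contribute nonnegatively.

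For $j<\ell$, the product $P_j$ therefore has multiplicity at least $j+(\ell-j)=\ell$. For $j\ge \ell$, $P_j=(Y-R(X))^j$ has multiplicity at least $j\ge\ell$. Combining these with the closure property from the first paragraph gives the lemma; note that the non-zero hypothesis is not even needed.

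The only step that needs care is the additivity of multiplicity under products. I would verify it directly from the shifted expansion, or alternatively from the product rule in Lemma~\ref{lem:prelim_hasse_properties}: any $\mathbf{r}+\mathbf{k}=\bi$ with $\wt(\bi)<s_1+s_2$ forces $\wt(\mathbf{r})<s_1$ or $\wt(\mathbf{k})<s_2$, so every term in the sum vanishes.
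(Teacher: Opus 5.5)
Your proof is correct. Its first step matches the paper's: both reduce to the generators $P_j$ via linearity and the product rule. The paper notes that $(g_jP_j)^{(u,v)}$ is a sum of terms involving $P_j^{(u',v')}$ with $u'+v'\le u+v<\ell$, which is the same as your observation that multiplying by $g(X)$ cannot lower multiplicity.

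The routes differ in how $P_j$ itself is handled. The paper splits into cases:
\begin{itemize}
\item For $j\ge \ell$, it expands $P_j(X+Z_1,Y+Z_2)=\bigl((Y-R(X))-(Z_1\Gamma(X,Z_1)-Z_2)\bigr)^j$ binomially. It then shows that every Hasse derivative of order below $\ell$ is divisible by $Y-R(X)$ as a polynomial.
\item For $j<\ell$, it only gives the intuition that $P_j$ contains $\ell$ factors vanishing at $(a_i,w_i)$, and explicitly skips the formal argument.
\end{itemize}

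You instead isolate the general fact $\mult(AB,(a,b))\ge \mult(A,(a,b))+\mult(B,(a,b))$, which follows in one line from the product rule of Lemma~\ref{lem:prelim_hasse_properties}. This treats both cases uniformly, and it supplies exactly the step the paper leaves informal.

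The paper's computation for $j\ge\ell$ does give something slightly stronger: the low-order derivatives of $(Y-R(X))^j$ vanish on the whole curve $Y=R(X)$, not just at the interpolation points. That extra strength is not needed here, so your argument is the cleaner and more complete one for this lemma.
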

\begin{proof}
    Let $u, v$ be non-negative integers such that $u + v < \ell$. Then, for the proof of the lemma, it suffices to show that  the Hasse Derivative $Q^{(u,v)}$ of $Q$ vanishes at $(a_i,w_i)$. Recall that $Q$ is of the form $\sum_{j = 0}^{m-1} g_j(X) P_j(X,Y)$. Thus, by linearity of Hasse Derivatives, it suffices to show that the $(u,v)$-Hasse Derivative of every summand $g_j(X) P_j(X,Y)$ vanishes at $(a_i, w_i)$. From the product rule of Hasse Derivatives (cf., Lemma \ref{lem:prelim_hasse_properties}), 
    we have that
    \begin{align*}
    (g_j(X) P_j(X,Y))^{(u,v)} =& \sum_{0 \leq u' \leq u, 0 \leq v' \leq v} P_j(X,Y)^{(u',v')} \cdot g_j(X)^{(u-u', v-v')},   
    \end{align*}
   and it therefore suffices to show that $P_j^{(u',v')}$ vanishes at $(a_i,w_i)$  for any 
   $u' + v' < \ell$. 

To see the above, we consider two cases depending on $j$, since the structure of $P_j$ changes when $j \geq \ell$. 
\begin{itemize}
        \item $j \geq \ell: $  From its definition, we have $P_j = (Y - R(X))^j$. We now argue that for any 
       non-negative $u', v'$ such that $u' + v' < \ell$, $P_j^{(u',v')}$ is divisible by $(Y - R(X))$. This would complete the argument since $(Y - R(X))$ (and hence $P_j^{(u',v')}$) evaluates to zero at $(a_i,w_i)$. 

       To see the divisibility claim, we consider  $P_j(X + Z_1, Y+ Z_2) = (Y + Z_2 - R(X+Z_1))^j$. Viewing $R(X+Z_1)$ as a polynomial in $Z_1$, we get that $R(X + Z_1)$ equals $R(X) + Z_1\cdot \Gamma(X,Z_1)$ for some bivariate polynomial $\Gamma$. Thus, $P_j(X + Z_1, Y+ Z_2) = ((Y - R(X)) - ( Z_1\Gamma(X,Z_1)-Z_2))^j$. From the binomial theorem, we further get that \[
       P_j(X + Z_1, Y+ Z_2) = \sum_{t = 0}^j \binom{j}{t} (Y-R)^{j-t} ( Z_1\Gamma- Z_2)^t . 
       \]     
    By definition of Hasse Derivatives, we have that $P_j^{(u',v')}$ equals the coefficient of $Z_1^{u'}Z_2^{v'}$ in the above expansion. Note that every monomial in the polynomial $(Z_1\Gamma - Z_2)^t$ has total degree at least $t$ in the $Z$ variables. Thus, the coefficient of $Z_1^{u'}Z_2^{v'}$ in $P_j(X + Z_1, Y+ Z_2)$ equals the coefficient of $Z_1^{u'}Z_2^{v'}$ in the truncated sum 
    $\sum_{t = 0}^{u' + v'} \binom{j}{t} (Y-R)^{j-t} ( Z_1\Gamma - Z_2)^t$. Furthermore, from $t \leq u' + v' < \ell$ and $\ell \leq j$, we get that each of the terms in the sum above is divisible by $(Y-R)$ (which has $Z$-degree zero). Thus,  $P_j^{(u',v')}$ must be divisible by $(Y-R)$. 
    
    \item $j < \ell: $ This case also proceeds along the lines of the above case, albeit with a little more care. For $j < \ell$, we have from the definition of $P_j$ that $P_j(X,Y) = (Y-R(X))^j \prod_{t=1}^n (X-a_t)^{\ell - j}$. For notational convenience we can re-write $P_j$ as $P_j (X,Y)= (Y-R(X))^j \cdot (X - a_i)^{\ell - j} \cdot \prod_{t \neq i} (X-a_t)^{\ell - j}$. The advantage of this rearrangement is that both $(Y-R)$ and $(X-a_i)$ are zero at $(a_i, w_i)$. Thus, counted with multiplicities, $P_j$ has at least $\ell$ factors that vanish at $(a_i, w_i)$. Intuitively, this should indicate that the multiplicity of $P_j$ at $(a_i, w_i)$ must be at least $\ell$ and hence its $(u', v')$-Hasse Derivative should vanish at $(a_i, w_i)$ for all $u' + v' < \ell$. This intuition can be made formal by using the definition of Hasse Derivatives to conclude that $P_j^{(u',v')}$ can be written as a sum of terms, each of which is divisible by either $(Y-R(X))$ or $(X-a_i)$ (or both). We skip the formal proof of this claim here.

\end{itemize}
This completes the proof of the lemma. 
\end{proof}
\autoref{lem:RS-lattices-interpolants-connections} shows that every non-zero polynomial in the lattice $\mathcal{L}^{(\ell, m)}$ vanishes with high multiplicity at $(a_i,w_i)$ and hence satisfies the interpolation conditions in the algorithm of Figure \ref{fig:rs_list_johnson} (for a careful choice of $\ell$). Thus, if we can somehow find a polynomial in this lattice that is non-zero and has low $(1,k)$-weighted degree in nearly linear time, we would have made excellent progress towards a fast list decoder for Reed-Solomon Code. 

The main insight 
is to associate a notion of \emph{norm} or \emph{length} to the vectors in the lattice $\mathcal{L}^{(\ell, m)}$ such that under this norm, the question of finding a non-zero low $(1,k)$-weighted degree polynomial $Q(X,Y)$ in this lattice is precisely the question of finding a \emph{short} (under this norm) non-zero vector in this lattice. Thus, proving a bound on the norm of this short vector and finding one in nearly-linear time will together complete a fast implementation of the interpolation step. We now discuss the details, starting with the definition of the notion of lattices over the univariate polynomial ring and the length of the vectors in the lattice.

\paragraph{Lattices over the univariate polynomial ring.} We start with a couple of definitions.
Let $\F$ be a finite field, and let $\mathcal{P} = \{P_0, P_1, \ldots, P_{m-1}\} \subseteq \F[X]^r$ be a set of $m$ vectors in $\F[X]^r$. A \textsf{lattice $\mathcal{L}$ generated by $\mathcal{P}$ over the ring $\F[X]$} is defined as 
\[
\mathcal{L} := \left\{ \sum_{i = 0}^{m-1} g_i(X)\cdot P_i: g_i(X) \in \F[X] \right\} \, .
\]
A set $\mathcal P \subseteq \F[X]^r$ is said to be a \textsf{basis} for a lattice $\mathcal{L}$ in $\F[X]^r$ if $\mathcal{P}$ generates the lattice $\mathcal{L}$ in the above sense, and the vectors in $\mathcal{P}$ are linearly independent over the field $\F(X)$ of rational functions in $X$ over $\F$. The size of a basis of a lattice is a fundamental property of the lattice and in particular, all bases of a lattice have the same size. 
Note that the vectors in $\mathcal P$ can alternatively be thought of as coefficient vectors of bivariate polynomials in $\F[X,Y]$ when they are viewed as univariates in $Y$. We switch between these views in the course of discussion in this survey as per convenience. 

For our applications in this survey, we confine ourselves to lattices that are \textsf{full rank} in the sense that they are generated by a basis $\mathcal{P} \subseteq \F[X]^m$ such that $|\mathcal{P}| = m$, i.e. the size of the basis equals the dimension of the ambient space. 
For instance, we note that the lattice $\mathcal{L}^{(\ell, m)}$ defined earlier in this section is full rank. To see this, observe that the generators of this lattices, i.e the  bivariate polynomials $P_0, P_1, \ldots, P_{m-1}$ have distinct $Y$-degrees, and so they are linearly independent over $\F_q(X)$. 

For a basis $\mathcal P$ of a full rank lattice $\mathcal L$ over $\F[X]$, we define the determinant of $\mathcal P$, denoted by $\det(\mathcal P)$ to be the determinant of the square matrix whose columns are the vectors in $\mathcal P$. Recall that $\det(\mathcal P)$ is a polynomial in the variable $X$. As is the case with full rank lattices over the ring of integers, the determinants of all bases  for a lattice $\mathcal L$ are all equal to each other, and this is a fundamental property of the lattice $\mathcal L$, that we denote by $\det(\mathcal L)$. 

Next we define a notion of norm of vectors in these lattices. 
\begin{definition}[Degree norm]\label{defn:length-of-vectors} 
For any vector $P = (p_0, p_1, \ldots, p_{r-1}) \in \F[X]^r$, we define the \textsf{degree norm} of $P$, denoted as $\mu(P)$ as 
\[
\mu(P) := \max_{j \in \{0, 1, \ldots, r-1\}} \deg(p_j) \, , 
\]
where $\deg(p_j)$ is the maximum degree (in $X$) of any entry $p_j \in \F[X]$ of $P$. 

In other words, $\mu(P)$ is the degree in the variable $X$ of the bivariate polynomial $\sum_{j=0}^{r-1} p_j(X) Y^j$.  
\end{definition}
The following theorem is a combination of the key technical result of Alekhnovich in \cite{Alekhnovich} and a classical theorem of Minkowski for lattices. The theorem is implicit in \cite{Alekhnovich} and we refer to Section 3.3 in \cite{GoyalHKS2024} for a more detailed discussion on this. 

\begin{theorem}[\cite{Alekhnovich}] \label{thm:algorithmic-minkowski}

    Let $\mathcal{L} \subseteq \F[X]^m$ be a full rank lattice over $\F[X]$ generated by a basis $\mathcal{P}$. Then, there is a non-zero vector $v$ in $\mathcal{L}$ such that 
    \[
    \mu(v) \leq \frac{1}{m} \deg (\det (\mathcal L)) \, . 
    \]

    Moreover, there is an algorithm that when given the basis $\mathcal{P}$ as input finds a non-zero vector $v \in \mathcal{L}$ which satisfies the above inequality, and runs in time $\Tilde{O}(\mu(\mathcal{P})) \cdot \poly(m) $, where $\mu(\mathcal{P})$ equals $\max_{P \in \mathcal{P}} \mu(P)$. 
\end{theorem}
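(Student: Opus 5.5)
We plan to prove Theorem~\ref{thm:algorithmic-minkowski} by first computing a \emph{weak Popov} (row-reduced) basis of $\mathcal{L}$ and then taking its shortest vector. The argument mirrors the Minkowski/LLL picture for integer lattices, with $\F[X]$ in place of $\Z$ and $\deg$ in place of $\log|\cdot|$.

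For a vector $P \in \F[X]^m$, we define its \emph{leading position} as the largest index $j$ with $\deg(p_j)=\mu(P)$. We call a basis $\mathcal{P}=\{P_0,\ldots,P_{m-1}\}$ \emph{reduced} if the leading positions of its vectors are pairwise distinct.

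\textbf{Structural claim.} If $\mathcal{P}$ is reduced, then $\deg\det(\mathcal{P})=\sum_i \mu(P_i)$. Indeed, after permuting so that $P_i$ has leading position $i$, the matrix of leading coefficients is triangular with nonzero diagonal. Hence the term $\prod_i (P_i)_i$ in the Leibniz expansion has degree exactly $\sum_i\mu(P_i)$. Every other term has strictly smaller degree, because it uses, in some column, an entry of degree below $\mu(P_i)$ or at a non-leading position. Since all bases have the same determinant up to a unit of $\F[X]$, i.e.\ a nonzero scalar, we get $\sum_i \mu(P_i)=\deg\det(\mathcal{L})$. Averaging, $\min_i\mu(P_i)\le \frac1m\deg\det(\mathcal{L})$, which gives the existence part.

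\textbf{Existence of a reduced basis.} This follows by the standard Mulders--Storjohann reduction. Whenever two basis vectors $P_a,P_b$ share a leading position $j$ with $\mu(P_a)\ge\mu(P_b)$, we replace $P_a$ by $P_a - cX^{\mu(P_a)-\mu(P_b)}P_b$, choosing $c\in\F$ to cancel the leading term at position $j$. This is a unimodular operation, so the result is still a basis. It also strictly decreases the potential $\sum_i (m\cdot\mu(P_i)+\text{leading position of }P_i)$. Therefore the process terminates, and the naive version already runs in time polynomial in $m$ and $\mu(\mathcal{P})$.

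\textbf{Main obstacle: quasi-linear running time.} The hard step is achieving time $\tilde O(\mu(\mathcal{P}))\cdot\poly(m)$, as opposed to merely polynomial time. For this we would follow Alekhnovich's divide-and-conquer, which is a matrix analogue of the half-gcd algorithm:
\begin{enumerate}
\item Reduce the basis modulo high powers of $X$, i.e.\ work only with the top-degree halves of the entries.
\item Recursively compute the unimodular transformation that reduces the top half.
\item Apply that transformation to the full basis using fast polynomial multiplication.
\item Recurse on the partially reduced result.
\end{enumerate}

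The key invariant to check is that the transformation matrix found from the truncated data has degree bounded by roughly half the degree drop, and that it correctly reduces the original basis. Given that, the recurrence is $T(d)=2T(d/2)+\tilde O(d)\poly(m)$, which solves to the claimed bound. Finally, we output the basis vector of minimum degree norm.

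Since the survey treats this theorem as a black box, we would cite \cite{Alekhnovich} for the fast reduction routine and prove in full only the determinant-degree identity and the averaging argument.
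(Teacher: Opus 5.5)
Your proposal is correct and takes essentially the same approach as the paper. The paper gives no proof of its own: it treats the theorem as Alekhnovich's fast basis-reduction algorithm combined with a Minkowski-type bound, and defers to \cite{Alekhnovich} for the quasi-linear running time, exactly as you do; your determinant-degree identity for a reduced basis followed by averaging is the standard way these two ingredients fit together. One small tightening: every non-identity Leibniz term has degree below $\sum_i\mu(P_i)$ because any permutation $\sigma\neq\mathrm{id}$ has some $i$ with $\sigma(i)>i$, so it uses an entry of $P_i$ past its leading position (of degree $<\mu(P_i)$), whereas an entry at an earlier non-leading position can still have full degree $\mu(P_i)$.
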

Given this brief digression to the properties of lattices, we now get back to the task of obtaining a fast implementation of the interpolation step of the algorithm of Figure \ref{fig:rs_list_johnson}, and prove the following lemma. 

\begin{lemma}\label{lem:fast-GS-interpolation-setup}
Let $P_0(X,Y), \ldots, P_{m-1}(X,Y)$ be the polynomials and let  $\mathcal{L}^{(\ell,m)}$ be the lattice as defined earlier in this section with respect to a received word $w \in \F_q^n$ and evaluation points $a_1, \ldots, a_n \in \F_q$, and let $k<n$ be a positive integer.
Then, there is a non-zero polynomial $Q(X,Y)$ in $\mathcal{L}^{(\ell,m)}$ of $(1,k)$-weighted degree at most
\[
\left(\frac{n(\ell+1) \ell }{2m} + \frac{(m-1)k}{2}\right).
\]
Moreover, there is an algorithm that when given the coefficient vectors of $P_0, P_1, \ldots, P_{m-1}$ as input, outputs the coefficient vector of $Q$ in time $\Tilde{O}(n) \cdot \poly(m)$. 
\end{lemma}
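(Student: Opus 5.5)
The idea is to reduce the weighted-degree question to a plain degree-norm question by a change of variables, and then apply Theorem~\ref{thm:algorithmic-minkowski}. Substituting $Y \mapsto X^k Y$ gives, for any $Q(X,Y)=\sum_j q_j(X)Y^j$, the polynomial $\tilde Q(X,Y):=Q(X,X^kY)=\sum_j q_j(X)X^{jk}Y^j$. Its $X$-degree, i.e. $\mu(\tilde Q)$ in the sense of Definition~\ref{defn:length-of-vectors}, is exactly $\max_j(\deg q_j + jk)=\deg_{(1,k)}(Q)$. So I will work with the lattice $\tilde{\mathcal L}$ generated over $\F_q[X]$ by $\tilde P_j(X,Y):=P_j(X,X^kY)$ for $j=0,\ldots,m-1$. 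The map $Q\mapsto \tilde Q$ is an $\F_q[X]$-module isomorphism from $\mathcal{L}^{(\ell,m)}$ onto $\tilde{\mathcal L}$, since it is $\F_q[X]$-linear, injective, and sends generators to generators. Hence a short non-zero vector of $\tilde{\mathcal L}$ pulls back to a non-zero $Q\in\mathcal{L}^{(\ell,m)}$ with small $(1,k)$-weighted degree. The pullback just divides the coefficient of $Y^j$ by $X^{jk}$, which is exact because every element of $\tilde{\mathcal L}$ is of the form $Q(X,X^kY)$.

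Next I compute $\det(\tilde{\mathcal L})$. The $Y$-degree of $\tilde P_j$ is exactly $j$, so the matrix of coefficient vectors (rows indexed by powers of $Y$) is triangular. Its diagonal entry in column $j$ is the leading $Y$-coefficient of $\tilde P_j$. This entry is $X^{jk}\prod_{i=1}^n(X-a_i)^{\ell-j}$ for $j<\ell$, and $X^{jk}$ for $j\ge \ell$. Therefore
\[
\deg\det(\tilde{\mathcal L})=\sum_{j=0}^{\ell-1} n(\ell-j)+\sum_{j=0}^{m-1} jk=\frac{n\ell(\ell+1)}{2}+\frac{k\,m(m-1)}{2}.
\]
The lattice is full rank, since the generators have distinct $Y$-degrees. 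Theorem~\ref{thm:algorithmic-minkowski} then gives a non-zero $v\in\tilde{\mathcal L}$ with $\mu(v)\le \frac{n(\ell+1)\ell}{2m}+\frac{(m-1)k}{2}$, and the corresponding $Q$ has exactly the claimed weighted degree.

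For the running time, I would first form the basis $\tilde{\mathcal P}$ from the given coefficient vectors of $P_0,\ldots,P_{m-1}$. This only shifts each $Y^j$-coefficient by $X^{jk}$, which costs linear time in the total size. Next I bound $\mu(\tilde{\mathcal P})$. The polynomial $R$ has degree below $n$ and $k<n$, so $\mu(\tilde P_j)\le n(\ell-j)+j(n-1)+jk = O(nm)$. The algorithm of Theorem~\ref{thm:algorithmic-minkowski} therefore runs in $\tilde O(nm)\cdot\poly(m)=\tilde O(n)\poly(m)$ time. Undoing the substitution at the end is again linear.

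The step I expect to need the most care is justifying the determinant and its invariance. Specifically, I must check that the triangular basis really is a basis of $\tilde{\mathcal L}$, i.e. linearly independent over $\F_q(X)$; this follows from triangularity with non-zero diagonal. I also need that the bound in Theorem~\ref{thm:algorithmic-minkowski} uses $\deg\det$ of exactly this lattice, and not of the untransformed one. Everything else is bookkeeping.
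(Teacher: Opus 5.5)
Your proposal is correct and follows essentially the same route as the paper: substitute $Y \mapsto X^kY$ to get the lattice generated by $P_j(X,X^kY)$, read off $\deg\det$ from the triangular basis, apply Theorem~\ref{thm:algorithmic-minkowski}, and pull back by dividing the $Y^j$-coefficient by $X^{jk}$. Your bound $\mu(\tilde P_j)=O(nm)$ is also slightly more careful than the paper's stated $mk+n\ell$, which overlooks the degree contributed by powers of $R$. Either bound gives the $\tilde O(n)\cdot\poly(m)$ running time.
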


\begin{proof}
In order to argue about the $(1,k)$-weighted degree of polynomials in $\mathcal{L}^{(\ell,m)}$, it would be helpful to instead work with the following related lattice $(\mathcal{L'})^{(\ell, m)}$ generated by the $\F_q[X]$ span of the bivariates $P_0', P_1', \ldots, P_{m-1}'$ defined as 
\begin{align*}
& \forall j \in \{0,1, \ldots,\ell-1\},  &P'_j(X,Y) :=& (X^kY - R(X))^j \cdot \prod_{i = 1}^n (X - a_i)^{\ell - j} \\
& \forall j \in \{\ell,\ell+1, \ldots, m-1\},  &P'_j(X,Y) :=& (X^kY - R(X))^j\, . 
\end{align*}
We again view these polynomials as univariates in $Y$ with coefficients in $\F_q[X]$. Thus, $P_j'$ is a polynomial with $Y$ degree equal to $j$ and for $j < \ell$, its leading coefficient equals $X^{kj} \cdot \prod_{i = 1}^n (X - a_i)^{\ell - j}$, whereas for $j \geq \ell$, its leading coefficient equals  $X^{kj}$. Thus, if we look at the $m \times m$ matrix consisting of the coefficient vectors of $P_j'$ (we think of each of them as a polynomial in $Y$ of degree at most $m-1$), we get a triangular matrix, with the diagonal entries corresponding to precisely the leading coefficients of $P_0', P_1', \ldots, P_{m-1}'$. Hence, the lattice generated by them is full rank and its determinant is the product of the diagonal entries of this matrix, which equals
\[
\left(\prod_{j = 0}^{\ell - 1}X^{kj} \cdot \prod_{i = 1}^n (X - a_i)^{\ell - j} \right) \cdot \left( \prod_{j = \ell}^{m-1} X^{kj}\right) \, .
\]
Thus, the degree of the determinant of $\mathcal{L}^{(\ell, m)}$ equals $\left(\frac{n (\ell+1) \ell}{2} + \frac{m(m-1)k}{2}\right)$. We now invoke \autoref{thm:algorithmic-minkowski} to $(\mathcal{L}')^{(\ell, m)}$ to get that there is a non-zero polynomial $\Tilde{Q}(X,Y)$ in $(\mathcal{L}')^{(\ell, m)}$ such that $\mu(\Tilde{Q})$, which equals its $X$-degree, is at most 
\[
\frac{1}{m} \cdot \left(\frac{n(\ell+1)\ell}{2} + \frac{m(m-1)k}{2}\right) = \left(\frac{n(\ell+1)\ell}{2m} + \frac{(m-1)k}{2}\right) \, .
\]
Moreover, since the $X$-degree of any $P_j'$ is at most $(mk + n\ell)$, there is an algorithm that when given $P_0', \ldots, P_{m-1}'$ as inputs, outputs $\Tilde{Q}$ in $\tilde{O}(mk + n\ell)\cdot \poly(m) \leq \Tilde{O}(n) \cdot  \poly(m)$ time. Since $\Tilde{Q}$ is in $(\mathcal{L}')^{(\ell,m)}$, and $P_0', P_1', \ldots, P_{m-1}'$ are linearly independent over the field $\F_q(X)$, we have that there are unique univariate polynomials $g_0, g_1, \ldots, g_{m-1} \in \F_q[X]$ such that $\tilde{Q} = \sum_{j=0}^{m-1} g_j(X) P_j'(X,Y)$. Let $Q(X,Y)$ be the polynomial defined as 
\[
Q(X,Y) := \sum_{j=0}^{m-1} g_j(X) P_j(X,Y) \, .
\]
From the definition of $Q, P_0, P_1, \ldots, P_{m-1},P_0',P_1', \ldots, P_{m-1}'$, it follows immediately that the $(1,k)$-weighted degree of $Q$ is at most the $X$-degree of $\Tilde{Q}$, which is at most $\left(\frac{n(\ell+1) \ell }{2m} + \frac{(m-1)k}{2}\right)$. Moreover, there is an algorithm that when given the coefficient vector of $\tilde{Q}$, outputs the coefficient vector of $Q$ in nearly linear time in the input size. 

To see the moreover part, we just observe that in each $P_j'$, we have the property that the coefficient of $Y^i$ is divisible by $X^{i k}$. Since $\Tilde{Q}$ is obtained by taking an $\F_q[X]$-linear combination of such polynomials, this property continues to be true for $\Tilde{Q}$. Finally, to obtain $Q$ from $\Tilde{Q}$, we just need to divide the coefficient of $Y^i$ in $\Tilde{Q}$ by $X^{ik}$, for every $i$, and this operation can be done in nearly linear time in the description of $\Tilde{Q}$, and hence has time complexity at most $\Tilde{O}(n) \cdot \poly(m)$. 

Thus, the overall time complexity of obtaining $Q$, given $P_0, P_1, \ldots, P_{m-1}$ includes the time complexity of obtaining $P_0', P_1', \ldots, P_{m-1}'$ from $P_0, P_1, \ldots, P_{m-1}$, invoking \autoref{thm:algorithmic-minkowski} to obtain $\Tilde{Q}$, and then doing the aforementioned transformation on $\Tilde{Q}$ to obtain $Q$. We have already bounded the time complexity of each of these steps, apart from the complexity of constructing $P_0', \ldots, P_{m-1}'$, by $\Tilde{O}(n) \cdot \poly(m)$. Finally, since $P_j'(X, Y)$ equals $P_j(X, X^k Y)$, we can obtain the coefficient vector of $P_j'$ by multiplying the coefficient of $Y^i$ in $P_j(X,Y)$ by $X^{ki}$ for every $i < m$. This takes an additional $\Tilde{O}(n) \cdot \poly(m)$ time, and gives us the desired bound of $\Tilde{O}(n) \cdot \poly(m)$ on the total complexity. 
\end{proof}

To complete the interpolation step of the algorithm of Figure \ref{fig:rs_list_johnson}, we now invoke the above \autoref{lem:fast-GS-interpolation-setup} with the right setting of parameters, and account for the time complexity of obtaining $P_j$. 

\begin{lemma}\label{lem:fast-GS-interpolation-final}
Let $q$ be a prime power, let $a_1, \ldots, a_n$ be distinct elements of $\F_q$, and let $k<n$ and $\ell \in \N$ be parameters.
Then there is an algorithm that runs in time $n\cdot \poly(\log (n), \log (q), \ell, \frac n k)$, and on any input $w \in \F_q^n$, outputs a non-zero bivariate polynomial $Q(X,Y)$ with $(1,k)$-weighted degree at most $\sqrt{n k (\ell+1)\ell}$, such that for every $i \in [n]$, $Q$ vanishes with multiplicity at least $\ell$ on $(a_i,w_i)$.  
\end{lemma}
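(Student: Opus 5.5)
The plan is to construct the lattice $\mathcal{L}^{(\ell,m)}$ explicitly from $w$, apply Lemma \ref{lem:fast-GS-interpolation-setup} to obtain a short vector $Q$, and then choose $m$ so that the weighted-degree bound of that lemma is at most $\sqrt{nk(\ell+1)\ell}$.

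The steps, in order:

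\begin{enumerate}
\item Compute the interpolating polynomial $R(X)$ of degree at most $n-1$ with $R(a_i)=w_i$ for all $i$. This takes $n\cdot\poly(\log n,\log q)$ time using fast univariate interpolation.

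\item Compute $V(X):=\prod_{i=1}^n (X-a_i)$ by a subproduct tree, within the same time bound.

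\item Fix a parameter $m\ge \ell$ and compute the coefficient vectors, over $\F_q[X]$ in the variable $Y$, of the generators $P_j=(Y-R)^j V^{\ell-j}$ for $j<\ell$ and $P_j=(Y-R)^j$ for $\ell\le j<m$.
\begin{itemize}
\item The powers $(Y-R)^j$ can be built iteratively: each multiplication by $(Y-R)$ costs $m$ multiplications of univariate polynomials of degree $O(mn)$.
\item The powers $V^{\ell-j}$ are likewise built by repeated multiplication.
\item Using FFT-based multiplication, the total cost is $\tilde O(n)\cdot\poly(m,\ell)$.
\end{itemize}

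\item Invoke Lemma \ref{lem:fast-GS-interpolation-setup}. It returns a nonzero $Q\in\mathcal{L}^{(\ell,m)}$ with
\[
\deg_{(1,k)}(Q)\le \frac{n(\ell+1)\ell}{2m}+\frac{(m-1)k}{2},
\]
in time $\tilde O(n)\poly(m)$. By Lemma \ref{lem:RS-lattices-interpolants-connections}, $Q$ vanishes with multiplicity at least $\ell$ at every $(a_i,w_i)$.
\end{enumerate}

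For the parameter choice, the bound has the form $\frac{A}{m}+\frac{(m-1)k}{2}$ with $A=\frac{n(\ell+1)\ell}{2}$.
\begin{itemize}
\item By AM--GM, $\frac{A}{m}+\frac{mk}{2}$ is minimized at $m^*=\sqrt{n(\ell+1)\ell/k}$, where its value is exactly $\sqrt{nk(\ell+1)\ell}$.
\item Since the bound has $(m-1)k/2$ rather than $mk/2$, there is slack of $k/2$ to absorb rounding $m^*$ to an integer.
\item Take $m=\lceil m^*\rceil$. Writing $m=m^*+\theta$ with $\theta\in[0,1)$, a direct computation gives
\[
\frac{A}{m}+\frac{mk}{2}-\sqrt{nk(\ell+1)\ell}=\frac{k\theta^2}{2m}\le \frac{k}{2}.
\]
Subtracting the $k/2$ slack then yields the required bound $\sqrt{nk(\ell+1)\ell}$.
\item We also need $m\ge \ell$. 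This holds because $m^*\ge \ell\sqrt{n/k}\ge \ell$ when $k<n$.
\end{itemize}

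For the running time, $m=O(\ell\sqrt{n/k})$, so $\poly(m)=\poly(\ell,n/k)$. The total running time is therefore $n\cdot\poly(\log n,\log q,\ell,n/k)$, as claimed.

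I expect the main obstacle to be the time accounting for the generators. The coefficients of $P_j$ have degree up to roughly $jn+\ell n$, so the input to Lemma \ref{lem:fast-GS-interpolation-setup} has size $\tilde O(n)\poly(m,\ell)$, not $\tilde O(n)$. The point is to confirm that the lemma's cost $\tilde O(\mu(\mathcal{P}))\poly(m)$ is still $n\cdot\poly(\ldots)$.
- If $\deg R$ makes $\mu(P_j)$ too large, first reduce the $Y$-coefficients of $(Y-R)^j$ modulo $V^{\ell}$ where allowed. Otherwise, simply note that $\mu(\mathcal{P})\le O(mn+\ell n)$, which remains linear in $n$ times $\poly(m,\ell)$.
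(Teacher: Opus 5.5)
Your proposal is correct and follows the same route as the paper. You interpolate $R$, build the generators $P_j$ of $\mathcal{L}^{(\ell,m)}$ with FFT-based arithmetic, invoke Lemma~\ref{lem:fast-GS-interpolation-setup} with $m=O(\ell\sqrt{n/k})$, and conclude via Lemma~\ref{lem:RS-lattices-interpolants-connections}. The only difference is that the paper assumes an integer $m$ with $m(m-1)k=n(\ell+1)\ell$, so the two terms balance exactly; your choice $m=\lceil\sqrt{n(\ell+1)\ell/k}\,\rceil$, together with the identity $\frac{A}{m}+\frac{mk}{2}-\sqrt{nk(\ell+1)\ell}=\frac{k\theta^2}{2m}\le\frac{k}{2}$, removes that integrality assumption and also verifies $m\ge\ell$ explicitly.
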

\begin{proof}
    The algorithm proceeds by first setting the parameter $m$ such that  $m$ satisfies $n (\ell+1)\ell = m(m-1)k$. For simplicity, we assume that $m$ can be chosen to be a positive integer while satisfying this equation. So, $m = O(\ell \sqrt{\frac n k})$. 

    Next, we construct, via fast univariate polynomial interpolation, the polynomial $R(X)$ of degree at most $n-1$ such that for every $i \in [n]$, $R(a_i) = w_i$. This can be done in $ n \cdot \poly(\log (q), \log (n))$ time. 
    We now construct the basis $P_0, P_1, \ldots, P_{m-1}$ of the lattice $\mathcal{L}^{(\ell,m)}$ as follows. We focus on $j < \ell$ since the argument for larger $j$ is similar (and only easier). For $j < \ell$, $P_j$ is defined as $(Y-R(X))^j \cdot \prod_{i = 1}^n (X-a_i)^{\ell - j}$. By expanding $(Y-R(X))^j$ and rearranging the terms, we get 
    \[
    P_j = \sum_{t = 0}^j Y^{t}\left(\binom{j}{t}R^{j-t} \prod_{i = 1}^n (X-a_i)^{\ell - j} \right) \, .
    \]
    Thus, for every $t$, the coefficient of $Y^t$ in $P_j$ is a polynomial in $X$ of degree at most $O(nm)$. Furthermore, given the coefficient vector of $R$ (that we have already computed above) and $a_1, a_2, \dots, a_n$, we can compute the polynomial $R^{j-t} \prod_{i = 1}^n (X-a_i)^{\ell - j}$ using $\Tilde{O}(nm)$ arithmetic operations over the underlying field using the standard FFT based algorithms for fast polynomial multiplcation. Thus, each $P_j$ (and hence, the entire basis) can be computed in time $n \cdot \poly(m, \log (n), \log (q))$, which is at most $n\cdot \poly(\log (n), \log (q), \ell, \frac n k)$ for our choice of $m = O(\ell \sqrt{\frac n k})$.
    
    Now, from \autoref{lem:fast-GS-interpolation-setup}, we have that there is an algorithm that takes the basis of the lattice $\mathcal{L}^{(\ell, m)}$ as input and outputs the coefficient vector of a non-zero $Q(X,Y)$ in this lattice in time $n \cdot \poly(m) \leq n \cdot \poly(\ell,\frac n k)$, such that the $(1,k)$-weighted degree of $Q$ is at most $\left(\frac{n(\ell+1)\ell}{2m} + \frac{(m-1)k}{2}\right)$. For our choice of $m$, the two terms in this weighted degree bound are equal to each other, and hence the $(1,k)$-degree of $Q$ is at most $\sqrt{\frac{nk (\ell+1) \ell(m-1)}{m}} \leq \sqrt{n k(\ell+1) \ell}$. 
    
    Finally, from \autoref{lem:RS-lattices-interpolants-connections}, we get that this polynomial $Q$ vanishes with multiplicity at least $\ell$ on $(a_i, w_i)$ for every $i \in [n]$. This observation together with the fact that the overall running time of the algorithm is at most $n\cdot \poly(\log (n), \log (q), \ell, \frac n k)$ completes the proof of the lemma. 
    \end{proof}

\subsubsection{Fast implementation of the root finding step}

We first note that from the properties of the polynomial $Q$ constructed in \autoref{lem:fast-GS-interpolation-final}, the following observation follows. 
\begin{lemma}\label{lem:fast-GS-close-enough-codewords-are-roots}
Let  $Q(X,Y)$ be the polynomial constructed in \autoref{lem:fast-GS-interpolation-final},  and let $f(X) \in \F_q[X]$ be a univariate polynomial of degree less than $k$ so that $f(a_i) = w_i$ for at least $\frac{\sqrt{n k (\ell+1)\ell}}{\ell}$ many indices $i \in [n]$. Then, $Q(X,f(X))=0$.
\end{lemma}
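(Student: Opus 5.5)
The plan is to repeat the argument of Lemma \ref{lem:gs-close-poly-are-roots}, now with the degree bound supplied by \autoref{lem:fast-GS-interpolation-final}. We consider the univariate polynomial $P_f(X) := Q(X, f(X))$ and show that it has more zeros, counted with multiplicity, than its degree allows.

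The first step bounds the degree of $P_f$. Since $\deg(f) < k$, every monomial $X^u Y^v$ of $Q$ becomes $X^u f(X)^v$, which has degree at most $u + v(k-1) \leq u + vk$. This is at most $\deg_{(1,k)}(Q)$, so $\deg(P_f) \leq D := \sqrt{nk(\ell+1)\ell}$ by \autoref{lem:fast-GS-interpolation-final}.

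The second step lower bounds the multiplicities. For every index $i$ with $f(a_i) = w_i$, \autoref{lem:fast-GS-interpolation-final} guarantees that $Q$ vanishes at $(a_i, f(a_i)) = (a_i, w_i)$ with multiplicity at least $\ell$. Claim \ref{clm:gs-zero-at-agreement} then shows that $P_f$ vanishes at $a_i$ with multiplicity at least $\ell$. Since the $a_i$ are distinct, summing over the agreement indices gives $\sum_{a} \mult(P_f, a) \geq \ell \cdot T$, where $T$ is the number of agreements.

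To conclude, suppose $P_f \neq 0$. Then Fact \ref{fact:prelim_hasse_sz_univ} gives $\ell T \leq \deg(P_f) \leq D$, so $T \leq D/\ell$. The only delicate point is the boundary case. As literally stated, the hypothesis is $T \geq D/\ell$, and when $D/\ell$ is an integer, equality $T = D/\ell$ is not ruled out by this counting. I would handle it by reading the threshold as strict ("more than $D/\ell$", matching the convention in Lemma \ref{lem:gs-close-poly-are-roots}), or by using the slightly sharper degree bound from the proof of \autoref{lem:fast-GS-interpolation-final}, namely $\sqrt{nk(\ell+1)\ell(m-1)/m} < D$. With either reading, $\ell T > \deg(P_f)$, which contradicts Fact \ref{fact:prelim_hasse_sz_univ}. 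Hence $P_f = 0$, that is, $Q(X, f(X)) = 0$.
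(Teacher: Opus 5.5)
Your proof is correct and follows the paper's own argument: bound $\deg Q(X,f(X))$ by the $(1,k)$-weighted degree of $Q$, transfer multiplicity $\ell$ at each agreement point via Claim~\ref{clm:gs-zero-at-agreement}, and conclude with Fact~\ref{fact:prelim_hasse_sz_univ}. Your handling of the boundary case $T = D/\ell$ is a valid refinement that the paper's sketch glosses over. The interpolation proof actually yields $\deg_{(1,k)}(Q) \leq (m-1)k = \sqrt{nk(\ell+1)\ell(m-1)/m} < \sqrt{nk(\ell+1)\ell}$, so the statement holds as written with the non-strict threshold.
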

\begin{proof}
The lemma follows immediately from \autoref{clm:gs-zero-at-agreement}, analogously to the proof of \autoref{lem:gs-close-poly-are-roots}. We just sketch the details, while keeping track of the changed parameters. From the same argument as in the proof of \autoref{lem:gs-close-poly-are-roots}, we have that $Q(X,f(X))$, which is a univariate polynomial of degree at most  $\sqrt{nk(\ell+1)\ell}$, vanishes with multiplicity at least $\ell$ on every $a_i$ where $f(a_i) = w_i$. Thus, if the number of agreements of $f$ and $w$ is at least $\frac{\sqrt{n k (\ell+1)\ell}}{\ell} $, it must be the case that $Q(X,f(X))$ is identically zero. 
\end{proof}

One final technical tool needed for the fast variant of a list decoding algorithm for Reed-Solomon Codes is the following theorem of Roth \& Ruckenstein \cite{RothR2000}. The theorem gives a faster algorithm for finding roots of bivariate polynomials than what is given by \autoref{thm:bivariate-factorization} in certain settings of parameters.  We refer to Theorem 1.2 in \cite{Alekhnovich} for a proof of the theorem. 
\begin{theorem}[\cite{RothR2000}, \cite{Alekhnovich}]\label{thm:fast-root-finder}
Let $\F$ be any finite field. Then, there is a randomized algorithm that takes as input the coefficient vector of a bivariate polynomial $Q(X,Y) \in \F[X,Y]$ and outputs all 
 its factors of the form $Y-f(X)$ where $f \in \F[X]$ in time
$\Tilde{O}(\deg_X(Q)) \cdot \poly(\deg_Y(Q),\log( |\F|) )$.
\end{theorem}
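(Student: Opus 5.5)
The plan is the Roth--Ruckenstein approach: find the roots $f(X)=\sum_{i<k} f_i X^i$ of $Q$ one coefficient at a time, following a branching recursion, and then speed this up using divide-and-conquer over the precision of $X$ (Alekhnovich's version).

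First, normalize $Q$. Write $Q(X,Y)=X^{e}\tilde Q(X,Y)$ with $X\nmid \tilde Q$. Any root $Y=f(X)$ of $Q$ is a root of $\tilde Q$, so we may work with $\tilde Q$.

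Next, set up the recursion. Suppose $f$ is a root, so $\tilde Q(X,f(X))=0$. Substituting $X=0$ gives $\tilde Q(0,f_0)=0$, and $\tilde Q(0,Y)$ is a nonzero univariate polynomial of degree at most $D:=\deg_Y(Q)$.
\begin{itemize}
\item Find all roots $f_0 \in \F$ of $\tilde Q(0,Y)$ using a randomized univariate root finder (Berlekamp/Cantor--Zassenhaus). This costs $\poly(D,\log|\F|)$.
\item For each root $f_0$, form $Q_1(X,Y):=\tilde Q(X, f_0 + XY)$ and divide out the largest power of $X$ to get $\tilde Q_1$.
\item Recurse on $\tilde Q_1$ to depth $k$.
\end{itemize}
Every root of $Q$ with $\deg f<k$ appears as a path in this tree. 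At the end, verify each candidate by checking that $Y-f$ divides $Q$.

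The tree stays small because of a key invariant: the sum, over the children of a node, of $\deg_Y(\tilde Q_{\text{child}}(0,Y))$ is at most the multiplicity-weighted number of roots of $\tilde Q(0,Y)$, and each child's $Y$-degree is bounded by the multiplicity of the corresponding root $f_0$. So the total branching is controlled by $D$, giving at most $D$ leaves and at most $kD$ nodes.

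This naive recursion costs $\poly(D)\cdot k\cdot \deg_X(Q)$, since each node rewrites a bivariate polynomial of $X$-degree up to $\deg_X(Q)$. That is quadratic, not near-linear. To fix this, observe that determining $f_0,\dots,f_{j-1}$ only requires $Q$ modulo $X^{j'}$ for some $j'=O(j)$. So work with truncations, and split into two halves:
\begin{enumerate}
\item Compute the first $k/2$ coefficients of all candidate roots using $Q \bmod X^{O(k)}$.
\item For each partial root $g$, compute the shifted polynomial $Q(X, g+X^{k/2}Y)/X^{\text{val}}$ with fast (FFT-based) multiplication and Taylor shifting.
\item Recurse on the second half.
\end{enumerate}
The node and branching bounds above limit the number of live branches at each level to $D$. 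This gives the recurrence $T(k)=2T(k/2)\cdot\poly(D)+\tilde O(\deg_X Q)\poly(D)$, which solves to $\tilde O(\deg_X(Q))\cdot\poly(D,\log|\F|)$. Candidates of degree at least $k$ are never needed: $k\le \deg_X(Q)$ is forced, because any root has $\deg f\le\deg_X Q$.

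The main obstacle is the precision bookkeeping in the divide-and-conquer. One must show that truncating $Q$ modulo $X^{O(j)}$ yields exactly the same first $j$ coefficients of all genuine roots. The difficulty is that each division by a power of $X$ can lower the available precision. I expect to handle this by tracking the accumulated $X$-valuation along each branch and bounding its sum by $\deg_X(Q)$ plus $D$ times the depth. I would also take care to charge the cost of the shifts to that total.
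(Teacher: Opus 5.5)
Your architecture is the right one: the Roth--Ruckenstein tree, truncation in $X$, and divide and conquer over the depth. It is the recursion the cited references build on; the survey itself gives no proof and only points to Theorem 1.2 of Alekhnovich. However, the running-time analysis, which is the whole content of the theorem, has a genuine gap.

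The recurrence $T(k)=2T(k/2)\cdot\poly(D)+\tilde O(\deg_X Q)\poly(D)$ does not solve to $\tilde O(\deg_X Q)\poly(D)$. The multiplicative $\poly(D)$ compounds over $\log k$ levels into a factor $k^{O(\log D)}$. Separately, an additive term that does not shrink with $k$ already gives $\Omega(k\cdot\deg_X Q)$, which is quadratic when $k\approx\deg_X Q$.

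Your precision plan has the same defect. The bound ``$\deg_X(Q)$ plus $D$ times the depth'' (from the leading $Y$-coefficient) is true, but it charges an additive $\deg_X(Q)$ to every subproblem. If each of the up to $D$ second-half branches is handed a polynomial of that precision, the branches together cost about $D$ times their parent, which is exactly the compounding above. Knowing that at most $D$ branches are alive per level does not help unless the cost of each branch is proportional to its share of $D$.

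The missing idea is to measure a subproblem by the degree at zero of its own reduced polynomial.
\begin{enumerate}
\item Let a node have reduced polynomial $P$ with $d:=\deg_Y P(0,Y)$. Descending to the child for a root $f_0$ of multiplicity $m$ strips off at most $X^m$, because the coefficient of $Y^m$ in $P(X,f_0+XY)$ has $X$-valuation exactly $m$. You already use this for your branching invariant. Moreover $m\le d$, and degrees at zero are non-increasing along a path.
\item Hence $P \bmod X^{td+1}$ determines the depth-$t$ subtree below the node exactly. The top call therefore only needs $\tilde Q\bmod X^{kd_0+1}$, with $d_0=\deg_Y\tilde Q(0,Y)\le D$ and $k=\deg_X Q+1$. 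In particular your first half needs $X^{(k/2)d_0+1}$, not $X^{O(k)}$, and $\deg_X(Q)$ never enters the precision.
\item Let $S(t,d)$ be the cost of such a subproblem; its input has size $O(tdD)$. Run the first half. For each of the at most $d$ live nodes $g$ at relative depth $t/2$, compute the shift $P(X,p_g+X^{t/2}Y)\bmod X^{td+1}$, at cost $\tilde O(td)\poly(D)$ each.
\item Divide out the valuation of each shifted polynomial, which is at most $(t/2)d$. Hand branch $g$ its polynomial modulo $X^{(t/2)d_g+1}$; this much precision is still available.
\item This gives $S(t,d)\le S(t/2,d)+\sum_g S(t/2,d_g)+\tilde O(td^2)\poly(D)$ with $\sum_g d_g\le d$, hence $d^2+\sum_g d_g^2\le 2d^2$. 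Induction yields $S(t,d)=\tilde O(td^2)\poly(D,\log|\F|)$. The $\log|\F|$ comes from the at most $td$ univariate root-finding calls at the leaves.
\end{enumerate}
Adding the $\tilde O(\deg_X Q)\poly(D)$ cost of reading and normalizing $Q$, and of verifying the at most $D$ final candidates, gives the claimed bound.
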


\subsubsection{Fast list decoding of Reed-Solomon Codes up to the Johnson Bound}

By setting the parameters appropriately, we can now prove the following theorem which gives a fast list-decoding algorithm for Reed-Solomon Codes approaching the Johson bound.

\begin{theorem}\label{thm:fast-RS-alekhnovich}
Let $q$ be a prime power, let  $a_1, a_2, \ldots, a_n$ be distinct points in $\F_q$, let $k<n$ be a positive integer, and let $\epsilon>0$ be a parameter. Then the Reed-Solomon Code  $\RS_{q}(a_1, \ldots, a_n;k)$ can be list decoded from $(1-\sqrt{\frac k n} - \epsilon)$-fraction of  errors in time $n\cdot \poly( \frac n k, \log (q), \log (n), \frac 1 \epsilon)$.  
\end{theorem}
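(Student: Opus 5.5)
The plan is to combine the three fast components already established: the fast interpolation of Lemma \ref{lem:fast-GS-interpolation-final}, the agreement guarantee of Lemma \ref{lem:fast-GS-close-enough-codewords-are-roots}, and the fast root finder of Theorem \ref{thm:fast-root-finder}. The remaining work is to choose the multiplicity parameter $\ell$ as a function of $\epsilon$ (and $k/n$), and then to add up the running times.

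\textbf{Choice of $\ell$.} By Lemma \ref{lem:fast-GS-close-enough-codewords-are-roots}, every $f$ of degree smaller than $k$ with at least
\[
\frac{\sqrt{nk(\ell+1)\ell}}{\ell} = \sqrt{nk}\cdot\sqrt{1+\tfrac1\ell}
\]
agreements satisfies $Q(X,f(X))=0$. Tolerating a $(1-\sqrt{k/n}-\epsilon)$-fraction of errors means the agreement is at least $(\sqrt{k/n}+\epsilon)n = \sqrt{nk}+\epsilon n$. So it suffices that
\[
\sqrt{nk}\left(\sqrt{1+1/\ell}-1\right) \leq \epsilon n.
\]
Since $\sqrt{1+1/\ell}-1\le \frac{1}{2\ell}$, it is enough to take $\ell = \lceil \sqrt{k/n}/(2\epsilon)\rceil \le \lceil 1/(2\epsilon)\rceil$, so $\ell = O(1/\epsilon)$.

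\textbf{The algorithm.}
\begin{enumerate}
\item Run the algorithm of Lemma \ref{lem:fast-GS-interpolation-final} with this $\ell$. It outputs a nonzero $Q$ with $\deg_{(1,k)}(Q)\le D:=\sqrt{nk(\ell+1)\ell}$ in time $n\cdot\poly(\log n,\log q,\ell,\frac nk)$.
\item Apply Theorem \ref{thm:fast-root-finder} to $Q$ to obtain every $f$ such that $Y-f(X)$ divides $Q$. By Lemma \ref{lem:rs_list_solution_size}, this list contains every codeword within the decoding radius.
\item Prune the list: keep only the $f$ of degree smaller than $k$ whose encodings have the required agreement with $w$.
\end{enumerate}

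\textbf{Running time.} The only step needing care is bounding the input parameters to the root finder.
\begin{itemize}
\item $\deg_X(Q)\le D \le n(\ell+1) = O(n/\epsilon)$, using $k\le n$.
\item $\deg_Y(Q)\le D/k = \sqrt{(n/k)(\ell+1)\ell} = O(\sqrt{n/k}/\epsilon)$.
\end{itemize}
Hence root finding costs $\tilde O(n/\epsilon)\cdot\poly(\frac nk,\frac1\epsilon,\log q)$. The root finder returns at most $\deg_Y(Q)$ roots, so the list has size $\poly(n/k,1/\epsilon)$. Checking each candidate requires evaluating a degree-$<k$ polynomial at all $n$ points, which multipoint evaluation does in $\tilde O(n)$ per candidate. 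Summing the three steps gives $n\cdot\poly(\frac nk,\log q,\log n,\frac1\epsilon)$.

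The main obstacle is only bookkeeping. The integrality assumption on $m$ made in Lemma \ref{lem:fast-GS-interpolation-final} has to be handled by rounding, and the resulting slight change in the degree bound is absorbed by the $\epsilon$ slack. One should also note that when $\epsilon$ is large relative to $\sqrt{k/n}$, one can simply take $\ell=1$.
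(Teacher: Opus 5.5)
Your proposal is correct and follows essentially the same route as the paper: choose $\ell=\Theta(1/\epsilon)$, run the lattice-based interpolation of Lemma~\ref{lem:fast-GS-interpolation-final}, invoke Lemma~\ref{lem:fast-GS-close-enough-codewords-are-roots}, and finish with the Roth--Ruckenstein root finder. The only differences are your slightly finer choice $\ell=\lceil\sqrt{k/n}/(2\epsilon)\rceil$ (the paper takes the smallest integer $\ell>1/\epsilon$) and your explicit costing of the pruning step via multipoint evaluation.

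One small caveat concerns the rounding. Your $\ell$ leaves essentially no slack, but rounding $m$ to an integer can raise the weighted-degree bound by up to $k/2$, and hence the agreement threshold by $k/(2\ell)$. To absorb this as you claim, take $\ell$ a constant factor larger, e.g.\ $\ell=\lceil\sqrt{k/n}/\epsilon\rceil$; the paper's choice already has this slack built in.
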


\begin{proof}
    Given the parameter $\epsilon > 0$ in the theorem, we set the parameter $\ell$ to be the smallest natural number greater than $\frac 1 \epsilon$. Note that for this choice of $\ell$, we have $\frac{\sqrt{n k (\ell+1)\ell}}{\ell} < (1+\epsilon)\sqrt{n  k}$.
    
    We now invoke \autoref{lem:fast-GS-interpolation-final} with the given received word as input. This gives us a non-zero bivariate polynomial $Q(X,Y)$ that, as \autoref{lem:fast-GS-close-enough-codewords-are-roots} shows, has the property that any polynomial $f(X) \in \F_q[X]$ of degree less than $k$ with agreement at least $(1+\epsilon)\sqrt{n k}$ with the received word satisfies $Q(X,f(X)) = 0$. 
    We now invoke \autoref{thm:fast-root-finder} to compute all such roots $f(X)$ of $Q(X,Y)$ and output the ones that have degree less than $k$ and have sufficiently large agreement with the received word. 

    The time complexity of the algorithm depends on the time complexity of the interpolation step (\autoref{lem:fast-GS-interpolation-final}), which is 
\[
n\cdot \poly\left(\log (n), \log (q), \ell, \frac n k\right) \leq n \cdot \poly\left(\log (n), \log (q), \frac n k, \frac 1 \epsilon\right)
\]
and the time complexity of the root finding step (\autoref{thm:fast-root-finder}), which is  
\[
\Tilde{O}(\deg_X(Q)) \cdot \poly(\deg_Y(Q),\log( q) ) \leq \sqrt{nk} \cdot \poly\left(\frac n k, \log (nk), \log (q), \frac 1 \epsilon\right)    \,  ,
\]
 where we used the fact that $\deg_X(Q)$ is at most the $(1,k)$-weighted degree of $Q$ and hence is at most $O( \sqrt{nk}/\epsilon)$ and $\deg_Y(Q)$ is at most $\frac{1}{k}$ of the   $(1,k)$-weighted degree of $Q$ and hence is at most $O( \sqrt{\frac n k} \cdot \frac 1 \epsilon)$. Finally, recalling that $k < n$ gives us the upper bound on the time complexity of $n\cdot \poly( \frac n k, \log (q), \log (n), \frac 1 \epsilon)$ of the algorithm, and completes the proof of the theorem.

\end{proof}

\subsection{Fast list decoding of multiplcity codes up to capacity}\label{subsec:fast-multiplicity-decoding}

In this section, we build upon the ideas discussed in the preceding section to get an algorithm that list decodes multiplicity codes up to capacity in nearly linear time (in an appropriate parameter regime). In what follows, fix a prime power $q$, distinct evaluation points $a_1, \ldots, a_n \in \F_q$, and positive integers $k,s$ so that $k<sn$ and $\max\{k,s\} \leq \char(\F_q)$, and let $\MULT^{(s)}_q(a_1, \ldots, a_n;k)$ denote the corresponding multiplicity code. 
We shall show that for any constant  $\delta :=1- \frac{k} {sn}$ and constant $\epsilon>0$, and sufficiently large $s$ (depending on $\epsilon$),  the multiplicity code $\MULT^{(s)}_q(a_1, \ldots, a_n;k)$ can be list decoded from $(1-R - \epsilon)$-fraction of errors in time $\Tilde{O}(n) \cdot \polylog(q)$, where $R:= \frac{k} {sn}$ is the rate of the code. 

\paragraph{Overview.}
The fast algorithm for list decoding multiplicity codes up to capacity is essentially a fast implementation of the polynomial-time algorithm 
for list decoding multiplicity codes up to capacity that was presented in Section \ref{subsubsec:mult_mult}.
We will proceed by first describing a faster algorithm for the interpolation step that lets us construct an ordinary differential equation of high order satisfied by all close enough codewords. This is based on the machinary of lattices over the univariate polynomial ring that we have already seen. The second part of the proof is a fast algorithm for solving these differential equations whose solutions are low dimensional subspaces, and recovering a basis for the subspace of solutions. 
Eventually, we will observe that a constant size (depending on $\epsilon$) list of close enough codewords can be recovered in nearly linear time using the results of Section \ref{subsubsec:const_list_mult}.

\subsubsection{Fast implementation of the interpolation step}
Let $w \in (\F_q^s)^n$ be the received word, where $w_i=(w_{i,0}, \ldots, w_{i,s-1})$ for any $ i \in [n]$,
and let $r \leq s$ be a parameter. Our goal in this step is to find, in nearly-linear time, a non-zero low-degree $(r+1)$-variate polynomial $Q(X,Y_0, \ldots, Y_{r-1})$ that is linear in $Y_0, \ldots,Y_{r-1}$ which satisfies that $P_f(X):=Q(X, f(X), f^{(1)}(X) ,\ldots, f^{(r-1)}(X)) $ vanishes on $a_i$ with high multiplicity for any point $i \in [n]$ for which $f^{(<s)}(a_i)=w_i$.  
The following definition 
will be helpful in stating some of the technical statements succinctly. 
\begin{definition}[Derivative operator]\label{lem:tau-derivative}

   The \textsf{derivative operator}
    $\tau$ is an $\F_q$-linear map that maps polynomials in $\F_q[X,Y_0, \ldots, Y_{r-1}]$ that are linear in $Y_0, \ldots, Y_{r-1}$ to polynomials in $\F_q[X,Y_0, \ldots, Y_r]$ that are linear in $Y_0,\ldots, Y_r$ as follows:
    \[
    \tau\left(A(X) + \sum_{\ell = 0}^{r-1} B_\ell(X)Y_\ell \right) := A^{(1)}(X) + \sum_{\ell = 0}^{r-1} \left(B_\ell^{(1)}(X) \cdot Y_\ell + (\ell+1)B_\ell(X)\cdot Y_{\ell+1}  \right) \, .
    \]

    More generally, for a non-negative integer $j $, let $\tau^{(j)}$  denote the following linear operator:
    \[
      \tau^{(j)}\left(A(X) + \sum_{\ell = 0}^{r-1} B_\ell(X)Y_\ell \right) :=  A^{(j)}(X) + \sum_{\ell=0}^{r-1} \sum_{h=0}^{j} {h + \ell \choose \ell} B_\ell^{(j - h)}(X) \cdot Y_{\ell+h} .
\]
Thus, the image of $\tau^{(j)}$ is contained in $\F_q[X,Y_0, \ldots, Y_{r-1+j}]$. 

\end{definition}
Note that by (\ref{eq:mult_no_mult_uni_der}), for any polynomials 
$Q(X)=A(X) + \sum_{\ell = 0}^{r-1} B_\ell(X)Y_\ell$ and
$f(X) \in \F_q[X]$, if we let $P_f(X): = Q(X,f(X),f^{(1)}(X), \ldots,f^{(r-1)}(X))$, then
we have that $$\tau^{(j)}(Q)(X,f(X), f^{(1)}(X), \ldots,f^{(r-1+j)}(X)) = P_f^{(j)}(X) ,$$ for any non-negative integer $j$. 

The following lemma is a faster version of \autoref{lem:mult_cap_interpolation}.

\begin{lemma}
\label{lem:fast_mult_cap_interpolation}
There is an algorithm that takes as input a received word $w \in (\F_q^s)^n$, runs in time $\Tilde{O}(n) \cdot \poly(s)$, and  outputs a non-zero $(r+1)$-variate polynomial $Q(X,Y_0, \ldots, Y_{r-1})$ over $\F_q$ of the form 
$$Q (X,Y_0, \ldots, Y_{r-1})= A(X) + B_0(X) \cdot Y_0 + \cdots +B_{r-1}(X) \cdot Y_{r-1}  $$
such that:
\begin{enumerate}
    \item The $X$-degree of $Q$ is at most $\frac {n(s-r+1)} {r+1}$. 
    \item For any $i \in [n]$ and $j \in \{0, 1, \ldots, s-r\}$, we have that 
    \[
    \tau^{(j)}(Q)(a_i, w_{i,0}, w_{i,1}, \ldots, w_{i,s-1}) = 0 \, .
    \]
\end{enumerate}
\end{lemma}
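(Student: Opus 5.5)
We follow the lattice approach of the Reed-Solomon case, now over the module of linear forms $A(X)+\sum_{\ell<r}B_\ell(X)Y_\ell$, which we identify with vectors in $\F_q[X]^{r+1}$. The constraints $\tau^{(j)}(Q)(a_i,w_i)=0$ for $j\le s-r$ should say exactly that $P_f$ vanishes to order $s-r+1$ at $a_i$ whenever $f^{(<s)}(a_i)=w_i$.

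First, for each $\ell<r$ we compute, by fast Hermite interpolation, a polynomial $R_\ell(X)$ of degree less than $(s-r+1+\ell)n\le sn$ such that $R_\ell^{(h)}(a_i)=\binom{h+\ell}{\ell}w_{i,h+\ell}$ for all $i$ and all $h\le s-r$. Concretely, $R_\ell$ is chosen to agree with $\sum_h\binom{h+\ell}{\ell}w_{i,h+\ell}(X-a_i)^h$ modulo $(X-a_i)^{s-r+1}$. Using $h\le s-r$ together with $\ell\le r-1$ keeps the needed index $h+\ell\le s-1$. This step costs $\tilde O(sn)$ by fast CRT.

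The key claim is that $\tau^{(j)}(Q)(a_i,w_i)$ equals the $j$-th Hasse derivative at $a_i$ of the univariate polynomial $A(X)+\sum_\ell B_\ell(X)R_\ell(X)$. This follows from the product rule, since $\tau^{(j)}(Q)(a_i,w_i)=A^{(j)}(a_i)+\sum_\ell\sum_h B_\ell^{(j-h)}(a_i)\binom{h+\ell}{\ell}w_{i,h+\ell}$. Consequently, all constraints hold if and only if $E(X)^{s-r+1}$ divides $A+\sum_\ell B_\ell R_\ell$, where $E(X)=\prod_i(X-a_i)$.

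The valid $Q$ therefore form the lattice generated by the $r+1$ vectors $E^{s-r+1}\cdot e_A$ and $e_{B_\ell}-R_\ell\, e_A$ for $\ell<r$. This basis is upper triangular with determinant $E^{s-r+1}$, so the determinant has degree $n(s-r+1)$. The lattice is full rank of dimension $r+1$.

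By Theorem~\ref{thm:algorithmic-minkowski}, there is a nonzero $Q$ in this lattice with $\mu(Q)\le n(s-r+1)/(r+1)$, which gives item~1. The algorithm finds it in time $\tilde O(\mu(\mathcal P))\poly(r)$, with $\mu(\mathcal P)\le sn$, for a total of $\tilde O(n)\poly(s)$.

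The main obstacle is verifying the equivalence between the $\tau^{(j)}$ constraints and divisibility by $E^{s-r+1}$ with correctly chosen $R_\ell$. Specifically, one must check that the binomial factors in $\tau^{(j)}$ match those arising from the Hasse product rule applied to $B_\ell R_\ell$. One must also check that only derivatives of order at most $s-r$ of $R_\ell$ at $a_i$ are used. Both hold because $\tau^{(j)}$ was defined to mirror (\ref{eq:mult_no_mult_uni_der}).
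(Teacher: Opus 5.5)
Your proposal is correct and follows the paper's proof: it uses the same lattice spanned by $\prod_{i}(X-a_i)^{s-r+1}$ and the $Y_\ell - R_\ell(X)$, the same triangular basis with determinant of degree $n(s-r+1)$, and the same appeal to Theorem~\ref{thm:algorithmic-minkowski}. The one difference is in how you verify the constraints: you identify $\tau^{(j)}(Q)(a_i,w_{i,0},\ldots,w_{i,s-1})$ with the $j$-th Hasse derivative of $A+\sum_\ell B_\ell R_\ell$ at $a_i$ via the product rule, which is cleaner than the paper's explicit $j=0,1$ check, shows the lattice is exactly the solution set, and lets $R_\ell$ interpolate only the $s-r+1$ derivatives actually used (so $\deg R_\ell<(s-r+1)n$ already suffices).
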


\begin{proof}
    We will start by defining an appropriate lattice over the ring $\F_q[X]$ such that all polynomials in the lattice have degree at most one in $Y_0, Y_1, \ldots, Y_{r-1}$ and satisfy the constraints in the second item in the lemma above. We then show that there is a polynomial in this lattice with $X$-degree at most $\frac{n(s-r+1)} {r+1}$, and hence the shortest vector with respect to $X$-degree being the measure satisfies this property. Based on these observations, the algorithm is quite natural - it will construct a basis for this lattice and then run the algorithm given in \autoref{thm:algorithmic-minkowski} to find a shortest vector in this lattice.

    In more detail, let $R_0, R_1, \ldots, R_{r-1} \in \F_q[X]$ be polynomials with degree at most $sn-1, (s-1)n-1, \ldots, (s-r+1)n-1$ respectively such that for every $\ell \in \{0,1\ldots, r-1\}$, $j \in \{0,1,\ldots, s-\ell-1\}$ and $i \in [n]$, we have that \[
    R_\ell^{(j)}(a_i) = \binom{\ell + j}{j}w_{i, \ell+j}.
    \]

    Clearly, such polynomials exist due to degree considerations, and in fact are unique and can be found in time $\Tilde{O}(ns)$ using FFT based algorithms (algorithm 10.22 in \cite{GathenG-MCA}). 
    Given these $R_\ell$'s, let us consider the lattice $\mathcal{L}$ defined as follows. 
    $$
        \mathcal{L}:= \left\{ \tilde{g}(X) \cdot \prod_{i=1}^n (X-a_i)^{s-r+1} + \sum_{\ell = 0}^{r-1} g_\ell (X) \cdot (Y_\ell -R_\ell(X)) : \tilde{g}, g_0, \ldots, g_{r-1} \in \F_q[X]\right\}
   $$ 

We first show that for every  polynomial $Q \in \mathcal{L}$, 
$i \in \{1, 2, \ldots, n\}$, and $j \in \{0,1,\ldots, s-r\}$, it holds that
    $
     \tau^{(j)}(Q)(a_i, w_{i,0}, w_{i,1}, \ldots, w_{i,s-1}) = 0 \, .
    $
    To prove the claim, it suffices to show that the property sought is true for every polynomial of the form $\tilde{g}(X) \cdot \prod_{t=1}^n (X-a_t)^{s-r+1}$ and of the form $g_{\ell}(X) \cdot (Y_{\ell}-R_{\ell}(X))$ for $\ell \in \{0,1,\ldots, r-1\}$. Since every polynomial in the lattice is an $\F_q$-linear combination of polynomials of these form, the claim will follow via the $\F_q$-linearity of $\tau$. Moreover, the property is immediate for polynomials of the form $\tilde{g}(X) \cdot \prod_{t=1}^n (X-a_t)^{s-r+1}$ since for polynomials with no $Y$ variables, the operator $\tau$ is just the derivative operator with respect to $X$. So we just focus on polynomials of the form  $g_\ell(X) \cdot (Y_\ell-R_\ell(X))$. The property is clear for $j=0$ by the definition of the $R_\ell$'s, next we discuss the argument for $j = 1$, since the extension for larger $j$ is very similar. 

    By definition of $\tau$, $\tau^{(1)}(g_\ell(Y_\ell - R_\ell))$ equals 
    \[
   \tau^{(1)}(g_\ell(Y_\ell - R_\ell)) = g_\ell^{(1)}(X)(Y_\ell - R_\ell(X)) + g_\ell(X)((\ell + 1)Y_{\ell+1} - R_\ell^{(1)}(X)) \, .
    \]
    Setting $X$ to $a_i$, $Y_\ell$ to $w_{i,\ell}$ and $Y_{\ell+1}$ to $w_{i,\ell+1}$, we notice that  both $(w_{i,\ell} - R_\ell(a_i))$ and
    $((\ell + 1)w_{i,\ell+1} - R_\ell^{(1)}(a_i))$ are zero from the definition of $R_\ell$. Thus, $\tau^{(1)}(g_\ell (Y_\ell - R_\ell))$ 
    is zero on the input $(a_i, w_{i,0}, w_{i,1}, \ldots, w_{i,s-1})$. This argument naturally extends to $\tau^{(j)}$ for larger values of $j$ by carefully following the definition of the operator $\tau^{(j)}$ and the interpolation constraints on $R_{\ell}$.

Next we show that there exists $Q \in \mathcal{L}$ of $X$-degree at most $\frac{n(s-r+1)}{r+1}$.
        The generators of the lattice $\mathcal{L}$, when viewed as an $(r+1)$-dimensional vector space with entries from $\F_q[X]$ (where entries $0,1,2\ldots, r-1$ correspond to the coefficient of $Y_0, Y_1, \ldots, Y_{r-1}$ respectively, and the $r$-st entry is the $Y$-free term) form a triangular system. Thus, we have a full rank lattice at our hand. Thus, the determinant of the $(r+1)\times (r+1)$ dimensional matrix whose columns correspond to these generators has a non-zero determinant. Moreover, this determinant is a polynomial in $\F_q[X]$ of degree at most the degree of $\prod_{i = 1}^n (X-a_i)^{s-r+1}$, and hence is at most $n(s-r+1)$. Therefore, from \autoref{thm:algorithmic-minkowski}, we get that there is a polynomial in $\mathcal{L}$ that is non-zero and has $X$-degree at most $\frac{n(s-r+1)}{r+1}$. Moreover, there is an algorithm that takes the generators of this lattice as input and outputs this non-zero vector in time $\Tilde{O}(n) \cdot \poly(s)$.  
\end{proof}

\subsubsection{Fast implementation of the root finding step}
Next observe that from the proof of \autoref{lem:mult_cap_P_is_root}, we have that for any polynomial $Q$ constructed in \autoref{lem:fast_mult_cap_interpolation} and any univariate $f$ of degree less than $k$ such that $f$ agrees with $w$ on at least 
$$t : =  \frac{(s-r+1) n  + (r+1)(k-1)} {(s-r+1)  (r+1)} = \frac{ n} {r+1} + \frac{k-1} {s-r+1}$$ evaluation points, we have that 
\[
Q(X,f(X), f^{(1)}(X), \ldots, f^{(r-1)}(X)) = 0 \, .
\]
Moreover, similarly to the discussion in the beginning of Section \ref{subsubsec:mult_mult}, the parameters $s,r$ can be set based on $\epsilon$ such that this error tolerance gets us $\epsilon$-close to list decoding capacity as desired. 

Therefore, to complete the list decoding task, it suffices to solve the above equation. To be able to do this in nearly linear time, we first recall that by Lemma \ref{lem:mult_no_mult_list_solution_size}, the solution space of polynomials of degree less than $k$ satisfying $Q(X,f(X), f^{(1)}(X), \ldots, f^{(r-1)}(X)) = 0$ is an affine space over the field $\F_q$ of dimension at most $r$. Hence, this subspace can be described by specifying a basis (whose total description size is at most $O(kr\cdot \log(q))$). 

Moreover, by Lemma \ref{lem:const_list_mult}
the number of codewords contained in this affine space that are close enough to $w$ is a constant (depending only on $\epsilon$, the relative distance $\delta$ of the code, and the dimension $r$). To complete the algorithm, we shall describe a randomized nearly-linear time algorithm that outputs these codewords.  This is essentially immediate from the proof of Lemma \ref{lem:const_list_mult}: Recall that in this lemma we described a randomized algorithm $\PRUNE$,
which when given $w \in \Sigma^n$, outputs any codeword in the list with constant probability 
(depending on $\epsilon$, $\delta$, and $r$), and used this to conclude that  the list size is constant. Thus, we can simply run $\PRUNE$ for a constant number of times and return the union of the output lists.
By a union bound, all close-by codewords will appear in the union of the output lists with high (constant) probability.
Finally, note that the algorithm $\PRUNE$ proceeds by sampling constantly many entries in $[n]$ at random, and solving a linear system of equations on these entries, 
and so this step can be implemented in near linear time in $n$ by using the standard linear system solver. 

 Thus, it suffices to show how to find the subspace of solutions to the differential equation
$$Q(X,f(X), \ldots, f^{(r-1)}(X)) = 0$$ in nearly-linear time.

Such an algorithm is given in the following lemma.
\begin{lemma}\label{lem:fast-diff-eqn-solver}
Suppose that $Q(X,Y_0, \ldots, Y_{r-1})$ is a non-zero $(r+1)$-variate polynomial over $\F_q$ of the form
$$Q(X,Y_0, \ldots, Y_{r-1}) = A (X) +  B_0 (X) \cdot Y_0+ \cdots +B_{r-1}(X)\cdot Y_{r-1}$$
and of degree at most $D$.
Let $\calL$ be the list containing all polynomials $f(X) \in \F_q[X]$ of degree smaller than $k$ so that  
$$Q(X, f(X), f^{(1)}(X), \ldots, f^{(r-1)}(X))=0.$$ 
Then if $\max\{k,s\} \leq \char(\F_q)$, then $\calL$ forms an affine subspace over  $\F_q$ of dimension at most $r-1$, and there is an algorithm that takes as input the coefficient vector of $Q$ and $k$ and outputs a basis of this affine space of solutions in time $\Tilde{O}(D+k) \cdot \poly(r, \log (q))$.     
\end{lemma}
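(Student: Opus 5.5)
The structural part (affine subspace of dimension at most $r-1$) is exactly Lemma \ref{lem:mult_no_mult_list_solution_size}, applied with $s$ replaced by $r$. So the only new content is the algorithm, and I would make the proof of Claim \ref{clm:mult_no_mult_list_solution_size} effective. That proof shows the solution set is parametrized by $f^{(<r-1)}(a)$ at a point $a$ with $B_{r-1}(a)\neq 0$. Each higher Taylor coefficient $f^{(j+r-1)}(a)$ is then obtained from the equation $P_f^{(j)}(a)=0$ as a linear combination of lower ones.

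\textbf{Step 1 (normalization).} As in the earlier proof, drop trailing zero $B_\ell$'s, so that $B_{r-1}\neq 0$. Find $a\in\F_q$ with $B_{r-1}(a)\neq 0$ by multipoint evaluation of $B_{r-1}$ on any $\deg(B_{r-1})+1\le D+1$ points, which costs $\tilde O(D)$ field operations. Then shift $X\mapsto X+a$ in $A$ and in every $B_\ell$ (Taylor shift, $\tilde O(D)\cdot r$) so that $a=0$. The unknowns become the coefficients $f_0,\dots,f_{k-1}$ of $f$ in the basis $X^i$, and $f^{(\ell)}(X)=\sum_i \binom{i}{\ell} f_i X^{i-\ell}$.

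\textbf{Step 2 (power-series recurrence).} The equation $A+\sum_\ell B_\ell f^{(\ell)}=0$ is a linear ODE. Read coefficient by coefficient, it says that the coefficient of $X^j$ equals $\binom{j+r-1}{r-1}B_{r-1}(0)\,f_{j+r-1}$ plus terms involving only $f_0,\dots,f_{j+r-2}$. The leading factor is invertible because $k\le\char(\F_q)$ and $B_{r-1}(0)\ne0$. Naively solving this recurrence costs $O(kD r)$. To reach $\tilde O(D+k)$, I would solve it as a power series modulo $X^{k}$ by \emph{Newton iteration / divide and conquer}, in the style of fast solvers for linear ODEs with polynomial coefficients (e.g.\ the relaxed/online multiplication approach). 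Write $f = f_{\text{low}} + X^{m}g$. The contribution of $f_{\text{low}}$ to the high-order coefficients is then computed by one batch of polynomial multiplications $B_\ell\cdot f_{\text{low}}^{(\ell)}$. Recursing on the two halves gives $T(k)=2T(k/2)+\tilde O(k+D)\,r$, hence $\tilde O(D+k)\cdot \poly(r)$. Since everything is linear, I run the solver once with $A$ and initial data $0$ (the affine shift), and $r-1$ times with $A=0$ and unit initial vectors $e_1,\dots,e_{r-1}$ for $f^{(<r-1)}(0)$. This yields the $r-1$ candidate basis vectors.

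\textbf{Step 3 (truncation consistency).} The recurrence produces coefficients for all degrees, but we need $\deg f<k$. The coefficient equations of $P_f$ for $j\ge k-r+1$ add further linear constraints on the $r$ candidate parameters. I would enforce them by computing $P_f$ for each of the $r$ candidate solutions, truncated to degree $<k$, via fast multiplication ($\tilde O(D+k)r$ per candidate). This gives an $r\times(D+k)$ system, whose kernel in the $r$ parameters can be found in $\tilde O(D+k)\poly(r)$. Finally, shift back $X\mapsto X-a$.

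The main obstacle is Step 2. The recurrence has coefficients that depend on $j$ through binomials, and each step is a full convolution with the $B_\ell$'s, so getting quasi-linear time requires the divide-and-conquer carefully handling the Hasse-derivative binomial weights. My plan is to rewrite $X^\ell f^{(\ell)}$ via the operator $\theta=X\frac{d}{dX}$ and binomials $\binom{\theta}{\ell}$, which act diagonally on monomials. With that rewriting, the middle-product updates stay plain polynomial multiplications.
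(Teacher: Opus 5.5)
Your plan follows the paper's route. You normalize so that $B_{r-1}(0)\neq 0$, parametrize the solutions modulo $X^k$ by the initial data $f_0,\dots,f_{r-2}$, solve each resulting uniquely-solvable problem by halving, and then impose the remaining coefficient equations of $P_f$. Your Step 3 is the explicit version of the paper's ``solve the modular equation, then check the original one.''

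Your $\theta$-rewriting plays the role of the paper's substitution $f=\alpha+Xg$, which turns the equation into the Euler-type form $(A+\alpha B_0)+(XB_0+B_1)g+B_1\,\theta g$. In the halving formulation it does not really buy control of the binomial weights: if you run the recurrence as an online convolution, $\binom{i}{\ell}f_i$ is just a diagonal reweighting of the unknown sequence. What it buys is that the upper half is again a problem of the same shape. Without it, $(X^mg)^{(\ell)}=\sum_h\binom{m}{h}X^{m-h}g^{(\ell-h)}$ is only divisible by $X^{m-\ell}$. With it, $\binom{\theta}{\ell}(X^mg)=X^m\binom{\theta+m}{\ell}g$. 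This is exactly where the paper's coefficient $(1+\frac k2)B_1$ comes from.

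There is a concrete error in your running-time analysis. The recurrence $T(k)=2T(k/2)+\tilde O(k+D)\,r$ does not solve to $\tilde O(D+k)$. Level $i$ of the recursion has $2^i$ calls that each pay $D$, so $T(k)=\Omega(kD)$. That is quadratic in the intended application, where $D$ and $k$ are both $\Theta(n)$.

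The fix is the truncation the paper makes explicit. The coefficient of $X^j$ in $B_\ell f^{(\ell)}$ involves only coefficients of $B_\ell$ of degree at most $j$. So the equations that determine $f_0,\dots,f_{k-1}$ see only $A$ and the $B_\ell$ modulo $X^k$. Likewise, in Euler form, a recursive call on a block of length $m$ only needs its coefficient polynomials modulo $X^m$. Reducing before recursing gives $T(m)=2T(m/2)+\tilde O(m)\poly(r)$, hence $\tilde O(k)\poly(r)$ for all $r$ runs. The full degree $D$ is then paid only in the Taylor shift of Step 1 and in computing the $r$ polynomials $P_f$ of degree $O(D+k)$ in Step 3. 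The total is $\tilde O(D+k)\poly(r,\log(q))$, as claimed.
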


Note that for our choice of $D = \frac {n(s-r+1)} {r+1}$ and $r<s$, the algorithm runs in time 
$\Tilde{O}(D+k) \cdot \poly(r,\log (q)) = n \cdot \poly(s,\log (q), \log (n)).$
In the rest of this section, we sketch some of the main ideas in the proof of the above lemma, and we refer the interested reader to \cite[Theorem 5.1]{GoyalHKS2024} for a full proof. In particular, for notational simplicity, we only focus on the cases of small $r$, namely $r=0$ and $r=1$, which already contain most of the technical insights.  

\paragraph{Solving the $r=0$ case. }
For $r=0$, we are just looking for $f$ of degree less than $k$ such that $A(X) + f(X) \cdot B(X) = 0$. Thus, in this case, the solution $f(X)$ to this equation, if it exists is unique and is equal to the quotient when $A(X)$ is divided by $B(X)$. Thus, a natural way of solving this problem will be to compute the quotient and remainder when $A(X)$ is divided by $B(X)$ and output the quotient if the remainder is zero and the quotient has degree less than $k$. This task can be done in nearly linear time in the input size using known classical algorithms in computational algebra that are based on the Fast Fourier Transform. We describe the sketch of one such algorithm below based on divide and conquer. As we shall see later, some of the ideas behind this algorithm will naturally lead us towards the case of larger $r$.

For the discussion below, let us assume that $k$ is a power of $2$. First observe that any  polynomial $f$ of degree smaller than $k$ can be uniquely written as $f = f_0(X) + X^{k/2} f_1(X)$ where $f_0, f_1$ are (possibly zero) polynomials of degree less than $\frac k 2$. The idea is to try and recover $f_0$ and $f_1$ by solving equations of smaller size, and then put them together to compute $f$. The first point is to note that if $f$ satisfies $A(X) + f(X)\cdot B(X) = 0$, then it satisfies  
\begin{equation}\label{eqn:f}
A + f\cdot B \equiv 0 \mod X^{k} \, .    
\end{equation}
Moreover, since $f$ has degree less than $k$, it suffices to recover $f$ modulo $X^k$. So, to solve the original equation, we will instead solve the above modular equation and then check if the solution satisfies the original equation. To solve the modular equation above, we try to understand the properties that $f_0, f_1$ must satisfy. 

By substituting $f = f_0 + X^{k/2}f_1$ in $(A + f B \equiv 0 \mod X^k)$, we get  that 
\[
A + f_0 \cdot B + X^{k/2} f_1 \cdot B \equiv 0  \mod X^{k}.
\]
Now, every monomial in the term $X^{k/2} f_1 \cdot B $ has degree at least $\frac k 2$. Thus, the above equation holds if and only if 
\begin{equation}\label{eqn:f0}
A + f_0 B \equiv 0 \mod X^{k/2}    
\end{equation}
and 
\[
(A + f_0 B) + X^{k/2} f_1 B \equiv 0 \mod  X^{k}
\]
Moreover, for any $f_0$ satisfying $(A + f_0 B \equiv 0 \mod  X^{k/2})$, we have that $A+ f_0 B$ equals  $\tilde{A} X^{k/2}$ for some polynomial $\tilde{A}(X)$. Thus, $f_1$ must satisfy 
\[
X^{k/2}\tilde{A} + X^{k/2} f_1 B \equiv 0 \mod  X^{k} , 
\]
which is equivalent to 
\begin{equation}\label{eqn:f1}
\tilde{A} + f_1 B \equiv 0 \mod  X^{k/2} .     
\end{equation}
Thus, to solve for $f$ in Equation (\ref{eqn:f}) it suffices to solve Equation (\ref{eqn:f0}) and then construct $\tilde{A}$ and solve Equation (\ref{eqn:f1}). Moreover, in Equation(\ref{eqn:f0}) and Equation (\ref{eqn:f1}), the polynomials $A, B, \tilde{A}$ can be replaced by their residues modulo $X^{k/2}$. Thus, we have reduced the original problem of solving Equation (\ref{eqn:f}) to solving two problems of exactly half the size. If at any stage of the recursion, we end up with an equation with no solution, we get that the original equation did not have a solution. Moreover, if any of these equations has a solution, it is unique (modulo $X^k$). 

Finally, to argue about the running time, we note that to solve instances where we work modulo  $X^k$, we make two recursive calls sequentially, each involving working modulo $X^{k/2}$. Since we are only interested in solutions modulo $X^k$ (or lower powers of $X$) in these modular equations, we can assume without loss of generality (by just ignoring all monomials of degree greater than $k$, since they do not participate in any computation modulo $X^k$) that $A$ and $B$ are also of degree at most $k$. Constructing the problem instance for the first recursive call (\ref{eqn:f0}) takes $O(k)$ time, and given a solution $f_0$ to this instance, we need to construct the polynomial $\Tilde{A}$ to make the second recursive call (\ref{eqn:f1}). Given $f_0$, we can obtain $\Tilde{A}$ by computing the polynomial $(A+f_0B) \mod X^{k}$ in time $\Tilde{O}(k)$ (using the Fast Fourier Transform for polynomial multiplication). Given solutions $f_0$ and $f_1$ of the two subproblems, the solution $f := f_0 + X^{k/2}f_1$ can be again constructed in time $O(k)$. Thus, the overall time complexity $T(k)$ can be upper bounded by the recurrence 
\[
T(k) \leq 2T(\frac k 2) + k\log^c (k) \, ,
\]
for some fixed constant $c$, where $c$ just depends on the complexity of Fast Fourier Transform over the underlying field $\F$. 

This gives the desired near linear upper bound on the running time. 

\paragraph{Solving the $r=1$ case. }
We now consider the case of $r = 1$. This case essentially captures all of the main ideas of the proof of the above \autoref{lem:fast-diff-eqn-solver} for large $r$, but keeps the exposition simpler in terms of notation. 

Recall that for $r=1$,  we are looking for solutions to the following differential equation:
 $$ A + fB_0 + f^{(1)} B_1 = 0.$$
Without loss of generality, we assume that $B_1(X)$ is a non-zero polynomial (else we are in the $r=0$ case), and moreover, $B_1(0)$ is non-zero, else, we shift $X$ to $X+a$ for an $a$ which is not a root of $B_1$. Such a shift can be found in time $\Tilde{O}(D)$ deterministically (using fast algorithms for univariate multipoint evaluation, e.g. Chapter 10 in \cite{GathenG-MCA}) or by simply taking a random $a$ from the underlying field. 

Once again, it will be helpful to work with the modular equation instead. 
\begin{equation}\label{eqn:differential_original_modular}
   A + fB_0 + f^{(1)} B_1 \equiv 0  \mod X^k . 
\end{equation}
A natural first attempt for this problem is again to try and recover $f$ by recovering $f_0, f_1$ recursively as in the $r=0$ case. However, there is an issue here - unlike in the $r=0$ case, equations of the form of (\ref{eqn:differential_original_modular}) can have multiple solutions. Thus, even if we set up modular equations for $f_0$ and $f_1$ and solve them, the number of candidate solutions $f$ of the original problem that are obtained by a combination of these can be as large as the product size of the solution sets for $f_0$ and $f_1$. This explosion of potential solutions appears to be an issue that needs to be handled for this strategy to work. 

To get around this issue, 
we rely on the fact that the space of solutions of (\ref{eqn:differential_original_modular}) is an affine space over $\F_q$ of dimension $1$. Furthermore, the proof of \autoref{lem:mult_no_mult_list_solution_size} (cf., Claim \ref{clm:mult_no_mult_list_solution_size}) tells us more about the structure of this affine space: if $B_1(0) \neq 0$ (which we have assumed here without loss of generality), then any solution $f$ of (\ref{eqn:differential_original_modular})  is uniquely determined by its constant term, i.e. $f(0)$.\footnote{Note that here, we are still looking at solutions modulo $X^k$.} This simple observation, combined with elementary linear algebra,  leads to the following claim that turns out to be very useful.

\begin{claim}
\label{clm:basis-for-soln-diff-eqn}
    Let $g(X), h(X)$ be the unique polynomials of degree less than $k$ with constant term zero and one respectively that are solutions of (\ref{eqn:differential_original_modular}) (assuming that $B_1(0) \neq 0$). Then, $g, h$ form a basis of the affine space of solutions of degree less than $k$ (over the field $\F$) of (\ref{eqn:differential_original_modular}). In other words,  the set of such solutions equals the set $\{\lambda g + (1-\lambda)h : \lambda \in \F\}$. 
\end{claim}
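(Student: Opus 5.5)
The plan is to prove existence and uniqueness of $g$ and $h$ first, and then deduce the affine description of the solution set from elementary linear algebra. Throughout I work with the modular equation (\ref{eqn:differential_original_modular}), $A + fB_0 + f^{(1)}B_1 \equiv 0 \mod X^k$, with $\deg f<k$ and $B_1(0)\neq 0$. I also implicitly assume the solution set is non-empty, as the claim does when it speaks of \emph{the} polynomials $g,h$.

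\textbf{Uniqueness given the constant term.} This is the modular analogue of Claim \ref{clm:mult_no_mult_list_solution_size} with $s-1=1$ and $a=0$. Write $f=\sum_{i<k} f_iX^i$, so that $f^{(1)}=\sum_i (i+1)f_{i+1}X^i$. For each $j\in\{0,\ldots,k-2\}$, compare the coefficient of $X^j$ on both sides. This coefficient has the form
$$(j+1)B_1(0)f_{j+1} + (\text{terms involving only } f_0,\ldots,f_j).$$
Its leading factor $(j+1)B_1(0)$ is non-zero, because $B_1(0)\neq 0$ and $j+1<k\le \char(\F_q)$. So $f_{j+1}$ is determined recursively by $f_0,\ldots,f_j$, and therefore all of $f$ is determined by $f_0$.

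The coefficient of $X^{k-1}$ deserves a separate check. It would involve $f_k$, which is zero, so it imposes a genuine constraint rather than fixing a new coefficient. This means that for a given value of $f_0$, a solution may or may not exist.

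\textbf{Existence of $g$ and $h$.} The main point to settle is that solutions with constant term $0$ and with constant term $1$ both exist. I would proceed in three steps.

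First, the homogeneous equation $fB_0+f^{(1)}B_1\equiv 0$ has solution space $U$ of dimension at most $1$, since by the recursion the map $f\mapsto f_0$ is injective on $U$.

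Second, the full solution set $S$ is an affine translate $f^*+U$.

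Third, I need that $U$ is non-trivial, so that constant terms $0$ and $1$ are both attained. This is the step I expect to be the real obstacle. The remedy I would try is to read the claim as presupposing that the solution space has dimension exactly one, which is the situation in which it is used. If instead $S$ is a single point, the claim degenerates and is handled separately: the recursion then yields a unique candidate directly.

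\textbf{Spanning.} Given $g$ and $h$, take any solution $f$ and set $\lambda = 1-f(0)$. Then $\lambda g+(1-\lambda)h$ is an affine combination of solutions. Because the equation is affine-linear in $f$, such a combination is again a solution, and its constant term is $\lambda\cdot 0+(1-\lambda)\cdot 1=f(0)$. By the uniqueness step, $f=\lambda g+(1-\lambda)h$. Conversely, every such combination is a solution, so the solution set is exactly $\{\lambda g+(1-\lambda)h:\lambda\in\F\}$.
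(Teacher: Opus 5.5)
Your argument is correct and is essentially the paper's: both rest on two facts, namely that affine combinations of solutions are again solutions and that a solution is pinned down by its constant term. The paper uses the second fact in the form ``the solution set has dimension at most one'' (via Lemma~\ref{lem:mult_no_mult_list_solution_size}) and compares dimensions, whereas you re-derive it from the coefficient recursion and match constant terms directly. You are also right that existence of $g$ and $h$ cannot be proved in general, since the $X^{k-1}$ coefficient is a genuine constraint, so it must be read as part of the hypothesis, which is exactly how the paper's statement treats it.
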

\begin{proof}
Let $g, h$ be any solutions to (\ref{eqn:differential_original_modular}). Thus, we have that $A + gB_0 + g^{(1)}B_1 \equiv 0 \mod X^k$ and $A + hB_0 + h^{(1)}B_1 \equiv 0 \mod X^k$.  Now, for any $\lambda \in \F$, we claim that $\lambda g + (1-\lambda)h$ is also a solution of (\ref{eqn:differential_original_modular}). To see this, we observe using the linearity of the derivative operator that 
\[
    A + (\lambda g + (1-\lambda)h)B_0 + (\lambda g + (1-\lambda)h)^{(1)}B_1 = A + \lambda \left( gB_0 + g^{(1)}B_1\right) + (1-\lambda) \left( hB_0 + h^{(1)}B_1\right) .
\]
Since $g, h$ are solutions to (\ref{eqn:differential_original_modular}), we observe that $\left(gB_0 + g^{(1)}B_1 \equiv -A \mod X^k\right)$ and\\
$\left(hB_0 + h^{(1)}B_1 \equiv -A \mod X^k\right)$. Thus,  
\[
A + \lambda \left( gB_0 + g^{(1)}B_1\right) + (1-\lambda) \left( hB_0 + h^{(1)}B_1\right) \equiv A + \lambda(-A) + (1-\lambda)(-A) \mod X^k ,
\]
which is $0$ modulo $X^k$. So, we get that every element of the set $\{\lambda g + (1-\lambda)h : \lambda \in \F\}$ is a solution of (\ref{eqn:differential_original_modular}), if $g$ and $h$ are solutions. 

Furthermore, we note that $$\{\lambda g + (1-\lambda)h : \lambda \in \F\} = \{\lambda(g-h) + h : \lambda \in \F\}$$ and thus this set is indeed an affine space of dimension one whenever $(g-h)$ is non-zero (which is the case here, since the constant terms of $g$ and $h$ are distinct). 
Since we had already argued that the solutions to (\ref{eqn:differential_original_modular}) form an affine space of dimension at most one, we have that $\{\lambda g + (1-\lambda)h : \lambda \in \F\}$ is indeed the set of all such solutions. 
\end{proof}

Given the above claim, a natural strategy towards solving (\ref{eqn:differential_original_modular}) is to try and recover the unique polynomials $f_0, f_1$ of degree less than $k$, with constant term zero and one respectively that are solutions of the equation. It turns out that the uniqueness of solutions to (\ref{eqn:differential_original_modular}), once the constant terms are fixed, makes this subproblem amenable to a divide and conquer like strategy. 

Let us consider a solution $f$ of degree less than $k$ of (\ref{eqn:differential_original_modular}) with constant term $\alpha$. Thus, $f(X)$ can be written as $f(X) = \alpha + Xg(X)$ for a polynomial $g$ of degree less than $k-1$. Since $f$ is a solution, we have the following sequence of identities. 

\begin{align*}
A + f B_0 + f^{(1)} B_1 & \equiv 0  \mod X^k \\
     \implies A + (\alpha + Xg)B_0 + (\alpha + Xg)^{(1)} B_1 & \equiv 0  \mod X^k \\ 
     \implies (A + \alpha B_0) + (XB_0 + B_1)g + X B_1 g^{(1)} & \equiv 0 \mod X^k  
\end{align*}
Thus, for every $\alpha$ such that $f = \alpha + Xg$ is a solution of (\ref{eqn:differential_original_modular}), $g$ must satisfy that
\begin{equation}\label{eqn:differential_new}
  (A + \alpha B_0) + (XB_0 + B_1)g + XB_1 g^{(1)}  \equiv 0 \mod X^k  \, .
\end{equation}

The above new differential equation, while being similar in structure to (\ref{eqn:differential_original_modular}), has the useful property that there is a unique polynomial $g(X)$ of degree less than $(k-1)$ which satisfies this equation. 
To see this, note that by the proof of \autoref{lem:mult_no_mult_list_solution_size} (cf., Claim \ref{clm:mult_no_mult_list_solution_size}), it suffices to show that the constant term $g(0)$ of $g$ is uniquely determined. To see this, we note that evaluating the left hand side of   (\ref{eqn:differential_new}) at zero gives the following equation: 
\[
  (A(0) + \alpha B_0(0)) +  B_1(0)g(0) = 0  \, .
\]
Since $B_1(0)$ is a non-zero field element (by our assumption at the beginning of this section), the above degree one equation in $g(0)$ has a unique solution given by $g(0) = -\frac{A(0) + \alpha B_0(0)}{B_1(0)}$. 

The uniqueness of solutions now enables us to use a natural divide and conquer style algorithm, almost identical to the one we discussed for the $r=0$ case. We write $g$ as $g = g_0 + X^{k/2} g_1$ for polynomials $g_0, g_1$ of degree less than $\frac k 2$ and try to understand the conditions that $g_0, g_1$ must satisfy. If $g$ is a solution of Equation (\ref{eqn:differential_new}), then, working modulo $X^{k/2}$ and discarding terms that are divisible by $X^{k/2}$, we get that $g_0$ must satisfy 
$$
  (A + \alpha B_0) + (XB_0 + B_1)g_0 + XB_1 g_0^{(1)}  \equiv 0 \mod X^{k/2}  \, .
$$
Moreover, for any $g_0$ satisfying the above equation, we have that $(A + \alpha B_0) + (XB_0 + B_1)g_0 + XB_1 g_0^{(1)} $ is divisible by $X^{k/2}$, and hence equals $X^{k/2}\Tilde{A}$ for some polynomial $\Tilde{A}$, and $g_1$ must satisfy 
$$
  X^{k/2}\tilde{A} + X^{k/2}\left(XB_0 + \left(1 + \frac k 2\right)B_1\right)g_1 + X^{k/2}\cdot XB_1 g_1^{(1)}  \equiv 0 \mod X^{k}  \, .
$$
This is equivalent to 
$$
\tilde{A} + \left(XB_0 + \left(1+\frac k 2\right)B_1\right)g_1 +  XB_1 g_1^{(1)}  \equiv 0 \mod X^{k/2}  \, .
$$
Thus, we have reduced the question of finding $g$ to the question of finding $g_0, g_1$ that satisfy similar differential equations, but now, we are working modulo $X^{k/2}$. Thus, the instance size has effectively halved. Moreover, these new instances can be generated in nearly linear time in the input size. Thus, a divide and conquer based algorithm will find $g$ in nearly linear time. 

\medskip

As mentioned above, the setting of larger $r$ can be handled using very similar ideas, but with some care. We skip the details here, and refer the interested reader to \cite{GoyalHKS2024}.

\subsection{Bibliographic notes}
The question of designing nearly linear time encoding and decoding algorithms has been an important theme of study in algebraic coding theory. The Berlekamp-Massey algorithm (Chapter 7 in \cite{GathenG-MCA}) discovered in the 1960s decodes Reed-Solomon Codes (and BCH Codes) in the unique decoding regime in nearly linear time. The algorithm is based on an algorithm of Berlekamp \cite{berlekamp-bch} for decoding BCH Codes. Massey \cite{Massey69} gave a simplified description of this algorithm and showed that it can also be used to solve other problems in computational algebra, notably the problem of computing minimal polynomials of linear recurrent sequences. These algorithms essentially gave a reduction from the problem of decoding Reed-Solomon Codes (or BCH Codes) in the unique decoding regime to the well studied problem of \emph{rational function reconstruction}. Subsequently, rational function reconstruction was shown to be solvable in nearly linear time using a near linear time algorithm for computing the GCD of univariate polynomials, thereby giving a near linear time algorithm for unique decoding of Reed-Solomon Codes. The half GCD algorithm is also referred to as the Knuth-Sch{\"o}nhage algorithm, and we refer the interested reader to Chapter 11 of \cite{GathenG-MCA} for a description of the algorithm and detailed references. 

In the list decoding setting, Roth \& Ruckenstein \cite{RothR2000} gave a faster (though not near linear time) implementation of the algorithms of Sudan and Guruswami-Sudan for list decoding Reed-Solomon Codes. On the way to their algorithm, they designed a faster algorithm for root computation for bivariates stated in \autoref{thm:fast-root-finder}. In a subsequent work, Alekhnovich \cite{Alekhnovich} introduced lattice based techniques and relied on some of the ideas in \cite{RothR2000} to give nearly linear time algorithms for list decoding Reed-Solomon Codes of constant rate in parameter regimes approaching the Johnson Bound. 
For multiplicity codes (as well as folded Reed-Solomon Codes), near linear time algorithms for list decoding up to capacity (again in the constant rate regime) were described in a work of Goyal, Harsha, Kumar and Shankar \cite{GoyalHKS2024}. The presentation  in this chapter follows the presentation in \cite{GoyalHKS2024}.

\section{Local list decoding}\label{sec:local}

In this section we show how to list decode polynomial codes in \emph{sublinear time}. Sublinear-time algorithms (also known as \emph{local algorithms}) are highly-efficient algorithms which run in time that is much smaller than the input length. In particular,  such algorithms cannot even read the whole input, and many times they also cannot afford to write down the whole output. For this to be at all possible, we shall need to relax the computational model, and provide the algorithm with a \emph{query access to the input}, and also in the case the  output is too long, only require that the algorithm provides \emph{query access to the output}. Sublinear-time algorithms are also typically \emph{randomized}, and they have a small chance of error. 

In the context of (unique) decoding algorithms, this means that the local algorithm runs in time which is much smaller than the block length $n$ of the code. To enable this, we provide the algorithm with a query access to the received word $w \in \Sigma^n$, and additionally only require that for any given codeword entry $i \in [n]$, the algorithm correctly recovers the $i$-th entry of the closest codeword $c$ with high probability. Formally, we have the following definition.

\begin{definition}[Locally correctable code $(\LCC)$]\label{def:lcc}
Let $C\subseteq\Sigma^{n}$ be a code of relative distance $\delta$, and let $\alpha \in (0, \frac \delta 2)$. We say that $C$
is $(Q,\alpha)$-\textsf{locally correctable $(\LCC)$}
if there exists a randomized algorithm $A$ which satisfies the following requirements: 
\begin{itemize}
\item \textbf{Input:} $A$ takes as input a codeword entry $i\in\left[n\right]$
and also gets query access to a string $w\in\Sigma^{n}$ so that
$\delta(w,c)\leq \alpha$ for some codeword $c\in C$.  
\item \textbf{Query complexity:} $A$ makes at most $Q$ queries to the
oracle $w$. 
\item \textbf{Output:} $A$ outputs $c_{i}$ with probability at least $\frac 3 4$.
\end{itemize}
\end{definition}

\begin{remark}\label{rem:lcc}
A few remarks are in order:
\begin{enumerate}
\item We could have chosen the success probability to be any constant greater than $\frac 1 2$, since the success probability  can be amplified by repeating the local correction procedure independently for a constant number of times and outputting the majority value (at the cost of increasing the query complexity by a constant multiplicative factor). For simplicity, we fixed the success probability to $\frac 3 4$ in the above definition.
\item Often one is also interested in the \emph{running time} of the local correction algorithms. As is typically the case, in all the algorithms presented in this section, the running time will be polynomial in the query complexity $Q$.
\item A useful (and arguably more natural) variant of local correction is \emph{local decoding}. 
Locally decodable codes are defined similarly to locally correctable codes, except that
we view the codeword~$c$ as the encoding of some message~$m$ (by fixing some injective encoding map $\Phi: \Sigma^k \to C$, where $k$ is such that $|C|= |\Sigma|^k$), 
and the local decoder is required to decode individual entries in~$m$. 
If we restrict ourselves to \emph{linear codes}, then locally decodable codes are a weaker
object than locally correctable codes, since any linear locally correctable code can be converted into a locally decodable code by choosing a systematic encoding map.\footnote{ An encoding map $\Phi: \Sigma^k \to C$ is \emph{systematic} if any $m \in \Sigma^k$ is a prefix of $\Phi(m)$. For linear codes, such a map can be found by transforming the generating matrix $G$ into a reduced column echelon form, and letting $\Phi(m)= G \cdot m$.} For  simplicity, we focus in this section on the model of local correction, but we note that all of the algorithms we present can be easily adapted for the local decoding model as well.  
\end{enumerate}
\end{remark}

We will be interested in an extension of the above definition of local correction to the setting of list decoding. However, the actual definition requires some subtlety: we want to return a list of answers corresponding to all near-by codewords in the list, and we want the algorithm to be local, but returning a list of possible symbols in $\Sigma$ for individual entries of near-by codewords is pretty useless, since typically there would be many entries $i \in [n]$ so that the values of the $i$-th entry of all near-by codewords cover all of $\Sigma$. Furthermore, for applications one typically needs some form of \emph{consistency}, i.e., that for any fixed codeword $c$ in the list it is possible to locally correct  individual entries of $c$ with high probability in a consistent way. 
To ensure this, we require that the local list-decoding algorithm returns a list $A_1,A_2, \ldots, A_L$ of randomized local algorithms, so that for each near-by codeword $c$ in the list there would be some algorithm $A_j$ in the list that locally corrects $c$.
Formally, we have the following definition.

\begin{definition}[Locally list decodable code ($\LLDC$)]\label{def:lldc}
We say that a code $C\subseteq\Sigma^{n}$ is $(Q,\alpha, L)$-\textsf{locally list decodable ($\LLDC$)}
if there exists a randomized algorithm $A$ which outputs $L$ randomized algorithms $A_1,A_2, \ldots,A_L$ that
satisfy the following requirements for any $w \in \Sigma^n$: 
\begin{itemize}
\item \textbf{Input:}  Each $A_j$ takes as input a codeword entry $i\in\left[n\right]$
and also gets query access to $w$.
\item \textbf{Query complexity:} Each $A_j$ makes at most $Q$ queries to $w$.
\item \textbf{Output:}
 For every codeword $c\in C$ so that $\dist(c,w)\leq \alpha$, with probability at least $\frac 1 2$ over the randomness of $A$ the following event happens:
there exists some $j\in [L]$ such that for all $i \in [n]$,  $$ \Pr[A_j(i) = c_i] \geq \frac 3 4,$$ where the probability is over the internal randomness of $A_j$.
\end{itemize}
\end{definition}

\begin{remark}\label{rem:LLDC}
The same remarks as in Remark \ref{rem:lcc} also apply  to the above definition. We also note the following:
\begin{enumerate}
\item The success probability of~$\frac 1 2$ for the local list decoding algorithm $A$ in the above definition can be amplified by invoking the local list decoding algorithm $A$ independently for multiple times, and outputting the union of all randomized algorithms $A_j$ that are output in any of the invocations, at the cost of increasing the list size   
(Specifically, amplifying the success probability  to $1 - 2^{-t}$ requires increasing the list size by a multiplicative factor of $O(t)$).

\item In the above definition we required that for any \emph{fixed} close-by codeword $c$, the list of local algorithm $A_j$ contains  a local corrector for $c$ with high probability. We can also guarantee the stronger requirement that with high-probability,  the output list contains a local corrector for \emph{any} close-by codeword $c$, by first amplifying the success probability of $A$ to $1 - \frac{1} {4L}$, and then applying a union bound over all near-by codewords (noting that by the definition of local list decoding, there could be at most $2L$ such codewords).
\end{enumerate}
\end{remark}

Though the above definition of local list decoding may seem a bit strange and convoluted at first glance, it turns out to be just the right definition for various applications in coding theory and theoretical computer science. 

For a code to be locally (list) decodable, its codeword entries need to satisfy some non-trivial local relations. In the setting of polynomial-based codes, this property can be achieved by considering codes based on \emph{multivariate} polynomials, where the main property being used is that the restriction of a low-degree multivariate polynomial to any line (or more generally, a curve or a low-dimension subspace) is a low-degree \emph{univariate} polynomial. Thus, given a codeword entry, one can recover the value of the entry by passing a random line through the point, and recovering the closest low-degree univariate polynomial on the line (or more generally, a curve or a low-dimension subspace).

\medskip

In what follows, we first focus on  the  family of \emph{Reed-Muller $(\RM)$  Codes}, which extends Reed-Solomon Codes to the setting of \emph{multivariate} polynomials. We first show in Section \ref{subsec:rm_unique} below, as a warmup, and since this will also be used as a sub-procedure in the local list-decoding algorithm, how to \emph{locally correct} Reed-Muller Codes from a small fraction of errors (below half the minimum distance of the code). Then in Section \ref{subsec:rm_list} we show how to \emph{locally list decode} Reed-Muller Codes  beyond half the minimum distance, and in Section \ref{subsec:rm_list_johnson} we present an improved algorithm that is able to locally list decode Reed-Muller Codes  \emph{up to Johnoson bound}. Finally, in Section  \ref{subsec:mult_mult_list} we consider the family of \emph{multivariate multiplicity codes} which extends Reed-Muller Codes  by evaluating also \emph{multivariate derivatives} of multivariate polynomials, and outline a local list decoding algorithm for these codes \emph{up to their minimum distance}, and we further explain how this can be used to obtain capacity-achieving locally list decodable codes.

\subsection{Local correction of Reed-Muller Codes}\label{subsec:rm_unique}

In this section, we first show, as a warmup, and since this will also be used as a sub-procedure in the local list-decoding algorithm presented in the next section, how to \emph{locally correct} Reed-Muller Codes  from a small fraction of errors (below half the minimum distance of the code).  
More specifically, fix  a prime power $q$, and positive integers $k,m$ so that $k<q$. 
Recall that the  \emph{Reed-Muller $(\RM)$  Code} $\RM_{q,m}(k)$ is the code which associates with each $m$-variate polynomial $f(X_1, \ldots, X_m) \in \F_q[X_1, \ldots, X_m]$
of (total) degree smaller than $k$ a codeword $(f(\ba))_{\ba \in \F_q^m} \in \F_q^{q^m}$. Let $\delta: = 1 - \frac{k} {q}$ denote the relative distance of this code, below we shall show that this code is a $(q,\alpha)$-LCC for $\alpha:= \frac{\delta }{8} - \frac 1 q$. Note that for $m>1$, the query complexity of $q$ is much smaller than the block length $q^m$ of the code. 

Suppose that $w \in \F_q^{q^m}$ is a received word, in what follows it will be convenient to view $w$ as a function $w: \F_q^m \to \F_q$. 
Suppose that there exists a (unique) polynomial $f(X_1, \ldots, X_m) \in \F_q[X_1, \ldots, X_m]$ of degree smaller than $k$ so that $f(\ba) \neq w(\ba)$ for at most an $\alpha$-fraction of the entries $\ba \in \F_q^m$. Below we shall describe a randomized local correction algorithm which given as input an entry $\ba \in \F_q^m$, makes a few queries to $w$, and outputs $f(\ba)$ with probability at least $\frac 3 4$.

 \paragraph{Overview.} To decode $f(\ba)$, we pick a random line through $\ba$, and query the restriction of $w$ to the line. We then
 find the \emph{univariate} polynomial $g(X) \in \F_q[X]$ of degree smaller than $k$ that is closest to $w$ on the line (this can be done efficiently using the algorithm of Figure \ref{fig:rs_unique}), and return the value of $g$ on $\ba$. The main properties we use to show correctness are that the restriction of $f$ to any line is a univariate polynomial of degree smaller than $k$, and also that random lines \emph{sample-well} in the sense that the fraction of errors on a random line is typically close to the total fraction of errors. The formal description of the algorithm is presented in Figure \ref{fig:rm_unique_low_error} below, followed by correctness analysis.

\begin{figure}[h]
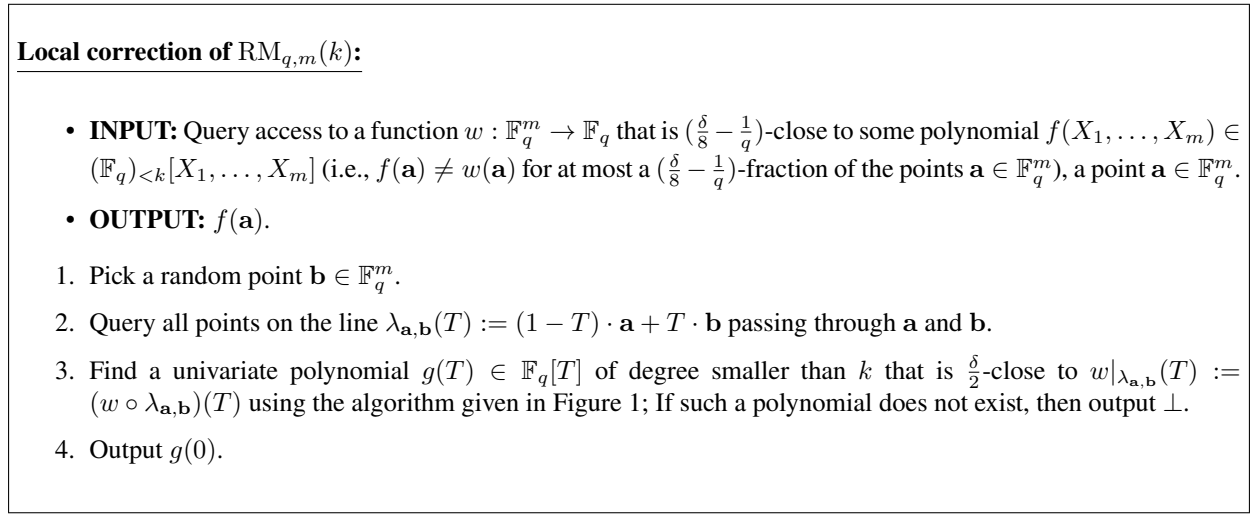

  \begin{boxedminipage}{\textwidth} \small \medskip \noindent
    $\;$

    \underline{\textbf{Local correction of $\RM_{q,m}(k)$:}}

    \medskip

\begin{itemize}
\item \textbf{INPUT:} Query access to a function $w: \F_q^m \to \F_q$ that is $(\frac \delta 8 - \frac 1 q)$-close to some polynomial $f(X_1, \ldots, X_m) \in (\F_q)_{<k}[X_1, \ldots, X_m]$ (i.e., $f(\ba) \neq w(\ba)$ for at most a $(\frac \delta 8 - \frac 1 q)$-fraction of the points $\ba \in \F_q^m$), 
a point $\ba \in \F_q^m$. 
\item \textbf{OUTPUT:} $f(\ba)$.

\end{itemize}

    \begin{enumerate}
\item Pick a random point $\bb \in \F_q^m$.
\item Query all points on the line $\lambda_{\ba, \bb}(T):=(1-T)
\cdot\ba+T\cdot\bb$ passing through $\ba$ and $\bb$.
\item Find a univariate polynomial $g(T) \in \F_q[T]$ of degree smaller than $k$ that is $\frac \delta 2$-close to $w|_{\lambda_{\ba, \bb}}(T):=(w \circ {\lambda_{\ba, \bb}})(T)$ using the algorithm given in Figure \ref{fig:rs_unique}; If such a polynomial does not exist, then output $\bot$.
\item Output $g(0)$.
\end{enumerate}

  \medskip

  \end{boxedminipage}

\caption{Local correction of Reed-Muller Codes}
\label{fig:rm_unique_low_error}
\end{figure}

\paragraph{Correctness.}
Note that for a random point $\bb \in \F_q^m$, any point on the line $\lambda_{\ba,\bb}(T) $ other than the zero point is uniformly random. Consequently, the expected fraction of errors on the line is at most $(\frac \delta 8 - \frac 1 q) + \frac 1 q =  \frac \delta 8$. By Markov's inequality, this implies in turn that with probability at least $\frac 3 4$ over the choice of $\bb$, the fraction of errors on the line 
$\lambda_{\ba,\bb}$ is less than $\frac \delta 2$. Lastly, assuming that the latter event happens,  by the correctness of the algorithm of Figure \ref{fig:rs_unique} (noting that $f|_{\lambda_{\ba,\bb}}$ is a univariate polynomial of degree smaller than $k$), we have that $g(T) = f|_{\lambda_{\ba, \bb}}(T)$, and in particular, $g(0)=f(\ba)$. So we conclude that the algorithm outputs $f(\ba)$ with probability at least $\frac 3 4$.

\paragraph{Query complexity.} The query complexity is clearly the number $q$ of points on a line.  

\begin{remark}\label{rem:rm_lcc}
While we have not made an attempt to optimize the decoding radius of the above local correction algorithm, we note that to get a non-trivial success probability greater than $\frac 1 2$, one must set the decoding radius to be smaller than $\frac \delta 4$. 
This is because the use of  \emph{Markov's inequality}, which guarantees that with probability at least $\frac 1 2$ over the choice of the random line through $\ba$, the number of errors is smaller than $\frac \delta 2$ only below this radius.

To locally correct Reed-Muller Codes  from a larger fraction of errors, one can replace in the algorithm of Figure \ref{fig:rm_unique_low_error} the random line $\lambda_{\ba,\bb}(T)$ passing through $\ba$ with a random \emph{curve} passing through $\ba$ of the form $\chi_{\ba,\bb,\bc}(T) = \ba + \bb T + \bc T^2$, where $\bb, \bc \in \F_q^m$ are uniformly random. The advantage of decoding on curves is that any two points on a random curve through $\ba$ are \emph{pairwise independent}, and consequently one can use  \emph{Chebyshev inequality} to get a better bound on the number of errors on the curve. However,  resorting to a curve also increases the degree of the restriction of $f$ to the curve, and so one can tolerate less errors on the curve. But it turns out that this tradeoff is favorable, and allows one to locally correct up to a radius of roughly $ \delta - \frac 1 2$, which is close to half the minimum relative distance for $\delta$ which is close to $1$ (see \cite[Section 2.2.3]{Yekhanin_survey} for a full description and analysis of this algorithm). 
\end{remark}

\subsection{Local list decoding of Reed-Muller Codes beyond unique decoding radius}\label{subsec:rm_list}

Next we show how to \emph{locally list-decode} Reed-Muller Codes  beyond the unique decoding radius.  Recall that in local list decoding, we require that the local list-decoding algorithm returns a list 
 $A_1,A_2, \ldots, A_L$ of randomized local algorithms, so that for each near-by codeword $c$ in the list there would be some algorithm $A_j$ in the list that locally corrects $c$.
In fact, since we have already shown that Reed-Muller Codes  are \emph{locally correctable}, it suffices to 
show that they are \emph{approximately} locally list decodable, in the sense that the local algorithms $A_j$  are only required to correctly recover \emph{most} of the entries of $c$. Such an approximate local list decoding algorithm can then be turned into a local list decoding algorithm by composing 
each local algorithm $A_j$ with the local correction algorithm presented in the previous section. 
Furthermore, using an averaging argument, it can be shown that in approximate local list decoding each of the local algorithms $A_j$ can be assumed to be \emph{deterministic}. 
Formally, we have the following definition.

\begin{definition}[Approximately locally list-decodable code $(\ALLDC)$]\label{def:alldc}
We say that a code $C\subseteq\Sigma^{n}$ is $(Q,\alpha,\epsilon, L)$-\textsf{approximately locally list decodable $(\ALLDC)$}
if there exists a randomized algorithm $A$ which outputs $L$ \emph{deterministic} algorithms $A_1,A_2, \ldots,A_L$ that
satisfy the following requirements for any $w \in \Sigma^n$: 
\begin{itemize}
\item \textbf{Input:}  Each $A_j$ takes as input a codeword entry $i\in\left[n\right]$
and also gets query access to $w$.
\item \textbf{Query complexity:} Each $A_j$ makes at most $Q$ queries to $w$.
\item \textbf{Output:}  
For every codeword $c\in C$ so that $\dist(c,w)\leq \alpha$, with probability at least $\frac 1 2$ over the randomness of $A$ the following event happens: there exists some $j\in [L]$ such that $$ \Pr_{i \in [n]}[A_j(i) = c_i] \geq 1-\epsilon,$$ where the probability is over the choice of a \emph{random entry} $i \in [n]$.
\end{itemize}
\end{definition}

\begin{remark}\label{rem:ALLDC}
The same remarks as in  Remark \ref{rem:LLDC} also apply to the above definition. We note however that in the case of ALLDC, the success probability of each local algorithm $A_j$ cannot be amplified by repetition, and therefore we have not fixed its value to $\frac 3 4$ as in the previous $\LLDC$ definition (Definition \ref{def:lldc}).
\end{remark}

The following lemma shows that one can turn an approximate local list decoding algorithm into a local list decoding algorithm by composing each local algorithm $A_j$ with a local correction algorithm.

\begin{lemma}[$\ALLDC + \LCC \Rightarrow \LLDC$]\label{lem:approx-local-to-local}
Suppose that $C \subseteq \Sigma^n$ is a code that is $(Q,\alpha,\epsilon,L)$-$\ALLDC$ and $(Q',\alpha')$-LCC for $\alpha' \geq \epsilon$. Then $C$ is also a $(Q\cdot Q', \alpha,L)$-$\LLDC$.
\end{lemma}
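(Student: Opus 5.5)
The plan is to compose the two procedures directly. Let $A$ be the $\ALLDC$ algorithm and $D$ the $(Q',\alpha')$-local corrector. The new algorithm $A'$ runs $A$ once, obtaining deterministic algorithms $A_1,\ldots,A_L$. For each $j\in[L]$ it outputs a randomized algorithm $A'_j$. On input $i\in[n]$, $A'_j$ runs $D$ on input $i$. Whenever $D$ requests the symbol at position $i'$, $A'_j$ supplies $A_j(i')$ in its place, computing that value with at most $Q$ queries to $w$.

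Thus $A'_j$ simulates $D$ with oracle access to the string $w^{(j)}\in\Sigma^n$ defined by $w^{(j)}_{i'} := A_j(i')$. This string is well defined because each $A_j$ is deterministic, and this is exactly why the $\ALLDC$ definition requires deterministic $A_j$. The query bound is immediate: $D$ makes at most $Q'$ oracle calls, and each one costs at most $Q$ queries to $w$, so $A'_j$ makes at most $Q\cdot Q'$ queries.

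For correctness, fix any $c\in C$ with $\dist(c,w)\le\alpha$. By the $\ALLDC$ guarantee, with probability at least $\frac12$ over the randomness of $A$ there is a $j$ with $\Pr_{i}[A_j(i)=c_i]\ge 1-\epsilon$. This says $\dist(w^{(j)},c)\le\epsilon\le\alpha'$. The $\LCC$ guarantee applied to the received word $w^{(j)}$ then gives $\Pr[A'_j(i)=c_i]\ge\frac34$ for every $i\in[n]$, over the internal randomness of $D$. That is exactly the $\LLDC$ condition with list size $L$.

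The only delicate point is the hypothesis check. The $\LCC$ definition requires $\alpha'<\delta/2$ and that the received word be $\alpha'$-close to some codeword. Both hold: $\alpha'$ is an $\LCC$ radius by assumption, and $w^{(j)}$ is $\epsilon$-close to $c$ with $\epsilon\le\alpha'$. The oracle $w^{(j)}$ is a fixed string, since it does not depend on the randomness of $D$, so the $\LCC$ guarantee applies to it verbatim. No other obstacle is expected.
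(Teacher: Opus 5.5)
Your proposal is correct and follows the paper's proof: you simulate the local corrector with its queries answered by the deterministic $A_j$, which gives $Q\cdot Q'$ queries, and you invoke the $\LCC$ guarantee because $A_j$ is $\epsilon$-close to $c$ with $\epsilon\le\alpha'$. One cosmetic detail the paper also leaves implicit: map any $\bot$ output of $A_j$ to a fixed symbol of $\Sigma$, so that $w^{(j)}$ genuinely lies in $\Sigma^n$.
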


\begin{proof}
The local list decoding algorithm  replaces each of the local algorithms $A_j$ generated by the approximate local list decoding algorithm $A$ with a local algorithm $A'_j$, operating as follows: on input $i \in [n]$, the local algorithm $A'_j$ invokes the local correction algorithm $A'$ on input $i$, and forwards each of the queries of $A'$ in $[n]$ to the local algorithm $A_j$.

Correctness follows since by our assumption that $\alpha' \geq \epsilon$, $A_j$ outputs the correct value on at least a $(1-\alpha')$-fraction of the entries in $[n]$, and $A'$ outputs the correct value with probability at least $\frac 3 4$ provided that at most an $\alpha'$-fraction of the codeword entries are corrupted.
The query complexity is $Q \cdot Q'$, because for each of the $Q'$ queries that $A'$ would make, $A_j'$ runs a copy of $A_j$ which makes $Q$ queries.
\end{proof}

Given the above lemma, our task now is to design an \emph{approximate} local list decoding algorithm for Reed-Muller Codes beyond half their minimum distance. 
As before, fix  a prime power $q$ and positive integers $k,m$ so that $k<q$, and let $\RM_{q,m}(k)$ be the corresponding Reed-Muller Code. Let $\delta: = 1 - \frac{k} {q}$ denote the relative distance of this code, let $\sigma>1$ and $\xi >0$ be parameters, and let
$\alpha:=1-\sqrt{ \sigma \cdot(1-\delta)} - \xi$. 
We shall show that this code is a $(q, \alpha, \epsilon, L)$-ALLDC for 
$\epsilon =  \frac{2\delta} {\sigma -1} +  O(\frac{1} {\xi^2 q})$ and $L=q$. Thus, for sufficiently large $\sigma$ and $q$, the failure probability $\epsilon$ is smaller than $\frac \delta 8$, 
and so by the results of Section \ref{subsec:rm_unique} and by
Lemma \ref{lem:approx-local-to-local}, we also have that this code is a $(q^2, \alpha,L)$-LLDC. Note that for any fixed $\sigma>1$, the decoding radius of $1-\sqrt{ \sigma \cdot(1-\delta)}$ is larger than half the minimum relative distance for a sufficiently large relative distance $\delta$. Note that for any $m>2$, the query complexity of $q^2$ is much smaller than the block length $q^m$ of the code. 

Below we shall describe a \emph{randomized} approximate local list decoding algorithm $A$ which outputs a list of \emph{deterministic} $q$-query local algorithms $A_1,\ldots, A_L$ with the following guarantee. 
Suppose that $w \in \F_q^{q^m}$ is a received word (as before it will be convenient to view $w$ as a function $w: \F_q^m \to \F_q$), and suppose that 
$f(X_1, \ldots, X_m) \in \F_q[X_1, \ldots, X_m]$ is a polynomial of degree smaller than $k$ so that $f(\ba) \neq w(\ba)$ for at most an $\alpha$-fraction of the entries $\ba \in \F_q^m$. We shall show that with probability at least $\frac 1 2$ over the randomness of $A$, there exists a local algorithm $A_j$ which correctly recovers all but at most an $
\epsilon$-fraction of the values $f(\ba)$ for $\ba \in \F_q^m$.

\paragraph{Overview.}

To generate the list of local algorithms, the algorithm $A$ picks a uniformly random point $\bb \in \F_q^m$ and for any value $v \in \F_q$, it outputs a local algorithm $A_{\bb, v}$ (note that the number of local algorithms is indeed $L=q$). We think of the value $v$ as a ``guess" or ``advice" for the value of $f(\bb)$.
Once we have this advice, we use it to locally correct $f$. Specifically, to recover $f(\ba)$, the local algorithm $A_{\bb,v}$ first considers the line $\lambda_{\ba, \bb}$ passing through $\ba$ and $\bb$,
and globally list decodes the restriction of $w$ to this line to obtain a list $\calL \subseteq \F_q[T]$ of univariate polynomials (this can be done efficiently using the algorithm of Figure \ref{fig:rs_list_johnson}).
These univariate polynomials are candidates for $f|_{\lambda_{\ba,\bb}}$. Which of these univariate polynomials
is $f|_{\lambda_{\ba,\bb}}$? We use our guess $v$ for $f(\bb)$: if there is a unique
univariate polynomial in the list with value $v$ at $\bb$, then we deem that to be our candidate for
$f|_{\lambda_{\ba,\bb}}$, and output its value at the point $\ba$ as our guess for $f(\ba)$. 

The above algorithm 
will correctly recover $f(\ba)$ if (1) there are not too many errors on the line through $\ba$ and $\bb$,
and (2) no other polynomial in $\calL$ takes the same value at $\bb$ as $f$ does.  
As we have already shown in the previous section, the first event is high probability by standard sampling bounds. As to the second event, using  
that any pair of polynomials in $\calL$ differ by at least a $\delta$-fraction of the points, we get that any other polynomial $h(T) \in \calL$ will agree with $f|_{\lambda_{\ba,\bb}}$ on $\bb$ with probability at most $1-\delta$ over the choice of $\bb$. Assuming that the list $\calL$ is sufficiently small (which happens if the decoding radius is sufficiently below the Johnson Bound), one can then apply a union bound over all elements of 
$\calL$  to show that with high probability, no other polynomial $h(T) \in \calL$ agrees with $f|_{\lambda_{\ba,\bb}}$ on $\bb$.
 
The formal description of the approximate local list decoding algorithm is presented Figure \ref{fig:rm_list} below, followed by correctness analysis.

\begin{figure}[h]
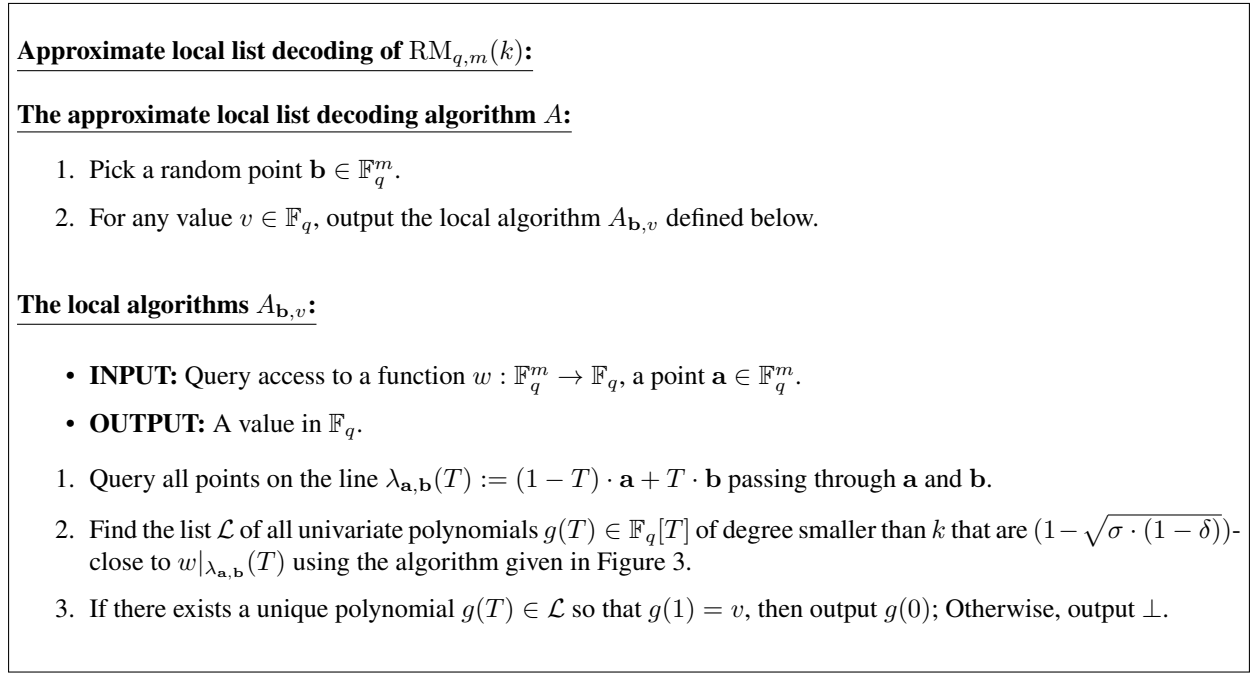

  \begin{boxedminipage}{\textwidth} \small \medskip \noindent
    $\;$

    \underline{\textbf{Approximate local list decoding of $\RM_{q,m}(k)$:}}

    \bigskip

   \underline{\textbf{The approximate local list decoding algorithm $A$:}}

    \begin{enumerate}
\item Pick a random point $\bb \in \F_q^m$.
\item For any value $v\in \F_q$, output the local algorithm $A_{\bb,v}$ defined below.
\end{enumerate}

\bigskip

 \underline{\textbf{The local algorithms $A_{\bb,v}$:}}

 \medskip

 \begin{itemize}
\item \textbf{INPUT:} Query access to a function $w: \F_q^m \to \F_q$, 
a point $\ba \in \F_q^m$. 
\item \textbf{OUTPUT:} A value in $ \F_q$.

\end{itemize}

    \begin{enumerate}
\item  
Query all points on the line $\lambda_{\ba, \bb}(T):=(1-T)
\cdot\ba+T\cdot\bb$ passing through $\ba$ and $\bb$.
\item Find the list $\mathcal{L}$ of all univariate polynomials $g(T) \in \F_q[T]$ of degree smaller than $k$ that are $(1-\sqrt{ \sigma\cdot(1-\delta)})$-close to $w|_{\lambda_{\ba,\bb}}(T)$ using the algorithm given in Figure \ref{fig:rs_list_johnson}.
\item \label{step:rm_list_candidate} If there exists a unique polynomial $g(T)  \in \mathcal{L}$ so that $g(1)=v$, then output $g(0)$; Otherwise, output $\bot$.
\end{enumerate}

  \medskip

  \end{boxedminipage}

\caption{Local list decoding of Reed-Muller Codes}
\label{fig:rm_list}
\end{figure}

\paragraph{Correctness.}
Let $f(X_1, \ldots, X_m) \in \F_q[X_1, \ldots, X_m]$ be a polynomial of degree smaller than $k$ so that $f(\ba) \neq w(\ba)$ for at most an $\alpha$-fraction of the entries $\ba \in \F_q^m$. We would like to show that with probability at least $\frac 1 2$ over the choice of $\bb$, the local algorithm $A_{\bb, f(\bb)}$ correctly recovers $f(\ba)$ for all but at most an $\epsilon$-fraction of  $\ba \in \F_q^m$. To show this, we shall first prove that with high probability  over the choice of \emph{both} random $\ba \in \F_q^m$ and random $\bb \in \F_q^m$, it holds that $A_{\bb, f(\bb)}$ correctly computes $f(\ba)$ on input $\ba$. 

We start by showing that with high probability over the choice of $\ba, \bb$, there exists a polynomial $g(T) \in \mathcal{L}$ that agrees with $f$ on the line $\lambda_{\ba,\bb}$.
 
 \begin{claim}\label{clm:rm_list_dec_1}
 With probability at least $1 - \frac{1} {\xi^2 q}$ over the choice of random $\ba, \bb \in \F_q^m$,
it holds that
  $f|_{\lambda_{\ba,\bb}} \in \mathcal{L}$.
 \end{claim}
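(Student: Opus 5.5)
The plan is to reduce the claim to a sampling statement about the number of errors on a random line, and then to invoke the correctness of the Guruswami--Sudan list decoder of Figure \ref{fig:rs_list_johnson}.

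First, observe that the restriction $f|_{\lambda_{\ba,\bb}}(T) = f((1-T)\ba + T\bb)$ is a univariate polynomial of degree smaller than $k$. Since the algorithm of Figure \ref{fig:rs_list_johnson} outputs every polynomial of degree smaller than $k$ agreeing with $w|_{\lambda_{\ba,\bb}}$ on at least $\sqrt{kq}+1$ of the $q$ points, it suffices to show the following. With probability at least $1-\frac{1}{\xi^2 q}$, the fraction of points $T \in \F_q$ with $f(\lambda_{\ba,\bb}(T)) \neq w(\lambda_{\ba,\bb}(T))$ is at most $1-\sqrt{\sigma(1-\delta)}$. Because $\sigma>1$ and $1-\delta = k/q$, agreement at least $\sqrt{\sigma k q}$ exceeds $\sqrt{kq}+1$ for large enough $q$, which handles the rounding.

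Second, I establish the sampling property. For independent uniform $\ba,\bb \in \F_q^m$, each point $\lambda_{\ba,\bb}(t)$ is uniform on $\F_q^m$, for every $t\in\F_q$. Moreover, for distinct $t \neq t'$ the pair $(\lambda_{\ba,\bb}(t), \lambda_{\ba,\bb}(t'))$ is uniform on $(\F_q^m)^2$, since the map $(\ba,\bb)\mapsto((1-t)\ba+t\bb,(1-t')\ba+t'\bb)$ is an invertible linear map (its determinant is $t'-t\neq 0$). So the indicator variables $Z_t := \mathbf{1}[f(\lambda(t))\neq w(\lambda(t))]$ are pairwise independent, each with mean at most $\alpha$.

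Third, I apply Chebyshev's inequality to $Z := \frac1q\sum_t Z_t$. We have $\E[Z]\le\alpha$ and $\Var(Z)\le \frac{1}{q}$ by pairwise independence (each variance is at most $1/4$, which is even better). Hence $\Pr[Z > \alpha+\xi] \le \frac{1}{\xi^2 q}$. Since $\alpha+\xi = 1-\sqrt{\sigma(1-\delta)}$, with the claimed probability the number of errors on the line is within the list decoding radius, and so $f|_{\lambda_{\ba,\bb}}\in\mathcal{L}$.

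The only delicate step is the pairwise independence, since all the points on the line share the same $\ba$ and $\bb$; this needs the invertibility computation above. The remaining bookkeeping, matching the radius to the guarantee of Figure \ref{fig:rs_list_johnson}, is routine.
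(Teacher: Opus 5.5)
Your proposal is correct and is essentially the paper's proof: indicators $Z_t$ for errors on the random line, pairwise independence of the points on the line, $\E[Z]\le \alpha$, $\Var(Z)\le 1/q$, and Chebyshev with deviation $\xi$. Your invertibility computation (determinant $t'-t\neq 0$) justifies the pairwise independence that the paper only asserts, and your rounding remark is needed only to ensure the Guruswami--Sudan decoder actually finds every $(1-\sqrt{\sigma(1-\delta)})$-close polynomial, not for the probabilistic statement itself.
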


 \begin{proof}
Note that for random points $\ba, \bb \in \F_q^m$, the line $\lambda_{\ba,\bb}$ is a uniformly random line. Consequently, any single point on the line is uniformly random, and any pair of points on the line are \emph{pairwaise independent}. Let $Z_1,\dots,Z_q$ be indicator random variables where $Z_t=1$ if $w$ differs from $f$ on the $t$'th point of the line $\lambda_{\ba,\bb}$, let $Z:=\frac{\sum_{t =1}^q Z_t} {q}$, and note that  by the correctness of the algorithm of Figure \ref{fig:rs_list_johnson}, we have that $f|_{\lambda_{\ba,\bb}} \in \mathcal{L}$ if $Z \leq 1-\sqrt{ \sigma \cdot(1-\delta)}$. Moreover, by linearity of expectation and since any point on the line is uniformly random, we have that
$$\mathbb{E}[Z] = \frac {\sum_{t=1}^q \mathbb{E}[Z_t]} {q} \leq  1-\sqrt{ \sigma \cdot(1-\delta)} - \xi,$$ while by pairwise independence, we have that
$$\Var(Z) = \frac{\sum_{t=1}^q \Var[Z_t]} {q^2} \leq \frac{\sum_{t=1}^q \mathbb{E}[Z_t] } {q^2}\leq \frac 1 q.$$
Consequently, by Chebyshev's inequality, 
$$
\Pr\left[Z > 1-\sqrt{ \sigma\cdot(1-\delta)} \right]  \leq 
\Pr \left[ Z - \mathbb{E}[Z] > \xi \right]
 \leq \frac{ \Var(Z)} { \xi^2 } \leq \frac 1 {\xi^2 q}.
$$
So we conclude that $Z \leq 1-\sqrt{ \sigma \cdot(1-\delta)}$ with probability at least $ 1 -  \frac 1 {\xi^2 q}$, in which case we have that $f|_{\lambda_{\ba,\bb}} \in \mathcal{L}$.
\end{proof}

Next we show that with high probability over the choice of $\ba, \bb$, there is no polynomial other than $f|_{\lambda_{\ba,\bb}}$  in $\mathcal{L}$ that agrees with
$f$ on $\bb$.

 \begin{claim}\label{clm:rm_list_dec_2}
 With probability at least $1 - \frac{\delta} {\sigma -1}$ over the choice of random $\ba, \bb \in \F_q^m$, there is no polynomial $h(T) \in \mathcal{L}$ other than
  $f|_{\lambda_{\ba,\bb}}$ so that $h(1) = f(b)$.
 \end{claim}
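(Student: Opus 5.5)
The plan is to first bound the list size $|\mathcal{L}|$ using the Johnson Bound (Theorem \ref{thm:johnson}), and then take a union bound over the members of $\mathcal{L}$ of the probability that each one agrees with $f$ at the point $\bb$. I will condition on the line $\lambda_{\ba,\bb}$ itself. The key observation is that for random $\ba,\bb\in\F_q^m$, we can sample the pair by first picking a uniformly random line $\lambda$ and then two independent uniform points on it. Under this sampling, the parametrization $\lambda_{\ba,\bb}(T)=(1-T)\ba+T\bb$ is uniform among parametrizations sending $0,1$ to distinct points, up to the degenerate case $\ba=\bb$, which has probability $q^{-m}$ and will be absorbed or ignored.

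Given the line, the received word restricted to it is fixed, so the list $\mathcal{L}$ is determined by the line and the parametrization. Since reparametrizing only composes with an affine map, it is cleaner to think of $\mathcal{L}$ as a set of degree-$<k$ functions on the line. The Reed-Solomon code on $q$ points has relative distance $\delta$. Decoding radius $\alpha'=1-\sqrt{\sigma(1-\delta)}$ gives $(1-\alpha')^2-(1-\delta)=(\sigma-1)(1-\delta)$, so Theorem \ref{thm:johnson} yields
\[
|\mathcal{L}|\le \frac{\delta-\alpha'}{(\sigma-1)(1-\delta)}\le \frac{1}{\sigma-1}\cdot\frac{\delta-\alpha'}{1-\delta}.
\]
I will want a cleaner bound. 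Using $\delta-\alpha'=\sqrt{\sigma(1-\delta)}-(1-\delta)\le\sqrt{\sigma(1-\delta)}$ does not quite give $\frac{1}{\sigma-1}$ directly. Instead I will aim for a count that, combined with the per-polynomial agreement probability $1-\delta$, produces $\frac{\delta}{\sigma-1}$, possibly after a slightly more careful use of the Johnson-type calculation.

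Next, fix the line and the point $\ba$ on it, and let $\bb$ be a uniform point on the line. For each $h\in\mathcal{L}$ with $h\neq f|_\lambda$, both are degree-$<k$ functions on the line and they agree on at most $k-1<(1-\delta)q$ points. So $\Pr_{\bb}[h(\bb)=f(\bb)]\le 1-\delta$; here $\bb$ is uniform on the line, and conditioning on $\bb\neq\ba$ costs an $O(1/q)$ factor. A union bound gives failure probability at most $(|\mathcal{L}|-1)(1-\delta)$ or $|\mathcal{L}|(1-\delta)$.

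The main obstacle is reconciling this product with the claimed $\frac{\delta}{\sigma-1}$. From the Johnson bound, $|\mathcal{L}|(1-\delta)\le\frac{\delta-\alpha'}{\sigma-1}\le\frac{\delta}{\sigma-1}$, since $\alpha'\ge0$ whenever the radius is meaningful. So the computation should close exactly. I would double check the arithmetic $(1-\alpha')^2=\sigma(1-\delta)$ and handle the minor issue that $\mathcal{L}$ depends on the parametrization only through relabeling, which does not affect the union bound. If $\alpha'<0$ the list may be trivially all of $\mathcal{L}$, but then the claim is vacuous or the hypothesis is reinterpreted.
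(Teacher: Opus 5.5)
Your proposal is correct and is essentially the paper's proof. You condition on the line, bound $|\mathcal{L}|\le\frac{\delta-\alpha'}{(\sigma-1)(1-\delta)}\le\frac{\delta}{(\sigma-1)(1-\delta)}$ via the Johnson Bound, note that each $h\neq f|_\lambda$ agrees with $f$ at a random point $\bb$ of the line with probability at most $1-\delta$, and take a union bound. The hedges you raise are unnecessary: taking $\bb$ uniform among the $q-1$ points other than $\ba$ gives agreement probability at most $\frac{k-1}{q-1}\le\frac{k}{q}=1-\delta$, so there is no $O(1/q)$ loss, and if $\alpha'<0$ the list is simply empty.
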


 \begin{proof}
We can  view the process of picking random $\ba, \bb \in \F_q^m$ as first picking a random $\ba \in \F_q^m$, then picking a random line $\lambda$ through $\ba$, and finally letting $\bb \in \F_q^m$ be a random point on the line $\lambda$. So in what follows, fix a point $\ba$ and a line $\lambda$ through it, and let $\mathcal{L}$ be the list of univariate polynomials of degree smaller than $k$ that are $(1 - \sqrt{\sigma (1-\delta)})$-close to $w$ on the line.
We shall show that with probability at least $1 - \frac{\delta} {\sigma -1}$ over the choice of $\bb$, it holds that there is no polynomial $h(T) \in \mathcal{L}$ other than
  $f|_\lambda$ so that $h$ and $f$ agree on $\bb$.

Fix a polynomial $h(T) \in \mathcal{L}$ so that $h(T) \neq f|_\lambda(T)$. Then since both $h$ and $f|_\lambda$ are polynomials of degree smaller than $k$, they must differ by at least a 
$\delta$-fraction of the points on the line. Consequently, we have that 
$h$ and $f|_\lambda$  agree on the point $\bb$ with probability at most $1- \delta$ over the choice of $\bb$.
Furthermore, note that by the Johnson Bound (cf., Theorem \ref{thm:johnson}), we have that 
\begin{equation}\label{eq:rm_list_dec_johnson_1}
|\mathcal{L}| 
\leq 
\frac{\delta }{\sigma\cdot(1-\delta)-(1-\delta)} \leq 
\frac{\delta } {(\sigma-1)(1-\delta)}.
\end{equation}
Consequently, by a union bound over all elements in $\mathcal{L}$, the probability that there exists a polynomial $h(T) \in \mathcal{L}$ other than $f|_\lambda$ so that $h$ and $f$ agree on $\bb$ is at most $\frac{\delta} {\sigma -1}$.
 \end{proof}

By the above Claims \ref{clm:rm_list_dec_1} and \ref{clm:rm_list_dec_2}, we conclude that with probability at least $1 - \frac{\delta} {\sigma -1} - \frac 1 {\xi^2 q}$ over the choice of \emph{both} $\ba$ and $\bb$, 
it holds that $f|_{\lambda_{\ba,\bb}} \in \mathcal{L}$ and
there is no polynomial $h(T) \in \mathcal{L}$ other than
  $f|_{\lambda_{\ba,\bb}}$ so that $h(1) = f(b)$, in which case the local algorithm $A_{\bb, f(\bb)}$ will set $g =f|_{\lambda_{\ba,\bb}}$ in Step \ref{step:rm_list_candidate}, and will output  
 $f|_{\lambda_{\ba,\bb}}(0) = f(\bb)$. By Markov's inequality, this implies in turn that with probability at least $\frac 1 2$ over the choice of $\bb$, the local algorithm $A_{\bb, f(\bb)}$ outputs $f(\ba)$ for at least a $(1-\epsilon)$-fraction of the points $\ba$ for
 $ \epsilon = \frac{2\delta} {\sigma -1} + \frac 2 {\xi^2 q}.
$

\paragraph{Query complexity and list size.} The list size is clearly the number $q$ of possible values in $\F_q$, and the
query complexity is clearly the number $q$ of points on a line.

\subsection{Local list decoding of Reed-Muller Codes up to the Johnson Bound}\label{subsec:rm_list_johnson}

The local list decoding algorithm for Reed-Muller Codes  that we presented in the previous section only gives a decoding radius approaching $1-\sqrt{ \sigma \cdot(1-\delta)}$ for a sufficiently large constant $\sigma>1$,
which is smaller than the Johnson Bound of  $1 - \sqrt{1-\delta}$. While we have not made an attempt to optimize the value of $\sigma$, we note that it cannot be smaller than $2$.
The reason is that in order to apply the union bound in Claim \ref{clm:rm_list_dec_2} we need the list size $L':=|\mathcal{L}|$ to be smaller than $\frac{1} {1-\delta}$ (since the failure probability for each individual polynomial in the list can be as large as $1-\delta$), and by (\ref{eq:rm_list_dec_johnson_1}), such a list size is only guaranteed for a decoding radius smaller than $1-\sqrt{2(1-\delta)}$. 

In this section we show an improved algorithm which can locally list decode Reed-Muller Codes  all the way up to the Johnson Bound. In more detail, as before, fix  a prime power $q$ and positive integers $k,m$ so that $k<q$, and let $\RM_{q,m}(k)$ be the corresponding Reed-Muller Code. Let $\delta: = 1 - \frac{k} {q}$ denote the relative distance of this code, let $\gamma, \xi >0$ be parameters, and let
$\alpha:=1-\sqrt{ (1+\gamma) \cdot(1-\delta)} - \xi$. 
We shall show that for any $s\geq 1$, this code is a $(Q, \alpha,   \epsilon, L)$-ALLDC for 
$Q = r^m \cdot q$, 
$\epsilon =  \frac{2\delta} {\gamma s} +  O\left(\frac{1} {\xi^2 q}\right)$, and 
$L = (\frac r s)^s$, where $r:=\frac{8 \delta s} {\gamma(1-\delta)}$.
Thus, for  $s > \frac{16 \delta } {\gamma}$ and sufficiently large $q$, the failure probability $\epsilon$ is smaller than $\frac \delta 8$, 
and so by the results of Section \ref{subsec:rm_unique} and by
Lemma \ref{lem:approx-local-to-local}, we also have that this code is a $(r^m \cdot q^2, \alpha,L)$-LLDC. Note that for constant $m,s, \gamma$, and $\delta$, the query complexity is  $r^m\cdot q = O(q^2)$ which is much smaller than the block length $q^m$ of the code for $m>2$ and a growing $q$. 

\paragraph{Overview.} 
The main idea behind the improved algorithm 
 is to also use {\em derivatives} to disambiguate the list.
Specifically, as before the advice is generated by choosing a uniformly random point $\bb \in \F_q^m$, but now the guess $v$
is supposed to equal to all multivariate derivatives of order smaller than $s$
of $f(X_1, \ldots, X_m)$ at $\bb$ for some integer $s\geq 1$ (cf., Section \ref{subsec:prelim_poly} for the definition of multivariate derivatives). 
To take advantage of derivatives, we recall that by Fact \ref{fact:prelim_hasse_sz_univ}, any pair of  univariate degree $k$ polynomials agree  on all derivatives of order smaller than $s$ on at most a fraction of 
$\frac{k} {sq} = \frac{1-\delta} {s}$ of the points. Consequently, the advice fails to disambiguate any pair of polynomials in $\calL$ with probability at most $\frac{1-\delta} {s}$, and taking $s \gg L'$ now allows us to apply a union bound over  $\calL$.

However, a disadvantage of this approach is that it significantly increases the list size of the local list decoding algorithm, which is the number of different guesses for all $m$-variate derivatives of order smaller than $s$ at the point $\bb$. Specifically, the number of different guesses is $q^{{m +s-1 \choose m}}$ (which is much larger than the block length $q^m$ of the code!), since the number of $m$-variate derivatives of order smaller than $s$ is ${m +s-1 \choose m}$, and each such derivative can take any value in $\F_q$.

To reduce the list size, we slightly extend the model of local list decoding by providing the local list decoding algorithm $A$ with a query access to the received word $w$ (so the algorithm $A$ can make a small number of queries to $w$ before outputting the list of local algorithms $A_1, \ldots, A_L$). 
To take advantage of this model, the local list decoding algorithm $A$ first chooses $t$ random lines through $\bb$, then (globally) list decodes the restriction of $f$ to these lines, and finally only outputs  guesses $v$ for $f^{(<s)}(\bb)$ that are consistent with  the directional derivatives of these lines at $\bb$ for a significant fraction of the lines. 
For a careful choice of the joint distribution of the $t$ random lines, bounding the number of guesses can be cast as an instance of a {\em list recovery} problem for Reed-Muller Codes (a certain generalization of list decoding), and in particular, we can use an extension of the Johnson Bound to the setting of  list recovery to bound the number of such guesses.

\medskip

The formal description of the approximate local list decoding algorithm is given in Figure \ref{fig:rm_list_johnson} below, followed by correctness analysis. 
  Before we formally state the algorithm, we first fix some notation. 
This notation is motivated by the following \emph{chain rule} for Hasse Derivatives, which relates the directional derivatives of a multivariate polynomial on a line to its global derivatives (the proof follows directly from the definition of the Hasse derivative and is left as an exercise). 
 
 \begin{fact}[Chain rule for Hasse Derivatives]\label{clm:chain}
Let $f(X_1, \ldots, X_m) \in \F[X_1, \ldots, X_m]$, let $\ba, \bu \in \F^m$, and let $\lambda(T) = \ba + T\bu$ be the line passing through $\ba$ in \emph{direction} $\bu$. 
Then for any non-negative integer $i$ and $T \in \F$,
\begin{equation}\label{eq:chain_1}
(f |_\lambda)^{(i)}(T) = \sum_{\bi : \wt(\bi) = i} f^{(\bi)}(\ba+T\bu)\cdot \bu^{\bi}.
\end{equation}
\end{fact}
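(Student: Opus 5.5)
The plan is to expand the restricted polynomial $f|_\lambda(T) = f(\ba + T\bu)$ around an arbitrary point $T$ and compare coefficients. By definition, the Hasse derivatives of $f|_\lambda$ at $T$ are the coefficients in
\[
f|_\lambda(T+Z) = \sum_{i \in \N} (f|_\lambda)^{(i)}(T)\, Z^i ,
\]
so it suffices to compute $f|_\lambda(T+Z)$ in a second way and read off the coefficient of $Z^i$.

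First we write $f|_\lambda(T+Z) = f(\ba + T\bu + Z\bu)$. Then we apply the multivariate definition of the Hasse derivative, namely $f(\bX + \bZ) = \sum_{\bi \in \N^m} f^{(\bi)}(\bX)\bZ^{\bi}$. This is a polynomial identity in the indeterminates $\bX, \bZ$, so we may specialise $\bX := \ba + T\bu$ and $\bZ := Z\bu$. Here $T$ is treated either as an indeterminate or as a field element; the identity survives substitution in both cases. This gives
\[
f(\ba + T\bu + Z\bu) = \sum_{\bi \in \N^m} f^{(\bi)}(\ba + T\bu)\,(Z\bu)^{\bi} = \sum_{\bi} f^{(\bi)}(\ba+T\bu)\,\bu^{\bi} Z^{\wt(\bi)},
\]
using $(Z\bu)^{\bi} = \prod_j (Z u_j)^{i_j} = \bu^{\bi} Z^{\wt(\bi)}$.

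Next we group the terms by $\wt(\bi) = i$. The coefficient of $Z^i$ on the right is then exactly $\sum_{\bi : \wt(\bi) = i} f^{(\bi)}(\ba + T\bu)\,\bu^{\bi}$. Comparing coefficients of $Z^i$ in the two expansions of $f|_\lambda(T+Z)$, as polynomials in $Z$, yields (\ref{eq:chain_1}).

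The only point needing care is that coefficient comparison is legitimate. Both sides are polynomials in $Z$ with coefficients in $\F[T]$ (or in $\F$ after fixing $T$), and the sums are finite because only $\bi$ with $\wt(\bi) \le \deg(f)$ contribute. I do not expect a real obstacle here; the main thing to get right is performing the substitution $\bZ = Z\bu$ at the level of polynomial identities rather than function values, which matters over finite fields.
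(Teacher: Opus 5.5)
Your proof is correct and is exactly the direct-from-the-definition argument the paper has in mind (the paper leaves this fact as an exercise): substitute $\bX \mapsto \ba + T\bu$ and $\bZ \mapsto Z\bu$ into the defining identity $f(\bX+\bZ)=\sum_{\bi} f^{(\bi)}(\bX)\bZ^{\bi}$, then compare coefficients of $Z^i$ with $f|_\lambda(T+Z)=\sum_i (f|_\lambda)^{(i)}(T)Z^i$. You are right to flag that the substitution should be done at the level of polynomial identities before evaluating at $T\in\F$; that is the point that matters over a finite field.
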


For integers $m,s$, let $I_{m,s} := \{ \bi \in \N^m \mid \wt(\bi) < s \}$, and 
 for an element $v \in \F_q^{I_{m,s}}$, and a direction
$\bu \in \F_q^m$, we define the restriction $v|_\bu \in \F_q^s$ of $v$ to direction
$\bu$ to equal
$
v|_\bu:= \left(\sum_{\bi : \wt(\bi) = i} v_\bi \cdot \bu^{\bi} \right)_{i=0,1,\ldots, s-1}. 
$
Note that in this notation, (\ref{eq:chain_1}) can be rephrased as
\begin{equation}\label{eq:chain_2}
(f|_\lambda)^{(<s)}(T) =  \left(f^{(<s)}(\ba+T\bu)\right)|_{\bu}.
\end{equation}

We now describe the algorithm. 
\begin{figure}[H]
  \begin{boxedminipage}{\textwidth} \small \medskip \noindent

    \underline{\textbf{Approximate local list decoding of $\RM_{q,m}(k)$:}}

    \bigskip

   $ \triangleright$ Let $s\geq 1$ be a parameter, let $r:= \frac {8\delta s} {\gamma (1-\delta)} $, and let $U$ be an arbitrary subset of $\F_q$ of size $r$.

     \bigskip

   \underline{\textbf{The approximate local list decoding algorithm $A$:}}

\begin{itemize}
\item \textbf{INPUT:} Query access to a function $w: \F_q^m \to \F_q$. 
\item \textbf{OUTPUT:} A list of local algorithms.
\end{itemize}

    \begin{enumerate}
\item Pick random points $\bb \in \F_q^m$ and $\bu_0 \in \F_q^m$.
\item For each $\bu \in U^m:$
\begin{enumerate}
\item  
Query all points on the line $\lambda_\bu(T):= \bb + T (\bu_0+\bu)$ through $\bb$ in direction $\bu_0+\bu$.
\item Find the list $\mathcal{L}_\bu$ of all univariate polynomials $g(T) \in \F_q[T]$ of degree smaller than $k$ that are $(1-\sqrt{ (1+\gamma)\cdot(1-\delta)})$-close to $w|_{\lambda_\bu}$ using the algorithm given in Figure \ref{fig:rs_list_johnson}.

\item Let $S_\bu = \left\{ g^{(<s)}(0) \mid g(T) \in \mathcal{L}_\bu \right\}$.
\end{enumerate}
\item Let $V$ be the set containing all $v \in \F_q^{I_{m,s}}$ which satisfy  that $v|_{\bu_0+\bu} \in S_{\bu}$ for at least $\frac 1 2$ of the points $\bu \in  U^m$. 
\item For any $v\in V$, output the local algorithm $A_{\bb,v}$ defined below.
\end{enumerate}

\bigskip

 \underline{\textbf{The local algorithms $A_{\bb,v}$:}}

 \medskip

 \begin{itemize}
\item \textbf{INPUT:} Query access to a function $w: \F_q^m \to \F_q$, 
a point $\ba \in \F_q^m$. 
\item \textbf{OUTPUT:} A value in $ \F_q$.

\end{itemize}

    \begin{enumerate}
\item  
Query all points on the line $\lambda_{\ba, \bb}(T):=(1-T)
\cdot\ba+T\cdot\bb = \ba + T(\bb - \ba)$ passing through $\ba$ and $\bb$.
\item Find the list $\mathcal{L}$ of all univariate polynomials $g(T) \in \F_q[T]$ of degree smaller than $k$ that are $(1-\sqrt{ (1+\gamma)\cdot(1-\delta)})$-close to $w|_{\lambda_{\ba, \bb}}(T)$ using the algorithm given in Figure \ref{fig:rs_list_johnson}.
\item \label{step:rm_list_johnson_candidate} If there exists a unique polynomial $g(T)  \in \mathcal{L}$ so that $g^{(<s)}(1)=  v|_{\bb - \ba}$, then output $g(0)$; Otherwise, output $\bot$.
\end{enumerate}

  \medskip

  \end{boxedminipage}

  \caption{Local list decoding of Reed-Muller Codes up to Johnson Bound}
\label{fig:rm_list_johnson}
\end{figure}

\paragraph{Correctness.} Let $f(X_1, \ldots, X_m) \in \F_q[X_1, \ldots, X_m]$ be a polynomial of degree smaller than $k$ so that $f(\ba) \neq w(\ba)$ for at most an $\alpha$-fraction of the entries $\ba \in \F_q^m$. 
We first show that with high probability over the choice of $\bb, \bu_0 \in \F_q^m$, it holds that $f^{(<s)}(\bb) \in V$.

\begin{claim}\label{clm:rm_list_dec_johnson_1}
With probability at least $1 - \frac{2} {\xi^2 q}$ over the choice of $\bb, \bu_0 \in \F_q^m$, it holds that $f^{(<s)}(\bb) \in V$.
\end{claim}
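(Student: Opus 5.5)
We need to show that with probability at least $1-\frac{2}{\xi^2 q}$, $f^{(<s)}(\bb)|_{\bu_0+\bu}\in S_\bu$ for at least half of $\bu\in U^m$. By the chain rule (\ref{eq:chain_2}) applied to the line $\lambda_\bu(T)=\bb+T(\bu_0+\bu)$ at $T=0$, we have $(f|_{\lambda_\bu})^{(<s)}(0)=f^{(<s)}(\bb)|_{\bu_0+\bu}$. So whenever $f|_{\lambda_\bu}\in\mathcal{L}_\bu$, the restriction lands in $S_\bu$. It therefore suffices to show that, with the stated probability, $f|_{\lambda_\bu}\in\mathcal{L}_\bu$ for at least half of the $\bu\in U^m$. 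By the correctness of the algorithm of Figure \ref{fig:rs_list_johnson}, this membership holds whenever the fraction $Z_\bu$ of points of $\lambda_\bu$ where $w\neq f$ is at most $1-\sqrt{(1+\gamma)(1-\delta)}$.

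Fix $\bu\in U^m$. Since $\bb,\bu_0$ are independent and uniform, $\bu_0+\bu$ is uniform, so $\lambda_\bu$ is a uniformly random line (a random base point and a uniformly random direction). Exactly as in Claim \ref{clm:rm_list_dec_1}, each point $\lambda_\bu(T)$ with $T\neq 0$ is uniform, and pairs of distinct points are pairwise independent. A slight care is needed because the direction may be zero with probability $q^{-m}$ (absorbed in the error term) and because the point $T=0$ is shared; I would treat it exactly as the earlier claim does. Chebyshev's inequality then gives
\[
\Pr\left[Z_\bu>1-\sqrt{(1+\gamma)(1-\delta)}\right]\le \frac{1}{\xi^2 q}
\]
for each fixed $\bu$, possibly up to a lower-order correction that fits in the constant.

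Now let $N$ be the number of $\bu\in U^m$ with $f|_{\lambda_\bu}\notin\mathcal{L}_\bu$. By linearity of expectation, $\E[N]\le |U|^m/(\xi^2 q)$. Markov's inequality gives $\Pr[N\ge |U|^m/2]\le 2/(\xi^2 q)$. On the complement event, more than half of the $\bu$ satisfy $f^{(<s)}(\bb)|_{\bu_0+\bu}\in S_\bu$, so $f^{(<s)}(\bb)\in V$. Note that we do not need independence across different $\bu$, which is why the Markov step (rather than a concentration bound) suffices and gives exactly the factor $2$.

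The only delicate point is the per-line sampling bound. The lines $\lambda_\bu$ all pass through the same point $\bb$, but each one individually is still a uniformly random line, and that is all the argument uses. The Chebyshev computation then matches Claim \ref{clm:rm_list_dec_1} verbatim.
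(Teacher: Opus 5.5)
Your proof is correct and follows the same route as the paper. The steps match: a per-line Chebyshev bound exactly as in Claim \ref{clm:rm_list_dec_1}, then Markov's inequality over $\bu \in U^m$ to get the factor $2$, then the chain rule (\ref{eq:chain_2}) to pass from $f|_{\lambda_\bu}\in\mathcal{L}_\bu$ to $f^{(<s)}(\bb)|_{\bu_0+\bu}\in S_\bu$.

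Your hedging about the zero direction and a ``lower-order correction'' is unnecessary. For fixed $\bu$ and $T_1\neq T_2$, the map $(\bb,\bu_0)\mapsto(\lambda_\bu(T_1),\lambda_\bu(T_2))$ is an invertible affine map, since its linear part has determinant $(T_2-T_1)^m\neq 0$. So all $q$ points of $\lambda_\bu$ are exactly uniform and pairwise independent, and the bound $\frac{1}{\xi^2 q}$ holds with no correction.
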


\begin{proof}
First note that for any fixed $\bu \in U^m$, since $\bb$ and $\bu_0$ are chosen uniformly at random, then the line $\lambda_\bu(T) = \bb + T(\bu_0+\bu)$ is uniformly random. Consequently, for any fixed $\bu \in U^m$, the same analysis as in Claim \ref{clm:rm_list_dec_1} shows that with probability at least $1 - \frac{1} {\xi^2 q}$ over the choice of $\bb$ and $\bu_0$, it holds that $f|_{\lambda_{\bu}} \in \mathcal{L}_{\bu}$. By Markov's inequality, this implies in turn that with probability at least $1 - \frac{2} {\xi^2 q} $ over the choice of $\bb$ and $\bu_0$, it holds that $f|_{\lambda_{\bu}} \in \mathcal{L}_{\bu}$ for at least $\frac 12 $ of the points $\bu \in U^m$.

Next observe that for any $\bu \in  U^m$ so that $f|_{\lambda_{\bu}} \in \mathcal{L}_{\bu}$ we have that $(f|_{\lambda_{\bu}})^{(<s)} (0) \in S_{\bu}$. Moreover, by (\ref{eq:chain_2}), it holds that
$
(f|_{\lambda_{\bu}})^{(<s)}(0) =  \left(f^{(<s)}(\bb)\right)|_{\bu_0+\bu}
$ for each such $\bu$.
We conclude that with probability at least $1 - \frac{2} {\xi^2 q} $ over the choice of $\bb$ and $\bu_0$, we have that $\left(f^{(<s)}(\bb)\right)|_{\bu_0+\bu} \in S_{\bu}$ for at least $\frac 1 2$ of $\bu \in U^m$, in which case $f^{(<s)}(\bb)$ will be included in $V$.
\end{proof}

Next we would like to show that with probability at least $\frac 1 2$ over the choice of $\bb$,
the local algorithm $A_{\bb, f^{(<s)}(\bb)}$ correctly recovers $f(\ba)$ for most points  $\ba \in \F_q^m$. This part of the anlysis is very similar to the analysis of the algorithm of Figure \ref{fig:rm_list}, where the main difference is that we replace the standard bound on the maximum number of zeros of a low-degree polynomial with the stronger version that also accounts for multiplicities given by Fact \ref{fact:prelim_hasse_sz_univ}, which allows us in turn to handle larger lists $\mathcal{L}$.

In more detail, as before, to show the above, we shall first show that with high probability  over the choice of \emph{both} random $\ba \in \F_q^m$ and random $\bb \in \F_q^m$, it holds that $A_{\bb,  f^{(<s)}(\bb)}$ correctly computes $f(\ba)$ on input $\ba$. 
The following claim says that with high probability over the choice of $\ba, \bb$, there exists a polynomial $g(T) \in \mathcal{L}$ that agrees with $f$ on the line $\lambda_{\ba,\bb}$. The proof is very similar to that of Claim \ref{clm:rm_list_dec_1}.

 \begin{claim}\label{clm:rm_list_dec_johnson_2}
 With probability at least $1 - \frac{1} {\xi^2 q}$ over the choice of random $\ba, \bb \in \F_q^m$,
it holds that
  $f|_{\lambda_{\ba,\bb}} \in \mathcal{L}$.
 \end{claim}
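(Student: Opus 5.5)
The plan is to repeat the argument of Claim \ref{clm:rm_list_dec_1} almost verbatim. The only change is the decoding radius: the list $\mathcal{L}$ is now formed with radius $1-\sqrt{(1+\gamma)(1-\delta)}$, and the agreement assumption is $\alpha = 1-\sqrt{(1+\gamma)(1-\delta)}-\xi$. So the gap between the expected error fraction and the decoding threshold is still exactly $\xi$.

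Concretely, I first observe that for independent uniform $\ba,\bb\in\F_q^m$, the line $\lambda_{\ba,\bb}(T)=\ba+T(\bb-\ba)$ is a uniformly random line. Each point $\lambda_{\ba,\bb}(t)$ is uniform in $\F_q^m$, and any two distinct points $\lambda_{\ba,\bb}(t),\lambda_{\ba,\bb}(t')$ are pairwise independent, since the affine map $(\ba,\bb)\mapsto(\lambda(t),\lambda(t'))$ is a bijection when $t\neq t'$.

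For $t\in\F_q$, let $Z_t$ be the indicator that $w(\lambda_{\ba,\bb}(t))\neq f(\lambda_{\ba,\bb}(t))$, and let $Z=\frac1q\sum_t Z_t$ be the error fraction on the line. By linearity, $\E[Z]\le\alpha=1-\sqrt{(1+\gamma)(1-\delta)}-\xi$. By pairwise independence, $\Var(Z)=\frac{1}{q^2}\sum_t\Var(Z_t)\le\frac1q$. Chebyshev's inequality then gives
$$\Pr\left[Z>1-\sqrt{(1+\gamma)(1-\delta)}\right]\le\Pr\left[Z-\E[Z]>\xi\right]\le\frac{1}{\xi^2q}.$$

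On the complementary event, $f|_{\lambda_{\ba,\bb}}$ is a univariate polynomial of degree smaller than $k$. It is $(1-\sqrt{(1+\gamma)(1-\delta)})$-close to $w|_{\lambda_{\ba,\bb}}$, and this radius lies below the Johnson radius $1-\sqrt{1-\delta}$ handled by the algorithm of Figure \ref{fig:rs_list_johnson}. By the correctness of that algorithm, $f|_{\lambda_{\ba,\bb}}\in\mathcal{L}$.

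There is no real obstacle here. The only point needing care is the pairwise independence of points on a line generated by two uniform points, together with the observation that the list decoder's radius matches the threshold used in Chebyshev's inequality.
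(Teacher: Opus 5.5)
Your proposal is correct and follows the paper's route exactly. The paper proves this claim by rerunning the argument of Claim~\ref{clm:rm_list_dec_1}: points of the random line $\lambda_{\ba,\bb}$ are uniform and pairwise independent, $\E[Z]\le\alpha$, $\Var(Z)\le 1/q$, and Chebyshev is applied with gap $\xi$, now with decoding radius $1-\sqrt{(1+\gamma)(1-\delta)}$. Your bijection argument for pairwise independence fills in a detail the paper only asserts.
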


The next claim says that with high probability over the choice of $\ba, \bb$, there is no polynomial $h(T) \in \mathcal{L}$ 
other than $f|_{\lambda_{\ba,\bb}}$   so that 
$h^{(<s)}(1) = f^{(<s)}(\bb)|_{\bb - \ba}$.

 \begin{claim}\label{clm:rm_list_dec_johnson_3}
 With probability at least $1 - \frac \delta {\gamma s}$ over the choice of random $\ba, \bb \in \F_q^m$, there is no polynomial $h(T) \in \mathcal{L}$ other than
  $f|_{\lambda_{\ba,\bb}}$ so that $h^{(<s)}(1) = f^{(<s)}(\bb)|_{\bb - \ba}$.
 \end{claim}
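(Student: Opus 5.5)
We follow the proof of Claim \ref{clm:rm_list_dec_2}, replacing the plain root bound by the multiplicity bound of Fact \ref{fact:prelim_hasse_sz_univ}. First we re-sample: choosing random $\ba,\bb\in\F_q^m$ is the same as choosing a random $\ba$, a random line $\lambda$ through $\ba$ (parametrized as $\lambda(T)=\ba+T\bu$ with $\bu=\bb-\ba$), and then taking $\bb=\lambda(1)$. It is cleanest to condition on the unparametrized line $\lambda$ and on $\ba$, and to let $\bb$ be a uniformly random point of $\lambda$. The list $\mathcal{L}$ and the polynomial $f|_\lambda$ then depend only on $\ba$ and $\lambda$.

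The first step is to use the chain rule (\ref{eq:chain_2}). With $\bu=\bb-\ba$ we get
\[
(f|_{\lambda_{\ba,\bb}})^{(<s)}(1) = \left(f^{(<s)}(\bb)\right)|_{\bb-\ba}.
\]
Hence, for $h\in\mathcal{L}$, the bad event $h^{(<s)}(1)=f^{(<s)}(\bb)|_{\bb-\ba}$ is exactly the event that $h$ and $f|_{\lambda_{\ba,\bb}}$ agree on all Hasse derivatives of order less than $s$ at the point $T=1$. Now we reparametrize the line so that $\ba\mapsto 0$ and $\bb\mapsto t$, for a uniform $t\in\F_q\setminus\{0\}$. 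Under this reparametrization, the event says that $h-f|_\lambda$ vanishes with multiplicity at least $s$ at the parameter of $\bb$. Here we must check that a linear reparametrization $T\mapsto cT$ only rescales the Hasse derivatives, namely $g^{(i)}$ picks up a factor $c^i$. So the event ``agreement of all derivatives of order less than $s$'' is invariant under reparametrization, and the list $\mathcal{L}$ transforms consistently. We expect this bookkeeping to be the main point needing care.

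Next, fix $h\in\mathcal{L}$ with $h\ne f|_\lambda$. The difference $h-f|_\lambda$ is a nonzero polynomial of degree less than $k$. By Fact \ref{fact:prelim_hasse_sz_univ}, it has multiplicity at least $s$ at no more than a $\frac{k}{sq}=\frac{1-\delta}{s}$ fraction of the points. Since $\bb$ is uniform on the line, the probability of the bad event for this $h$ is at most $\frac{1-\delta}{s}$. Excluding the single point $\ba$ changes this only by a negligible $O(1/q)$, which can be absorbed or ignored, as in the earlier claim.

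Finally, we take a union bound over $\mathcal{L}$. By the Johnson Bound (Theorem \ref{thm:johnson}) applied with radius $1-\sqrt{(1+\gamma)(1-\delta)}$,
\[
|\mathcal{L}|\le \frac{\delta}{(1+\gamma)(1-\delta)-(1-\delta)}=\frac{\delta}{\gamma(1-\delta)}.
\]
Therefore the failure probability is at most $\frac{\delta}{\gamma(1-\delta)}\cdot\frac{1-\delta}{s}=\frac{\delta}{\gamma s}$. Averaging over $\ba$ and $\lambda$ gives the claim.
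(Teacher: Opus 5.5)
Your proof is correct and follows the paper's argument essentially step for step. You resample $\bb$ as a random point on a random line through $\ba$, and use the chain rule (\ref{eq:chain_2}) to turn the event into $h-f|_\lambda$ vanishing to order $s$ at $\bb$. You then bound each $h$ by $\frac{1-\delta}{s}$ via Fact \ref{fact:prelim_hasse_sz_univ}, and take a union bound using the Johnson Bound $|\mathcal{L}|\le\frac{\delta}{\gamma(1-\delta)}$.

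Your reparametrization bookkeeping (Hasse derivatives scale by $c^i$ under $T\mapsto cT$) is a detail the paper silently skips. Your $O(1/q)$ hedge for excluding $\ba$ is unnecessary, since at most $\frac{k-1}{s}$ of the remaining $q-1$ points are bad and $\frac{k-1}{s(q-1)}\le\frac{k}{sq}$ whenever $k\le q$.
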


 \begin{proof}
 The proof is similar to that of Claim \ref{clm:rm_list_dec_2},
 except that we use Fact \ref{fact:prelim_hasse_sz_univ} to bound the probability that  
 $h$ and $f|_\lambda$ agree on $\bb$ on all derivatives of order smaller than $s$.
 In more detail, as in the proof of Claim \ref{clm:rm_list_dec_2}, we can  view the process of picking random $\ba, \bb \in \F_q^m$ as first picking a random $\ba \in \F_q^m$, then picking a random line $\lambda$ through $\ba$, and finally letting $\bb \in \F_q^m$ be a random point on the line $\lambda$. So in what follows, fix a point $\ba$ and a line $\lambda$ through it, and let $\mathcal{L}$ be the list of univariate polynomials of degree smaller than $k$ that are $(1 - \sqrt{(1+\gamma)(1-\delta)})$-close to $w$ on the line.
We shall show that with probability at least $1 - \frac \delta {\gamma s}$ over the choice of $\bb$, it holds that there is no polynomial $h(T) \in \mathcal{L}$ other than
  $f|_\lambda$ so that $h$ and $f|_\lambda$ agree on $\bb$ on all derivatives of order smaller than $s$.

Fix a polynomial $h(T) \in \mathcal{L}$ so that $h(T) \neq f|_\lambda(T)$. Then, since both $h$ and $f|_\lambda$ are polynomials of degree smaller than $k$, by Fact \ref{fact:prelim_hasse_sz_univ}, they agree on all of their derivatives of order smaller than  $s$ on at most a $ \frac{1-\delta} s$-fraction of the points on the line. 
Moreover, note that by the Johnson Bound (cf., Theorem \ref{thm:johnson}), we have that 
\begin{equation}\label{eq:rm_list_dec_johnson_2}
|\mathcal{L}| 
\leq 
\frac{\delta  }{(1+\gamma) \cdot(1-\delta)-(1-\delta)} \leq 
\frac{\delta } { \gamma(1-\delta)}.
\end{equation}
Consequently, by a union bound over all elements in $\mathcal{L}$, the probability that there exists a polynomial 
$h(T) \in \mathcal{L}$ other than $f|_\lambda$ that agrees with $f|_\lambda$ on  $\bb$ on all derivatives of order smaller than $s$ is at most $\frac{\delta} {\gamma s}$.
By (\ref{eq:chain_2}), this implies in turn that with probability at least $1 - \frac \delta {\gamma s}$ over the choice of $\ba, \bb \in \F_q^m$, it holds that 
$h^{(<s)}(1) \neq  f^{(<s)}(\bb)|_{\bb - \ba}$.
 \end{proof}

By the above Claims \ref{clm:rm_list_dec_johnson_1}, \ref{clm:rm_list_dec_johnson_2}, and \ref{clm:rm_list_dec_johnson_3}, we conclude  that with probability at least $1 - \frac{\delta} {\gamma s} - \frac 3 {\xi^2 q}$ over the choice of  $\ba$, $\bb$, and $\bu_0$, 
it holds that $f^{(<s)}(\bb) \in V$, and that during the execution of $A_{\bb, f^{(<s)}(\bb)}$, it holds that  $f|_{\lambda_{\ba,\bb}} \in \mathcal{L}$ and
there is no polynomial $h(T) \in \mathcal{L}$ other than
  $f|_{\lambda_{\ba,\bb}}$ so that $h^{(<s)}(1) = f^{(<s)}(\bb)|_{\bb - \ba}$, in which case the local algorithm $A_{\bb,  f^{(<s)}(\bb)}$ will set $g =f|_{\lambda_{\ba,\bb}}$ in Step \ref{step:rm_list_johnson_candidate}, and will output  
 $f|_{\lambda_{\ba,\bb}}(0) = f(\ba)$. By Markov's inequality, this implies in turn that with probability at least $\frac 1 2$ over the choice of $\bb$ and $\bu_0$, the local algorithm $A_{\bb, f^{(<s)}(\bb)}$ will be output by $A$, and will output $f(\ba)$ for at least a $(1-\epsilon)$-fraction of the points $\ba$ for $\epsilon=\frac {2\delta} {\gamma s}+ \frac{6} {\xi^2 q}$.

  \paragraph{Query complexity.} The query complexity of the approximate local list decoding algorithm $A$ is $r^m \cdot q$ since it queries $r^m$ lines, where each line has $q$ points, while the query complexity of each of the local algorithms $A_{\bb, v}$ is the number $q$ of points on the line.

\paragraph{List size.}
  It remains to bound the number of local algorithms output by $A$, which amounts to bounding the size of the set $V$. To this end, fix $\bb, \bu_0 \in \F_q^m$, and recall that for any $v\in V$, we have that
 $ v|_{\bu_0+\bu}= \left(\sum_{\bi : \wt(\bi) = i} v_\bi \cdot  (\bu_0+\bu)^{\bi} \right)_{i=0,1,\ldots, s-1} \in S_{\bu}$ for at least $\frac 1 2$ of the points $\bu \in  U^m$, where $S_\bu = \left\{ g^{(<s)}(0) \mid g(T) \in \mathcal{L}_\bu \right\} \subseteq \F_q^s$. 
For $\bu \in U^m$ and $i \in \{0,1,\ldots, s-1\}$,  let $S_{\bu,i} := \left\{ g^{(i)}(0) \mid g(T) \in \mathcal{L}_\bu \right\} \subseteq \F_q$, and let $V_i$ be the set containing 
all vectors $v \in \F_q^{\{\bi \mid \wt(\bi)=i\}}$ so that $\sum_{\bi : \wt(\bi) = i} v_\bi \cdot (\bu_0+\bu)^{\bi} \in S_{\bu, i}$ for at least $\frac 1 2$ of the points $\bu \in U^m$. Note that by (\ref{eq:rm_list_dec_johnson_2}) and our choice of $r=\frac{8 \delta s} {\gamma(1-\delta)}$, we have that 
$$|S_{\bu,i}| \leq |S_{\bu}| \leq |\mathcal{L}_\bu| \leq \frac{\delta} {(1-\delta) \gamma} = \frac{r} {8s}.$$ 
We shall show that for any $i \in \{0,1,\ldots, s-1\}$, $V_i$ has size at most $\frac r s$, 
and so the size of $V$ is at most $(\frac rs)^s $.

To see the above, fix $i \in \{0,1,\ldots, s-1\}$, and let 
$$\mathcal{L}_i:=\left\{ h \in (\F_q)_{<s}[X_1, \ldots , X_m] \mid h(\bu_0+\bu) \in S_{\bu,i} \;\; \text{for at least half of the points $\bu \in U^m$}\right\}.$$
Next we observe that $|V_i| \leq |\mathcal{L}_i|$. To see this, map each $v \in V_i$ to a polynomial $h_v(\bX):=\sum_{\bi : \wt(\bi)=i} v_\bi \bX^{\bi}$, and note that $h_v \in \mathcal{L}_i$ and $h_v \neq h_{v'}$ for any distinct $v,v'\in V_i$. 
Thus, it suffices to bound the size of 
$\mathcal{L}_i$.

The problem of bounding the size of $\mathcal{L}_i$ can be cast as follows. 
For a prime power $q$, positive integers $k,m$, and a subset $U \subseteq \F_q$ of size $r$, let 
$\RM_{q,m}(U;k)$ be the code over $\F_q$ of block length $r^m$ which associates with 
 any polynomial $f(X_1, \ldots, X_m) \in (\F_q)_{<k}[X_1, \ldots, X_m]$ a codeword $(f(\bu))_{\bu \in U^m} \in \F_q^{r^m}$. Note that the Reed-Muller Code $\RM_{q,m}(k)$ corresponds to the special case in which $U = \F_q$. It is well-known that $\RM_{q,m}(U;k)$ has relative distance at least $1 - \frac{k} {r}$ \cite{Schwartz80, Zippel79,DL78}. 
 
 In the above terminology, we 
 are  given for each codeword entry $\bu \in U^m$ in the code $\RM_{q,m}(U;s)$ (of relative distance $\delta':=1- \frac{s} {r}$) a small \emph{input list} $S_{\bu,i}$ of possible values in $\F_q$ of size at most $\ell:=\frac{r} {8s}$, and the goal is to bound the number codewords $c\in \RM_{q,m}(U;k)$ so that  $c_\bu \notin S_{\bu,i}$ for at most $\alpha':=\frac 1 2$ of the codeword entries  $\bu$. This is a well-known problem called \emph{list recovery}, which extends the problem of list decoding that corresponds to the case where all input lists $S_{\bu}$ have size one. In particular, an extension of the Johnson Bound to the setting of list-recovery (see e.g., \cite[Lemma 5.2]{GKORS17}) says that the number of codewords is at most 
 $$\frac{\delta' \ell} {(1-\alpha')^2 - \ell(1-\delta')} \leq \frac{\ell} {\frac 14 - \frac {\ell s} r} = \frac r s.$$

\subsection{Local list decoding of multivariate multiplicity codes up to minimum distance}\label{subsec:mult_mult_list}

In the previous section we showed how to locally list decode Reed-Muller Codes  up to the \emph{Johnson Bound}. The main barrier that prevented us from bypassing the Johnson Bound was that the local list decoding algorithm for Reed-Muller Codes   used as a subroutine the algorithm for list decoding of Reed-Solomon Codes from roughly the same decoding radius, and we currently do not know how to list decode Reed-Solomon Codes beyond the Johnson Bound with small list size.

However, for (univariate) multiplicity codes of sufficiently large multiplicity $s$, 
the results of Sections \ref{subsubsec:mult_mult}
and \ref{subsec:const_list_mult}
imply that they are list decodable up to their \emph{minimum distance} with constant list size, and  thus it is reasonable to expect that we could also locally list decode \emph{multivariate} multiplicity codes of sufficiently large multiplicity $s$ up to their minimum distance. Indeed, an algorithm very similar to that of Figure \ref{fig:rm_list_johnson} can be used to locally list decode multivariate multiplicity codes up to their minimum distance. Since the local list decoding algorithm for multivariate multiplicity codes is very similar to that of Reed-Muller Codes, but significantly more notationally heavy, we do not provide here a formal description of the algorithm for multivariate multiplicity codes, but instead just point out the main changes that need to be done in the algorithm of Figure \ref{fig:rm_list_johnson} to adapt it to the setting of multivariate multiplicity codes. We refer the reader to \cite[Section 4]{KRSW23} for a formal description of the algorithm. 

First, we note that in the multivariate multiplicity code setting, in order to list decode the restriction $w|_\lambda$ of $w$ to some line $\lambda$, we need to also receive oracle access to the \emph{directional derivatives} on the line, while $w$ only provides oracle access to the \emph{global derivatives}. However, 
we can use (\ref{eq:chain_2}) to compute the directional derivative of $w$ at any point on the line from the global derivatives of $w$ at this point.

Second, in the setting of multivariate multiplicity codes, the local algorithm $A_{\bb, v}$ needs to output the \emph{global derivatives} $f^{(<s)}(\ba)$ of $f$ at $\ba$, instead of just the value $f(\ba)$. Using the polynomial $g(T)$ determined in Step \ref{step:rm_list_johnson_candidate} of the algorithm $A_{\bb,v}$ in Figure \ref{fig:rm_list_johnson}, we can recover with high probability the \emph{directional derivatives} 
of $f$ at $\ba$ in the direction of the line $\lambda_{\ba,\bb}$, however these do not generally determine the global derivatives. To cope with this, we can first use exactly the same procedure that was used by the approximate local list decoding algorithm $A$ to obtain a small number of possible guesses $V'$ for the value of $f^{(<s)}(\ba)$, and then instead of outputting $g(0)$ in Step \ref{step:rm_list_johnson_candidate}, output the unique $v' \in V'$ such that $v'|_{\bb - \ba} = g^{(<s)}(0)$ as a guess for $f^{(<s)}(\ba)$, if such a $v'$ exists. 

To show that the above solution works, we first recall that by Claim \ref{clm:rm_list_dec_johnson_1}, we have that $v_0:=f^{(<s)}(\ba) \in V'$ with high probability. So it remains to show that with high probability, for any $v' \in V'$ other than $v_0$ it holds that $v'|_{\bb - \ba} \neq v_0|_{\bb - \ba}$. To see this, fix $v' \in V'$ other than $v_0$, and recall that $v'|_{\bb - \ba} = \left(\sum_{\bi : \wt(\bi) = i} v'_\bi \cdot  (\bb-\ba)^{\bi} \right)_{i=0,1,\ldots, s-1}$ and $v_0|_{\bb - \ba} = \left(\sum_{\bi : \wt(\bi) = i} (v_0)_\bi \cdot  (\bb-\ba)^{\bi} \right)_{i=0,1,\ldots, s-1}$. Since $v' \neq v_0$, there exists some $i \in \{0,1,\ldots, s-1\}$ so that the polynomials $h_{v'}(\bX):=\sum_{\bi : \wt(\bi) = i} v'_\bi \cdot  \bX^{\bi}$ and  $h_{v_0}(\bX):=\sum_{\bi : \wt(\bi) = i} (v_0)_\bi \cdot  \bX^{\bi}$ are distinct. But since both polynomials are of degree at most $s$, this implies in turn that with probability at least $1 - \frac s q$ over the choice of $\ba$ and $\bb$ it holds that $h_{v'}(\bb-\ba) \neq h_{v_0}(\bb-\ba)$, in which case we shall also have that $v'|_{\bb - \ba} \neq v_0|_{\bb - \ba}$. By applying a union bound over all elements of $V'$, we conclude that with probability at least
$1 - \frac {s|V'|} {q}$ there is no $v' \in V'$ other than $v_0=f^{(<s)}(\ba)$ so that 
$v'|_{\bb - \ba} = v_0|_{\bb - \ba}$.

\paragraph{Local list decoding up to capacity.} 
Above we  outlined an algorithm which can local list decode multivariate multiplicity codes of sufficiently large multiplicity $s$ up to their minimum relative distance of $\delta:= 1 - \frac{k} {s q}$. Recall however that the rate of $m$-variate multiplicity codes of multiplicity $s$ is 
$$\frac{ {m+k-1 \choose m} } { {m+s-1 \choose m} \cdot q^m} 
\leq \frac{\frac 1{m!} \cdot (m+k)^m } {\frac 1{m!}\cdot  s^m \cdot q^m}
= \left(\frac{k+m} {k}\right)^m \cdot \left( \frac{k} {sq} \right)^m = \left(1+ \frac m k\right)^m \cdot (1-\delta)^m,$$ and consequently the decoding radius achieved by the above algorithm is well below list-decoding capacity for fixed $m >1$ and growing $k$. Nevertheless, this algorithm can be used to obtain another family of codes that is locally list decodable up to capacity.

In more detail, recall that
in Section \ref{subsec:rm_list_johnson} we encountered the notion of \emph{list recovery}, which extends the notion of list decoding. Recall that in list recovery, one is given for each codeword entry a small input list of possible values, and the goal is to return the list of all codewords that are consistent with most of these input lists. More formally, for $\alpha \in (0,1)$ and positive integers $\ell, L$ we say that a code $C \subseteq \Sigma^n$ is \textsf{$(\alpha,\ell, L)$-list recoverable} if for any subsets $S_1, \ldots, S_n \subseteq \Sigma$ of size at most $\ell$ each there are at most $L$ different codewords $c \in C$ which satisfy that $w_i \notin S_i$ for at most an $\alpha$-fraction of the indices $i \in [n]$. Note that list decoding corresponds to the special case of $\ell=1$. The \textsf{problem of list-recovering from an $\alpha$-fraction
of errors} is the task of finding the list of these codewords.

We have already seen in Section \ref{subsec:rm_list_johnson} that the notion of list recovery could be useful for local list decoding. Another motivation to consider the notion of list recovery is that it is known that  \emph{high rate} codes (of rate close to $1$) that are list-recoverable from a constant fraction of errors and constant-size input lists $S_i$ 
can be transformed into capacity-achieving list-decodable codes, and furthermore, this transformation also preserves locality (the transformation is based on expander graphs, and is beyond the scope of the current survey, see \cite[Appendix C]{KRSW23} for more details). 

As is typically the case, all the local list decoding algorithms mentioned above could be extended to the setting of list recovery with similar performance. Furthermore, another advantage of multivariate multiplicity codes over Reed-Muller Codes  is that they can achieve a higher rate. Specifically, while the rate  of $m$-variate multiplicity codes of multiplicity $s$  of relative distance $\delta:=1-\frac{k} {sq}$ is at least  
$$\frac{ {m+k-1 \choose m} } { {m+s-1 \choose m} \cdot q^m} 
\geq \frac{\frac 1{m!} \cdot k^m } {\frac 1{m!}\cdot  (s+m)^m \cdot q^m}
= \left(\frac{s} {s+m}\right)^m \cdot \left( \frac{k} {sq} \right)^m = \left(1 - \frac m {s+m}\right)^m \cdot (1-\delta)^m,$$
the rate of $m$-variate Reed-Muller Codes  of relative distance $\delta':=1 - \frac kq$ is only
$$\frac{ {m+k-1 \choose m} } { q^m} \leq \frac{\frac 1{m!} \cdot (m+k)^m } { q^m}
= \frac 1{m!} \cdot \left(\frac{k+m} {k}\right)^m \cdot \left( \frac{k} {q} \right)^m = \frac 1{m!} \cdot\left(1+ \frac m k\right)^m \cdot (1-\delta')^m,$$ 
In particular, this means that for any $m>1$, Reed-Muller Codes  have rate at most $\frac 1 2$, while multivariate multiplicity codes can have rate close to $1$ for $\delta \approx \frac 1 m$,  $s \approx m^2$, and growing $m$. Consequently, one can use the extension of the local list decoding algorithm for multivariate multiplicity codes to the setting of list recovery to get high-rate locally list recoverable codes, which can be transformed in turn into capacity-achieving locally list decodable codes. Once more, we refer the reader to \cite[Appendix C]{KRSW23} for more details. 

\subsection{Bibliographic notes}

\textsf{Locally decodable and locally correctable codes} were introduced in \cite{BFLS91, KT00, STV01} and \cite{Lip90, BK95}, respectively, see \cite{Yekhanin_survey} for a comprehensive survey on locally decodable and locally correctable codes. \textsf{Locally list decodable codes} were introduced in \cite{GL89, STV01}, and since their introduction they found various applications in theoretical computer science, in cryptography~\cite{GL89}, learning theory~\cite{KM93}, average-to-worst-case reductions \cite{Lip90}, and hardness amplification~\cite{BFNW93, STV01}.

The \emph{local correction} algorithm for Reed-Muller Codes , presented in Section \ref{subsec:rm_unique}, was suggested in \cite{Lip90, BF90, GLRSW91} (see also \cite[Section 2.2.2]{Yekhanin_survey}), while the improvement using curves, described in Remark \ref{rem:rm_lcc}, was suggested by Gemmell and Sudan \cite{GS92} (see also \cite[Section 2.2.3]{Yekhanin_survey}). The \emph{local list decoding} algorithm for Reed-Muller Codes  beyond the unique decoding radius, presented in Section \ref{subsec:rm_list}, was suggested by Arora and Sudan \cite{AS03} and Sudan, Trevisan, and Vadhan \cite{STV01}. The local list decoding algrotihm for Reed-Muller Codes  \emph{up to the Johnson Bound}, presented in Section \ref{subsec:rm_list_johnson}, follows the ideas suggested by Kopparty, Ron-Zewi, Saraf, and Wootters \cite{KRSW23} for local list decoding of multivariate multiplicity codes up to their minimum distance. Here we present a simpler version of their algorithm for local list decoding of Reed-Muller Codes  up to the Johnson Bound. Alternative approaches for local list decoding of Reed-Muller Codes  up to the Johnson Bound, based on restriction to planes instead of lines, were suggested by Brander and Kopparty \cite{BK09} (see also \cite{GK16_lifted, Kop15}). 

A \emph{local correction algorithm} for \emph{multivariate multiplicity codes} was suggested by Kopparty, Saraf, and Yekhanin \cite{KSY14}, who also introduced these codes and observed that they can achieve higher rate than the corresponding Reed-Muller Codes. The \emph{local list decoding} algorithm for multivariate multiplicity codes up to their minimum distance, outlined in Section \ref{subsec:mult_mult_list}, was suggested in \cite{KRSW23}. 

The \emph{expander-based transformation} that turns a high-rate locally list recoverable code into a capacity-achieving locally list decodable code, mentioned in Section \ref{subsec:mult_mult_list}, was suggested by Gopi, Kopparty, Oliveira, Ron-Zewi, and Saraf \cite{GKORS17} and Hemenway, Ron-Zewi, and Wootters \cite{HRW17}. It is based on the expander-based transformation that was suggested by Alon, Edmonds, and Luby in the unique decoding setting \cite{AEL95, AL96_codes}, and its extensions to list decoding by Guruswami and Indyk \cite{GI02}, and local decoding by Kopparty, Meir, Ron-Zewi, and Saraf \cite{KMRS}.  More recently, this transformation was used to obtain expander-based constructions of capacity-achieving (non-local) list decodable codes (not relying on polynomial-based codes) \cite{st25}, and even ones achieving the generalized Singleton Bound \cite{JMST25}.

In this section we discussed \emph{local} list decoding algorithms for multivariate codes. Such algorithms are inherently randomized (since an adversary can easily corrupt the answers to all of the queries of a deterministic algorithm), and only work when the code is evaluated over all of $\F_q^m$ (because of the need to pass a random line -- or a related algebraic low-dimensional structure -- through the domain). Over $\F_q^m$, Pellikaan and Wu \cite{PWu04} gave an efficient \emph{deterministic} (global) list decoding algorithm for Reed-Muller Codes up to the Johnson Bound, and this algorithm was extended by Kopparty \cite{Kop15} to list decoding of multivariate multiplicity codes of arbitrary multiplicity up to the Johnson Bound.  
More recently, Bhandari, Harsha, Kumar, and Sudan \cite{BHKS2024_grid} gave an efficient deterministic algorithm for list decoding  multivariate multiplicity codes, over a constant number of variables and of sufficiently large multiplicity, \emph{up to their minimum distance}. 

Interestingly, the bound on the number of roots with multiplicities given by Theorem \ref{thm:prelim_hasse_sz} also applies when the domain is a \emph{product set} of the form $A_1 \times \cdots \times A_m$, where $A_1, \ldots, A_m$ are arbitrary subsets of $\F_q$. 
 However, over arbitrary product sets, Reed-Muller Codes were only shown relatively recently to be  efficiently uniquely decodable up to half their minimum distance  by Kim and Kopparty \cite{KimK2017}, and this algorithm was extended by Bhandari, Harsha, Kumar, and Shanka \cite{BhandariHKS2023} to unique decoding of multivariate multiplicity codes of arbitrary multiplicity over arbitrary product sets.  We are not aware of any efficient algorithm that list decodes Reed-Muller Codes, or multivariate multiplicity codes of arbitrary multiplicity, over arbitrary product sets beyond half the minimum distance. 
 Curiously, the aforementioned algorithm of  \cite{BHKS2024_grid} for list decoding constant-variate multiplicity codes of sufficiently large multiplicity up to their minimum distance does work over arbitrary product sets.

\section{Conclusion and open problems}\label{sec:conclusion}

In this book, we surveyed recent advances on list decoding of polynomial codes. Specifically, we investigated the list-decoding properties of Reed-Solomon Codes, showed that related families of polynomial codes such as multiplicity codes are efficiently list decodable up to capacity, and discussed how to obtain improved combinatorial upper bounds on the list sizes using a more refined analysis.
We also presented fast (near-linear time) implementations of these list-decoding algorithms, and local (sublinear-time) list-decoding algorithms for multivariate polynomial codes such as Reed-muller Codes and multivariate multiplicity codes. 

\paragraph{Open questions.} We end this book with a couple of intriguing questions for further research that are still open by the time of writing of this survey. 

\begin{enumerate}

\item \textbf{Explicit and efficiently list decodable Reed-Solomon Codes achieving list-deocding capacity.} In Section \ref{subsec:rs_limit}, we showed that Reed-Solomon Codes are generally \emph{not} list decodable much beyond the Johnson bound, while in Section \ref{subsec:random_rs} we showed that Reed-Solomon Codes over \emph{random evaluation points} are (combinatorially) list decodable up to capacity with optimal list size, with high probability. A very interesting question is to find 
explicit evaluation points for which Reed-Solomon Codes are list decodable up to capacity, as well as efficient list decoding algorithms.

\item  \textbf{Explicit and efficiently list decodable codes achieving list-decoding capacity over a fixed-size alphabet. }
In Section \ref{subsec:list_bio}, we mentioned that one can obtain capacity-achieving list-decodable codes over a constant-size alphabet, whose size only depends on the gap $\epsilon$ to capacity, by resorting to Algebraic-Geometric (AG) codes. A major open problem is to obtain explicit constructions of capacity-achieving list decodable codes, as well as efficient list-decoding algorithms, over a \emph{fixed-size} alphabet (independent of $\epsilon$), for example over the \emph{binary} alphabet. Over a $q$-ary alphabet, the list-decoding capacity is known to be $H_q^{-1}(1-R)$, where $$
H_q(\alpha)=\alpha\log_q(q-1)+\alpha\log_q\left(\frac 1 \alpha\right) +(1-\alpha)\log_q\left(\frac 1 {1-\alpha}\right)
$$
is the \textsf{$q$-ary entropy function}. 

\item \textbf{Linear-time encodable and list decodable codes achieving list-decoding capacity.}
In Section \ref{sec:linear} we presented fast near-linear time algorithms for list deocding multiplicity codes up to capacity. Obtaining \emph{truly linear-time} list-decoding algorithms (as well as truly linear-time encoding algorithms for capacity-acheiving list-decodable codes) is still open. Since in the unique decoding setting,  expander codes can be used to obtain linear-time encodable and decodable codes, it could be that the recent line of work on list-decoding of expander codes  \cite{st25, JMST25} could be used to obtain such codes.

\item \textbf{Applications in theoretical computer science.}
We mentioned in Section \ref{sec:intro} that list decodable codes have many applications in theoretical computer science (some references are given in Section \ref{subsec:intro_scope}). It would be very interesting to investigate if any of the developments we surveyed in this book could be useful for such applicaitons.

\end{enumerate}

\paragraph{Acknowledgement.} 

%Noga Ron-Zewi
We would like to thank Zeyu Guo, Prahladh Harsha, Swastik Koppparty, Ramprasad Saptharishi, Shubhangi Saraf,  Madhu Sudan, Amnon Ta-Shma, S. Venkitesh, and Mary Wootters for many enlightening discussions over the years about list decoding of polynomial codes  which inspired and influenced the writing of this survey.

Noga Ron-Zewi was partially suppported by the European Union (ERC, ECCC, 101076663) while writing this survey. Views and opinions expressed are however those of the author(s) only and do not necessarily reflect those of the European Union or the European Research Council. Neither the European Union nor the granting authority can be held responsible for them. 

Mrinal Kumar's work is supported by the Department of Atomic Energy, Government  of India, under project no. RTI4014, and partially by grants from ANRF, Google Research and Premji Invest. 

\bibliographystyle{alpha} 
\bibliography{ref}

@misc{Sudan-survey-2025,
      title={Algebra in Algorithmic Coding Theory}, 
      author={Madhu Sudan},
      year={2025},
      eprint={2512.06478},
      archivePrefix={arXiv},
      primaryClass={cs.IT},
      url={https://arxiv.org/abs/2512.06478}, 
}

@inproceedings{AHS26,
  author       = {Vikrant Ashvinkumar and Mursalin Habib and Shashank Srivastava},
  title        = {Algorithmic Improvements to List Decoding of Folded Reed-Solomon Codes},
  booktitle    = {Proceedings of the 2026 {ACM-SIAM} Symposium on Discrete Algorithms,
                  {SODA}},
  publisher    = {{SIAM}},
  year         = {2026},
}

@article{Trevisan_survey,
  author       = {Luca Trevisan},
  title        = {Some Applications of Coding Theory in Computational Complexity},
  journal      = {Electron. Colloquium Comput. Complex.},
  volume       = {{TR04-043}},
  year         = {2004},
  url          = {https://eccc.weizmann.ac.il/eccc-reports/2004/TR04-043/index.html},
  eprinttype    = {ECCC},
  eprint       = {TR04-043},
  bibsource    = {dblp computer science bibliography, https://dblp.org}
}

@article{Sudan_list_dec_survey,
author = {Sudan, Madhu},
title = {List decoding: algorithms and applications},
year = {2000},
issue_date = {March 2000},
publisher = {Association for Computing Machinery},
address = {New York, NY, USA},
volume = {31},
number = {1},
issn = {0163-5700},
url = {https://doi.org/10.1145/346048.346049},
doi = {10.1145/346048.346049},
journal = {SIGACT News},
month = mar,
pages = {16–27},
numpages = {12}
}

@INPROCEEDINGS{Guruswami_list_dec_app_survey,
  author={Guruswami, V.},
  booktitle={2006 IEEE Information Theory Workshop - ITW '06 Punta del Este}, 
  title={List Decoding in Average-Case Complexity and Pseudorandomness}, 
  year={2006},
  volume={},
  number={},
  pages={32-36},
  doi={10.1109/ITW.2006.1633776}}

@article{Gop82,
    author = {V.~D.~Goppa},
    title = {Algebraico-geometric codes},
    journal = {Math. USSR-Izv.},
    year = {1983},
volume = {21},
issue = {1},
pages = {75--91},
}

@article{hamming50,
	Author = {Hamming, Richard},
	Journal = {The Bell System Technical Journal},
	Number = {2},
	Pages = {147--160},
	Title = {Error detecting and error correcting codes},
	Volume = {29},
	Year = {1950}}

@article{shannon48,
	Author = {Shannon, Claude},
	Journal = {The Bell System Technical Journal},
	Pages = {379–-423, 623–-656},
	Title = {A mathematical theory of communication},
	Volume = {27},
year={1948}}

@ARTICLE{berlekamp-bch,
  author={Berlekamp, E.},
  journal={IEEE Transactions on Information Theory}, 
  title={Nonbinary BCH decoding (Abstr.)}, 
  year={1968},
  volume={14},
  number={2},
  pages={242-242},
  doi={10.1109/TIT.1968.1054109}}

@article{Kaltofen1985,
  author  = {Erich L. Kaltofen},
  journal = {SIAM J. Comput.},
  number  = {2},
  pages   = {469--489},
  title   = {Polynomial-Time Reductions from Multivariate to Bi-
             and Univariate Integral Polynomial Factorization},
  volume  = {14},
  year    = {1985},
  url     = {https://doi.org/10.1137/0214035}
}

@article{Lenstra1985,
  author  = {Arjen K. Lenstra},
  journal = {J. Comput.\ Syst.\ Sci.},
  note    = {(Preliminary version in \emph{15th STOC}, 1983)},
  number  = {2},
  pages   = {235--248},
  title   = {Factoring Multivariate Polynomials over Finite
             Fields},
  volume  = {30},
  year    = {1985},
  url     = {https://doi.org/10.1016/0022-0000(85)90016-9}
}

@article{BGM22,
  author       = {Joshua Brakensiek and
                  Sivakanth Gopi and
                  Visu Makam},
  title        = {Lower Bounds for Maximally Recoverable Tensor Codes and Higher Order
                  {MDS} Codes},
  journal      = {{IEEE} Trans. Inf. Theory},
  volume       = {68},
  number       = {11},
  pages        = {7125--7140},
  year         = {2022},
  url          = {https://doi.org/10.1109/TIT.2022.3187366},
  doi          = {10.1109/TIT.2022.3187366},
  bibsource    = {dblp computer science bibliography, https://dblp.org}
}

@article{BGM23,
author = {Brakensiek, Joshua and Gopi, Sivakanth and Makam, Visu},
title = {Generic Reed–Solomon Codes Achieve List-Decoding Capacity},
journal = {SIAM Journal on Computing},
year = {2025},
pages = {STOC23-118-STOC23-154},
doi = {10.1137/23M1598064},
Note = {To appear, \url{https://doi.org/10.1137/23M1598064}},
eprint = { https://doi.org/10.1137/23M1598064}
}

@INPROCEEDINGS{DSY14,
	author={Dau, Son Hoang and Song, Wentu and Yuen, Chau},
	booktitle={2014 IEEE International Symposium on Information Theory}, 
	title={{On the existence of MDS codes over small fields with constrained generator matrices}}, 
	year={2014},
	volume={},
	number={},
	pages={1787-1791},
	doi={10.1109/ISIT.2014.6875141},
	note = {\url{https://doi.org/10.1109/ISIT.2014.6875141}}
}

@article{AGGLZ25,
   title={Random Reed-Solomon codes achieve list-decoding capacity with linear-sized alphabets},
   ISSN={2517-5599},
   note={\url{http://dx.doi.org/10.19086/aic.2025.8}},
   DOI={10.19086/aic.2025.8},
   journal={Advances in Combinatorics},
   publisher={Alliance of Diamond Open Access Journals},
   author={Alrabiah, Omar and Guo, Zeyu and Guruswami, Venkatesan and Li, Ray and
           Zhang, Zihan},
   year={2025}}

@ARTICLE{YH19,
	author={Yildiz, Hikmet and Hassibi, Babak},
	journal={IEEE Transactions on Information Theory}, 
	title={{Optimum Linear Codes With Support-Constrained Generator Matrices Over Small Fields}}, 
	year={2019},
	volume={65},
	number={12},
	pages={7868-7875},
	doi={10.1109/TIT.2019.2932663},
	note = {\url{https://doi.org/10.1109/TIT.2019.2932663}}
}

@article{lovett21,
	author = {Lovett, Shachar},
	title = {{Sparse MDS Matrices over Small Fields: A Proof of the GM-MDS Conjecture}},
	journal = {SIAM Journal on Computing},
	volume = {50},
	number = {4},
	pages = {1248-1262},
	year = {2021},
	doi = {10.1137/20M1323345},
	URL = { 
	https://doi.org/10.1137/20M1323345
	},
	eprint = { 
	https://doi.org/10.1137/20M1323345	}
	,
	note = {\url{https://doi.org/10.1137/20M1323345}}
}

@article{RV_survey,
  title={List Recoverable Codes: The Good, the Bad, and the Unknown (hopefully not Ugly)},
    author={Nicolas Resch and S. Venkitesh},
  journal={arXiv preprint 2510.07597},
  year={2025},
    note={\url{https://arxiv.org/abs/2510.07597}}, 
}

@article{BKR10,
  author       = {Eli Ben{-}Sasson and
                  Swastik Kopparty and
                  Jaikumar Radhakrishnan},
  title        = {Subspace polynomials and limits to list decoding of Reed-Solomon codes},
  journal      = {{IEEE} Trans. Inf. Theory},
  volume       = {56},
  number       = {1},
  pages        = {113--120},
  year         = {2010},
  url          = {https://doi.org/10.1109/TIT.2009.2034780},
  doi          = {10.1109/TIT.2009.2034780},
  bibsource    = {dblp computer science bibliography, https://dblp.org}
}

@inproceedings{ST20,
  author       = {Chong Shangguan and
                  Itzhak Tamo},
  editor       = {Konstantin Makarychev and
                  Yury Makarychev and
                  Madhur Tulsiani and
                  Gautam Kamath and
                  Julia Chuzhoy},
  title        = {Combinatorial list-decoding of Reed-Solomon codes beyond the Johnson
                  radius},
  booktitle    = {Proceedings of the 52nd Annual {ACM} {SIGACT} Symposium on Theory
                  of Computing, {STOC} 2020, Chicago, IL, USA, June 22-26, 2020},
  pages        = {538--551},
  publisher    = {{ACM}},
  year         = {2020},
  url          = {https://doi.org/10.1145/3357713.3384295},
  doi          = {10.1145/3357713.3384295},
  bibsource    = {dblp computer science bibliography, https://dblp.org}
}

@article{Roth22,
  author       = {Ron M. Roth},
  title        = {Higher-Order {MDS} Codes},
  journal      = {{IEEE} Trans. Inf. Theory},
  volume       = {68},
  number       = {12},
  pages        = {7798--7816},
  year         = {2022},
  url          = {https://doi.org/10.1109/TIT.2022.3194521},
  doi          = {10.1109/TIT.2022.3194521},
  bibsource    = {dblp computer science bibliography, https://dblp.org}
}

@article{GST24,
  author       = {Eitan Goldberg and
                  Chong Shangguan and
                  Itzhak Tamo},
  title        = {Singleton-type bounds for list-decoding and list-recovery, and related
                  results},
  journal      = {J. Comb. Theory {A}},
  volume       = {203},
  pages        = {105835},
  year         = {2024},
  url          = {https://doi.org/10.1016/j.jcta.2023.105835},
  doi          = {10.1016/J.JCTA.2023.105835},
  bibsource    = {dblp computer science bibliography, https://dblp.org}
}

@article{Yekhanin_survey,
  author       = {Sergey Yekhanin},
  title        = {Locally Decodable Codes},
  journal      = {Found. Trends Theor. Comput. Sci.},
  volume       = {6},
  number       = {3},
  pages        = {139--255},
  year         = {2012},
  url          = {https://doi.org/10.1561/0400000030},
  doi          = {10.1561/0400000030},
  bibsource    = {dblp computer science bibliography, https://dblp.org}
}

@article{KSY14,
  author       = {Swastik Kopparty and
                  Shubhangi Saraf and
                  Sergey Yekhanin},
  title        = {High-rate codes with sublinear-time decoding},
  journal      = {J. {ACM}},
  volume       = {61},
  number       = {5},
  pages        = {28:1--28:20},
  year         = {2014},
  url          = {https://doi.org/10.1145/2629416},
  doi          = {10.1145/2629416},
  bibsource    = {dblp computer science bibliography, https://dblp.org}
}

@article{RS60,
	Author = {Reed, Irving S. and Solomon, Gustave},
	Journal = {SIAM Journal of the Society for Industrial and Applied Mathematics},
	Number = {2},
	Pages = {300--304},
	Title = {Polynomial Codes over Certain Finite Fields},
	Volume = {8},
	Year = {1960},
	note = {\url{https://doi.org/10.1137/0108018}}
}

@article{RT97,
  title={Codes for the m-metric},
  author={Rosenbloom, Michael and Tsfasman, Michael},
  journal={Problemy Peredachi Informatsii},
  volume={33},
  number={1},
  pages={55--63},
  year={1997},
  publisher={Russian Academy of Sciences, Branch of Informatics, Computer Equipment and Automatization}
}

@ARTICLE{BHKS2024_grid,
  author={Bhandari, Siddharth and Harsha, Prahladh and Kumar, Mrinal and Sudan, Madhu},
  journal={IEEE Transactions on Information Theory}, 
  title={Decoding Multivariate Multiplicity Codes on Product Sets}, 
  year={2024},
  volume={70},
  number={1},
  pages={154-169},
  doi={10.1109/TIT.2023.3306849}
}

@Article{GKORS17,
  title =	"Locally Testable and Locally Correctable Codes
		 approaching the Gilbert-Varshamov Bound",
  author =	"Sivakanth Gopi and Swastik Kopparty and Rafael Oliveira and Noga Ron-Zewi and Shubhangi Saraf",
  journal =	"IEEE Transactions on Information Theory",
  year = 	"2018",
  number =	"8",
  volume =	"64",
  bibdate =	"2018-07-17",
  bibsource =	"DBLP,
		 http://dblp.uni-trier.de/https://doi.org/10.1109/TIT.2018.2809788;
		 DBLP,
		 http://dblp.uni-trier.de/db/journals/tit/tit64.html#GopiKORS18",
  pages =	"5813--5831",
}

@Article{HRW17,
  title =	"Local List Recovery of High-Rate Tensor Codes and
		 Applications",
  author =	"Brett Hemenway and Noga {Ron-Zewi} and Mary Wootters",
  journal =	{SIAM Journal on Computing},
  year = 	"2020",
  number =	"4",
  volume =	"49",
  pages = "157-195",

}

@ARTICLE{PWu04,
  author={Pellikaan, R. and Xin-Wen Wu},
  journal={IEEE Transactions on Information Theory}, 
  title={List decoding of q-ary Reed-Muller codes}, 
  year={2004},
  volume={50},
  number={4},
  pages={679-682},
  doi={10.1109/TIT.2004.825043}
}

@Article{Mul54,
  title =	"Application of Boolean algebra to switching circuit
		 design and to error detection",
  author =	"David Muller",
  journal =	{Transactions of the IRE Professional Group on Electronic Computers},
  year = 	"1954",
  number =	"3",
  volume =	"3",
  bibdate =	"2020-05-20",
  bibsource =	"DBLP,
		 http://dblp.uni-trier.de/https://doi.org/10.1109/IREPGELC.1954.6499441;
		 DBLP,
		 http://dblp.uni-trier.de/https://www.wikidata.org/entity/Q57320281;
		 DBLP,
		 http://dblp.uni-trier.de/db/journals/tc/tc3.html#Muller54",
  pages =	"6--12",
}

@Article{Reed54,
  title =	"A class of multiple-error-correcting codes and the
		 decoding scheme",
  author =	"Irving Reed",
  journal =	{Transactions of the IRE Professional Group on Information Theory},
  year = 	"1954",
  volume =	"4",
  bibdate =	"2020-07-16",
  bibsource =	"DBLP,
		 http://dblp.uni-trier.de/https://doi.org/10.1109/TIT.1954.1057465;
		 DBLP,
		 http://dblp.uni-trier.de/https://www.wikidata.org/entity/Q57320282;
		 DBLP,
		 http://dblp.uni-trier.de/db/journals/tit/tit4.html#Reed54",
  pages =	"38--49",
}

@article{JMST25,
  author       = {Fernando Granha Jeronimo and
                  Tushant Mittal and
                  Shashank Srivastava and
                  Madhur Tulsiani},
  title        = {Explicit Codes approaching Generalized Singleton Bound using Expanders},
  journal      = {CoRR},
  volume       = {abs/2502.07308},
  year         = {2025},
  url          = {https://doi.org/10.48550/arXiv.2502.07308},
  doi          = {10.48550/ARXIV.2502.07308},
  eprinttype    = {arXiv},
  eprint       = {2502.07308},
  bibsource    = {dblp computer science bibliography, https://dblp.org}
}

@article{ST25,
  author       = {Shashank Srivastava and
                  Madhur Tulsiani},
  title        = {List Decoding Expander-Based Codes up to Capacity in Near-Linear Time},
  journal      = {CoRR},
  volume       = {abs/2504.20333},
  year         = {2025},
  url          = {https://doi.org/10.48550/arXiv.2504.20333},
  doi          = {10.48550/ARXIV.2504.20333},
  eprinttype    = {arXiv},
  eprint       = {2504.20333},
  bibsource    = {dblp computer science bibliography, https://dblp.org}
}

@inproceedings{GI02,
  author    = {Venkatesan Guruswami and
               Piotr Indyk},
  title     = {Near-optimal linear-time codes for unique decoding and new
               list-decodable codes over smaller alphabets},
  booktitle = {Proceedings of the 34th Annual {ACM} Symposium on
		 Theory of Computing (STOC)},
		   publisher =	"ACM Press",
  year      = {2002},
  pages     = {812-821},
}

@article{AL96_codes,
  author    = {Noga Alon and
               Michael Luby},
  title     = {A linear time erasure-resilient code with nearly optimal
               recovery},
  journal   = {IEEE Transactions on Information Theory},
  volume    = {42},
  number    = {6},
  year      = {1996},
  pages     = {1732-1736},
}

@InProceedings{AEL95,
  author =	"Noga Alon and Jeff Edmonds and Michael Luby",
  title =	"Linear Time Erasure Codes with Nearly Optimal
		 Recovery",
  pages =	"512--519",
  booktitle =	"proceedings of the 36th Annual IEEE Symposium on Foundations of Computer
		 Science ({FOCS})",
  ISBN = 	"0-8186-7183-1",
  publisher =	"IEEE Computer Society",
  year = 	"1995",
}

@InProceedings{LMS25,
      title={Random Reed-Solomon Codes and Random Linear Codes are Locally Equivalent}, 
      author={Matan Levi and Jonathan Mosheiff and Nikhil Shagrithaya},
      year={2025},
     booktitle =	"proceedings of the 66th Annual IEEE Symposium on Foundations of Computer
		 Science ({FOCS})",
  publisher =	"IEEE Computer Society"
}

@article{Schwartz80,
  author       = {Jacob T. Schwartz},
  title        = {Fast Probabilistic Algorithms for Verification of Polynomial Identities},
  journal      = {J. {ACM}},
  volume       = {27},
  number       = {4},
  pages        = {701--717},
  year         = {1980},
  url          = {https://doi.org/10.1145/322217.322225},
  doi          = {10.1145/322217.322225},
  bibsource    = {dblp computer science bibliography, https://dblp.org}
}

@inproceedings{Zippel79,
  author       = {Richard Zippel},
  editor       = {Edward W. Ng},
  title        = {Probabilistic algorithms for sparse polynomials},
  booktitle    = {Symbolic and Algebraic Computation, {EUROSAM} '79, An International
                  Symposiumon Symbolic and Algebraic Computation, Marseille, France,
                  June 1979, Proceedings},
  series       = {Lecture Notes in Computer Science},
  volume       = {72},
  pages        = {216--226},
  publisher    = {Springer},
  year         = {1979},
  url          = {https://doi.org/10.1007/3-540-09519-5\_73},
  doi          = {10.1007/3-540-09519-5\_73},
  bibsource    = {dblp computer science bibliography, https://dblp.org}
}

@article{DL78,
  author       = {Richard A. DeMillo and
                  Richard J. Lipton},
  title        = {A Probabilistic Remark on Algebraic Program Testing},
  journal      = {Inf. Process. Lett.},
  volume       = {7},
  number       = {4},
  pages        = {193--195},
  year         = {1978},
  url          = {https://doi.org/10.1016/0020-0190(78)90067-4},
  doi          = {10.1016/0020-0190(78)90067-4},
  bibsource    = {dblp computer science bibliography, https://dblp.org}
}

@article{BK95,
  author       = {Manuel Blum and
                  Sampath Kannan},
  title        = {Designing Programs that Check Their Work},
  journal      = {J. {ACM}},
  volume       = {42},
  number       = {1},
  pages        = {269--291},
  year         = {1995},
  url          = {https://doi.org/10.1145/200836.200880},
  doi          = {10.1145/200836.200880},
  bibsource    = {dblp computer science bibliography, https://dblp.org}
}

@inproceedings{GLRSW91,
  author       = {Peter Gemmell and
                  Richard J. Lipton and
                  Ronitt Rubinfeld and
                  Madhu Sudan and
                  Avi Wigderson},
  editor       = {Cris Koutsougeras and
                  Jeffrey Scott Vitter},
  title        = {Self-Testing/Correcting for Polynomials and for Approximate Functions},
  booktitle    = {Proceedings of the 23rd Annual {ACM} Symposium on Theory of Computing,
                  May 5-8, 1991, New Orleans, Louisiana, {USA}},
  pages        = {32--42},
  publisher    = {{ACM}},
  year         = {1991},
  url          = {https://doi.org/10.1145/103418.103429},
  doi          = {10.1145/103418.103429},
  bibsource    = {dblp computer science bibliography, https://dblp.org}
}

@inproceedings{Lip90,
	Author = {Richard J. Lipton},
	Booktitle = {proceedings of the 7th Annual ACM Symposium on Theoretical Aspects of Computer Science (STACS)},
	Pages = {207--215},
	Publisher = {Springer},
	Series = {lncs},
	Title = {Efficient Checking of Computations},
	Volume = {415},
	Year = {1990}}

@article{GK16_lifted,
  title={List-decoding algorithms for lifted codes},
  author={Guo, Alan and Kopparty, Swastik},
  journal={IEEE Transactions on Information Theory},
  volume={62},
  number={5},
  pages={2719--2725},
  year={2016},
  publisher={IEEE}
}

@article{BK09,
	Author = {Kristian Brander and Swastik Koparty},
	Journal = {Manuscript},
	Title = {List-decoding Reed-Muller over large fields upto the
Johnson radius},
	Year = {2009}}

@article{AS03,
  author       = {Sanjeev Arora and
                  Madhu Sudan},
  title        = {Improved Low-Degree Testing and its Applications},
  journal      = {Comb.},
  volume       = {23},
  number       = {3},
  pages        = {365--426},
  year         = {2003},
  url          = {https://doi.org/10.1007/s00493-003-0025-0},
  doi          = {10.1007/S00493-003-0025-0},
  bibsource    = {dblp computer science bibliography, https://dblp.org}
}

@inproceedings{BhandariHKS2023,
  author       = {Siddharth Bhandari and
                  Prahladh Harsha and
                  Mrinal Kumar and
                  Ashutosh Shankar},
  editor       = {Nikhil Bansal and
                  Viswanath Nagarajan},
  title        = {Algorithmizing the Multiplicity Schwartz-Zippel Lemma},
  booktitle    = {Proceedings of the 2023 {ACM-SIAM} Symposium on Discrete Algorithms,
                  {SODA} 2023, Florence, Italy, January 22-25, 2023},
  pages        = {2816--2835},
  publisher    = {{SIAM}},
  year         = {2023},
  url          = {https://doi.org/10.1137/1.9781611977554.ch106},
  doi          = {10.1137/1.9781611977554.CH106},
  bibsource    = {dblp computer science bibliography, https://dblp.org}
}

@inproceedings{BF90,
  author       = {Donald Beaver and
                  Joan Feigenbaum},
  editor       = {Christian Choffrut and
                  Thomas Lengauer},
  title        = {Hiding Instances in Multioracle Queries},
  booktitle    = {{STACS} 90, 7th Annual Symposium on Theoretical Aspects of Computer
                  Science, Rouen, France, February 22-24, 1990, Proceedings},
  series       = {Lecture Notes in Computer Science},
  volume       = {415},
  pages        = {37--48},
  publisher    = {Springer},
  year         = {1990},
  url          = {https://doi.org/10.1007/3-540-52282-4\_30},
  doi          = {10.1007/3-540-52282-4\_30},
  bibsource    = {dblp computer science bibliography, https://dblp.org}
}

@article{KimK2017,
 author = {Kim, John Y. and Kopparty, Swastik},
 title = {Decoding Reed-Muller Codes over Product Sets},
 year = {2017},
 pages = {1--38},
 doi = {10.4086/toc.2017.v013a021},
 publisher = {Theory of Computing},
 journal = {Theory of Computing},
 volume = {13},
 number = {21},
 URL = {https://theoryofcomputing.org/articles/v013a021},
}

@inproceedings{GoyalHKS2024,
  author    = {Rohan Goyal and Prahladh Harsha and Mrinal Kumar and
               Ashutosh Shankar},
  booktitle = {Proc.\ $65$th IEEE Symp.\ on Foundations of Comp.\
               Science (FOCS)},
  editor    = {Santosh Vempala},
  title     = {Fast list-decoding of univariate multiplicity and
               folded {R}eed-{S}olomon codes},
  year      = {2024},
  eccc      = {2023/185}
}

@book{GathenG-MCA,
  author    = {Joachim von zur Gathen and J{\"{u}}rgen Gerhard},
  edition   = {3},
  publisher = {Cambridge University Press},
  title     = {Modern Computer Algebra},
  year      = {2013},
  url       = {https://doi.org/10.1017/CBO9781139856065}
}

@article{RothR2000,
  author  = {Ron M. Roth and Gitit Ruckenstein},
  journal = {IEEE Trans.\ Inform.\ Theory},
  number  = {1},
  pages   = {246--257},
  title   = {Efficient decoding of {R}eed-{S}olomon codes beyond
             half the minimum distance},
  volume  = {46},
  year    = {2000},
  url     = {https://doi.org/10.1109/18.817522}
}

@article{Alekhnovich,
  author  = {Michael Alekhnovich},
  journal = {IEEE Trans.\ Inform.\ Theory},
  number  = {7},
  pages   = {2257--2265},
  title   = {Linear {D}iophantine equations over polynomials and
             soft decoding of Reed-Solomon codes},
  volume  = {51},
  year    = {2005},
  url     = {https://doi.org/10.1109/TIT.2005.850097}
}

@Article{Peterson60,
  title =	"Encoding and error-correction procedures for the
		 {Bose-Chaudhuri} codes",
  author =	"W. Wesley Peterson",
  journal =	"IRE Transactions on Information Theory",
  year = 	"1960",
  number =	"4",
  volume =	"6",
  bibdate =	"2020-05-18",
  bibsource =	"DBLP,
		 http://dblp.uni-trier.de/https://doi.org/10.1109/TIT.1960.1057586;
		 DBLP,
		 http://dblp.uni-trier.de/https://www.wikidata.org/entity/Q62514864;
		 DBLP,
		 http://dblp.uni-trier.de/db/journals/tit/tit6.html#Peterson60",
  pages =	"459--470",
}

@Misc{BW87,
  author =	"E. R. Berlekamp and L. Welch",
  title =	"Error correction of algebraic block codes",
  howpublished = "US Patent Number 4,633,470",
  year = 	"1987",
}

@Misc{GRS_survey,
  author =	"Venkatesan Guruswami and Atri Rudra and Madhu Sudan",
title = {Essential Coding Theory},
    note = {\url{https://cse.buffalo.edu/faculty/atri/courses/coding-theory/book/}. Last accessed: August 2025},
}

@book{Berlekamp_book,
  author       = {Elwyn R. Berlekamp},
  title        = {Algebraic coding theory},
  series       = {McGraw-Hill series in systems science},
  publisher    = {McGraw-Hill},
  year         = {1968},
  url          = {https://www.worldcat.org/oclc/00256659},
  isbn         = {0070049033},
  bibsource    = {dblp computer science bibliography, https://dblp.org}
}

@article{Massey69,
  author       = {James L. Massey},
  title        = {Shift-register synthesis and {BCH} decoding},
  journal      = {{IEEE} Trans. Inf. Theory},
  volume       = {15},
  number       = {1},
  pages        = {122--127},
  year         = {1969},
  url          = {https://doi.org/10.1109/TIT.1969.1054260},
  doi          = {10.1109/TIT.1969.1054260},
  bibsource    = {dblp computer science bibliography, https://dblp.org}
}

@article{GZ61,
  author       = {Daniel Gorenstein and Neal Zierler},
  title        = {A class of error-correcting codes in pm symbols},
  journal      = {Journal of the Society for Industrial and Applied Mathematics},
  volume       = {9},
  number       = {2},
  pages        = {207--214},
  year         = {1961},
}

@article{MRRSW24,
  author       = {Jonathan Mosheiff and
                  Nicolas Resch and
                  Noga Ron{-}Zewi and
                  Shashwat Silas and
                  Mary Wootters},
  title        = {Low-Density Parity-Check Codes Achieve List-Decoding Capacity},
  journal      = {{SIAM} J. Comput.},
  volume       = {53},
  number       = {6},
  pages        = {S20--38},
  year         = {2024},
  url          = {https://doi.org/10.1137/20m1365934},
  doi          = {10.1137/20M1365934},
  bibsource    = {dblp computer science bibliography, https://dblp.org}
}

@article{GR22,
  author       = {Zeyu Guo and
                  Noga Ron{-}Zewi},
  title        = {Efficient List-Decoding With Constant Alphabet and List Sizes},
  journal      = {{IEEE} Trans. Inf. Theory},
  volume       = {68},
  number       = {3},
  pages        = {1663--1682},
  year         = {2022},
  url          = {https://doi.org/10.1109/TIT.2021.3131992},
  doi          = {10.1109/TIT.2021.3131992},
  bibsource    = {dblp computer science bibliography, https://dblp.org}
}

@inproceedings{GZ23,
  author       = {Zeyu Guo and
                  Zihan Zhang},
  title        = {Randomly Punctured Reed-Solomon Codes Achieve the List Decoding Capacity
                  over Polynomial-Size Alphabets},
  booktitle    = {64th {IEEE} Annual Symposium on Foundations of Computer Science, {FOCS}
                  2023, Santa Cruz, CA, USA, November 6-9, 2023},
  pages        = {164--176},
  publisher    = {{IEEE}},
  year         = {2023},
  url          = {https://doi.org/10.1109/FOCS57990.2023.00019},
  doi          = {10.1109/FOCS57990.2023.00019},
  bibsource    = {dblp computer science bibliography, https://dblp.org}
}

@article{Tamo24,
  author       = {Itzhak Tamo},
  title        = {Tighter List-Size Bounds for List-Decoding and Recovery of Folded
                  Reed-Solomon and Multiplicity Codes},
  journal      = {{IEEE} Trans. Inf. Theory},
  volume       = {70},
  number       = {12},
  pages        = {8659--8668},
  year         = {2024},
  url          = {https://doi.org/10.1109/TIT.2024.3402171},
  doi          = {10.1109/TIT.2024.3402171},
  bibsource    = {dblp computer science bibliography, https://dblp.org}
}

@article{BDGZ25,
  author       = {Joshua Brakensiek and
                  Manik Dhar and
                  Sivakanth Gopi and
                  Zihan Zhang},
  title        = {{AG} Codes Achieve List-Decoding Capacity Over Constant-Sized Fields},
  journal      = {{IEEE} Trans. Inf. Theory},
  volume       = {71},
  number       = {8},
  pages        = {5935--5956},
  year         = {2025},
  url          = {https://doi.org/10.1109/TIT.2025.3577506},
  doi          = {10.1109/TIT.2025.3577506},
  bibsource    = {dblp computer science bibliography, https://dblp.org}
}

@article{Kop15,
	  title={List-decoding multiplicity codes},
	    author={Kopparty, Swastik},
	      journal={Theory of Computing},
	        volume={11},
		  number={5},
		    pages={149--182},
		      year={2015}
}

@Article{GW13,
  title =	"Linear-Algebraic List Decoding for Variants of
		 {Reed-Solomon} Codes",
  author =	"Venkatesan Guruswami and Carol Wang",
  journal =	"IEEE Transactions on Information Theory",
  year = 	"2013",
  number =	"6",
  volume =	"59",
    pages =	"3257--3268",
}

@article{Joshi58,
  author       = {D. D. Joshi},
  title        = {A Note on Upper Bounds for Minimum Distance Codes},
  journal      = {Inf. Control.},
  volume       = {1},
  number       = {3},
  pages        = {289--295},
  year         = {1958},
  url          = {https://doi.org/10.1016/S0019-9958(58)80006-6},
  doi          = {10.1016/S0019-9958(58)80006-6},
  bibsource    = {dblp computer science bibliography, https://dblp.org}
}

@article{Singleton64,
  author       = {Richard C. Singleton},
  title        = {Maximum distance q -nary codes},
  journal      = {{IEEE} Trans. Inf. Theory},
  volume       = {10},
  number       = {2},
  pages        = {116--118},
  year         = {1964},
  url          = {https://doi.org/10.1109/TIT.1964.1053661},
  doi          = {10.1109/TIT.1964.1053661},
  bibsource    = {dblp computer science bibliography, https://dblp.org}
}

@article{GX22,
  author       = {Venkatesan Guruswami and
                  Chaoping Xing},
  title        = {Optimal Rate List Decoding over Bounded Alphabets Using Algebraic-geometric
                  Codes},
  journal      = {J. {ACM}},
  volume       = {69},
  number       = {2},
  pages        = {10:1--10:48},
  year         = {2022},
  url          = {https://doi.org/10.1145/3506668},
  doi          = {10.1145/3506668},
  bibsource    = {dblp computer science bibliography, https://dblp.org}
}

@misc{BCDZ25,
      title={From Random to Explicit via Subspace Designs With Applications to Local Properties and Matroids}, 
      author={Joshua Brakensiek and Yeyuan Chen and Manik Dhar and Zihan Zhang},
      year={2025},
      eprint={2510.13777},
      archivePrefix={arXiv},
      primaryClass={cs.IT},
      url={https://arxiv.org/abs/2510.13777}, 
}

@Article{GK16,
	author="Guruswami, Venkatesan
		and Kopparty, Swastik",
	title="Explicit subspace designs",
	journal="Combinatorica",
	year="2016",
	volume="36",
	number="2",
	pages="161--185",
}

@phdthesis{Kiraly-thesis,
    author = {  Tam{\'{a}}s Kir{\'{a}}ly},
    school = {E\"{o}tv\"{o}s Lor{\'{a}}nd University},
    title = {{Edge-connectivity of undirected and directed hypergraphs}},
    url = {https://tkiraly.web.elte.hu/pub/tkiraly_thesis.pdf},
    year = {2003}
}

@article{Elias57,
  author       = {Peter Elias},
  title        = {List decoding for noisy channels},
  journal      = {Research Laboratory
of Electronics, MIT},
  volume       = {Technical Report 335},
  year         = {1957},
}

@article{Wozencraft58,
  author       = {John Wozencraft},
  title        = {List decoding},
  journal      = {Quarterly Progress Report, Research Laboratory
of Electronics, MIT},
  volume       = {48},
pages = {90–95}, 
  year         = {1957},
}

@article{GRS00,
  author       = {Oded Goldreich and
                  Ronitt Rubinfeld and
                  Madhu Sudan},
  title        = {Learning Polynomials with Queries: The Highly Noisy Case},
  journal      = {{SIAM} J. Discret. Math.},
  volume       = {13},
  number       = {4},
  pages        = {535--570},
  year         = {2000},
  url          = {https://doi.org/10.1137/S0895480198344540},
  doi          = {10.1137/S0895480198344540},
  bibsource    = {dblp computer science bibliography, https://dblp.org}
}

@article{GR08_folded_RS,
	Author = {Venkatesan Guruswami and Atri Rudra},
	Journal = {IEEE Transactions on Information Theory},
	Number = {1},
	Pages = {135-150},
	Title = {Explicit Codes Achieving List Decoding Capacity: Error-Correction With Optimal Redundancy},
	Volume = {54},
	Year = {2008}}

@article{Gur-survey,
	Author = {Venkatesan Guruswami},
	Bibdate = {2008-05-20},
	Bibsource = {DBLP, http://dblp.uni-trier.de/db/journals/fttcs/fttcs2.html#Guruswami06},
	Journal = {Foundations and Trends in Theoretical Computer Science},
	Number = {2},
	Title = {Algorithmic Results in List Decoding},
	Url = {http://dx.doi.org/10.1561/0400000007},
	Volume = {2},
	Year = {2006}}

@article{GS92,
  author       = {Peter Gemmell and
                  Madhu Sudan},
  title        = {Highly Resilient Correctors for Polynomials},
  journal      = {Inf. Process. Lett.},
  volume       = {43},
  number       = {4},
  pages        = {169--174},
  year         = {1992},
  url          = {https://doi.org/10.1016/0020-0190(92)90195-2},
  doi          = {10.1016/0020-0190(92)90195-2},
  bibsource    = {dblp computer science bibliography, https://dblp.org}
}

@book{Guruswami-Thesis,
  title={List decoding of error-correcting codes: winning thesis of the 2002 ACM doctoral dissertation competition},
  author={Guruswami, Venkatesan},
  volume={3282},
  year={2004},
  publisher={Springer Science \& Business Media}
}

@Article{Sudan97,
  title =	"Decoding of {Reed Solomon} Codes beyond the
		 Error-Correction Bound",
  author =	"Madhu Sudan",
  journal =	"Journal of Complexity",
  year = 	"1997",
  number =	"1",
  volume =	"13",
  bibdate =	"2017-05-26",
  bibsource =	"DBLP,
		 http://dblp.uni-trier.de/https://doi.org/10.1006/jcom.1997.0439;
		 DBLP,
		 http://dblp.uni-trier.de/db/journals/jc/jc13.html#Sudan97",
  pages =	"180--193",
}

@article{GS-list-dec,
	Author = {Venkatesan Guruswami and Madhu Sudan},
	Bibdate = {2016-03-09},
	Bibsource = {DBLP, http://dblp.uni-trier.de/db/journals/tit/tit45.html#GuruswamiS99},
	Journal = {IEEE Transactions on Information Theory},
	Number = {6},
	Pages = {1757--1767},
	Title = {Improved decoding of {Reed-Solomon} and algebraic-geometry codes},
	Url = {http://dx.doi.org/10.1109/18.782097},
	Volume = {45},
	Year = {1999}}

@article{BFNW93,
	Author = {L{\'a}szl{\'o} Babai and Lance Fortnow and Noam Nisan and Avi Wigderson},
	Journal = {Computational Complexity},
	Number = {4},
	Pages = {307--318},
	Title = {{BPP} Has Subexponential Time Simulations Unless {EXPTIME} has Publishable Proofs},
	Volume = {3},
	Year = {1993}}

@article{STV01,
	Author = {Madhu Sudan and Luca Trevisan and Salil Vadhan},
	Journal = {Journal of Computer and System Sciences},
	Number = {2},
	Pages = {236-266},
	Title = {Pseudorandom Generators without the {XOR} Lemma},
	Volume = {62},
	Year = {2001}}

@article{Vadhan_survey,
  author       = {Salil P. Vadhan},
  title        = {Pseudorandomness},
  journal      = {Found. Trends Theor. Comput. Sci.},
  volume       = {7},
  number       = {1-3},
  pages        = {1--336},
  year         = {2012},
  url          = {https://doi.org/10.1561/0400000010},
  doi          = {10.1561/0400000010},
  bibsource    = {dblp computer science bibliography, https://dblp.org}
}

@InProceedings{BFLS91,
  title =	"Checking Computations in Polylogarithmic Time",
  author =	"L{\'a}szl{\'o} Babai and Lance Fortnow and Leonid A.
		 Levin and Mario Szegedy",
  bibdate =	"2011-10-17",
  bibsource =	"DBLP,
		 http://dblp.uni-trier.de/db/conf/stoc/stoc91.html#BabaiFLS91",
  booktitle =	"Proceedings of the 23rd Annual {ACM} Symposium on
		 Theory of Computing (STOC)",
  publisher =	"ACM Press",
  year = 	"1991",
  ISBN = 	"0-89791-397-3",
  pages =	"21--31",
  URL =  	"http://doi.acm.org/10.1145/103418.103428",
}

@InProceedings{KT00,
  author =	"Jonathan Katz and Luca Trevisan",
  title =	"On the efficiency of local decoding procedures for
		 error-correcting codes",
  booktitle =	"Proceedings of the 32nd Annual ACM Symposium on Theory
		 of Computing (STOC)",
  pages =	"80--86",
  year = 	"2000",
  publisher =	"ACM Press",
  cdate =	"1970-01-01",
  mdate =	"2005-08-18",
}

@article{BHKS24_ideal,
  author       = {Siddharth Bhandari and
                  Prahladh Harsha and
                  Mrinal Kumar and
                  Madhu Sudan},
  title        = {Ideal-Theoretic Explanation of Capacity-Achieving Decoding},
  journal      = {{IEEE} Trans. Inf. Theory},
  volume       = {70},
  number       = {2},
  pages        = {1107--1123},
  year         = {2024},
  url          = {https://doi.org/10.1109/TIT.2023.3345890},
  doi          = {10.1109/TIT.2023.3345890},
  bibsource    = {dblp computer science bibliography, https://dblp.org}
}

@Article{KMRS,
  title =	"High-Rate Locally Correctable and Locally Testable
		 Codes with Sub-Polynomial Query Complexity",
  author =	"Swastik Kopparty and Or Meir and Noga {Ron-Zewi} and
		 Shubhangi Saraf",
  journal =	"Journal of ACM",
  year = 	"2017",
  number =	"2",
  volume =	"64",
  bibdate =	"2017-06-07",
  bibsource =	"DBLP,
		 http://dblp.uni-trier.de/db/journals/jacm/jacm64.html#KoppartyMRS17",
  pages =	"11:1--11:42",
  URL =  	"http://doi.acm.org/10.1145/3051093",
}

@inproceedings{GL89,
	  title={A hard-core predicate for all one-way functions},
	    author={Goldreich, Oded and Levin, Leonid A},
            booktitle={Proceedings of the 21st Annual ACM Symposium on Theory of Computing (STOC)},
	        pages={25--32},
		  year={1989},
		    organization={ACM}
}

@article{KM93,
	  title={Learning decision trees using the {Fourier} spectrum},
	    author={Kushilevitz, Eyal and Mansour, Yishay},
	      journal={SIAM Journal on Computing},
	        volume={22},
		  number={6},
		    pages={1331--1348},
		      year={1993},
		        publisher={SIAM}
}

@article{GLSTW24,
  author       = {Zeyu Guo and
                  Ray Li and
                  Chong Shangguan and
                  Itzhak Tamo and
                  Mary Wootters},
  title        = {Improved List-Decodability and List-Recoverability of Reed-Solomon
                  Codes via Tree Packings},
  journal      = {{SIAM} J. Comput.},
  volume       = {53},
  number       = {2},
  pages        = {389--430},
  year         = {2024},
  url          = {https://doi.org/10.1137/21m1463707},
  doi          = {10.1137/21M1463707},
  bibsource    = {dblp computer science bibliography, https://dblp.org}
}

@article{Menger27,
author = {Menger, Karl},
journal = {Fundamenta Mathematicae},
language = {ger},
number = {1},
pages = {96-115},
title = {Zur allgemeinen Kurventheorie},
url = {http://eudml.org/doc/211191},
volume = {10},
year = {1927},
}

@article{FKK03a,
  author       = {Andr{\'{a}}s Frank and
                  Tam{\'{a}}s Kir{\'{a}}ly and
                  Matthias Kriesell},
  title        = {On decomposing a hypergraph into k connected sub-hypergraphs},
  journal      = {Discret. Appl. Math.},
  volume       = {131},
  number       = {2},
  pages        = {373--383},
  year         = {2003},
  url          = {https://doi.org/10.1016/S0166-218X(02)00463-8},
  doi          = {10.1016/S0166-218X(02)00463-8},
  bibsource    = {dblp computer science bibliography, https://dblp.org}
}

@article{FKK03b,
title = {On the orientation of graphs and hypergraphs},
journal = {Discrete Applied Mathematics},
volume = {131},
number = {2},
pages = {385-400},
year = {2003},
issn = {0166-218X},
doi = {https://doi.org/10.1016/S0166-218X(02)00462-6},
url = {https://www.sciencedirect.com/science/article/pii/S0166218X02004626},
author = {András Frank and Tamás Király and Zoltán Király},
}

@inproceedings{JMS03,
  author       = {Kamal Jain and
                  Mohammad Mahdian and
                  Mohammad R. Salavatipour},
  title        = {Packing Steiner trees},
  booktitle    = {Proceedings of the Fourteenth Annual {ACM-SIAM} Symposium on Discrete
                  Algorithms, January 12-14, 2003, Baltimore, Maryland, {USA}},
  pages        = {266--274},
  publisher    = {{ACM/SIAM}},
  year         = {2003},
  url          = {http://dl.acm.org/citation.cfm?id=644108.644154},
  bibsource    = {dblp computer science bibliography, https://dblp.org}
}

@inproceedings{FK08,
  author       = {Andr{\'{a}}s Frank and
                  Tam{\'{a}}s Kir{\'{a}}ly},
  editor       = {William J. Cook and
                  L{\'{a}}szl{\'{o}} Lov{\'{a}}sz and
                  Jens Vygen},
  title        = {A Survey on Covering Supermodular Functions},
  booktitle    = {Research Trends in Combinatorial Optimization, Bonn Workshop on Combinatorial
                  Optimization, November 3-7, 2008, Bonn, Germany},
  pages        = {87--126},
  publisher    = {Springer},
  year         = {2008},
  url          = {https://doi.org/10.1007/978-3-540-76796-1\_6},
  doi          = {10.1007/978-3-540-76796-1\_6},
  bibsource    = {dblp computer science bibliography, https://dblp.org}
}

@article{KRSW23,
  author       = {Swastik Kopparty and
                  Noga Ron{-}Zewi and
                  Shubhangi Saraf and
                  Mary Wootters},
  title        = {Improved List Decoding of Folded Reed-Solomon and Multiplicity Codes},
  journal      = {{SIAM} J. Comput.},
  volume       = {52},
  number       = {3},
  pages        = {794--840},
  year         = {2023},
  url          = {https://doi.org/10.1137/20m1370215},
  doi          = {10.1137/20M1370215},
  bibsource    = {dblp computer science bibliography, https://dblp.org}
}

@inproceedings{CZ25,
  author       = {Yeyuan Chen and
                  Zihan Zhang},
  editor       = {Michal Kouck{\'{y}} and
                  Nikhil Bansal},
  title        = {Explicit Folded Reed-Solomon and Multiplicity Codes Achieve Relaxed
                  Generalized Singleton Bounds},
  booktitle    = {Proceedings of the 57th Annual {ACM} Symposium on Theory of Computing,
                  {STOC} 2025, Prague, Czechia, June 23-27, 2025},
  pages        = {1--12},
  publisher    = {{ACM}},
  year         = {2025},
  url          = {https://doi.org/10.1145/3717823.3718114},
  doi          = {10.1145/3717823.3718114},
  bibsource    = {dblp computer science bibliography, https://dblp.org}
}

@Article{DKSS13,
  title =	"Extensions to the Method of Multiplicities, with
		 Applications to {Kakeya} Sets and Mergers",
  author =	"Zeev Dvir and Swastik Kopparty and Shubhangi Saraf and
		 Madhu Sudan",
  journal =	"SIAM Journal on Computing",
  year = 	"2013",
  number =	"6",
  volume =	"42",
  bibdate =	"2017-05-27",
  bibsource =	"DBLP,
		 http://dblp.uni-trier.de/https://doi.org/10.1137/100783704;
		 DBLP,
		 http://dblp.uni-trier.de/db/journals/siamcomp/siamcomp42.html#DvirKSS13",
  pages =	"2305--2328",
}

@book{Frank11,
  title={Connections in combinatorial optimization},
  author={Andr{\'{a}}s Frank},
  volume={38},
  year={2011},
  publisher={Oxford University Press}
}

@article{CX18,
  author       = {Chandra Chekuri and
                  Chao Xu},
  title        = {Minimum Cuts and Sparsification in Hypergraphs},
  journal      = {{SIAM} J. Comput.},
  volume       = {47},
  number       = {6},
  pages        = {2118--2156},
  year         = {2018},
  url          = {https://doi.org/10.1137/18M1163865},
  doi          = {10.1137/18M1163865},
  bibsource    = {dblp computer science bibliography, https://dblp.org}
}

\end{document}